\newcommand{\indep}{\rotatebox[origin=c]{90}{$\models$}}
\declaretheoremstyle[headfont=\bfseries, 
    bodyfont=\normalfont]{normalhead}
\declaretheorem[style=normalhead]{Example}
\newtheorem{Theorem}{Theorem}
\newtheorem{Proposition}{Proposition}
\newtheorem{Lemma}{Lemma}
\newtheorem{Corollary}{Corollary}
\newtheorem{Remark}{Remark}
\newtheorem{Definition}{Definition}
\begin{document}

\title{A Concentration of Measure Approach to Correlated Graph Matching}
\author{Farhad Shirani$^1$\footnote{This work was supported in part by NSF grant CCF-1815821 and CNS-1619129, and  ND EPSCoR grant FAR0033968. This work was presented in part at IEEE International Symposium on Information Theory (ISIT), July 2018, 51st Asilomar Conference on Signals, Systems, and Computers, November 2017, and 56th Annual Allerton Conference on
Communication, Control, and Computing, October 2018.
\\$^1$ Farhad Shirani is with the Electrical and Computer Engineering Department, North Dakota State University, ND, USA, 58105, Email:
\href{mailto:f.shiranichaharsoogh@ndsu.edu}{f.shiranichaharsoogh@ndsu.edu}\\
$^2$
Siddharth Garg and Elza Erkip are with the Electrical and Computer Engineering Department,
New York University, NY, 11201, Email: {\{\href{mailto:siddharth.garg@nyu.edu}{siddharth.garg},\href{mailto:elza@nyu.edu}{elza}\}\href{mailto:elza@nyu.edu}{@nyu.edu}}.
}, Siddharth Garg$^2$, and Elza Erkip$^2$
 \\\date{} }

\maketitle \thispagestyle{empty}

\begin{abstract}
The graph matching problem emerges naturally in various applications such as {web} privacy, image processing and computational biology. In this paper{,} graph matching is considered under a stochastic model, where a pair of randomly generated graphs with pairwise correlated edges are to be matched {such that given the labeling of the vertices in the first graph, the labels in the second graph are recovered by leveraging the correlation among their edges.} The problem is considered under various settings and graph models. In the first step, the Correlated Erd\"{o}s-R\'enyi (CER) graph model is studied, where all edge pairs whose vertices have similar labels are generated based on identical distributions and independently of other edges. A matching scheme called the \textit{typicality matching scheme} is introduced. The scheme operates by investigating the joint typicality of the adjacency matrices of the two graphs. New results on the typicality of permutations of sequences lead to necessary and sufficient conditions for successful matching based on the parameters of the CER model. In the next step, the results are extended to {graphs with community structure generated based on the Stochastic Block Model (SBM)}. The {SBM} model is a generalization of the {CER} model where each vertex in the graph is associated with a community label, which affects its edge statistics.  The results are further extended to matching of ensembles of more than two correlated graphs. Lastly, the problem of seeded graph matching is investigated where a subset of the labels in the second graph are known prior to matching. In this scenario, in addition to obtaining necessary and sufficient conditions for successful matching, a polynomial time matching algorithm is proposed. 
\end{abstract}


\section{Introduction}

{Online social networks store large quantities of personal data from their users. As a result, social network privacy has become an issue of significant concern. 
Social network data is often released to third-parties in an anonymized and obfuscated form for various purposes including targeted advertising, developing new applications, and academic research \cite{gross2005information,shah2014collaborative}. However, it has been pointed out that anonymizing social network data through removing user IDs before publishing the data is far from enough to protect users’ privacy \cite{backstrom2007wherefore, ref2}. To elaborate, it has been shown through real-world implementation of privacy attacks that an attacker can potentially recover the user IDs by aligning the user profiles in the anonymized social network graph with the public profiles of users in other social networks on the web. In other words, the attacker can \textit{`match'} the anonymized social network profiles of users with their public profiles in other social networks.} {\textit{Graph Matching} --- also known as network alignment --- describes the problem of detecting node correspondence across graphs. In addition to social network deanonymization \cite{Beyah,Grossglauser,shirani2018typicality}, the need for matching two or more graphs arises naturally in a variety of other applications of interest such as pattern recognition \cite{foggia2014graph}, cross-lingual knowledge alignment \cite{xu2019cross}, and protein interaction network alignment \cite{kazemi2016proper}.}  {The significant increase in the ability to store, share, and analyze large graphs has led to a growing need to develop
\textit{low complexity} algorithms for graph matching, and derive \textit{theoretical guarantees} for their success, that is, to study how and when is it possible to perform fast and efficient network alignment.} 

In {the simplest form of graph matching scenarios}, an agent is given a correlated pair of randomly generated graphs: i) an `anonymized' unlabeled graph, and ii) a `de-anonymized' labeled graph as shown in Figure \ref{Fig:1}. The objective is to leverage the correlation among the edges of the graphs to recover the canonical labeling of the vertices in the anonymized graph.
The fundamental limits of graph matching, i.e. characterizing the necessary and sufficient conditions on graph parameters for successful matching, has been considered under various probabilistic models capturing the correlation among the graph edges. In the \textit{Correlated Erd\"os-R\'enyi}  (CER) model the edges in the two graphs are pairwise correlated and are generated independently, based on identical distributions. More precisely, in this model, edges whose vertices are labeled identically are correlated through an arbitrary joint probability distribution and are generated independently of all other edges. In its simplest form --- where the edges of the two graphs are exactly equal --- graph matching is called \textit{graph isomorphism}. Tight necessary and sufficient conditions for successful matching in the graph isomorphism scenario were derived in \cite{ER,wright} and polynomial time algorithms were proposed in ~\cite{iso1,iso2,iso4}. The problem of matching non-identical pairs of CER graphs was studied in ~\cite{corr1,corr2,corr3,corr4, kia_2017, Lyzinski_2016, cullina2018partial} and conditions for successful matching were derived. 

The CER model assumes the existence of statistical correlation among edge pairs connecting matching vertices in the two graphs, where the correlation model is based on an identical distribution among all matching edge pairs.  Consequently, it does not model the community structure among the graph nodes which manifests in many applications \cite{community,fortunato2012community}. As an example, in social networks, users may be divided into communities based on various factors such as age-group, profession, and racial background. The users' community memberships affects the probability that they are connected with each other. A matching algorithm may potentially use the community membership information to enhance its performance. In order to take the users' community memberships into account, an extension to the CER model is considered which is called the {\textit{Stochastic Block Model} (SBM)} model. In this model, the edge probabilities depend on their corresponding vertices' community memberships. There have been several works studying both necessary and sufficient conditions for graph matching and the design of practical matching schemes under the {SBM model \cite{shirani2018matching,nilizadeh2014community,singhal2017significance,lyzinski2014seeded}}. However, characterizing tight necessary and sufficient conditions for successful matching and designing polynomial time algorithms which are reliable under these conditions  remains an open problem both in the CER and {SBM} settings.  

A further extension of the problem, called \textit{`seeded graph matching'} has also been investigated in the literature \cite{efe,kiavash,seed1,seed2,seed3,seed4,Asilomar,Grossglauser,mossel2019seeded,fishkind2019seeded}. Seeded graph matching models applications where the matching agent has access to additional side-information in the form of pre-matched \textit{seeds}. A seed vertex is one whose correct label in both graphs is known prior to the start of the matching process. One pertinent application of seeded graph matching is the de-anonymization of users over multiple social networks. Many web users are members of multiple online social networks such as Facebook, Twitter, Google+, LinkedIn, etc.  Each online network represents a subset of the users' ``real" ego-networks. Graph matching provides algorithms to de-anonymize the users by reconciling these online network graphs, that is, to identify all the accounts belonging to the same individual. In this context, the availability of seeds is justified by the fact that a small fraction of individuals explicitly link their accounts across multiple networks. In this case, these linked accounts can be used as seeds in the matching algorithm. It turns out, that in many cases, these connections may be leveraged to identify a very large fraction of the users in the network ~\cite{seed1,seed2,seed3,seed4,kiavash}. In parallel to the study of fundamental limits of graph matching described above, the design of practical low complexity matching algorithms has also been studied in \cite{lyzinski2017matchability,  zhang2016final, heimann2018regal}, where reliable matching of real-world networks with up to millions of nodes have been performed. 

 \begin{figure}
 \begin{center}
\includegraphics[draft=false, width=0.7\textwidth]{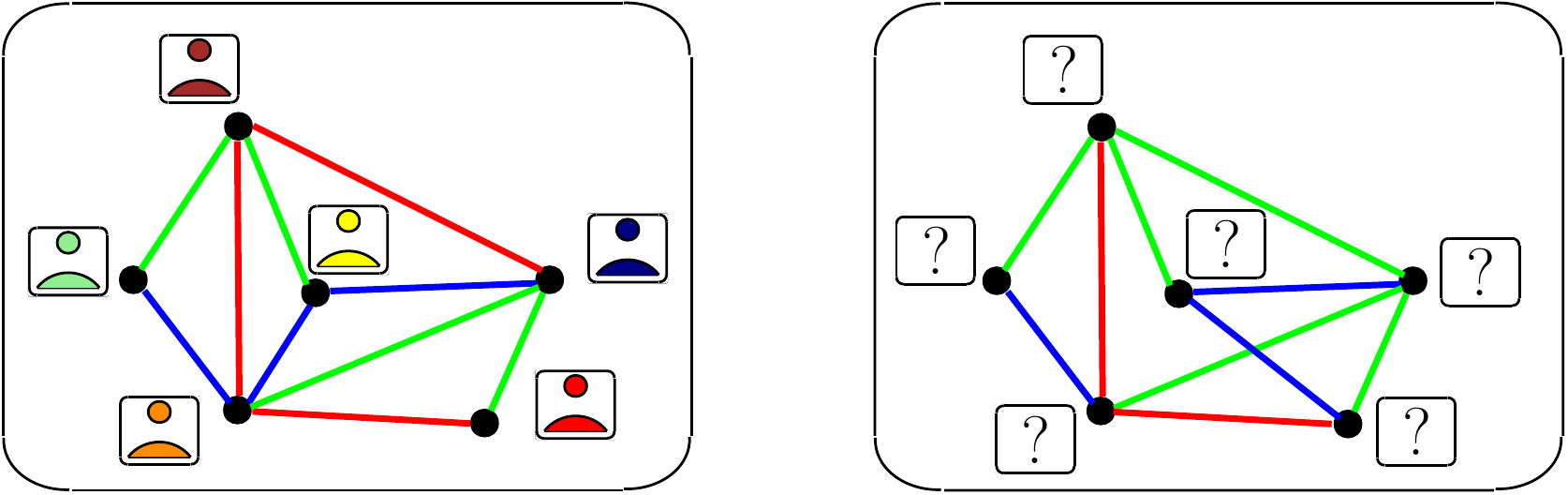}
\caption{An instance of the graph matching problem where the anonymized graph on the right is to be matched to the de-anonymized graph on the left.}
\label{Fig:1}
\end{center}
\end{figure}

In this work, we {construct an information theoretic framework based on} concentration of measure theorems in order to investigate the fundamental limits of graph matching {. We propose} the \textit{`typicality matching'} (TM) strategy which operates based on the concept of typicality of sequences of random variables \cite{csiszarbook}, and is applicable under a wide range of graph models including CER, {SBM} and seeded graph matching. 
The strategy considers the pair of adjacency matrices corresponding to the two graphs. Each $n\times n$ adjacency matrix may be viewed as an $n^2$-length sequence of random variables, where $n$ is the number of vertices in the graph. Consequently, one may naturally extend the notion of typicality of sequences of random variables to that of random adjacency matrices.
The TM strategy finds a labeling for the vertices in the anonymized graph which results in a pair of jointly typical adjacency matrices for the two graphs, where typicality is defined with respect to the underlying joint edge distribution. The success of the matching algorithm is investigated as the graph size grows asymptotically large. The matching algorithm is said to succeed if the fraction of correctly matched vertices approaches one as the number of vertices goes to infinity. As a result, the TM algorithm is successful as long as any labeling which leads to a pair of jointly typical adjacency matrices assigns an incorrect label to a negligible fraction of size $o(n)$ vertices in the anonymized graph\footnote{We write $f(x)=o(g(x))$ if $\lim_{x\to\infty}{\frac{f(x)}{g(x)}=0}$.}. In order to study the conditions for the success of the TM strategy, we derive several new bounds on the probability of joint typicality of permutations of sequences of random variables. The bounds may be of independent interest in other research areas as well. 

The generality of the information theoretic approach allows us to investigate matching under a wide range of statistical models. The  contributions of this work can be summarized as follows:\hspace{-0.1in}
{\begin{itemize}
\item{We \textcolor{black}{build upon the ideas  in \cite{shirani2018typicality,shirani2018matching}} to develop a general framework based on TM which allows for derivation of necessary and sufficient conditions under which graph matching is possible in a wide range of statistical models. The framework is applicable in matching graphs with weighted edges as well as simultaneous matching of more than two graphs in seeded and seedless matching.}
\item{We apply the TM framework to graph matching under the CER, SBM and seeded graph matching models and to derive theoretical guarantees for successful matching.}
\item{We derive converse results which characterize conditions under which matching is not possible in the CER model as well as  simultaneous matching of more than two graphs.}
\item{We \textcolor{black}{investigate the approach proposed in \cite{Asilomar}, which builds upon the TM framework to propose} a polynomial time time matching algorithm for the seeded graph matching scenario.}
\end{itemize}}

The rest of the paper is organized as follows: Section \ref{sec:not} describes the notation. Section \ref{sec:form} provides the problem formulation. Section \ref{sec:perm} develops the necessary tools for analyzing the performance of the TM algorithm. Section \ref{sec:CER} studies matching under the CER model. Section \ref{sec:CS} considers the {SBM} model. Section \ref{sec:coll} investigates matching collections of more than two graphs. In Section \ref{sec:converse}, necessary conditions and converse results for matching of pairs of graphs are investigated. 
Section \ref{sec:SGM} studies seeded graph matching. Section \ref{sec:conc} concludes the paper.

\section{Notation}\label{sec:not}
 We represent random variables by capital letters such as $X, U$ and their realizations by small letters such as $x, u$. Sets are denoted by calligraphic letters such as $\mathcal{X}, \mathcal{U}$. The set of natural numbers, and the real numbers are represented by $\mathbb{N}$, and $\mathbb{R}$ respectively. The random variable $\mathbbm{1}_{\mathcal{E}}$ is the indicator function of the event $\mathcal{E}$.
 The set of numbers $\{n,n+1,\cdots, m\}, n,m\in \mathbb{N}$ is represented by $[n,m]$. Furthermore, for the interval $[1,m]$, we sometimes use the shorthand notation $[m]$ for brevity. 
 For a given $n\in \mathbb{N}$, the $n$-length vector $(x_1,x_2,\hdots, x_n)$ is written as $x^n$. {We write $a\stackrel{.}{=} b\pm \epsilon$ to denote $b-\epsilon \leq a \leq b+\epsilon$, where $a,b,\epsilon\in \mathbb{R}$. We define $|a|^+\triangleq max(0,a)$. \textcolor{black}{The notation $\exp_2(\alpha)$ is used to represent $2^{\alpha}$ to help readability. }}

\section{Problem Formulation}\label{sec:form}
 A graph $g=(\mathcal{V},\mathcal{E})$ is characterized by the vertex set $\mathcal{V}=\{v_1,v_2,\cdots,v_{n}\}$,
 and the edge set $\mathcal{E}$. We consider weighted graphs, where each edge is assigned an \textit{attribute} $x\in [0,l-1]$ and $l\geq 2$. Consequently, the edge set $\mathcal{E}$ is a subset of the set $\{(x,v_i,v_j)| i\neq j, x\in [0,l-1]\}$, where for each pair $(v_i,v_j)$ there is a unique attribute $x$ for which $(x,v_i,v_j)\in \mathcal{E}$. {For instance, an unlabeled graph with binary valued edges is a graph for which $l=2$. In this case, if the pair $v_{n,i}$ and $v_{n,j}$ are not connected, we write $(0,v_{n,i},v_{n,j})\in \mathcal{E}$, otherwise $(1,v_{n,i},v_{n,j})\in \mathcal{E}$. } The edge attribute models the nature of the connection between the corresponding vertices. For instance in social network graphs, where vertices represent the members of the network and edges capture their connections, an edge may take different attributes depending on whether the members are family members, close friends, or acquaintances. {A labeled graph $\tilde{g}=(g,\sigma)$ is a graph equipped with a bijective \textit{labeling function} $\sigma:\mathcal{V}\to [n]$. The labeling represents the identity of the members in the social network. For a labeled graph $\tilde{g}$, the adjacency matrix 
 is defined as ${G}_{\sigma}=[{g}_{\sigma,i,j}]_{i,j\in [1,n]}$, where $g_{\sigma,i,j}$ is the unique value such that $(g_{\sigma,i,j},v_k,v_l)\in \mathcal{E}$, where $(v_k,v_l)=(\sigma^{-1}(i),\sigma^{-1}(j))$. The adjacency matrix captures the edge attributes of the graph. The upper triangle (UT) corresponding to $\tilde{g}$ is the structure $U_{\sigma}=[g_{\sigma,i,j}]_{i<j}$. The subscript `$\sigma$' is dropped when there is no ambiguity. The notation is summarized in Table I.}
 \setlength\arrayrulewidth{2pt}

 \begin{table}[]
  \centering
\begin{tabular}{|ll|ll|ll|}
\hline
 $g$:& unlabeled graph & $\mathcal{V}$:  & vertex set &  $\mathcal{E}$: & edge set    \\ 
 \hline
 $\sigma$: & labeling  & $\tilde{g}:$ & labeled graph & $G_{\sigma}$: & adjacency matrix   \\ \hline
$U_{\sigma}$: & upper-triangle & $\ell$: & $\#$ of attributes & $\hat{g}^2$: & relabeled graph   \\ \hline
\end{tabular}
\vspace{0.1in}
\caption{Notation Table: Random Graphs}
\vspace{-0.2in}
\end{table}



 We consider graphs whose edges are generated stochastically. Under the CER and {SBM} models, we consider special instances of the following random graph model. 
\begin{Definition}[\bf Random Graph]
A random graph $\tilde{g}$ generated based on $\prod_{i\in[n], j<i}P_{X_{i,j}}$ is an undirected labeled graph, where the edge between $v_i, i\in [n]$ and $v_j, j<i$ is generated according to $P_{X_{\sigma(i),\sigma(j)}}$ independently of the other edges. Alternatively,
\begin{align*}
  P((x,v_i,v_j)\in \mathcal{E})= P_{X_{\sigma(i),\sigma(j)}}(x), x\in [0,l-1], i,j\in[n].  
\end{align*}
\end{Definition}
In the graph matching problem, we are given a pair correlated graphs $(\tilde{g}^1,\tilde{g}^2)$, where only the labeling for the vertices of the first graph is available. The objective is to recover the labeling of the vertices in the second graph by leveraging the correlation among their edges. A pair of correlated random graphs is defined below. 
\begin{Definition}[\bf Correlated Random Graph]
\label{def:cor_rand}
A pair of correlated random graphs $(\tilde{g}^1,\tilde{g}^2)$ generated based on $\prod_{i\in[n], j<i}P_{X^1_{i,j},X^2_{i,j}}$ is a pair of undirected labeled graphs. Let $v^1,w^1$ and $v^2,w^2$ be two pairs of  vertices {with the same label} in $\tilde{g}^1$ and $\tilde{g}^2$, respectively i.e. $\sigma^1(v^1)=\sigma^2(v^2)=s_1$ and $\sigma^1(w^1)=\sigma^2(w^2)=s_2$. Then, the pair of edges between $(v^1,w^1)$ and $(v^2,w^2)$ are generated according to $P_{X^1_{s_1,s_2},X^2_{s_1,s_2}}$. Alternatively,
\begin{align*}
  P((x^1,v^1_i,w^1_j)\in \mathcal{E}^1, (x^2,v^2_i,w^2_j)\in \mathcal{E}^2)= P_{X^1_{s_1,s_2},X^2_{s_1,s_2}}(x^1,x^2), {x^1,x^2}\in [0,l-1], i,j\in[n].  
\end{align*}
\end{Definition}
\begin{Remark}
In Definition \ref{def:cor_rand}, the pair $(\tilde{g}^1,\tilde{g}^2)$ are said to be a correlated pair of Erd\"os-R\'enyi (CPER) graphs if there exists a distribution $P_{X^1,X^2}$ such that  $P_{X^1_{s_1,s_2},X^2_{s_1,s_2}}= P_{X^1,X^2}, \forall s_1,s_2 \in [n]$.
\label{Rem:CPER}
\end{Remark}

A graph matching strategy takes $(\tilde{g}^1,g^2)$ as its input and outputs $(\tilde{g}^1,\hat{g}^2)$, where $g^2$ is the graph $\tilde{g}^2$ with its labels {$\sigma^2$} removed, and   $\hat{g}^2$ is the relabeled graph {after matching}. The matching strategy is said to succeed if the fraction of correctly matched vertices approaches one as the number of vertices is increased asymptotically. This is formalized below. 

\begin{Definition}[\bf Matching Strategy]
For a family of pairs of correlated random graphs $\tilde{g}^1_n=(g^1_n, \sigma^1_n)$  and $\tilde{g}^2_n=(g^2_n, \sigma^2_n), n \in \mathbb{N}$, {generated based on $\prod_{i\in[n], j<i}P_{X^1_{i,j},X^2_{i,j}}, n \in \mathbb{N}$} where $n$ is the number of vertices. 
A matching strategy is a sequence of functions $f_n: (\tilde{g}_n^1,g_n^2)\to (\tilde{g}_n^1,\hat{g}_n^2), n\in \mathbb{N}$, where $\hat{g}_n^2=(g^2_n,\hat{\sigma}^2_n)$ and $\hat{\sigma}^2_n$ is the reconstruction of $\sigma^2$. Let $I_n$ be distributed uniformly over $[n]$. The matching strategy is said to succeed if $P\left(\sigma^2(v^2_{I_n})=\hat{\sigma}^2(v^2_{I_n})\right)\to 1$ as $n\to\infty$.
\label{def:strat}
\end{Definition}

Note that in the above definition, for $f_n$ to succeed, the fraction of vertices whose labels are matched incorrectly must vanish as $n$ approaches infinity. This is a relaxation of  the criteria considered in \cite{corr1,corr2,corr3,corr4, kia_2017, Lyzinski_2016, Asilomar} where all of the vertices are required to be matched correctly simultaneously with vanishing probability of error as $n\to \infty$. As observed in the next sections, this relaxation leads to a significant simplification in the performance analysis of the proposed matching strategies and allows us to  use the concentration of measure theorems and results from information theory to derive theoretical guarantees on the performance of the TM strategy.

\begin{Definition}[\bf{Achievable Region}]
\hspace{-0.07in} A family of sets of distributions $\widetilde{P}\!=\!(\mathcal{P}_n)_{n\in \mathbb{N}}$ is in the achievable region if for every sequence of distributions $\prod_{ s_1\in [n],s_2<s_1}P^{(n)}_{X^1_{s_1,s_2},X^2_{s_1,s_2}}\!\!\!\in \mathcal{P}_n$, there exists a successful matching strategy. The maximal achievable family of sets of distributions is denoted by $\mathcal{P}^*$. 
 \label{def:ach}
 \end{Definition} 

In social network deanonymization, among other applications, often the correct label of a fraction of the vertices in the anonymized graph are known beforehand. This is due to a fraction of members having used the same user IDs across graphs, or having linked their accounts externally. In these scenarios, the matching strategy may use these pre-matched vertices as `seeds' to recover the labels for the rest of the vertices. Such matching strategies, which are called seeded matching strategies, are defined rigorously and studied in Section \ref{sec:SGM}.


\section{Permutations of Typical Sequences}
\label{sec:perm}
In the previous section, we described correlated pairs of random graphs, where the graph edges are generated randomly based on an underlying joint distribution.
Alternatively, the adjacency matrices of the graphs are generated according to a joint distribution. Furthermore, as explained in Definition \ref{def:cor_rand}, we assume that each edge pair connecting two similarly labeled vertices in the two graphs is generated independently of all other edges based on the distribution
$P_{X^1_{i,j},X^2_{i,j}}$, where $i,j$ are the vertex labels.
 Consequently, it is expected, given large enough graph sizes, that the adjacency matrices of the graphs look
`\textit{typical}' with respect to the joint edge distribution.
Roughly speaking, this requires the frequency of joint occurrence of symbols $(x^1,x^2)$ to be close to $\frac{1}{n^2}\sum^n_{i=1}\sum_{j=1}^n P_{X^1_{i,j},X^{2}_{i,j}}(x^1,x^2)$, where $x^1,x^2\in [0,l-1]$. Based on this observation, in the next sections we propose the typicality matching strategy which operates by finding the labeling for the second graph which results in a jointly typical pair of adjacency matrices. 
\begin{floatingfigure}[r]{2.4in}
\begin{center}
\includegraphics[height= 0.7in, width=2.4in]{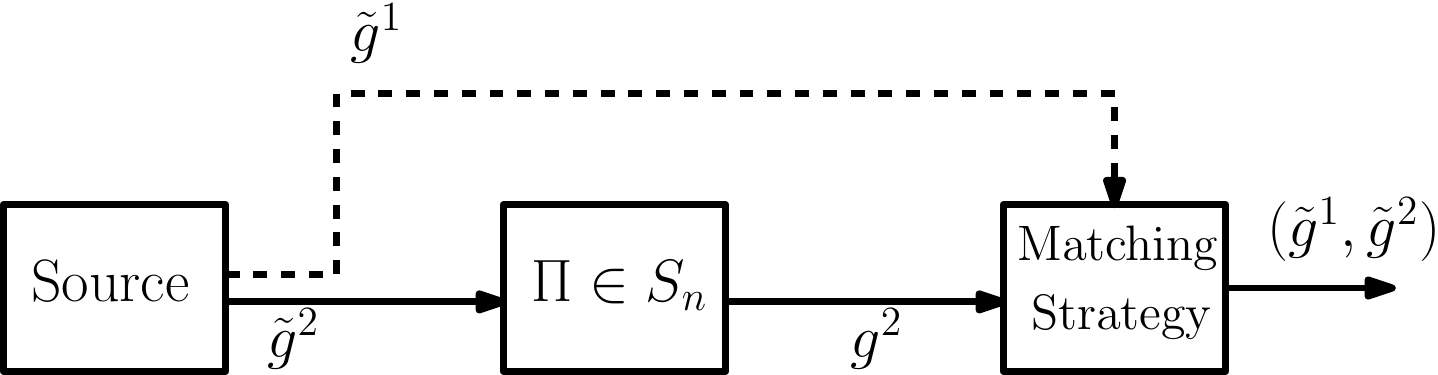}
\caption{The pair of correlated graphs $(\tilde{g}^1,\tilde{g}^2)$ are generated as described in Definition \ref{def:cor_rand}. The labels in $\tilde{g}^2$ undergo a random permutation $\Pi$ chosen uniformly among the set of all possible permutations of  n-length sequences $S_n$. The matching strategy uses $\tilde{g}^1$ as side information to recover $\tilde{g}^2$ from $g^2$.}
\label{fig:source}
\end{center}
\end{floatingfigure}
This is analogous to typicality decoding in the channel coding problem in information theory, where the decoder finds the transmitted sequence by searching for a codeword which is jointly typical with the received sequence. In this analogy which is shown in Figure \ref{fig:source}, the labeled graph $\tilde{g}^2$ is passed through a
`\textit{channel}' which outputs the graph $g^2$ whose labels have undergone a randomly and uniformly chosen permutation, and the matching algorithm acting as a `\textit{decoder}' wants to recover  $\tilde{g}^2$ using $g^2$ and the side-information $\tilde{g}^1$. Changing the labeling of  $g^2$ leads to a permutation of its adjacency matrix. Hence, we need to search over permutations of the adjacency matrix and find the one which leads to a typical pair of adjacency matrices. The error analysis of the typicality matching strategy requires investigating the probability of joint typicality of permutations of pairs of correlated sequences.

In this section, we analyze the joint typicality of permutations of collections of correlated sequences of random variables. While the analysis is used in the subsequent sections to derive the necessary and sufficient conditions for successful matching in various graph matching scenarios, it may also be of independent interest in other research areas as well. 
 
We follow the notation used in \cite{isaacs} in our study of permutation groups summarized below. 
\begin{Definition}[\bf Set Permutation]
\label{def:perm1}
A permutation on the set of numbers $[1,n]$ is a bijection $\pi:[1,n]\to [1,n]$. The set of all permutations on the set of numbers $[1,n]$ is denoted by $S_n$. 
\end{Definition}

\begin{Definition}[\bf Cycle and Fixed Point]
 A permutation $\pi \in \mathcal{S}_n, n\in \mathbb{N}$  is called a cycle if there exists $k\in [1,n]$ and $\alpha_1,\alpha_2,\cdots,\alpha_k\in [1,n]$ such that i) $\pi(\alpha_i)=\alpha_{i+1}, i\in [1,k-1]$, ii) $\pi(\alpha_n)=\alpha_1$, and iii) $\pi(\beta)=\beta$ if $\beta\neq \alpha_i, \forall i\in [1,k]$. The variable $k$ is the length of the cycle. The element $\beta$ is a fixed point of the permutation if $\pi(\beta)=\beta$. We write $\pi=(\alpha_1,\alpha_2,\cdots,\alpha_k)$.
 The cycle $\pi$ is non-trivial if $k\geq 2$. 
\end{Definition}

\begin{Lemma}[\hspace{-.005in}\bf\cite{isaacs}]
 Every permutation $\pi \in \mathcal{S}_n, n\in \mathbb{N}$ has a unique decomposition into disjoint non-trivial cycles.  
\end{Lemma}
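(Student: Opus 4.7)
The plan is to prove existence by an orbit (equivalence class) argument and uniqueness by showing that any disjoint cycle decomposition is forced by these orbits.

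For existence, I would define the relation $\sim$ on $[1,n]$ by $i\sim j$ iff $j=\pi^m(i)$ for some $m\in\mathbb{Z}$. Since $\pi$ is a bijection of a finite set, it has finite order, so $\pi^{-1}=\pi^{t}$ for some $t\ge 0$, and $\sim$ is easily checked to be an equivalence relation. Let $\mathcal{O}_1,\mathcal{O}_2,\ldots,\mathcal{O}_r$ be the resulting orbits. For each $\mathcal{O}_s$, pick any $\alpha_{s,1}\in\mathcal{O}_s$ and define $\alpha_{s,i+1}=\pi(\alpha_{s,i})$. Let $k_s$ be the smallest positive integer with $\pi^{k_s}(\alpha_{s,1})=\alpha_{s,1}$, which exists by finiteness; a short argument shows the elements $\alpha_{s,1},\ldots,\alpha_{s,k_s}$ are distinct and exhaust $\mathcal{O}_s$. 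Define $c_s=(\alpha_{s,1},\ldots,\alpha_{s,k_s})$, which is a cycle in the sense of the preceding definition. Because the orbits are disjoint and $\pi$ acts on each orbit as the corresponding $c_s$ and fixes everything outside, the product $\prod_{s:\,k_s\ge 2} c_s$ (taken over non-singleton orbits, with singleton orbits contributing only fixed points) equals $\pi$.

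For uniqueness, suppose $\pi=c'_1 c'_2\cdots c'_m$ is any decomposition into disjoint non-trivial cycles, where $c'_j=(\beta_{j,1},\ldots,\beta_{j,\ell_j})$. I would first observe that, because the $c'_j$ are disjoint, for any $\beta_{j,i}$ only the factor $c'_j$ acts non-trivially, so $\pi(\beta_{j,i})=\beta_{j,i+1\bmod \ell_j}$. Iterating, the set $\{\beta_{j,1},\ldots,\beta_{j,\ell_j}\}$ is exactly the $\sim$-orbit of $\beta_{j,1}$, and its length $\ell_j$ equals the orbit's cardinality. Any element not appearing in any $c'_j$ satisfies $\pi(\beta)=\beta$ and thus forms a singleton orbit. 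Hence the multiset of orbits is determined by $\pi$, and the cycle structure on each non-singleton orbit is also determined — the only freedom is the cyclic rotation used to write $(\beta_{j,1},\ldots,\beta_{j,\ell_j})$, which by convention does not change the cycle.

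The argument is essentially bookkeeping once the orbit structure is set up, so there is no serious obstacle. The main subtlety to be careful about is the definition of uniqueness: two notations $(\alpha_1,\ldots,\alpha_k)$ and $(\alpha_2,\ldots,\alpha_k,\alpha_1)$ denote the same cycle, so uniqueness is understood as uniqueness of the unordered multiset of cycles (each cycle itself being a cyclic sequence). With this convention, the orbit-matching argument above gives both existence and uniqueness directly, and indeed the result is classical, appearing as cited in \cite{isaacs}.
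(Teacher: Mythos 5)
Your orbit-based argument is correct and is the standard proof of this classical fact; the paper itself gives no proof, citing \cite{isaacs} instead, and your argument is essentially what one finds there. The only point worth flagging is that you handled it properly: uniqueness is up to the order of the factors and cyclic rotation within each cycle, which you state explicitly, so nothing is missing.
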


\begin{Definition}
For a given $n,m,c\in \mathbb{N}$, and $1\leq i_1\leq i_2\leq \cdots\leq i_c \leq n$ such that $n=\sum_{j=1}^ci_j+m$, an $(m,c,i_1,i_2,\cdots,i_c)$-permutation is a permutation in $\mathcal{S}_n$ which has $m$ fixed points and $c$ disjoint cycles with lengths $i_1,i_2,\cdots,i_c$, respectively.
\label{def:cycle}
\end{Definition}

\begin{Example}
Consider the permutation which maps the vector $(1,2,3,4,5)$ to $(5,1,4,3,2)$. The permutation can be written as a decomposition of disjoint cycles in the following way $\pi=(1,2,5)(3,4)$, where $(1,2,5)$ and $(3,4)$ are cycles with lengths $3$ and $2$, respectively. The permutation $\pi$ is a $(0,2,2,3)$-permutation.
\end{Example}

\begin{Definition}[\bf Sequence Permutation]
\label{def:perm2}
 For a given sequence $y^n\in \mathbb{R}^n$ and permutation $\pi\in \mathcal{S}_n$, the sequence $z^n=\pi(y^n)$ is defined as $z^n=(y_{\pi(i)})_{i\in [1,n]}$.\footnote{Note that in Definitions \ref{def:perm1} and \ref{def:perm2} we have used  $\pi$ to denote both a scalar function which operates on the set $[1,n]$ as well as a function which operates on the vector space $\mathbb{R}^n$.}
\end{Definition}
\begin{Definition}[\bf Derangement]
 A permutation on vectors of length $n$ is a derangement if it has no fixed points. The number of distinct derangements of $n$-length vectors is denoted by $!n$.
\end{Definition}
In our analysis, we make extensive use of the standard permutations defined below.
\begin{Definition}[\bf Standard Permutation]
Let $m,c,i_1,i_2,\cdots,i_c$ be as in Definition \ref{def:cycle}.
The $(m,c,i_1,$ $i_2,\cdots,i_c)$-standard permutation is defined as the $(m,c,i_1,i_2,\cdots,i_c)$-permutation consisting of the cycles $(\sum_{j=1}^{k-1}i_j+1,\sum_{j=1}^{k-1}i_j+2,\cdots,\sum_{j=1}^{k}i_j), k\in [1,c]$. Alternatively, the $(m,c,i_1,$ $i_2,\cdots,i_c)$-standard permutation is defined as:
\begin{align*}
&\pi=(1,2,\cdots,i_1)(i_1+1,i_1+2,\cdots,i_1+i_2)\cdots
\\&\qquad(\sum_{j=1}^{c-1}i_j+1,\sum_{j=1}^{c-1}i_j+2,\cdots,\sum_{j=1}^{c}i_j)(n-m+1)(n-m+2)\cdots (n).
\end{align*}
\label{def:stan_perm}
\end{Definition}
\begin{Example}
The $(2,2,3,2)$-standard permutation is a permutation which has $m=2$ fixed points and $c=2$ cycles. The first cycle has length $i_1=3$ and the second cycle has length $i_2=2$. It is a permutation 
on sequences of length $n=\sum_{j=1}^ci_j+m=3+2+2=7$. The permutation is given by $\pi= (1 2 3)(4 5)(6)(7)$. For an arbitrary sequence $\underline{\alpha}=(\alpha_1,\alpha_2,\cdots,\alpha_7)$, we have:
\begin{align*}
 \pi(\underline{\alpha})=(\alpha_3,\alpha_1,\alpha_2,\alpha_5,\alpha_4,\alpha_6,\alpha_7).
\end{align*}
\end{Example}

\subsection{Typicality of Permutations of Pairs of Correlated Sequences}
{\begin{Definition}[\textbf{Type of Sequences}]
For a sequence $x^n\in \mathcal{X}^n$, the corresponding type vector $\underline{t}= (\underline{t}(x))_{x\in \mathcal{X}}$ is defined as
$
    \underline{t}(x)= \frac{\sum_{i=1}^n\mathbbm{1}(x_i=x)}{n}, x\in \mathcal{X}
$. For the pair $(x^n,y^n)\in \mathcal{X}^n\times \mathcal{Y}^n$, the corresponding joint type $\underline{s}= (\underline{s}(x,y))_{x,y\in \mathcal{X}\times \mathcal{Y}}$ is defined as $    \underline{s}(x,y)= \frac{\sum_{i=1}^n\mathbbm{1}(x_i=x,y_i=y)}{n}, x,y \in \mathcal{X}\times \mathcal{Y}$.
\end{Definition}
\begin{Definition}[\bf Strong Typicality \cite{csiszarbook}]
\label{Def:typ}
Let the pair of random variables $(X,Y)$ be defined on the probability space $(\mathcal{X}\times\mathcal{Y},P_{X,Y})$, where $\mathcal{X}$ and $\mathcal{Y}$ are finite alphabets. The $\epsilon$-typical set of sequences of length $n$ with respect to $P_{X,Y}$ is defined as:
\begin{align*}
&\mathcal{A}_{\epsilon}^n(X,Y)=\Big\{(x^n,y^n): 
\underline{t}(x,y)\stackrel{\cdot}{=}P_{X,Y}(x,y)\pm \epsilon, \forall (x,y)\in \mathcal{X}\times\mathcal{Y}  ~\&~  \underline{t}(x,y)=0 \text{ if } P_{X,Y}(x,y)=0\Big\},
\end{align*}
where $\underline{t}$ is the joint type of $(x^n,y^n)$,  $\epsilon>0$, and $n\in \mathbb{N}$.  
\end{Definition} }
For a correlated pair of independent and identically distributed (i.i.d) sequences $(X^n,Y^n)$ and an arbitrary permutation $\pi\in S_n$, we are interested in bounding the probability $P((X^n,\pi(Y^n))\in \mathcal{A}_{\epsilon}^n(X,Y))$. The following proposition shows that in order to find bounds on the probability of joint typicality of permutations of correlated sequences, it suffices to study standard permutations. 

\begin{Proposition}
\label{prop:1}
 Let $(X^n,Y^n)$ be a pair of i.i.d sequences defined on finite alphabets. We have:
\\ i) For an arbitrary permutation $\pi\in \mathcal{S}_n$, 
 \begin{align*}
 P((\pi(X^n),\pi(Y^n))\in \mathcal{A}_{\epsilon}^n(X,Y))=P((X^n,Y^n)\in \mathcal{A}_{\epsilon}^n(X,Y)).
\end{align*}
ii)  Following the notation in Definition \ref{def:stan_perm}, let $\pi_1$ be an arbitrary $(m,c,i_1,i_2,\cdots,i_c)$-permutation  and let $\pi_2$ be the $(m,c,i_1,i_2,\cdots,i_c)$-standard permutation. Then, 
\begin{align*}
 P((X^n,\pi_1(Y^n))\in \mathcal{A}_{\epsilon}^n(X,Y))=P((X^n,\pi_2(Y^n))\in \mathcal{A}_{\epsilon}^n(X,Y)).
 \end{align*}
 iii) For arbitrary permutations $\pi_x,\pi_y\in \mathcal{S}_n$,  let $\pi$ be the standard permutation having the same number of cycles and cycle lengths as that of $\pi_x^{-1}(\pi_y)$. Then,
\begin{align*}
 P((\pi_x(X^n),\pi_y(Y^n))\in \mathcal{A}_{\epsilon}^n(X,Y))=P((X^n,\pi(Y^n))\in \mathcal{A}_{\epsilon}^n(X,Y)).
\end{align*}
{iv) For an arbitrary permutation $\pi\in \mathcal{S}_n$,
\begin{align*}
 P((X^n,\pi(Y^n))\in \mathcal{A}_{\epsilon}^n(X,Y))=P((X^n,\pi^{-1}(Y^n))\in \mathcal{A}_{\epsilon}^n(X,Y)).
\end{align*}}
\end{Proposition}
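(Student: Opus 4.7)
\medskip

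\noindent\textbf{Proof plan.} The plan rests on two elementary observations. First, whether a pair $(a^n,b^n)$ lies in $\mathcal{A}_\epsilon^n(X,Y)$ depends only on the joint type of the pair, and the joint type is invariant under simultaneous application of a single permutation to both coordinates: $(\tau(a^n),\tau(b^n))$ induces exactly the same joint empirical counts as $(a^n,b^n)$. Second, because $(X^n,Y^n)$ is i.i.d., its joint law is exchangeable, so $(\tau(X^n),\tau(Y^n))$ and $(X^n,Y^n)$ are equal in distribution for every $\tau\in\mathcal{S}_n$. Part (i) is then immediate from the first observation alone, since the events $\{(\pi(X^n),\pi(Y^n))\in\mathcal{A}_\epsilon^n\}$ and $\{(X^n,Y^n)\in\mathcal{A}_\epsilon^n\}$ coincide on every sample path.

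For part (ii) I would invoke the group-theoretic fact that two elements of $\mathcal{S}_n$ are conjugate if and only if they share the same cycle type. Hence an arbitrary $(m,c,i_1,\ldots,i_c)$-permutation $\pi_1$ and its standard counterpart $\pi_2$ satisfy $\pi_2=\tau^{-1}\pi_1\tau$ for some $\tau\in\mathcal{S}_n$. Introducing $X'^n \triangleq \tau^{-1}(X^n)$ and $Y'^n \triangleq \tau^{-1}(Y^n)$, exchangeability gives $(X'^n,Y'^n)$ equal in distribution to $(X^n,Y^n)$, while joint-type invariance applied with $\tau^{-1}$ shows that
\[
(X^n,\pi_1(Y^n)) \quad \text{and} \quad \bigl(\tau^{-1}(X^n),\tau^{-1}\pi_1(Y^n)\bigr)=\bigl(X'^n,\pi_2(Y'^n)\bigr)
\]
have identical joint types, and therefore lie in $\mathcal{A}_\epsilon^n(X,Y)$ under the same realizations. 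Combining the two facts yields the claimed equality of typicality probabilities.

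Part (iii) reduces to part (ii). By joint-type invariance, applying $\pi_x^{-1}$ to both coordinates of $(\pi_x(X^n),\pi_y(Y^n))$ turns it into $(X^n,\pi_x^{-1}\pi_y(Y^n))$ without disturbing the joint type, hence without disturbing the typicality event. Since $\pi_x^{-1}\pi_y$ has by definition the same cycle type as the standard permutation $\pi$ named in the statement, part (ii) closes the argument. For part (iv), it suffices to note that the inverse of a $k$-cycle is a $k$-cycle, so $\pi$ and $\pi^{-1}$ share their cycle type; part (ii) then identifies both $P((X^n,\pi(Y^n))\in\mathcal{A}_\epsilon^n)$ and $P((X^n,\pi^{-1}(Y^n))\in\mathcal{A}_\epsilon^n)$ with the probability attached to the common standard permutation, and equality follows.

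I do not anticipate any hard step: the proposition is essentially the assertion that the only invariants in play are the joint type and the cycle type, together with exchangeability of the i.i.d.\ law. The only place demanding care is the conjugation bookkeeping in part (ii), where it is easy to confuse $\tau$ with $\tau^{-1}$; I would write out the substitution $\tau^{-1}\pi_1\tau=\pi_2$ explicitly and track how it interacts with each coordinate to avoid such slips.
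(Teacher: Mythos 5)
Your proof is correct and takes essentially the same route as the paper: joint-type invariance under simultaneous relabeling for part (i), conjugacy of same-cycle-type permutations combined with exchangeability for part (ii), reduction to (i)+(ii) for part (iii), and cycle-type invariance of inversion for part (iv). The one thing you flag yourself — the direction of the conjugator — is indeed the only subtlety: with the paper's convention $\pi(y^n)=(y_{\pi(i)})_i$ the sequence action is an anti-homomorphism, so writing $(X^n,\pi_1(Y^n))\mapsto(\tau^{-1}(X^n),\tau^{-1}\pi_1(Y^n))$ requires $\pi_2=\tau\pi_1\tau^{-1}$ rather than $\pi_2=\tau^{-1}\pi_1\tau$; since both relations are solvable for $\tau$ whenever the cycle types match, this is a one-line renaming and does not affect the argument.
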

\begin{proof}
Part i) follows from the fact that permuting both $X^n$ and $Y^n$ by the same permutation does not change their joint type. For part ii), it is known that there exists a permutation $\pi$ such that $\pi(\pi_1)=\pi_2(\pi)$ \cite{isaacs}. Then the statement is proved using part i) as follows:
  \begin{align}
& \nonumber P\left(\left(X^n,\pi_1\left(Y^n\right)\right)\in \mathcal{A}_{\epsilon}^n\left(X,Y\right)\right)
= P\left(\left(\pi\left(X^n\right),\pi\left(\pi_1\left(Y^n\right)\right)\right)\in \mathcal{A}_{\epsilon}^n\left(X,Y\right)\right)
\\&\nonumber= P\left(\left(\pi\left(X^n\right),\pi_2\left(\pi\left(Y^n\right)\right)\right)\in \mathcal{A}_{\epsilon}^n\left(X,Y\right)\right)
\\&\label{eq:a}{=} P\left(\left(\widetilde{X}^n,\pi_2\left(\widetilde{Y}^n\right)\right)\in \mathcal{A}_{\epsilon}^n\left(X,Y\right)\right)
\\&\label{eq:b}{=} P\left(\left(X^n,\pi_2\left(Y^n\right)\right)\in \mathcal{A}_{\epsilon}^n\left(X,Y\right)\right),
 \end{align}
 where in \eqref{eq:a} we have defined $(\widetilde{X}^n,\widetilde{Y}^n)=(\pi(X^n),\pi(Y^n))$. and \eqref{eq:b} holds since $(\widetilde{X}^n,\widetilde{Y}^n)$ has the same distribution as $(X^n,Y^n)$. Part iii) follows directly from Parts i) and ii). {Part iv) follows from Part ii) by noting that the number and lengths of cycles in $\pi^{-1}$ is the same as that of $\pi$.}
\end{proof}

The following theorem provides an upper-bound on the probability of joint typicality of permutations of correlated sequences for an arbitrary permutation with $m\in [n]$ fixed points.  
{
\begin{Theorem}
Let $\epsilon\in [0, \frac{1}{2}\min_{x,y\in \mathcal{X}\times \mathcal{Y}}P_{X,Y}(x,y)]$, and consider $(X^n,Y^n)$ a pair of i.i.d sequences defined on finite alphabets $\mathcal{X}$ and $\mathcal{Y}$, respectively. For any permutation $\pi$ with $m\in [n]$ fixed points, the following holds:
\begin{align}
    &P((X^n,\pi(Y^n))\in \mathcal{A}_{\epsilon}^n(X,Y))\leq 2^{-n (E_{\alpha}-{\zeta_n}-\delta_\epsilon)},
\\& E_{\alpha}= \min_{\underline{t}'_X\in
\mathcal{P}}\frac{1}{2}\left((1-\alpha)D(\underline{t}'_X||P_X)+\alpha D(\underline{t}''_{X}||P_X)+ D(P_{X,Y}|| (1-\alpha) P_XP_{Y''}+\alpha P_{X,Y})\right),
\label{eq:perm_bound_1}
\end{align}
where $\alpha\triangleq\frac{m}{n}$, $\mathcal{P}\triangleq \{ \underline{t}_X\in \mathcal{P}_X|\forall x\in \mathcal{X}: \underline{t}_X(x)\in \frac{1}{1-\alpha}[P_X(x)-\alpha, P_X(x)]\}$, $\mathcal{P}_X$ is the probability simplex on the alphabet $\mathcal{X}$, $D(\cdot||\cdot)$ is the Kullback-Leibler divergence,  $\underline{t}''_X\triangleq \frac{1}{\alpha}(P_X-(1-\alpha) \underline{t}'_X)$, $P_{Y''}(\cdot)\triangleq \sum_{x\in \mathcal{X}} \underline{t}'_X(x) P_{Y|X}(\cdot|x)$,  $\zeta_{n}\triangleq \frac{3}{2}|\mathcal{X}|^2|\mathcal{Y}|\frac{\log{(n+1)}}{n}+ 6|\mathcal{X}||\mathcal{Y}|\frac{\log{(n+1)}}{n}$, and \[\delta_{\epsilon}\triangleq  \epsilon|\mathcal{X}||\mathcal{Y}|
  \big|\max_{x,y \in \mathcal{X}\times \mathcal{Y}:P_{X,Y}(x,y)\neq 0} \log{\frac{P_{X,Y}(x,y)}{
   \alpha P_{X,Y}(x,y)+(1-\alpha)P_X(x)P_Y(y)
  }}\big|+O(\epsilon).\] 
\label{th:1:improved}
\end{Theorem}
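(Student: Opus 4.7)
My overall plan is to reduce to standard permutations via Proposition~\ref{prop:1}(ii) and then run a method-of-types decomposition, conditioning on the type of $X^n$ separately on the fixed and non-fixed coordinates of $\pi$, with a final symmetrization step exploiting Proposition~\ref{prop:1}(iv) to obtain the factor $\tfrac{1}{2}$.

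By Proposition~\ref{prop:1}(ii), it suffices to prove the bound when $\pi$ is the standard $(m,c,i_1,\ldots,i_c)$-permutation, so the fixed-point set is $F=\{n-m+1,\ldots,n\}$ of size $m=\alpha n$ and the non-fixed set is $N=\{1,\ldots,n-m\}$ of size $(1-\alpha)n$, on which $\pi$ decomposes into disjoint cycles. On $F$, the pairs $(X_i,\pi(Y)_i)=(X_i,Y_i)$ are i.i.d.\ $P_{X,Y}$; on $N$, each pair $(X_i,Y_{\pi(i)})$ satisfies $\pi(i)\neq i$ and hence has unconditional marginal $P_X\otimes P_Y$. I then enumerate the marginal type $\underline{t}'_X$ of $X_N$: the type $\underline{t}''_X$ of $X_F$ is forced, up to $\epsilon$-slack, to equal $\tfrac{1}{\alpha}(P_X-(1-\alpha)\underline{t}'_X)$ by the $\epsilon$-typicality of the $X$-marginal of the joint type, which yields the feasibility constraint $\underline{t}'_X\in\mathcal{P}$. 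Since $X_N$ and $X_F$ are independent halves of the i.i.d.\ sequence $X^n$, Sanov's bound gives that the probability of this type event is at most $2^{-(1-\alpha)nD(\underline{t}'_X\|P_X)-\alpha nD(\underline{t}''_X\|P_X)}$, with polynomial prefactors absorbed into $\zeta_n$.

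Conditional on $X^n$ of this type, $Y^n$ is a product of $P_{Y|X}(\cdot\mid X_i)$, so that the pair $(X_i,\pi(Y)_i)$ for $i\in N$ has conditional $Y$-marginal $P_{Y|X}(\cdot\mid X_{\pi(i)})$. Because $\pi$ permutes $N$ into itself, the empirical distribution of $X_{\pi(i)}$ over $i\in N$ again equals $\underline{t}'_X$, so the ``reference'' per-coordinate distribution of $(X_i,\pi(Y)_i)$ across all $n$ positions is the mixture $Q\triangleq(1-\alpha)P_XP_{Y''}+\alpha P_{X,Y}$, with $P_{Y''}(y)=\sum_x\underline{t}'_X(x)P_{Y|X}(y\mid x)$ as defined in the theorem. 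A conditional method-of-types argument applied to the event that the empirical joint type lies within $\epsilon$ of $P_{X,Y}$ yields a factor $2^{-nD(P_{X,Y}\|Q)+o(n)}$; the $\epsilon$-typicality slack is converted into $\delta_\epsilon$ via a first-order expansion of $D(\cdot\|Q)$ around $P_{X,Y}$ together with a worst-case (uniform over $\underline{t}'_X$) replacement of $P_{Y''}$ by $P_Y$ inside the log, producing exactly the formula for $\delta_\epsilon$ stated in the theorem.

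Combining the two exponential factors and summing over the polynomially many types in $\mathcal{P}$ (absorbed into $\zeta_n$) leaves, after minimizing over $\underline{t}'_X$, a bracket equal to twice the bracket inside $E_\alpha$. The factor $\tfrac{1}{2}$ I expect to extract by a symmetrization argument using Proposition~\ref{prop:1}(iv), which gives $P((X^n,\pi(Y^n))\in\mathcal{A}_{\epsilon}^n)=P((X^n,\pi^{-1}(Y^n))\in\mathcal{A}_{\epsilon}^n)$: squaring the target probability and bounding it by the product of the analogous decompositions through $\pi$ and $\pi^{-1}$ inflates the exponent by a factor of two, after which a square root yields the stated bound. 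The main technical obstacle I anticipate is the conditional method-of-types step above: the pairs $(X_i,\pi(Y)_i)$ for $i\in N$ are dependent through the shared $Y^n$ across cycles, and a priori their empirical joint type could depend on the specific cycle decomposition of $\pi$ restricted to $N$. Showing that the bound depends only on $m$ will likely require either averaging over $x^n$ within the type classes $T_{\underline{t}'_X}$ and $T_{\underline{t}''_X}$ so that the conditional analysis sees a ``well-mixed'' permutation, or a direct exchangeability/coupling argument that absorbs any cycle-structure-dependent lower-order terms into $\zeta_n$.
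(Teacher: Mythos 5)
Your opening moves (reduce to a standard permutation via Proposition~\ref{prop:1}(ii); split into fixed and non-fixed indices; condition on the $X$-types $\underline{t}'_X,\underline{t}''_X$) track the paper, but there are two gaps. The first is that the conditional method-of-types step cannot give $2^{-nD(P_{X,Y}||Q)}$ uniformly: conditioning on the marginal type $\underline{t}'_X$ of $X_N$ does not pin the joint type of $(X_i,X_{\pi(i)})_{i\in N}$ to $\underline{t}'_X\times\underline{t}'_X$, so $Q=(1-\alpha)P_XP_{Y''}+\alpha P_{X,Y}$ is only the average conditional reference, not a per-coordinate one. As written your two factors would multiply to $2^{-n\cdot 2E_\alpha}$, which is strictly stronger than the theorem and is false. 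Take $\alpha=0$, $X=Y$ with $P_X$ uniform on $\mathcal{X}$, and $\pi=(1\,2)(3\,4)(5\,6)\cdots$: joint $\epsilon$-typicality of $(X^n,\pi(Y^n))$ essentially forces $X_{2k-1}=X_{2k}$ for almost all $k$, an event of probability $2^{-\frac{n}{2}\log|\mathcal{X}|+o(n)}=2^{-nE_0+o(n)}$, whereas your direct bound would claim $2^{-n\log|\mathcal{X}|}=2^{-nI(X;Y)}$. On the dominating rare event the joint type of $(X_i,X_{\pi(i)})$ sits near the diagonal $\tfrac{1}{|\mathcal{X}|}\mathbbm{1}(x=x')$, far from $\underline{t}'_X\times\underline{t}'_X$, which is exactly the dependence you flag as the main obstacle; the averaging/exchangeability fix you float controls the \textit{typical} joint type of $(X_i,X_{\pi(i)})$, not the atypical one that drives the event.

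The second gap is that the $\pi/\pi^{-1}$ symmetrization does not recover the factor of $\tfrac{1}{2}$. Proposition~\ref{prop:1}(iv) gives $P((X^n,\pi(Y^n))\in\mathcal{A}_\epsilon^n)=P((X^n,\pi^{-1}(Y^n))\in\mathcal{A}_\epsilon^n)$; squaring, bounding by the product of the two equal quantities, and taking a square root returns the same bound you began with, and in the counterexample above $\pi$ is an involution so $\pi=\pi^{-1}$ and the device is literally an identity. The paper's $\tfrac{1}{2}$ arises inside the non-fixed set: it further partitions $N$ into blocks $\mathcal{A}_1,\ldots,\mathcal{A}_5$ based on alternating positions within cycles (with special handling of cycle endpoints), arranged so that the conditional $Y$-type on one subcollection depends on the $X$-type of the complementary subcollection. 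Those two families of conditional types share $Y$-values and are not jointly factorable; the inequality $P(\underline{v}_1,\ldots,\underline{v}_5\mid\cdots)\le\sqrt{P(\underline{v}_1\mid\cdots)}\,\sqrt{P(\underline{v}_2,\ldots,\underline{v}_5\mid\cdots)}$ in Equation~\eqref{eq:Ap:2} is precisely where that dependence is traded for the $\tfrac{1}{2}$, and the individual factors are then bounded via Propositions~\ref{Prop:Ap:1} and~\ref{Prop:Ap:2}. To repair your argument you need a comparable decoupling \textit{within} $N$; the fixed/non-fixed split alone leaves all of the $Y$-dependence inside a single block.
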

\vspace{-0.4in}
The proof is provided in Appendix \ref{Ap:th1:improved}. In the following, we describe an outline of the proof. Let us  define $\mathcal{A}$ as the set of fixed points of the permutation $\pi$. From the theorem statement, the set $\mathcal{A}$ includes $\alpha=\frac{m}{n}$ fraction of the indices $[n]$.  
Let $\underline{T}''_X$ be the type of the vector of $X_i, i\in \mathcal{A}$, and let $\underline{T}'_X$ be the type of the vector $X_i, i\notin \mathcal{A}$. A necessary condition for $(X^n,\pi(Y^n))$ to be jointly $\epsilon$-typical with respect to $P_{X,Y}$ is that $\underline{T}'_X=\underline{t}'_X$ and $\underline{T}''_X=\underline{t}''_X$ such that $(1-\alpha)\underline{t}'_X+\alpha \underline{t}''_X\stackrel{.}{=}P_X\pm \epsilon$. Since $X^n$ is an i.i.d sequence of variables, from standard information theoretic arguments, the probability of the event that $\underline{T}'_X=\underline{t}'_X$ decays exponentially in $n$ with exponent $\alpha D(\underline{t}'_X||P_X)$. Similarly, the probability that $\underline{T}''_X=\underline{t}''_X$ decays exponentially in $n$ with exponent $(1-\alpha)D(\underline{t}''_X||P_X)$. This justifies the term $(1-\alpha) D(\underline{t}'_X||P_X)+\alpha D(\underline{t}''_X||P_X)$ in the exponent $E_{\alpha}$ in Equation \eqref{eq:perm_bound_1}. Next, note that for $i\in \mathcal{A}$,
we have $\pi({Y}_i)=Y_i$. As a result, the joint distribution of each of the pairs  $(X_i,\pi({Y}_i)), i\in \mathcal{A}$ is $P_{X,Y}$. On the other hand, for indices $i\notin \mathcal{A}$, we have $\pi({Y}_i)= Y_{\pi(i)}$, where $\pi(i)\neq i$. So, the pair $(X_i,\pi({Y}_i)), i\notin\mathcal{A}$ is an independent pair of variables, where $X_i$ is generated based on $P_X$, and $\pi(Y_i)$ is generated based on $P_{Y|X}(\cdot| X_{\pi(i)})$. Note that given that $\underline{T}'_X=\underline{t}'_X$, the average distribution of $\pi(Y_i), i\notin \mathcal{A}$ is  $\frac{1}{n-|\mathcal{A}|}\sum_{i\notin \mathcal{A}} P_{Y_{\pi(i)}}=\sum_{x\in \mathcal{X}}\underline{t}'_X(x)P_{Y|X}(\cdot|x)= P_{Y''}$. Consequently, the average distribution of $(X_i,\pi({Y}_i)), i\notin\mathcal{A}$ is $P_XP_{Y''}$. As a result, the average distribution of $(X^n,\pi({Y}^n))$  is $(1-\alpha)P_{X}P_{Y''}+\alpha P_{X,Y}$. Hence, using standard information theoretic arguments, the probability that the pair $(X^n,\pi({Y}^n))$  is jointly $\epsilon$-typical with respect to $P_{X,Y}$ decays exponentially with exponent $D(P_{X,Y}||(1-\alpha)P_{X}P_{Y''}+\alpha P_{X,Y})$,
which appears as the third term in the exponent $E_{\alpha}$ in Equation \eqref{eq:perm_bound_1}.}
{The exponent $E_{\alpha}$ can be further simplified for special classes of permutations. For instance, if $\pi$ does not have any fixed points, the following corollary to Theorem \ref{th:1:improved} holds.}
{
\begin{Corollary}
If the permutation $\pi$ in Theorem \ref{th:1:improved} has no fixed points (i.e. $\alpha=0)$, then:
\begin{align}
    &P((X^n,\pi(Y^n))\in \mathcal{A}_{\epsilon}^n(X,Y))\leq 2^{-n (E_0-\zeta_{n}-\delta_{\epsilon})},
    \end{align}
    where, $E_0= \frac{1}{2}I(X;Y)$, and $
\zeta_{n}$ and $\delta_{\epsilon}$ are defined in Theorem \ref{th:1:improved}.
\label{th:corr}
\end{Corollary}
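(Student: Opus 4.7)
The plan is to obtain the corollary as a direct specialization of Theorem \ref{th:1:improved} to the case $\alpha = m/n = 0$, so I do not expect any genuine obstacle: essentially all the work has already been done by the theorem, and what remains is to simplify the minimization defining $E_\alpha$ when no fixed points are present. The structure of the argument is to (i) identify what the constraint set $\mathcal{P}$ becomes at $\alpha=0$, (ii) substitute into the three terms forming $E_\alpha$, and (iii) recognize the resulting quantity as a Kullback--Leibler divergence between the joint and product distributions.

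First, I would examine the feasible set $\mathcal{P} = \{\underline{t}_X \in \mathcal{P}_X : \underline{t}_X(x) \in \tfrac{1}{1-\alpha}[P_X(x) - \alpha, P_X(x)]\}$. At $\alpha = 0$ this interval collapses to the single point $\{P_X(x)\}$ for every $x \in \mathcal{X}$, so $\mathcal{P} = \{P_X\}$ and the minimization over $\underline{t}'_X$ becomes trivial: the unique feasible choice is $\underline{t}'_X = P_X$. This makes intuitive sense because when $\pi$ has no fixed points, every coordinate of $X^n$ is ``free'', so the type of the full vector is forced to coincide with the full marginal.

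Next I would substitute $\alpha = 0$ and $\underline{t}'_X = P_X$ into each of the three terms appearing inside the minimum in \eqref{eq:perm_bound_1}. The first term $(1-\alpha)D(\underline{t}'_X\|P_X)$ reduces to $D(P_X\|P_X) = 0$. The second term $\alpha D(\underline{t}''_X\|P_X)$ vanishes with the factor $\alpha = 0$ (so the definition of $\underline{t}''_X$ is immaterial). For the third term, the induced distribution is $P_{Y''}(y) = \sum_{x \in \mathcal{X}} \underline{t}'_X(x)\, P_{Y|X}(y|x) = \sum_{x} P_X(x)P_{Y|X}(y|x) = P_Y(y)$, so the reference mixture becomes $(1-\alpha)P_X P_{Y''} + \alpha P_{X,Y} = P_X P_Y$.

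Putting these pieces together gives
\begin{align*}
E_0 \;=\; \tfrac{1}{2}\bigl(0 + 0 + D(P_{X,Y}\,\|\,P_X P_Y)\bigr) \;=\; \tfrac{1}{2}I(X;Y),
\end{align*}
which is the claimed value of the exponent. The bound $P((X^n,\pi(Y^n)) \in \mathcal{A}_\epsilon^n(X,Y)) \leq 2^{-n(E_0 - \zeta_n - \delta_\epsilon)}$ then follows immediately from Theorem \ref{th:1:improved} with the same $\zeta_n$ and $\delta_\epsilon$. The only thing worth double-checking is that the definitions of $\zeta_n$ and $\delta_\epsilon$ in the theorem statement do not depend on $\alpha$ in a way that degenerates at $\alpha = 0$; inspecting them shows they do not (the $\log$ term in $\delta_\epsilon$ reduces to $|\log (P_{X,Y}/(P_XP_Y))|$, which is finite on the support of $P_{X,Y}$), so the conclusion carries over verbatim.
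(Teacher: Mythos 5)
Your proof is correct and follows essentially the same route as the paper: both specialize Theorem~\ref{th:1:improved} to $\alpha=0$, observe that $\mathcal{P}$ collapses to $\{P_X\}$ so that $\underline{t}'_X=P_X$ and $P_{Y''}=P_Y$, and recognize the surviving divergence term as $D(P_{X,Y}\|P_XP_Y)=I(X;Y)$. Your additional remark about checking that $\zeta_n$ and $\delta_\epsilon$ degenerate gracefully at $\alpha=0$ matches the paper's closing note on the simplified form of $\delta_\epsilon$.
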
}
{
Proof. The proof follows from Theorem \ref{th:1:improved}. Note that when $\alpha=0$, the set $\mathcal{P}$ in the theorem statement has a single element $P_X$. So, we have $\underline{t}'_X=P_X$, $P_{Y''}=P_Y$,  
$    E_0=
    \frac{1}{2}I(X;Y)
    ,$
and $\delta_{\epsilon}=\epsilon$ $|\mathcal{X}||\mathcal{Y}|$ $
  \big|\max_{x,y \in \mathcal{X}\times \mathcal{Y}:P_{X,Y}(x,y)\neq 0} \log{\frac{P_{X,Y}(x,y)}{
   P_X(x)P_Y(y)
  }}\big|$.}

{
 Corollary \ref{th:corr} shows that for two sequences $(X^n,Y^n)$ generated jointly according to $P_{X,Y}$ and a permutation $\pi$ without any fixed points, the probability that the pair $(X^n,\pi(Y^n))$ are jointly typical with respect to $P_{X,Y}$ decays exponentially in $n$ with exponent $
\frac{1}{2}I(X;Y)$. Note that since there are no fixed points in the permutation, each pair $(X_i,\pi(Y_i)), i\in [n]$ has joint distribution $P_XP_Y$. So, there is no `single-letter' correlation among the elements of $(X^n,\pi(Y^n))$. It is well-known that if the sequences $X^n$ and $Y^n$ are generated independently of each other according to marginal distributions $P_X$ and $P_Y$, respectively, then the probability that they are jointly typical with respect to the joint distribution $P_{X,Y}$ decays exponentially in $n$ with exponent $I(X;Y)$ (e.g. \cite{csiszarbook}).
The coefficient $\frac{1}{2}$ in exponent in Corollary \ref{th:corr} is an artifact of the $n$-letter correlation between $(X^n,\pi(Y^n))$ due to the fact that the original pair of sequences $(X^n,Y^n)$ are correlated. }

{Theorem \ref{th:1:improved} is used in the next sections to derive sufficient conditions under which pairs of correlated graphs can be matched successfully. However, the arguments in the proof of the theorem do not  extend naturally to typicality of collections of more than two permuted sequences. Bounds on the probability of joint typicality of such collections are necessary for evaluating graph matching for collections of graphs studied in Section \ref{sec:coll}. 
To this end, the following theorem and the ensuing corollary provide an alternative bound on the probability of joint typicality of $(X^n,\pi(Y^n))$ for an arbitrary permutation $\pi$ which is then extended to evaluate the typicality of collections of more than two sequences in Theorem \ref{th:cperm}. This is used in Section \ref{sec:coll} to evaluate  matching of more than two graphs. Additionally, we will observe in the proof of Theorem \ref{th:ER} in Section \ref{sec:CER} that  $E'_{\alpha}$ derived below yields tighter bounds on the probability of joint typicality of $(X^n,\pi(Y^n))$ for large $\alpha$ compared to $E_{\alpha}$ derived in Theorem \ref{th:1:improved}.}
{\begin{Theorem}
Let $(X^n,Y^n)$ be a pair of i.i.d sequences defined on finite alphabets $\mathcal{X}$ and $\mathcal{Y}$, respectively. For any permutation $\pi$ with $m\in [n]$ fixed points, the following holds:
 \begin{align}
  \label{eq:perm_bound}
  &P((X^n,\pi(Y^n))\in \mathcal{A}_{\epsilon}^n(X,Y)) 
     \leq 2^{-n(E'_{\alpha}-{\zeta'_n}-{\delta_{\epsilon}})},
\\&
   E'_{\alpha}=\min_{\underline{t}'_{X,Y}\in \mathcal{P}'} \left(\frac{1-\alpha}{3}\right)
   D(\underline{t}'_{X,Y}||P_{X}P_{Y})+
   \alpha D(\underline{t}''_{X,Y}||P_{X,Y}),
\end{align}
where $\alpha\triangleq \frac{m}{n}$, $\mathcal{P}'\triangleq \{ \underline{t}_{X,Y}\in \mathcal{P}_{X,Y}|\forall (x,y)\in \mathcal{X}\times \mathcal{Y}: \underline{t}_{X,Y}(x,y)\in \frac{1}{1-\alpha}[P_{X,Y}(x,y)-\alpha, P_{X,Y}(x,y)]\}$, $\mathcal{P}_{X,Y}$ is the probability simplex on the alphabet $\mathcal{X}\times \mathcal{Y}$, $\underline{t}''_{X,Y}\triangleq \frac{1}{\alpha}(P_{X,Y}-(1-\alpha) \underline{t}'_{X,Y})$,
$\zeta'_n=4|\mathcal{X}||\mathcal{Y}|\log{\frac{n+1}{n}}$, and $\delta_{\epsilon}$ is defined as in Theorem \ref{th:1:improved}. 
\label{th:1}
\end{Theorem}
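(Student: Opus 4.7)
The plan is to split the analysis between the $m=\alpha n$ fixed indices of $\pi$ and the $(1-\alpha)n$ non-fixed indices, apply Sanov-type bounds separately on each part, and combine via a union bound over feasible joint types. By part (ii) of Proposition \ref{prop:1} one may assume $\pi$ is the standard $(m,c,i_1,\ldots,i_c)$-permutation; let $\mathcal{A}\subset[n]$ be its fixed-point set and $\mathcal{A}^c$ the union of its non-trivial cycles. Since $\mathcal{A}$ and $\mathcal{A}^c$ are disjoint, the underlying i.i.d.\ pairs $(X_j,Y_j)$ with $j\in\mathcal{A}$ are independent of those with $j\in\mathcal{A}^c$. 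Let $\underline{T}'$ be the joint type of $(X_i,Y_{\pi(i)})_{i\in\mathcal{A}^c}$ and $\underline{T}''$ the joint type of $(X_i,Y_i)_{i\in\mathcal{A}}$. Membership of $(X^n,\pi(Y^n))$ in $\mathcal{A}_\epsilon^n(X,Y)$ forces $(1-\alpha)\underline{T}'+\alpha\underline{T}''\stackrel{.}{=}P_{X,Y}\pm\epsilon$, which up to an $O(\epsilon)$ slack (absorbed into $\delta_\epsilon$) restricts $\underline{T}'$ to $\mathcal{P}'$ and pins $\underline{t}''=\tfrac{1}{\alpha}(P_{X,Y}-(1-\alpha)\underline{t}')$. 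A union bound over the polynomially many feasible type pairs reduces the task to bounding $P(\underline{T}'=\underline{t}',\,\underline{T}''=\underline{t}'')$ for each fixed pair.

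For the fixed-point contribution the pairs $(X_i,Y_i)_{i\in\mathcal{A}}$ are i.i.d.\ $P_{X,Y}$, so the method of types gives $P(\underline{T}''=\underline{t}'')\leq(n+1)^{|\mathcal{X}||\mathcal{Y}|}\,2^{-\alpha n\,D(\underline{t}''\|P_{X,Y})}$. The non-fixed contribution is the crux, and the key step is to partition $\mathcal{A}^c$ into three color classes $S_1,S_2,S_3$, each of size at least $\lfloor(1-\alpha)n/3\rfloor$, such that within each $S_l$ the pairs $\{(X_i,Y_{\pi(i)}):i\in S_l\}$ are mutually independent and each marginally distributed as $P_XP_Y$ (marginal independence holding because $i$ is not a fixed point). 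Two non-fixed pairs at indices $i\ne j$ are independent iff $i\ne\pi(j)$ and $j\ne\pi(i)$, i.e., their indices lie at cyclic distance at least two in their common cycle; the induced dependency graph on a cycle of length $k$ is a $k$-cycle, which is $3$-chromatic. A valid coloring is obtained by assigning class index $i\bmod 3$ within each cycle, with small case-by-case fixes for cycles of length two or three and to absorb the wraparound when $k\equiv 1\pmod{3}$. Writing $\underline{T}_l$ for the joint type of $S_l$, the averaging identity $\underline{T}'=\sum_l(|S_l|/|\mathcal{A}^c|)\,\underline{T}_l$ together with convexity of $D(\,\cdot\,\|P_XP_Y)$ forces $\max_l D(\underline{T}_l\|P_XP_Y)\geq D(\underline{T}'\|P_XP_Y)$. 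Applying the method of types within each color class and union-bounding over $l\in\{1,2,3\}$ and over the polynomially many types with divergence at least $D(\underline{t}'\|P_XP_Y)$ then yields
\[
P(\underline{T}'=\underline{t}')\leq 3(n+1)^{|\mathcal{X}||\mathcal{Y}|}\,2^{-\frac{(1-\alpha)n}{3}D(\underline{t}'\|P_XP_Y)}.
\]

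Multiplying the fixed- and non-fixed-point bounds, summing over the polynomially many feasible type pairs, and minimizing the exponent over $\underline{t}'\in\mathcal{P}'$ produces the exponent $E'_\alpha$ claimed in the theorem, with the polynomial pre-factors absorbed into $\zeta'_n$ and the $\epsilon$-slack from the typicality definition absorbed into $\delta_\epsilon$. The principal obstacle is the 3-coloring construction and the verification that pairs within each color class are mutually independent $P_XP_Y$: a naive residue-mod-3 coloring fails at the wraparound of cycles with $k\equiv 1\pmod{3}$ and requires case-by-case adjustment, and the convexity step is what converts independence on a subset of size only $(1-\alpha)n/3$ into a bound on the full joint type. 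The resulting factor $1/3$ is precisely what makes the exponent $E'_\alpha$ incomparable to the $E_\alpha$ of Theorem \ref{th:1:improved}, and is what allows the argument to extend cleanly to collections of more than two sequences in Theorem \ref{th:cperm}.
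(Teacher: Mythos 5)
Your 3-coloring of the dependency graph is the right frame, and it is in fact essentially the paper's own decomposition: the sets $\mathcal{A}_1$ (odd non-terminal positions of each cycle), $\mathcal{A}_2$ (endpoints of odd cycles), and $\mathcal{A}_3$ (even positions) in Appendix B constitute a proper 3-coloring of the union of dependency cycles, with $\mathcal{A}_4$ the fixed points. The genuine gap in your argument is in how the three per-color bounds are combined. You correctly observe, by convexity, that $\{\underline{T}'=\underline{t}'\}$ forces $D(\underline{T}_l\|P_XP_Y)\geq D(\underline{t}'\|P_XP_Y)$ for at least one $l$, and then union-bound over $l$; this produces $\sum_l(n+1)^{|\mathcal{X}||\mathcal{Y}|}2^{-|S_l|D(\underline{t}'\|P_XP_Y)}$, which gives the target exponent $\frac{(1-\alpha)n}{3}D(\underline{t}'\|P_XP_Y)$ only when $\min_l|S_l|\geq(1-\alpha)n/3$, i.e.\ only when the coloring is nearly perfectly balanced. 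That is a real additional combinatorial burden: the paper's own $(\mathcal{A}_1,\mathcal{A}_2,\mathcal{A}_3)$ is badly unbalanced ($\mathcal{A}_2$ has exactly one index per odd cycle and is empty if all cycles are even), so the union bound applied to it is vacuous. Achieving global balance requires not only within-cycle equitable coloring (your mod-3 construction with endpoint fixes) but also \emph{cross-cycle} rebalancing so that no single color is globally starved (e.g.\ many 2-cycles each force one color to be omitted, and you must rotate which one); this cross-cycle accounting is missing from your sketch.

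The paper sidesteps balance entirely with a Hölder-type step: since $P(\bigcap_{j\leq 3}\{\underline{T}_j=\underline{t}_j\})\leq P(\underline{T}_j=\underline{t}_j)$ for each $j$,
\begin{align*}
P\Bigl(\bigcap_{j\leq 3}\{\underline{T}_j=\underline{t}_j\}\Bigr)\leq \Bigl(\prod_{j\leq 3}P(\underline{T}_j=\underline{t}_j)\Bigr)^{1/3}\leq 2^{-\frac{1}{3}\sum_j |\mathcal{A}_j|D(\underline{t}_j\|P_XP_Y)},
\end{align*}
and Jensen applied to $\underline{t}'=\sum_j(|\mathcal{A}_j|/|\mathcal{A}^c|)\underline{t}_j$ collapses the weighted sum to $(1-\alpha)n\,D(\underline{t}'\|P_XP_Y)$ regardless of how the $|\mathcal{A}_j|$ split. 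Your proof is repaired most cheaply by replacing the union-bound step with this geometric-mean inequality applied to your own color classes $(S_1,S_2,S_3)$; balance then becomes irrelevant. (Minor: $C_k$ is 3-chromatic only for odd $k$ --- even cycles and $K_2$ for the 2-cycle are 2-chromatic, which only helps you --- but this does not affect the argument once the Hölder step is in place.)
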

\begin{proof}
Appendix \ref{app:th1}.
\end{proof}}
{
Computing  $E'_{\alpha}$ requires optimizing over $\underline{t}'_{X,Y}$, which is computationally challenging for large alphabets. The following removes the optimization and provides a lower bound for  $E'_{\alpha}$.\hspace{-.1in} } 
{\begin{Corollary}
Let $(X^n,Y^n)$ be a pair of i.i.d sequences defined on finite alphabets $\mathcal{X}$ and $\mathcal{Y}$, respectively. For any permutation $\pi$ with $m\in [n]$ fixed points, the following holds:
 \begin{align}
  &P((X^n,\pi(Y^n))\in \mathcal{A}_{\epsilon}^n(X,Y)) 
 \leq 2^{-n(\widehat{E}'_{\alpha}-\zeta'_n-\frac{\delta_{\epsilon}}{3})},
  \\& \widehat{E}'_{\alpha}= \frac{1}{3}D(P_{X,Y}
 ||(1-\alpha)P_XP_Y+ \alpha P_{X,Y}),
\end{align}
where $\alpha\triangleq \frac{m}{n}$,
$\zeta'_n\triangleq 4|\mathcal{X}||\mathcal{Y}|\log{\frac{n+1}{n}}$ and $\delta_{\epsilon}$ is defined as in Theorem \ref{th:1:improved}. 
\label{th:cor:1}
\end{Corollary}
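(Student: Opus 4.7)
The plan is to derive the corollary from Theorem \ref{th:1} by showing that the exponent $E'_{\alpha}$ of that theorem is lower bounded by $\widehat{E}'_{\alpha}$, thereby eliminating the minimization over $\underline{t}'_{X,Y}$. The key structural fact is that by the definition $\underline{t}''_{X,Y}\triangleq \frac{1}{\alpha}(P_{X,Y}-(1-\alpha)\underline{t}'_{X,Y})$, one has the identity $(1-\alpha)\underline{t}'_{X,Y}+\alpha \underline{t}''_{X,Y}=P_{X,Y}$, and the constraint set $\mathcal{P}'$ guarantees that $\underline{t}''_{X,Y}$ is a valid probability mass function on $\mathcal{X}\times\mathcal{Y}$.

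The first step is a mild relaxation using nonnegativity of the Kullback–Leibler divergence: since $\alpha\ge \frac{\alpha}{3}$ and $D(\underline{t}''_{X,Y}\|P_{X,Y})\ge 0$, for every $\underline{t}'_{X,Y}\in\mathcal{P}'$,
\begin{align*}
\frac{1-\alpha}{3}D(\underline{t}'_{X,Y}\|P_XP_Y)+\alpha D(\underline{t}''_{X,Y}\|P_{X,Y})
\;\ge\; \frac{1}{3}\Bigl[(1-\alpha)D(\underline{t}'_{X,Y}\|P_XP_Y)+\alpha D(\underline{t}''_{X,Y}\|P_{X,Y})\Bigr].
\end{align*}
The second step is to invoke the joint convexity of $D(\cdot\|\cdot)$ on the bracketed quantity. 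With mixing weight $\lambda=1-\alpha$ and the identity above,
\begin{align*}
(1-\alpha)D(\underline{t}'_{X,Y}\|P_XP_Y)+\alpha D(\underline{t}''_{X,Y}\|P_{X,Y})
\;\ge\; D\bigl(P_{X,Y}\,\big\|\,(1-\alpha)P_XP_Y+\alpha P_{X,Y}\bigr).
\end{align*}
Chaining these two inequalities and minimizing over $\underline{t}'_{X,Y}\in\mathcal{P}'$ yields $E'_{\alpha}\ge \widehat{E}'_{\alpha}$. Substituting into the exponential bound of Theorem \ref{th:1} delivers the required upper bound on $P((X^n,\pi(Y^n))\in\mathcal{A}_{\epsilon}^n(X,Y))$.

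The only remaining item is the bookkeeping of constants, specifically the change from $\delta_{\epsilon}$ in Theorem \ref{th:1} to $\frac{\delta_{\epsilon}}{3}$ in the corollary. I expect this to be the main (and only) obstacle: it cannot be produced by the convexity step alone, so one must trace how $\delta_{\epsilon}$ enters the proof of Theorem \ref{th:1}. My expectation is that $\delta_{\epsilon}$ arises from the typicality slack in the single-letter divergence $D(P_{X,Y}\|\cdot)$ that already appears before the convex combination is split, so after expressing the bound purely in terms of that single divergence (as opposed to the sum of two divergences against $P_XP_Y$ and $P_{X,Y}$), the coefficient in front of $\delta_{\epsilon}$ naturally matches the $\frac{1}{3}$ that now multiplies the KL term. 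This is routine constant-tracking and involves no new ideas beyond those in the proof of Theorem \ref{th:1}; the substantive content of the corollary is the one-line convexity argument above.
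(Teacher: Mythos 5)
Your central convexity step is exactly the right idea, and it is the same observation the paper itself highlights (see the proof of Lemma \ref{lem:exp}): using $(1-\alpha)\underline{t}'_{X,Y}+\alpha\underline{t}''_{X,Y}=P_{X,Y}$ and joint convexity of the KL divergence, one gets $E'_{\alpha}\geq \widehat{E}'_{\alpha}$ immediately. That part is fine.

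The issue is the $\frac{\delta_\epsilon}{3}$, which you flagged but underestimated. If you apply your argument to the \emph{final statement} of Theorem \ref{th:1} as a black box, you obtain $P((X^n,\pi(Y^n))\in\mathcal{A}_{\epsilon}^n)\leq 2^{-n(\widehat{E}'_{\alpha}-\zeta'_n-\delta_{\epsilon})}$, which is strictly weaker than the claimed $2^{-n(\widehat{E}'_{\alpha}-\zeta'_n-\frac{\delta_{\epsilon}}{3})}$. No amount of constant-tracking \emph{within} Theorem \ref{th:1}'s bound can shrink that $\delta_{\epsilon}$ to $\frac{\delta_{\epsilon}}{3}$, because in Theorem \ref{th:1}'s proof the typicality-slack penalty is attached to the factor $\exp_2(-n\alpha_4 D(\underline{t}''||P_{X,Y}))$ that sits \emph{outside} the cube root. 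The paper's proof of the corollary does not go through Theorem \ref{th:1}'s conclusion at all: it returns to the intermediate display \eqref{eq:cor:proof}, moves the fourth divergence term $\alpha_4 D(\underline{t}_4\|P_{X,Y})$ \emph{inside} the cube root (a deliberate weakening), applies convexity of the KL divergence to all four terms at once to collapse them into $D(\sum_j\alpha_j\underline{t}_j\,\|\,(1-\alpha)P_XP_Y+\alpha P_{X,Y})$, and only then invokes Proposition \ref{Prop:div} to replace $\sum_j\alpha_j\underline{t}_j$ by $P_{X,Y}$; because all of this happens under the $\sqrt[3]{\cdot}$, the $\delta_{\epsilon}$ penalty is automatically divided by $3$. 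Your intuition — that expressing the bound in terms of a single divergence before the cube root is what produces the $\frac{1}{3}$ — is correct in spirit, but it is a structural rearrangement of the proof, not mere bookkeeping, and your proposal as written proves only the weaker $\delta_{\epsilon}$ version.
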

\begin{proof}
Appendix \ref{app:cor:th1}.
\end{proof}}
{
It is desirable to find the largest exponent which can be used to bound the exponential decay in the probability of joint typicality of $(X^n,\pi(Y^n))$. Consequently, a question of interest is whether one of the two exponents $E_{\alpha}$ and $E'_{\alpha}$ is strictly larger than the other. Towards such a comparison,
the next lemma shows that the relation $\frac{2}{3}E_{\alpha}\leq  \widehat{E}'_{\alpha}\leq E'_{\alpha} $ holds. On the other hand, it can be shown through analytical evaluations of the bounds under specific distributions $P_{X,Y}$ that in some instances, the relation $E'_{\alpha}< E_{\alpha}$ holds. We will observe in the proof of Theorem \ref{th:ER} in Section \ref{sec:CER} that $E_{\alpha}$ in Theorem \ref{th:1:improved} yields tighter bounds on the probability of joint typicality when $\alpha$ is small, whereas $E'_\alpha$ in Theorem \ref{th:1} is useful when evaluating  permutations with large $\alpha$.
\begin{Lemma}
For the exponents $E_{\alpha}$, $E'_{\alpha}$, and $\widehat{E}'_{\alpha}$ in Theorems \ref{th:1:improved} and \ref{th:1} and Corollary \ref{th:cor:1},  we have:
\begin{align*}
    \frac{2}{3}E_{\alpha}\leq  \widehat{E}'_{\alpha}\leq E'_{\alpha}
\end{align*}
\label{lem:exp}
\end{Lemma}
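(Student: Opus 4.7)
The plan is to prove each inequality separately, with the first following from joint convexity of KL divergence and the second from evaluating the minimization defining $E_\alpha$ at a convenient point.

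For the upper bound $\widehat{E}'_{\alpha}\leq E'_{\alpha}$, I would start from the observation that, by construction, any $\underline{t}'_{X,Y}\in \mathcal{P}'$ satisfies the convex decomposition
\begin{equation*}
P_{X,Y} \;=\; (1-\alpha)\,\underline{t}'_{X,Y} + \alpha\,\underline{t}''_{X,Y},
\end{equation*}
and the target measure in $\widehat{E}'_{\alpha}$ admits the matching decomposition $(1-\alpha)P_XP_Y + \alpha P_{X,Y}$. Joint convexity of the Kullback-Leibler divergence in both arguments then yields
\begin{equation*}
D\bigl(P_{X,Y}\,\|\,(1-\alpha)P_XP_Y+\alpha P_{X,Y}\bigr) \;\leq\; (1-\alpha)\,D\bigl(\underline{t}'_{X,Y}\,\|\,P_XP_Y\bigr) + \alpha\,D\bigl(\underline{t}''_{X,Y}\,\|\,P_{X,Y}\bigr).
\end{equation*}
Dividing by $3$ and using the fact that $\alpha\,D(\underline{t}''_{X,Y}\,\|\,P_{X,Y})\geq \tfrac{\alpha}{3}\,D(\underline{t}''_{X,Y}\,\|\,P_{X,Y})$ (by nonnegativity of KL divergence), I obtain $\widehat{E}'_{\alpha}\leq \frac{1-\alpha}{3}D(\underline{t}'_{X,Y}\,\|\,P_XP_Y) + \alpha D(\underline{t}''_{X,Y}\,\|\,P_{X,Y})$ for every admissible $\underline{t}'_{X,Y}$, and minimizing over $\mathcal{P}'$ gives $\widehat{E}'_{\alpha}\leq E'_{\alpha}$.

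For the lower bound $\frac{2}{3}E_{\alpha}\leq \widehat{E}'_{\alpha}$, the idea is to exhibit one feasible choice in the minimization defining $E_\alpha$ that matches $\tfrac{3}{2}\widehat{E}'_{\alpha}$. I would pick $\underline{t}'_X=P_X$. A quick check shows $P_X\in\mathcal{P}$: the condition $(1-\alpha)P_X(x)\in [P_X(x)-\alpha,P_X(x)]$ reduces to $0\leq \alpha(1-P_X(x))\leq \alpha$, which holds for all $x$. With this choice, $\underline{t}''_X=\tfrac{1}{\alpha}(P_X-(1-\alpha)P_X)=P_X$, so the first two divergences in $E_\alpha$ vanish, and moreover $P_{Y''}(\cdot)=\sum_x P_X(x)P_{Y|X}(\cdot|x)=P_Y$. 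Therefore
\begin{equation*}
E_{\alpha}\;\leq\; \tfrac{1}{2}\,D\bigl(P_{X,Y}\,\|\,(1-\alpha)P_XP_Y+\alpha P_{X,Y}\bigr) \;=\; \tfrac{3}{2}\,\widehat{E}'_{\alpha},
\end{equation*}
which rearranges to $\frac{2}{3}E_{\alpha}\leq \widehat{E}'_{\alpha}$, completing the chain.

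No step looks difficult. The only slightly delicate point is checking that the feasible-set constraints in $\mathcal{P}$ and $\mathcal{P}'$ accommodate the convex decompositions used (in particular that the degenerate choice $\underline{t}'_X=P_X$ lies in $\mathcal{P}$), but this is a direct verification. No sophisticated convex-analytic machinery beyond joint convexity of $D(\cdot\|\cdot)$ and nonnegativity is needed.
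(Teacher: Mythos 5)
Your proof is correct and follows the same route as the paper: the first inequality is joint convexity of KL divergence applied to the decomposition $P_{X,Y}=(1-\alpha)\underline{t}'_{X,Y}+\alpha\underline{t}''_{X,Y}$ against $(1-\alpha)P_XP_Y+\alpha P_{X,Y}$, and the second comes from evaluating the minimization in $E_\alpha$ at $\underline{t}'_X=P_X$, which makes the two marginal-divergence terms vanish and gives $P_{Y''}=P_Y$. You fill in slightly more detail (the explicit feasibility check $P_X\in\mathcal{P}$ and the step $\frac{\alpha}{3}D\leq\alpha D$), but the argument is identical to the paper's.
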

\vspace{-0.6in}
\begin{proof}
The relation $\widehat{E}'_{\alpha}\leq E'_{\alpha}$ follows by convexity of KL divergence. 
Also, note that
\begin{align*}
    \frac{2}{3}E_{\alpha}&= \frac{2}{3} \min_{\underline{t}'_X\in
\mathcal{P}}\frac{1}{2}\left((1-\alpha)D(\underline{t}'_X||P_X)+\alpha D(\underline{t}''_{X}||P_X)+ D(P_{X,Y}|| (1-\alpha) P_XP_{Y''}+\alpha P_{X,Y})\right)
\\&\leq \frac{1}{3}D(P_{X,Y}|| (1-\alpha) P_XP_{Y}+\alpha P_{X,Y})=\widehat{E}'_{\alpha}, 
\end{align*}
where the inequality follows by taking  $\underline{t}'_X= P_X$. \textcolor{black}{Note that this leads to $\underline{t}''_X= P_X$, so  that $D(\underline{t}'_X||P_X)=D(\underline{t}''_X||P_X)=0$.} 
\end{proof}}
{We have provided bounds on the probability of joint typicality of $X^n$ and $\pi(Y^n)$ as a function of the number of fixed points $m$ of the permutation $\pi(\cdot)$. Such bounds are often used in error analysis and derivation of error bounds in various applications \cite{tuncel2005error,shirani2018typicality,csiszar1998method}, and we will use them in the following sections to evaluate the probability of error in graph matching scenarios. 
In order to evaluate the error exponents, the following results on the limiting behavior of the number of distinct permutations with a given number of fixed points are needed. }
\begin{Lemma}
\label{lem:dercount}
Let $n\in \mathbb{N}$. Let $N_{m}$ be the number of distinct permutations with exactly $m\in [0,n]$ fixed points. Then, 
\begin{align}
    \frac{n!}{m!(n-m)}\leq  N_m= {n \choose m} !(n-m)\leq n^{n-m}.
    \label{eq:der1}
\end{align}
Particularly, let $m= \alpha n, 0<\alpha<1$. Then, the following holds:
\begin{align}
    \lim_{n\to \infty} \frac{\log{N_m}}{n\log{n}}= 1-\alpha.
    \label{eq:der2}
\end{align}
\end{Lemma}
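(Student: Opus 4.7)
The plan proceeds in three stages: establish the combinatorial identity $N_m = \binom{n}{m}\,!(n-m)$, derive the two explicit bounds, and then pass to the asymptotic. I would begin with the identity by a straightforward double counting: to specify a permutation $\pi \in \mathcal{S}_n$ with exactly $m$ fixed points, first choose the subset $\mathcal{F} \subseteq [n]$ of fixed indices (which contributes $\binom{n}{m}$), and then observe that the restriction of $\pi$ to $[n]\setminus \mathcal{F}$ must be a derangement of that $(n-m)$-element set, for otherwise an additional fixed point would arise outside $\mathcal{F}$. The number of such derangements is $!(n-m)$ by definition.

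For the upper bound, I would use the trivial inequality $!(n-m) \leq (n-m)!$ to write
\[
N_m \;\leq\; \binom{n}{m}(n-m)! \;=\; \frac{n!}{m!} \;=\; n(n-1)\cdots(m+1) \;\leq\; n^{n-m}.
\]
For the lower bound, rather than invoking the alternating-sum formula for $!(n-m)$, I would use the simple observation that every single cycle of length $n-m$ acting on an $(n-m)$-element set is a derangement, and there are $(n-m-1)!$ such cycles. This gives $!(n-m)\geq (n-m-1)!$, so
\[
N_m \;\geq\; \binom{n}{m}(n-m-1)! \;=\; \frac{n!}{m!\,(n-m)}.
\]
The hypothesis $n-m \geq 2$ needed for this cyclic argument is automatic for $0<\alpha<1$ and all sufficiently large $n$.

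Finally, for the asymptotic with $m = \alpha n$, I would take logarithms of both bounds. The upper bound gives immediately $\log N_m \leq (n-m)\log n = (1-\alpha)\,n\log n$. For the matching lower bound I would apply Stirling's approximation to the lower bound above:
\[
\log N_m \;\geq\; \log n! - \log m! - \log(n-m) \;=\; (1-\alpha)\,n\log n \,+\, O(n),
\]
where the $O(n)$ term absorbs $-\alpha n \log \alpha$ together with the linear-in-$n$ remainders from Stirling. Dividing both inequalities by $n\log n$ and letting $n\to\infty$ squeezes the ratio to $1-\alpha$, which is \eqref{eq:der2}.

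There is no substantive obstacle here: the lemma reduces to standard counting combined with Stirling. The only points requiring modest care are (i) justifying that the restriction of $\pi$ to the non-fixed indices is a genuine derangement (needed to avoid overcounting in the identity), and (ii) verifying that the sub-dominant terms in the lower bound are $o(n\log n)$, so that both bounds really do pinch the ratio to the same limit.
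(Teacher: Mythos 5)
Your argument is correct, and the upper bound and the Stirling/squeeze step coincide with what the paper does. Where you depart is in the lower bound on derangements: the paper establishes the recursion $!k \geq (k-1)\,!(k-1)$ by choosing $\pi(1)\neq 1$ and appending a derangement of the remaining $k-1$ symbols, then iterates to get $!k\geq (k-1)!$, whereas you simply observe that the $(k-1)!$ single $k$-cycles on a $k$-element set are already derangements. Your route is shorter and self-contained; the paper's constructive device is the more reusable one, and is in fact invoked again almost verbatim in the proof of Lemma~\ref{lem:foldcount} to lower-bound the number of $k$-fold derangements, where your cyclic shortcut does not directly apply (a cycle avoiding the identity need not also avoid a second prescribed permutation pointwise). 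You also supply the double-counting proof of the identity $N_m={n \choose m}\,!(n-m)$, which the paper asserts without argument, and you correctly flag that $n-m\geq 2$ is needed for the lower bound; this is harmless for the asymptotic since $m=\alpha n$ with $0<\alpha<1$ forces $n-m\to\infty$.
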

\begin{proof}
Appendix \ref{app:lem2}.
\end{proof}

\subsection{Typicality of Permutations of Collections of Correlated Sequences}
We consider joint typicality of permutations of more than two correlated sequences $(X^n_{(1)},X^n_{(2)},\cdots,$ $X^n_{(k)}), n\in \mathbb{N}, k>2$.  The derivations in this section are used in Section \ref{sec:coll} to extend the analysis of the TM strategy to simultaneous matching of collections of more than two graphs.  
\begin{Definition}[\bf Strong Typicality of Collections of Sequences \cite{csiszarbook}]
\label{Def:ctyp}
{Let the random vector $X^k$ be defined on the probability space $(\prod_{j\in [k]}\mathcal{X}_j,P_{X^k})$, where $\mathcal{X}_j, j\in [k]$ are finite alphabets, and $k>2$. The $\epsilon$-typical set of sequences of length $n$ with respect to $P_{X^k}$ is defined as:
\begin{align*}
    &\mathcal{A}_{\epsilon}^n(X^k)=\Big\{(x_{(j)}^n)_{j\in [k]}: \underline{t}(\alpha^k)\stackrel{.}{=}P_{X^k}(\alpha^k)\pm \epsilon, \forall \alpha^k\in \prod_{j\in [k]}\mathcal{X}_j\Big\},
\end{align*}
where $\epsilon>0$, and $\underline{t}(\alpha^k)= \frac{1}{n}\sum_{i=1}^n \mathbbm{1}\big((x_{(j),i})_{j\in [k]}=$ $\alpha^k\big)$ is the type of $(x_{(j)}^n)_{j\in [k]}$. } 
\end{Definition} 
In the previous section, in order to investigate the typicality of permutations of pairs of correlated sequences, we introduced standard permutations which are completely characterized by the number of fixed points, number of cycles, and cycle lengths of the permutation. The concept of standard permutations does not extend naturally when there are more than two sequences (i.e. more than one non-trivial permutation). Consequently, investigating typicality of permutations of collections of sequences requires developing additional analytical tools described next. 

\begin{Definition}[\bf Bell Number \cite{comtet2012advanced}]
 Let $\mathsf{P}=\{\mathcal{P}_1,\mathcal{P}_2,\cdots, \mathcal{P}_{b_k}\}$ be the set of all partitions of $[1,k]$.  The natural number $b_k$ is the $k$'th Bell number. {We make the convention that that $\mathcal{P}_{b_k}= \{[n]\}$.}
\end{Definition}
In the following, we define Bell permutation vectors which are analogous to standard permutations for the case when the problem involves more than one non-trivial permutation.

\begin{Definition}[\bf Partition Correspondence]
\label{Def:corr}
Let $k,n\in \mathbb{N}$ and $(\pi_1,\pi_2,\cdots,\pi_k)$ be arbitrary permutations operating on $n$-length vectors.
The index $i\in [1,n]$ is said to correspond to the partition $\mathcal{P}_j\in \mathsf{P}$ of the set $[1,k]$ if the following holds: 
\begin{align*}
    \forall l,l'\in [1,k]: \pi^{-1}_l(i)=\pi^{-1}_{l'}(i) \iff \exists r: l,l' \in \mathcal{D}_{j,r},
\end{align*}
where $\mathcal{P}_j=\{\mathcal{D}_{j,1},\mathcal{D}_{j,2},\cdots,\mathcal{D}_{j,|\mathcal{P}_j|}\}$.
\end{Definition}

\begin{Example}
Let us consider a triple of permutations of $n$-length sequences, i.e. $k=3$, and the partition $\mathcal{P}= \{\{1,2\},\{3\}\}$. An index $i\in [n]$ corresponds to $\mathcal{P}$ if the first two permutations map the index to the same integer and the third permutation maps the index to a different integer. 
\end{Example} 

\begin{Definition}[\bf Bell Permutation Vector]
Let $(i_1,i_2,\cdots,i_{b_k})$ be an arbitrary sequence, where $\sum_{k\in [b_k]}{i_k}=n, i_k\in [0,n]$, $b_k$ is the $k$th Bell number, and $n,k \in \mathbb{N}$.
 The vector of permutations $(\pi_1,\pi_2,\cdots, \pi_k)$ is called an $(i_1,i_2,\cdots,i_{b_k})$-Bell permutation vector if for every partition $\mathcal{P}_k$ exactly $i_k$ indices correspond to that partition. Equivalently:
\begin{align*}
    &\forall j\in [b_k]: i_k=|\{i\in [n]: \forall l,l'\in [k]: \pi^{-1}_l(i)=\pi^{-1}_{l'}(i) \iff \exists r\in [|\mathcal{P}_j|]: l,l' \in \mathcal{D}_{j,r}\}|,
\end{align*}
where $\mathcal{P}_j=\{\mathcal{D}_{j,1},\mathcal{D}_{j,2},\cdots,\mathcal{D}_{j,|\mathcal{P}_j|}\}$
\label{def:bell}.
\end{Definition}

The definition of Bell permutation vectors is further clarified through the following example.

\begin{Example}
Consider three permutations $(\pi_1,\pi_2,\pi_3)$ of vectors with length seven, i.e. $k=3$ and $n=7$. Then, $b_k=5$ and we have:
\begin{align*}
    &\mathcal{P}_1=\{\{1\},\{2\},\{3\}\}, 
    \quad 
    \mathcal{P}_2=\{\{1,2\},\{3\}\},
    \quad 
    \mathcal{P}_3=\{\{1,3\},\{2\}\},
    \\
    &\qquad\qquad\qquad \mathcal{P}_4=\{\{1\},\{2,3\}\}, 
    \quad 
    \mathcal{P}_5=\{\{1,2,3\}\}.
\end{align*}
Let $\pi_1$ be the identity permutation, $\pi_2= (1 3 5)(2 4)$, and $\pi_3= (1 5)(2 4)(3 7)$. Then:
\begin{align*}
    & \pi_1((1,2,\cdots,7))=(1,2,3,4,5,6,7), \qquad\pi_2((1,2,\cdots,7))=(5,4,1,2,3,6,7),
    \\
     & \pi_3((1,2,\cdots,7))=(5,4,7,2,1,6,3),
\end{align*}
 The vector $(\pi_1,\pi_2,\pi_3)$ is a $(2,1,0,3,1)$-Bell permutation vector, where the indices $(3,5)$ correspond to the $\mathcal{P}_1$ partition (each of the three permutations map indices $(3,5)$ to a different integer), index $7$ corresponds to the $\mathcal{P}_2$ partition (the first two permutations map the index 7 to the same integer which is different from the one for the third permutation), indices $(1,2,4)$  correspond to the $\mathcal{P}_4$ permutation (the second and third permutations map the indices $(1,2,4)$ to the same integer which is different from the output of the first permutation),
and index $6$ corresponds to $\mathcal{P}_5$ (all permutations map the index 6 to the same integer). None of the indices corresponds to $\mathcal{P}_3$ since there is no index which is mapped to the same integer by the first and third permutations and a different integer by the second permutation.
\end{Example}

\begin{Remark}
Bell permutation vectors are not unique. There can be several distinct $(i_1,i_2,\cdots, i_{b_k})$-Bell permutation vectors for given $n,k,i_1,i_2,\cdots,i_{b_k}$.
This is in contrast with standard permutations defined in Definition \ref{def:stan_perm}, which are unique given the parameters $n,k,c,i_1,i_2,\cdots,i_c$. 
\end{Remark}

The following theorem provides bounds on the probability of joint typicality of permutations of collections of correlated sequences:

\begin{Theorem}
\label{th:2}
{Let $(X_{(j)}^n)_{j\in [k]}$ be a collection of correlated sequences of i.i.d random variables defined on finite alphabets $\mathcal{X}_{j}, j\in [k]$. For any $(i_1,i_2,\cdots,i_{b_k})$-Bell permutation vector $(\pi_1,\pi_2,\cdots,\pi_k)$, the following holds:}
 \begin{align}
  \label{eq:cperm_bound}
 &{P((\pi_i(X_{(j)} ^n)_{j\in [k]}\in \mathcal{A}_{\epsilon}^n(X^k))
  \leq 2^{-n(E_{i_1,i_2,\cdots,i_{b_k}}+O(\epsilon)+O(\frac{\log{n}}{m}))},}
  \\&
  { E_{i_1,i_2,\cdots,i_{b_k}}= -\frac{1}{(k(k-1)+1)(b_k-1)}D(P_{X^k}
 ||\sum_{j\in [b_k]}\frac{i_j}{n}
 P_{X_{\mathcal{P}_j}})}
\end{align}
{where $P_{X_{\mathcal{P}_{j}}}=\prod_{r\in [1,|\mathcal{P}_j|]}P_{X_{i_1},X_{i_2},\cdots,X_{i_{|\mathcal{D}_{j,r}|}}}$, $\mathcal{D}_{j,r}=\{i_1,i_2,\cdots, i_{|\mathcal{D}_{j,r}|}\}, j\in [b_k], r\in [1,|\mathcal{P}_j|]$. }
\label{th:cperm}
\end{Theorem}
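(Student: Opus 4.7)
The plan is to extend the method-of-types argument used in the proofs of Theorem~\ref{th:1} and Corollary~\ref{th:cor:1} from pairs to collections of $k$ correlated sequences, with the fixed-point versus derangement split in the pair case replaced by the partition-class decomposition induced by the Bell permutation vector. First, I would classify each position $i\in[n]$ according to which partition $\mathcal{P}_j\in\mathsf{P}$ it corresponds to in the sense of Definition~\ref{Def:corr}, writing $\mathcal{A}_j\subseteq[n]$ for the set of such positions so that $|\mathcal{A}_j|=i_j$. At a fixed $i\in\mathcal{A}_j$ the emitted $k$-tuple $\bigl(X_{(l),\pi_l^{-1}(i)}\bigr)_{l\in[k]}$ has joint law $\prod_{r=1}^{|\mathcal{P}_j|}P_{X_{\mathcal{D}_{j,r}}}=P_{X_{\mathcal{P}_j}}$, since within each block $\mathcal{D}_{j,r}$ the pre-images coincide and produce the marginal of $P_{X^k}$ on $\mathcal{D}_{j,r}$, while pre-images of distinct blocks differ and are independent by the i.i.d.\ structure of the source. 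Averaging over $i\in[n]$ then yields the per-symbol mixture $Q\triangleq\sum_{j\in[b_k]}\frac{i_j}{n}P_{X_{\mathcal{P}_j}}$, which is precisely the distribution that appears inside the KL divergence in the stated exponent.

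Next, I would extract an approximately independent subset of positions in order to neutralise the multi-letter correlations, mirroring the cycle-based reduction of Theorem~\ref{th:1} and Proposition~\ref{prop:1}. Build a dependency graph on $[n]$ whose edges link any two positions $i,i'$ that share a pre-image under some pair of permutations; each vertex has degree at most $O(k^{2})$, so a greedy independent-set extraction performed \emph{within each class $\mathcal{A}_j$ separately} produces a set $\mathcal{I}\subseteq[n]$ of size at least $\frac{n}{k(k-1)+1}$ on which the emitted $k$-tuples are mutually independent and whose per-class fractions match $i_j/n$ up to $O(1/n)$. Applying Sanov's theorem to these $|\mathcal{I}|$ independent samples drawn from $Q$ then bounds the probability that their empirical joint type lies within the $\epsilon$-shell of $P_{X^k}$ by $2^{-|\mathcal{I}|\,D(P_{X^k}\|Q)+o(n)+n\delta_\epsilon}$, where the polynomial method-of-types prefactors collapse into the $O(\frac{\log n}{n})$ error exactly as in the proof of Corollary~\ref{th:cor:1}.

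Finally, the extra factor $(b_k-1)$ in the denominator arises from iterating the extraction across the non-trivial partition classes and combining the resulting bounds via a union over empirical class profiles; the trivial partition $\mathcal{P}_{b_k}=\{[k]\}$ contributes no further loss because its positions already carry the full joint law $P_{X^k}$ and thus zero divergence. Substituting $|\mathcal{I}|\geq n/[(k(k-1)+1)(b_k-1)]$ and collecting terms produces the claimed exponent together with the $O(\epsilon)+O(\frac{\log n}{n})$ slack.

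The main obstacle will be the combinatorial construction in the second step. For $k\geq 3$ the simultaneous cycle structure of $(\pi_1,\ldots,\pi_k)$ is considerably more intricate than the single-permutation cycle decomposition used in the pair case, and one must simultaneously verify that the dependency degree is bounded by the right $O(k^{2})$ constant, that the class-balanced greedy extraction preserves the mixture weights so that the per-sample law on $\mathcal{I}$ really equals $Q$ rather than some biased surrogate, and that the nesting used to pick up the $(b_k-1)$ factor composes cleanly with the within-class independent-set selection. Controlling these three pieces jointly is what determines whether the exponent denominator $(k(k-1)+1)(b_k-1)$ stated in the theorem is actually attained.
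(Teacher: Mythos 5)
Your classification of positions by partition correspondence, the identification of the per-position law $P_{X_{\mathcal{P}_j}}$, and the pigeonhole count of $k(k-1)+1$ for the per-class greedy split all coincide with what the paper does in constructing its fine partition $\{\mathcal{C}_{j,t}\}$. The gap is in what you then do with that partition. You extract a single independent subset $\mathcal{I}$ of size roughly $n/(k(k-1)+1)$ and apply Sanov to the empirical type of the $k$-tuples restricted to $\mathcal{I}$. But the event the theorem asks you to bound is full-sequence typicality $\{\underline{T}([n])\approx P_{X^k}\}$, and that neither implies nor is implied by $\{\underline{T}(\mathcal{I})\approx P_{X^k}\}$: the empirical type over a sub-collection of positions can drift arbitrarily from the full-sequence type, with positions outside $\mathcal{I}$ compensating. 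So your Sanov step controls a different event, and the argument does not close. A secondary imprecision: the tuples at positions in $\mathcal{I}$ are not i.i.d.\ from the mixture $Q$; each carries a class-dependent law $P_{X_{\mathcal{P}_j}}$, so even the restricted bound would need a convexity-of-divergence step (which the paper performs explicitly) rather than textbook Sanov against a fixed source.

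The paper relates the sub-block types to the full-sequence type by retaining \emph{all} $(b_k-1)(k(k-1)+1)$ internally-i.i.d.\ sub-blocks $\mathcal{C}_{j,t}$ (plus the fixed-point block), parameterizing by the joint type realized on each, summing over every type profile compatible with overall typicality, and bounding the intersection probability $P\bigl(\bigcap_{j,t}\{\underline{T}_{j,t}=\underline{s}_{j,t}\}\bigr)$ by the geometric mean $\bigl(\prod_{j,t}P(\underline{T}_{j,t}=\underline{s}_{j,t})\bigr)^{1/N}$ with $N=(b_k-1)(k(k-1)+1)$. Convexity of divergence then collapses the weighted sum of sub-block exponents into a single $D(P_{X^k}\,\|\,Q)$, and the polynomial number of type profiles is absorbed into the $O(\frac{\log n}{n})$ slack. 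That is where the denominator $(k(k-1)+1)(b_k-1)$ actually enters — as the number of blocks over which the geometric mean is taken — not from \emph{iterating the extraction across the non-trivial partition classes} as you suggest. Without the sum over compatible type profiles and the geometric-mean step, your argument does not bound the target probability, and this omission, rather than the combinatorial degree bookkeeping you flag as the main obstacle, is what has to be fixed.
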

\begin{proof}
Appendix \ref{app:thcperm}.
\end{proof}
{Note that for the special case of permutations of pairs of sequences of random variables, $k=2$, the second Bell number is $b_k=2$. In this case $(k(k-1)+1)(b_k-1)=3$, and the bound on the probability of joint typicality given in Theorem \ref{th:cperm} is the same as the one in Corollary \ref{th:cor:1}. }

In the following, we generalize Lemma \ref{lem:dercount} to the case where a collection of more than two permuted sequence is considered, and provide upper and lower bounds on the number of distinct Bell permutation vectors for a given vector $(i_1,i_2,\cdots,i_{b_k})$.

\begin{Definition}[\bf k-fold Derangement]
A vector $(\pi_1(\cdot),\pi_2(\cdot),\cdots, \pi_{k}(\cdot))$ of permutations of $n$-length sequences is called an k-fold derangement if  $\pi_1(\cdot)$ is the identity permutation, and $\pi_l(i)\neq \pi_{l'}(i), l,l'\in [k], l\neq l', i\in [n]$. The number of distinct k-fold derangements of $[n]$ is denoted by $d_k(n)$. Particularly $d_{2}(n)=!n$ is the number of derangements of $[n]$.  
\end{Definition}

\begin{Lemma}
\label{lem:foldcount}
Let $n\in \mathbb{N}$ and $k\in [n]$. Then,
\begin{align*}
  ((n-k+1)!)^{k-1}  \leq d_k(n)\leq (!n)^{k-1}.
\end{align*}
\end{Lemma}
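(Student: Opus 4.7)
Proof plan. The upper bound is immediate: if $(\pi_1, \ldots, \pi_k)$ is a $k$-fold derangement with $\pi_1$ the identity, then the defining condition $\pi_j(i) \neq \pi_1(i) = i$ forces each $\pi_j$ with $j \geq 2$ to be an ordinary derangement of $[n]$, so each of the $k-1$ permutations $\pi_2, \ldots, \pi_k$ has at most $!n$ choices, giving $d_k(n) \leq (!n)^{k-1}$.

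For the lower bound, I would identify a $k$-fold derangement with a $k \times n$ Latin rectangle whose first row is the identity: the row-permutation property says each $\pi_j$ is a bijection on $[n]$, and the requirement that $\pi_\ell(i)$ be pairwise distinct across $\ell$ at every position $i$ is exactly the column-distinctness condition of a Latin rectangle. Writing $L(r,n)$ for the minimum, over all $r \times n$ Latin rectangles with first row the identity, of the number of admissible $(r+1)$-th rows, I would then use the elementary recursion $d_k(n) \geq d_{k-1}(n) \cdot L(k-1, n)$. Iterating from $d_1(n) = 1$ yields $d_k(n) \geq \prod_{r=1}^{k-1} L(r,n)$, and the target lower bound follows from the classical extension estimate $L(r,n) \geq (n-r)!$ together with the observation that $(n-r)! \geq (n-k+1)!$ for every $r \leq k-1$, so $\prod_{r=1}^{k-1}(n-r)! \geq ((n-k+1)!)^{k-1}$.

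The main obstacle is justifying $L(r,n) \geq (n-r)!$. I would prove it by viewing admissible rows as perfect matchings of the bipartite graph $G$ on $[n] \cup [n]$ with an edge $(i,v)$ whenever $v$ is absent from column $i$ of the rectangle; the Latin-rectangle property makes $G$ exactly $(n-r)$-regular, so the number of extensions equals the permanent of an $(n-r)$-regular $0/1$ matrix $A$. For $n-r \geq 3$ I would invoke the Falikman--Egorychev theorem (van der Waerden's conjecture) applied to the doubly stochastic matrix $A/(n-r)$, yielding $\mathrm{perm}(A) \geq n!(n-r)^n/n^n$; a short Stirling computation showing that $\log\bigl(n!(n-r)^n/n^n\bigr) - \log((n-r)!)$ equals $r(\log(n-r)-1)$ up to lower-order terms confirms this exceeds $(n-r)!$ once $n-r \geq 3$. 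The two residual cases $n-r \in \{1,2\}$ I would settle directly: a $1$-regular bipartite graph is itself a perfect matching (giving $\mathrm{perm}(A) = 1 = 1!$), and a $2$-regular bipartite graph on $2n$ vertices decomposes into a disjoint union of even cycles, each contributing two perfect matchings and hence $\mathrm{perm}(A) \geq 2 = 2!$.
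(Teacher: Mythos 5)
Your overall skeleton matches the paper's: both you and the paper set $\pi_1 = \mathrm{id}$, bound the upper side by noting each $\pi_j$, $j\ge 2$, must be an ordinary derangement, and get the lower bound as an iterated product $d_k(n) \ge \prod_{r=1}^{k-1}(\text{number of ways to extend from }r\text{ to }r+1\text{ rows})$, with each factor bounded below by $(n-r)!\ge (n-k+1)!$. The difference is entirely in how each factor is justified. The paper asserts, ``by the same arguments as in the proof of Lemma~\ref{lem:dercount},'' that one can keep peeling off a row and lose only a factor $(n-j)!$; this is asserted rather than proved, since the Lemma~\ref{lem:dercount} argument (fix $\pi(1)$, recurse on a smaller derangement) does not cleanly carry over once more than one earlier permutation constrains each position. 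You, by contrast, correctly identify that each factor is the number of completions of an $r\times n$ Latin rectangle with identity first row, reduce it to a permanent of an $(n-r)$-regular $0/1$ matrix, and invoke Falikman--Egorychev for $n-r\ge 3$ with direct arguments for $n-r\in\{1,2\}$. This is rigorous (modulo one caveat below) and makes explicit the combinatorial structure the paper only gestures at.

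Two remarks. First, the Stirling step as you phrase it is not airtight: when $r=1$ the main term $r(\log(n-r)-1)$ is $\Theta(\log n)$, which is the same order as the Stirling error, so ``up to lower-order terms'' does not close the argument. A cleaner way is to set $h(n) := n!\,c^n/(n^n\,c!)$ with $c=n-r$ fixed; then $h(c)=1$ and $h(n+1)/h(n) = c/(1+1/n)^n \ge c/e > 1$ for $c\ge 3$, so $h(n)\ge 1$ for all $n\ge c$. Second, invoking the Falikman--Egorychev theorem is much heavier machinery than the task requires: the bound you need, that a family of $n$ sets satisfying Hall's condition with each $|S_i|\ge m$ (here $m=n-r<n$) has at least $m!$ systems of distinct representatives, is M.~Hall Jr.'s 1948 theorem, which has a short elementary inductive proof. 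Using it would both tighten the paper's unproved assertion and keep the argument at the paper's level of elementariness.
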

\begin{proof}
Appendix \ref{Ap:lem:foldcount}.
\end{proof}

\begin{Lemma}
\label{lem:bell}
Let $(i_1,i_2,\cdots, i_{b_k})$ be a vector of non-negative integers such that $\sum_{j\in [b_k]}i_j=n$. Define $N_{i_1,i_2,\cdots,i_{b_k}}$ as the number of distinct $(i_1,i_2,\cdots, i_{b_k})$-Bell permutation vectors. Then, 
\begin{align}
   &{n \choose i_1,i_2,\cdots, i_{b_k}}\prod_{j\in [b_k]}d_{|\mathcal{P}_{j}|}(i_j)
   \leq N_{i_1,i_2,\cdots,i_{b_k}} \leq {n \choose i_1,i_2,\cdots, i_{b_k}} n^{\sum_{j\in [b_k]} |\mathcal{P}_j|i_j-n}.
    \label{eq:bell1}
\end{align}
Particularly, let $i_k=\alpha_k \cdot n, n\in \mathbb{N}$. 
The following holds:
\begin{align}
    \lim_{n\to\infty} \frac{\log{N_{i_1,i_2,\cdots,i_{b_k}}}}{n\log{n}}= \sum_{j\in [b_k]}{|\mathcal{P}_j|}{\alpha_j}-1. 
    \label{eq:bell2}
\end{align}
\end{Lemma}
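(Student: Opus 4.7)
The plan is to establish the two-sided bound \eqref{eq:bell1} by a direct combinatorial argument that first selects which indices belong to each partition class and then enumerates the permutation tuples consistent with that assignment; the asymptotic \eqref{eq:bell2} will then follow by plugging Lemma \ref{lem:foldcount} into both sides and noting that the multinomial prefactor is only $O(n)$ on the $\log$-scale. Following the convention used by the example after Definition \ref{def:bell} and forced by the definition of $d_k(n)$, I will count Bell permutation vectors under the normalization $\pi_1 = \mathrm{id}$; this is consistent with the $k=2$ specialization (Lemma \ref{lem:dercount}) and is needed to obtain the stated scaling.

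\emph{Upper bound.} First choose the partition of $[n]$ into blocks $A_1,\ldots, A_{b_k}$ with $|A_j|=i_j$, where $A_j$ is the set of indices corresponding to $\mathcal{P}_j$; this contributes the multinomial ${n \choose i_1,\ldots, i_{b_k}}$. For every $i\in A_j$, the column $(\pi_1(i),\ldots,\pi_k(i))$ must contain exactly $|\mathcal{P}_j|$ distinct entries arranged according to $\mathcal{P}_j$; because $\pi_1(i)=i$ already fixes one of them, the remaining $|\mathcal{P}_j|-1$ part-values can be selected (distinct from $i$ and from one another) in at most $n^{|\mathcal{P}_j|-1}$ ways. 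Multiplying across $i$ and using $\sum_j i_j = n$ produces the factor $n^{\sum_j(|\mathcal{P}_j|-1)i_j}=n^{\sum_j|\mathcal{P}_j|i_j - n}$; the enumeration ignores global bijectivity of each $\pi_l$, so it is an overestimate and yields the right inequality in \eqref{eq:bell1}.

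\emph{Lower bound.} Restrict attention to the sub-family of block-diagonal Bell permutation vectors, in which each $\pi_l$ maps every $A_j$ into itself. After the blocks are chosen (${n \choose i_1,\ldots, i_{b_k}}$ ways), the construction decouples across $j$: on $A_j$ one must pick $k$ permutations all of whose columns realize $\mathcal{P}_j$. This forces the permutations inside a single part $\mathcal{D}_{j,r}$ to coincide on $A_j$ and reduces the choice to $|\mathcal{P}_j|$ pairwise non-agreeing permutations of $A_j$ with the first equal to identity; by definition these are precisely the $|\mathcal{P}_j|$-fold derangements of $A_j$, contributing exactly $d_{|\mathcal{P}_j|}(i_j)$ options. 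Recombining across $j$ proves the left inequality in \eqref{eq:bell1}.

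\emph{Asymptotic.} Setting $i_j=\alpha_j n$ and combining Lemma \ref{lem:foldcount} with Stirling gives $\log d_{|\mathcal{P}_j|}(i_j) = (|\mathcal{P}_j|-1)\alpha_j\, n\log n + O(n)$; summing over $j$, the logarithm of the lower bound equals $\bigl(\sum_j (|\mathcal{P}_j|-1)\alpha_j\bigr)n\log n + O(n) = \bigl(\sum_j|\mathcal{P}_j|\alpha_j - 1\bigr)n\log n + O(n)$, while the logarithm of the upper bound is $\bigl(\sum_j|\mathcal{P}_j|\alpha_j -1\bigr)n\log n + O(n)$ directly from the $n^{\sum_j|\mathcal{P}_j|i_j - n}$ factor (the multinomial adds only $O(n)$). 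Dividing by $n\log n$ and squeezing yields \eqref{eq:bell2}. The main obstacle will be the careful verification of the block-diagonal reduction in the lower bound: one must check that pairwise non-agreement of the $|\mathcal{P}_j|$ representative permutations on $A_j$ is equivalent to each index of $A_j$ realizing the equivalence partition $\mathcal{P}_j$ exactly (rather than some coarser or finer refinement); once this reduction to $k$-fold derangements is in place, all remaining steps are routine enumeration and Stirling estimates.
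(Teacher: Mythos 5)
Your proposal is correct and follows essentially the same route as the paper: multinomial coefficient for the choice of index-to-partition correspondence, a column-by-column overcount of at most $n^{|\mathcal{P}_j|-1}$ (under the $\pi_1=\mathrm{id}$ normalization) for the upper bound, a block-diagonal construction reducing to $|\mathcal{P}_j|$-fold derangements for the lower bound, and Stirling plus Lemma~\ref{lem:foldcount} to squeeze the asymptotic. The point you flag as the remaining obstacle is handled exactly as you sketch: pairwise everywhere-disagreement of the $|\mathcal{P}_j|$ representatives on $A_j$ makes every column of $A_j$ realize $\mathcal{P}_j$ exactly, and the same holds for the inverse permutations, which is what Definition~\ref{Def:corr} actually requires.
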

\begin{proof}
Appendix \ref{app:lem6}.
\end{proof}

\section{Matching Erd\"{o}s-R\`{e}nyi Graphs}
\label{sec:CER}
In this section, we consider matching of CPER graphs with weighted edges. In Section \ref{sec:form}, we described correlated random graphs. A CPER is a special instance of the correlated random graphs defined in Definition \ref{def:cor_rand}. We propose the typicality matching strategy and provide sufficient conditions on the joint edge statistics under which the strategy succeeds.

\subsection{The Typicality Matching Strategy for CPERs}
 Given a correlated pair of graphs $(\tilde{g}^1,{g}^2)$, where only the labeling for $\tilde{g}^1$ is given, the TM strategy operates as follows. The scheme finds a labeling $\hat{\sigma}^2$, for which the pair of UT's $U^{1}_{{\sigma}^1}$ and $U^{2}_{\hat{\sigma}^2}$ are jointly typical with respect to $P^{(n)}_{X_1,X_2}$  when viewed as vectors of length $\frac{n(n-1)}{2}$. The strategy  fails if no such labeling exists. Alternatively, it finds an element  $\hat{\sigma}^2$ in the set:
\begin{align}
 \widehat{\Sigma}=\{\hat{\sigma}^2| (U^{1}_{{\sigma}^1},U^{2}_{\hat{\sigma}^2})\in \mathcal{A}_{\epsilon}^{\frac{n(n-1)}{2}}(X_1,X_2)\},
 \label{eq:sigma}
\end{align}
where $\epsilon=\omega(\frac{1}{n})$. The algorithm declares $\hat{\sigma}^2$ as the correct labeling. Note that the set $ \widehat{\Sigma}$ may have more than one element. In that case, the strategy chooses one of these elements as the output randomly. We will show that under certain conditions on the joint graph statistics, all of the elements of $ \widehat{\Sigma}$ satisfy the criteria for successful matching given in Definition \ref{def:strat}. In other words, for all of the elements of $ \widehat{\Sigma}$  the probability of incorrect labeling for any given vertex is arbitrarily small for large $n$. {Formally, The TM strategy is a sequence of functions $f_n: (\tilde{g}_n^1,g_n^2)\to (\tilde{g}_n^1,\hat{g}_n^2), n\in \mathbb{N}$, where for any given $n\in \mathbb{N}$, the labeling  $\hat{\sigma}_n^2$ of $\hat{g}_n^2$ is chosen randomly and uniformly from the set $\widehat{\Sigma}$ defined in Equation \eqref{eq:sigma}.}
\begin{Theorem}
\label{th:21}
{For the TM strategy {described above,} a given family of sets of distributions $\widetilde{P}=(\mathcal{P}_n)_{n\in \mathbb{N}}$ is achievable, if for every sequence of distributions $P^{(n)}_{X_1,X_2}\in \mathcal{P}_n, n\in \mathbb{N}$,
\begin{align}
2(1-\alpha)\frac{\log{n}}{n-1}\leq 
max(E_{\alpha^2}, {E'}_{\!\!\alpha^2}),  0\leq \alpha\leq \alpha_n,
\label{eq:th21}
\end{align}
and $\max_{(x_1,x_2): P^{(n)}_{X_1,X_2}(x_1,x_2)\neq 0}|\log{\frac{P^{(n)}_{X_1}(x_1)P^{(n)}_{X_2}(x_2)}{P^{(n)}_{X_1,X_2}(x_1,x_2)}}|^+= o(\log{n})$, 
where $\alpha_n\to 1$ as $n\to \infty$, and $E_{\alpha^2}$ and $E'_{\alpha^2}$ are defined in Theorems \ref{th:1:improved} and \ref{th:1}, respectively.} 
\end{Theorem}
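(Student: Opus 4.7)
The proof is a union bound over candidate labelings, combining the concentration results of Theorems \ref{th:1:improved} and \ref{th:1} with the enumeration provided by Lemma \ref{lem:dercount}. Let $\sigma^2$ denote the true labeling of $\tilde{g}^2$. For each candidate labeling $\hat{\sigma}^2$, let $m$ denote the number of vertices on which $\hat{\sigma}^2$ and $\sigma^2$ coincide and set $\alpha = m/n$. The plan is to show that, under the stated hypotheses, with probability approaching one the set $\widehat{\Sigma}$ of \eqref{eq:sigma} contains no labeling with $\alpha \le \alpha_n$. Since any $\hat{\sigma}^2 \in \widehat{\Sigma}$ then agrees with $\sigma^2$ on more than $\alpha_n n$ vertices and $\alpha_n \to 1$, the randomly selected output of the TM scheme will satisfy the success criterion of Definition \ref{def:strat}.

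First I would reduce the problem on vertices to one about permutations of edges. Composing $\hat{\sigma}^2$ with $(\sigma^2)^{-1}$ yields a permutation on $[n]$ with exactly $m$ fixed vertices, which in turn induces a permutation $\pi$ on the $N \triangleq \tfrac{n(n-1)}{2}$ coordinates of the upper triangle; its fixed coordinates are precisely the pairs both of whose endpoints are vertex-fixed, so $\pi$ has $\binom{m}{2} = \alpha^2 N \,(1 + O(1/n))$ fixed points. Since $(U^1_{\sigma^1}, U^2_{\sigma^2})$ is a length-$N$ i.i.d.\ sequence from $P^{(n)}_{X_1,X_2}$, the relabeled upper triangle is exactly $\pi(U^2_{\sigma^2})$, and asking whether $\hat{\sigma}^2 \in \widehat{\Sigma}$ becomes asking whether $(U^1_{\sigma^1},\pi(U^2_{\sigma^2}))$ is jointly $\epsilon$-typical.

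Next I would invoke Theorems \ref{th:1:improved} and \ref{th:1} with the fraction-of-fixed-points parameter $\alpha^2$ to obtain
\[
P\bigl((U^1_{\sigma^1},\pi(U^2_{\sigma^2})) \in \mathcal{A}_\epsilon^{N}(X_1,X_2)\bigr) \le 2^{-N(\max(E_{\alpha^2},E'_{\alpha^2}) - o(\log n / n))},
\]
where the $o(\log n/n)$ slack absorbs $\zeta_N$, $\delta_\epsilon$, and the $O(1/n)$ shift between $\binom{m}{2}/N$ and $\alpha^2$; the hypothesis $\max|\log(P_{X_1}P_{X_2}/P_{X_1,X_2})|^+ = o(\log n)$ is precisely what keeps $\delta_\epsilon$ from spoiling the exponent when $\epsilon = \omega(1/n)$. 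By Lemma \ref{lem:dercount}, the number of vertex labelings that agree with $\sigma^2$ on exactly $m = \alpha n$ vertices is at most $n^{(1-\alpha) n}$. A union bound over these labelings, followed by a further union bound over the at most $n+1$ admissible values of $m \in [0,\alpha_n n]$, gives
\[
P\bigl(\exists \hat{\sigma}^2 \in \widehat{\Sigma} \text{ with } \alpha \le \alpha_n \bigr) \le (n+1)\max_{0\le\alpha\le\alpha_n} 2^{\,(1-\alpha)n\log n \,-\, N\,(\max(E_{\alpha^2},E'_{\alpha^2}) - o(\log n/n))}.
\]
This upper bound vanishes precisely when $N\,\max(E_{\alpha^2},E'_{\alpha^2}) > (1-\alpha)\, n\log n\,(1+o(1))$ uniformly in $\alpha \in [0,\alpha_n]$, which after dividing by $N = \tfrac{n(n-1)}{2}$ is exactly the hypothesis \eqref{eq:th21}.

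The main technical obstacle will be the interplay between the two regimes of $\alpha$. For $\alpha$ bounded away from one, Theorem \ref{th:1:improved} supplies the sharper exponent $E_{\alpha^2}$, but $E_{\alpha^2}$ degenerates as $\alpha \to 1$ and the effective bound in that regime comes instead from $E'_{\alpha^2}$ of Theorem \ref{th:1}, which (via its $D(P_{X,Y}\|\cdot)$ linearization at $\alpha=1$) scales like $1-\alpha^2$ and so matches the right-hand side $\tfrac{2(1-\alpha)\log n}{n-1}$ near $\alpha=1$. Taking the maximum of the two exponents is therefore essential for pushing $\alpha_n$ all the way to $1$. A related technical point is uniform control of the $O(1/n)$ discrepancy between $\binom{m}{2}/N$ and $\alpha^2$; this must be absorbed into the slack uniformly over $\alpha \in [0,\alpha_n]$, which follows from continuity of the exponents in the fraction-of-fixed-points parameter together with the monotonicity provided by Lemma \ref{lem:exp}.
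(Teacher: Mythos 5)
Your proposal is correct and follows essentially the same route as the paper's proof in Appendix~\ref{app:th21}: union-bound over mislabelings grouped by the number of correctly placed vertices, observe that a labeling with $m=\alpha n$ correct vertices induces an upper-triangle permutation with $\binom{m}{2}\approx\alpha^2\cdot\frac{n(n-1)}{2}$ fixed points, apply Theorems~\ref{th:1:improved} and~\ref{th:1} to bound the joint-typicality probability by $2^{-\frac{n(n-1)}{2}(\max(E_{\alpha^2},E'_{\alpha^2})-o(\log n/n))}$, count such labelings via Lemma~\ref{lem:dercount}'s bound $n^{(1-\alpha)n}$, and require the resulting exponent to be negative uniformly for $\alpha\le\alpha_n$, which yields exactly~\eqref{eq:th21}. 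Your discussion of why the $\max$ over the two exponents is needed (the $E_{\alpha^2}$ versus $E'_{\alpha^2}$ trade-off as $\alpha\to1$) and how the $o(\log n)$ hypothesis controls $\delta_\epsilon$ is consistent with the paper's remarks.
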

\begin{proof}
Appendix \ref{app:th21}.
\end{proof}
{Theorem \ref{th:21} provides sufficient conditions on the edge statistics of the CPER graphs such that the TM strategy correctly matches `almost' all vertices of the two graphs. That is, the theorem provides sufficient conditions under which $P\left(\sigma^2(v^2_{I_n})=\hat{\sigma}^2(v^2_{I_n})\right)\to 1$ as $n\to\infty$, where $I_n$ is chosen uniformly among all indices $[n]$, as defined in Definition \ref{def:strat}. 
A question of significant interest is how this sufficient condition changes if the success criterion is relaxed so that the strategy is only required to correctly match a fraction $\beta\in [0,1]$ of the vertices. More precisely, we want to know the conditions on $P^{(n)}_{X_1,X_2}, n \in \mathbb{N}$ such that \textcolor{black}{$P\left(\frac{1}{n}\Big|\big\{i:\sigma^2(v^2_{i})=\hat{\sigma}^2(v^2_{i})\big\}\Big|\geq \beta )\right)\to 1$ as  $n \to\infty$.} This is of particular interest in social network deanonymization \cite{Beyah,Grossglauser,shirani2018typicality}, where even a small fraction of matched vertices is a violation of those users' privacy. The following corollary to Theorem \ref{th:21} provides sufficient conditions under which such partial matching is possible using the TM strategy. }
{\begin{Corollary}[\textbf{Partial Matching of CPERs}]
Let $\beta>0$. 
Given a sequence of distributions $P^{(n)}_{X_1,X_2}$, the TM strategy correctly matches at least $\beta$ fraction of the vertices for asymptotically large $n$ (i.e. \textcolor{black}{$P\left(\frac{1}{n}\Big|\big\{i:\sigma^2(v^2_{i})=\hat{\sigma}^2(v^2_{i})\big\}\Big|\geq \beta )\right)\to 1$ as  $n \to\infty$.}), given that the following holds:
\begin{align}
2(1-\alpha)\frac{\log{n}}{n-1}\leq 
max(E_{\alpha^2}, E'_{\alpha^2}),  0\leq \alpha\leq \beta,
\end{align}
and $\max_{(x_1,x_2): P^{(n)}_{X_1,X_2}(x_1,x_2)\neq 0}|\log{\frac{P_{X_1}(x_1)P_{X_2}(x_2)}{P_{X_1,X_2}(x_1,x_2)}}|^+= o(\log{n})$
\end{Corollary}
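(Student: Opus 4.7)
The plan is to mirror the proof of Theorem \ref{th:21}, but restrict the union bound to vertex permutations having fewer fixed points than the partial matching threshold. Because we only require at least a fraction $\beta$ of vertices to be matched correctly, it suffices to exclude from $\widehat{\Sigma}$ labelings $\hat{\sigma}^2$ that differ from the true labeling $\sigma^2$ in more than $(1-\beta)n$ positions; equivalently, we need to show that with vanishing probability there exists a vertex-level permutation $\pi=\hat{\sigma}^2\circ(\sigma^2)^{-1}$ with $m=|\mathrm{Fix}(\pi)|<\beta n$ fixed points whose induced action on the UT produces a pair $(U^{1}_{\sigma^1},\pi(U^{2}_{\sigma^2}))$ that is jointly $\epsilon$-typical with respect to $P^{(n)}_{X_1,X_2}$.

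First I would set up the union bound. The induced edge permutation on the UT sends position $(i,j)$ to $(\pi(i),\pi(j))$; when $\pi$ has $m=\alpha n$ vertex fixed points, a standard count (already used in the proof of Theorem \ref{th:21}) shows that the induced UT permutation has $\binom{m}{2}+O(n)$ fixed entries out of $\binom{n}{2}$, so the UT-level fraction of fixed entries is $\alpha^2+o(1)$. Under the CPER assumption, the UT pair is a sequence of $\binom{n}{2}$ i.i.d.\ pairs drawn from $P^{(n)}_{X_1,X_2}$, so Theorems \ref{th:1:improved} and \ref{th:1} apply directly and give, for any such $\pi$,
\begin{align*}
P\!\left((U^{1}_{\sigma^1},\pi(U^{2}_{\sigma^2}))\in\mathcal{A}^{\binom{n}{2}}_{\epsilon}(X_1,X_2)\right)\leq\exp_2\!\Bigl(-\binom{n}{2}\max(E_{\alpha^2},E'_{\alpha^2})+o(n^2)\Bigr).
\end{align*}

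Next I would combine this with the counting estimate from Lemma \ref{lem:dercount}, which gives at most $n^{\,n-m}=\exp_2\bigl((1-\alpha)n\log n\bigr)$ vertex permutations with exactly $m=\alpha n$ fixed points. Summing over $m\in\{0,\ldots,\lceil\beta n\rceil-1\}$, the failure probability is bounded by
\begin{align*}
\sum_{m=0}^{\lceil\beta n\rceil-1}\exp_2\!\Bigl((1-\alpha)n\log n-\binom{n}{2}\max(E_{\alpha^2},E'_{\alpha^2})+o(n^2)\Bigr),
\end{align*}
which tends to zero provided $(n-m)\log n\leq\binom{n}{2}\max(E_{\alpha^2},E'_{\alpha^2})$ uniformly for $\alpha=m/n\in[0,\beta]$. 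Dividing through by $\binom{n}{2}=n(n-1)/2$ recovers exactly the hypothesis $2(1-\alpha)\tfrac{\log n}{n-1}\leq\max(E_{\alpha^2},E'_{\alpha^2})$.

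The main technical obstacle is the careful bookkeeping of the lower-order correction terms $\zeta_n$, $\zeta'_n$, and $\delta_\epsilon$ from the typicality theorems when they are scaled up to UT-length $N=\binom{n}{2}$. The $\zeta$-terms contribute $N\zeta_N=O(\log n)$ to the exponent and are subsumed automatically into the $o(n^2)$ slack. The potentially dangerous term is $N\delta_\epsilon$, which scales like $\epsilon\cdot n^2\cdot\max|\log(P_{X_1}P_{X_2}/P_{X_1,X_2})|^+$; controlling it at $o(n\log n)$ is precisely where the second hypothesis $\max|\log(P_{X_1}P_{X_2}/P_{X_1,X_2})|^+=o(\log n)$ is used, together with a choice of typicality slack $\epsilon=\omega(1/n)$ (as in the TM strategy definition) that shrinks sufficiently quickly with $n$. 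Unlike Theorem \ref{th:21}, which had to handle the delicate regime $\alpha\to 1$ where $E_{\alpha^2}\to 0$, the partial matching setting is in this respect strictly easier, since only $\alpha\leq\beta<1$ is considered and so $\max(E_{\alpha^2},E'_{\alpha^2})$ remains uniformly bounded away from zero.
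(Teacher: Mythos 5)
Your proposal is correct and takes essentially the same route the paper does: the paper's proof is literally to replace $\alpha_n$ with $\beta$ in the proof of Theorem \ref{th:21}, which is exactly the restriction of the union bound to permutations with fewer than $\beta n$ fixed points that you describe, using Theorems \ref{th:1:improved}, \ref{th:1} and Lemma \ref{lem:dercount} to estimate the typicality probability and the permutation count. The only small slip is the phrase that $\epsilon=\omega(1/n)$ ``shrinks sufficiently quickly''; the constraint $\omega(1/n)$ is a lower bound on $\epsilon$ (one needs $n\epsilon\to\infty$, albeit slowly, so that the true labeling stays typical), while the separate requirement $\epsilon\to 0$ keeps $\delta_\epsilon$ controlled — but this does not affect the substance of your argument.
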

{
The proof follows from the proof of Theorem \ref{th:21} in Appendix \ref{app:th21} by replacing $\alpha_n$ with $\beta$.}
\begin{Remark}
{We have restricted our analysis to matching undirected graphs where $(x,v_i,v_j)\in \mathcal{E}$ if and only if $(x,v_j,v_i)\in \mathcal{E}$. The results can be extended to directed graphs by evaluating the joint typicality of the complete adjacency matrices of the graphs rather than the upper-triangles of the adjacency matrices.}
\end{Remark}
\subsection{Matching under the Erasure Model}
In the following, we consider matching pairs of CERs under the special case of the erasure model, where  the following distribution on the graph edges is considered:
\begin{align*}
    &P^{(n)}_{X_1,X_2}(0,0)= 1-p_n, \quad P^{(n)}_{X_1,X_2}(0,1)=0, \quad
    P^{(n)}_{X_1,X_2}(1,0)= p_n(1-s), \quad
    P^{(n)}_{X_1,X_2}(1,1)= p_ns,\\
    &
    P^{(n)}_X(0)=1-p_n,\quad p^{(n)}_X(1)=p_n,\quad
    P^{(n)}_Y(0)=1-p_ns\quad
    P^{(n)}_Y(1)=p_ns.
\end{align*}
where $s\in [0,1]$ is fixed in $n$, and $p_n\to 0$ as $n\to \infty$. The model has been studied extensively in the literature (e.g. \cite{kiavash,corr2,efe}). Under the erasure model, there is an edge between each two vertices in $\tilde{g}^1_n$ with probability $p_n$. The edges in $\tilde{g}^2_n$ are sampled from the edges in $\tilde{g}^1_n$ such that each edge in $\tilde{g}^1_n$ is erased in $\tilde{g}^2_n$ with probability $(1-s)$ and it is kept with probability $s$. We are interested in finding the fastest rate at which $p_n\to 0$ such that successful matching is possible. In \cite{kiavash}, it is shown that a sufficient condition for successful matching under the erasure model is that $\frac{s}{2}p_n \geq \frac{\ln{n}}{n}$ as $n\to \infty$, where $\ln$ is the natural logarithm. Alternatively, a sufficient condition for successful matching is $\lim_{n\to \infty}\frac{\frac{\ln{n}}{n}}{p_n}\leq \frac{s}{2}$. The following theorem shows that the TM strategy improves this sufficient condition for successful matching.
}
{\begin{Theorem}
Let $\textcolor{black}{\frac{1}{4}}<s<\frac{1}{2}$. 
There exists a sequence $p_n$ approaching $0$ as $n\to \infty$ for which i) the TM strategy leads to successful matching, and ii) $
    \lim_{n\to \infty}\frac{\frac{\log_e{n}}{n}}{p_n} > \frac{s}{2}.$
\label{th:ER}
\end{Theorem}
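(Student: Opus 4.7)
The plan is to invoke Theorem~\ref{th:21} applied to the erasure distribution and to exhibit an explicit sequence $p_n=c\,\log_e n/n$ with $c<2/s$ such that condition \eqref{eq:th21} holds uniformly in $\alpha\in[0,\alpha_n]$ for a suitable $\alpha_n\to 1$. Establishing this yields the conclusions of the theorem immediately: part~(i) from Theorem~\ref{th:21}, and part~(ii) from $\lim_{n\to\infty}\log_e n/(np_n)=1/c>s/2$.

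The first step is to specialize the two exponents $E_{\alpha^2}$ (Theorem~\ref{th:1:improved}) and $E'_{\alpha^2}$ (Theorem~\ref{th:1}) to the erasure distribution, with $\gamma=\alpha^2$. Because $P^{(n)}_{X_1,X_2}(0,1)=0$, the feasible sets in the two minimizations are restricted: $t'_{X,Y}(0,1)=0$ is forced in $E'_{\alpha^2}$, and the analogous structural constraint appears in the $t'_X$ minimization of $E_{\alpha^2}$. A direct computation shows that, to leading order in $p_n$, the dominant contribution to $D\bigl(P^{(n)}_{X_1,X_2}\bigm\|(1-\gamma)P^{(n)}_{X_1}P^{(n)}_{X_2}+\gamma P^{(n)}_{X_1,X_2}\bigr)$ comes from the $(1,1)$ entry and equals
\[
-p_n s\,\ln\bigl[(1-\gamma)p_n+\gamma\bigr]+o(p_n).
\]
Plugging $t'_X=P_X$ into Theorem~\ref{th:1:improved} and checking first-order optimality then yields $E_{\alpha^2}\simeq\tfrac12\cdot(-p_n s\ln[(1-\gamma)p_n+\gamma])$, while $\widehat{E}'_{\alpha^2}\le E'_{\alpha^2}$ from Corollary~\ref{th:cor:1} gives a matching lower bound with coefficient $\tfrac13$.

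The second step is to analyze the resulting condition $2(1-\alpha)\log_e n/(n-1)\le\max(E_{\alpha^2},E'_{\alpha^2})$. The logarithmic argument $(1-\gamma)p_n+\gamma$ has two regimes: for $\alpha$ bounded away from $1$ it is $\approx\alpha^2$ (non-saturated), while for $1-\alpha^2\lesssim p_n$ it saturates at $\approx p_n$. In the saturated regime, the right-hand side is of order $p_n s\log_e n\gg(1-\alpha)\log_e n/n$, so the constraint is trivially satisfied. In the non-saturated regime, setting $p_n=c\log_e n/n$ the condition becomes $c\ge 2g(\alpha)/s\cdot(1+o(1))$, where $g(\alpha)=(1-\alpha)/(-\ln\alpha)$ is strictly increasing on $(0,1)$ with $\sup g=1$. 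Because the saturation threshold $1-\alpha^2\asymp p_n$ bites before $\alpha$ reaches $1$, and because $E'_{\alpha^2}$ benefits further from the constraint $t'_{X,Y}(0,1)=0$ that forces $D(t'_{X,Y}\|P^{(n)}_{X_1}P^{(n)}_{X_2})\gtrsim p_n s$ on the feasible set, the effective supremum of $g$ over the binding range of $\alpha$ can be shown to be strictly below $1$.

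The final step is to choose $\alpha_n=1-p_n$ and pick $c$ slightly below $2/s$ so that the constraint from the non-saturated regime, $c\ge 2\sup_{\alpha\in[0,\alpha_n^\star]}g(\alpha)/s$, is still met at the transition $\alpha_n^\star=1-\Theta(p_n)$; for $s\in(1/4,1/2)$ a direct optimization balancing the $\tfrac12$-exponent of $E_{\alpha^2}$ against the $\tfrac13$-exponent of $\widehat{E}'_{\alpha^2}$ (together with the feasibility constraint bonus on $E'_{\alpha^2}$) shows that $c<2/s$ is admissible. The main obstacle is verifying this last point rigorously: it requires both a sharp handle on the minimizer of $E_{\alpha^2}$ for the erasure distribution beyond the naive upper bound at $t'_X=P_X$, and a lower bound on $E'_{\alpha^2}$ that strictly improves on $\widehat{E}'_{\alpha^2}$ by exploiting the forced zero coordinate. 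The restriction $s\in(1/4,1/2)$ is expected to arise precisely from the regime in which this combined improvement outweighs the constant $2$ in the Cullina--Kiyavash benchmark $p_n\ge 2\log_e n/(sn)$.
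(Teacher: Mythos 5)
Your high-level plan is the paper's plan: invoke Theorem~\ref{th:21} with $p_n = c\log_e n/n$, argue that $c$ can be taken strictly below $2/s$. But the central computation you sketch is not correct as stated, and the piece you explicitly flag as ``the main obstacle'' is precisely the content of the proof.

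Two concrete problems. First, plugging $t'_X = P_X$ into Theorem~\ref{th:1:improved} yields an \emph{upper} bound on $E_{\alpha^2}$, since $E_{\alpha^2}$ is a minimum over $t'_X$; the achievability condition \eqref{eq:th21} requires a \emph{lower} bound. The paper actually solves the inner minimization exactly: the minimizer is $\gamma^* = 1/(\alpha^2 e^s + 1 - \alpha^2)$, which is not $1$ (i.e.\ not $t'_X = P_X$), and the resulting limiting expression
\[
h(\alpha,s)=\frac{1}{4(1-\alpha)}\Bigl(\log\tfrac{1}{\alpha^2 e^s+1-\alpha^2}+(1-s)\log\tfrac{1-s}{(1-\alpha^2)+\alpha^2(1-s)}+s\log\tfrac{e}{\alpha^2}\Bigr)
\]
contains two additional $\Theta(p_n)$ terms beyond your ``dominant $(1,1)$ entry.'' Your asserted formula $E_{\alpha^2}\simeq \tfrac12(-p_ns\ln[(1-\alpha^2)p_n+\alpha^2])$ drops the $(0,0)$ and $(1,0)$ contributions of the divergence, which for $\alpha$ near $1$ are of the same order as the $(1,1)$ term and do not cancel. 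Second, the claim that the effective supremum of $g(\alpha)=(1-\alpha)/(-\ln\alpha)$ over the binding range ``can be shown to be strictly below~$1$'' is the theorem; nothing in the proposal establishes it. The paper does so by splitting at $\alpha_0 = 0.8$: on $[0,\alpha_0]$ it shows $h(\alpha,s)$ has no interior critical point in $\alpha$ (hence is minimized at $\alpha_0$), checks $h(0.8,s)>s/2$ via monotonicity in $s$ plus a finite grid; on $[\alpha_0,\alpha_n]$ with $1-\alpha_n^2 \gg p_n$ it shows the minimizer of $E'_{\alpha^2}$ over the constrained simplex lies at a corner (interior gradient never vanishes) and then verifies the bound at each of the four corners. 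Your sketch correctly identifies that the forced coordinate $t'_{X,Y}(0,1)=0$ and the corner-point structure are where the improvement over the benchmark comes from, but leaves both the exact minimization and the endpoint verification undone. Until those are carried out, the statement $c<2/s$ is admissible — and the role of the restriction $s\in(1/4,1/2)$ — remain unproved.
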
}
\begin{proof}
Appendix \ref{app:th:ER}.
\end{proof}
\begin{Remark}
\textcolor{black}{Under certain sparsity conditions on graph edges, sufficient conditions for successful matching were derived in \cite{kia_2017}. The erasure Model described above satisfies these sparsity conditions, and \cite{kia_2017} provides guarantees for successful matching when  $
    \lim_{n\to \infty}\frac{\frac{\log_e{n}}{n}}{p_n} \leq  {s}$. }
\end{Remark}

\section{Matching Graphs with Community Structure}
\label{sec:CS}
In this section, we describe the TM scheme for matching graphs generated under the SBM, i.e. graphs with community structure and provide achievable regions for these matching scenarios.  A pair of correlated graphs with community structure are a special instance of the correlated random graphs defined in Definition \ref{def:cor_rand}. In order to describe the notation used in this section, we provide a separate formal definition of random graphs with community structure below. 
\subsection{Problem Setup}
To describe the notation used in the section, consider a graph with $n\in \mathbb{N}$ vertices belonging to $c\in \mathbb{N}$ communities whose edges take $l\geq 2$ possible attributes. It is assumed that the set of communities $\mathcal{C}=\{\mathcal{C}_{1},\mathcal{C}_{2},\cdots,\mathcal{C}_{c}\}$ partitions the vertex set $\mathcal{V}$. {The $i^{th}$ community is written as $\mathcal{C}_{i}= \{v_{j_1}, v_{j_2},\cdots, v_{j_{n_i}}\}$, where $n_i\in [n]$ is the size of the $i^{th}$ community. Consequently, the graph is parametrized by $(n,c,(n_i)_{i\in [c]},l)$. We sometimes refer to such an unlabeled graph as an $(n,c,(n_i)_{i\in [c]},l)$-unlabeled graph with community structure (UCS). The set $\mathcal{E}_{i_1,i_2}=\{(x,v_{j_1},v_{j_2})\in \mathcal{E}|v_{j_1}\in \mathcal{C}_{i_1}, v_{j_2}\in \mathcal{C}_{i_2} \}$ is the set of edges connecting the vertices in communities $\mathcal{C}_{i_1}$ and $\mathcal{C}_{i_2}$. } 
{It can be noted that
The Erd\"os-R\'enyi (ER) graphs studied in Section \ref{sec:CER} are examples of single-community graphs, i.e. $c=1$.}

 \begin{table}[]
  \centering
\begin{tabular}{|ll|ll|ll|}
\hline
 n:& $\#$ of vertices & $c$:  & $\#$ of communities &  $\mathcal{C}$: & set of communities    \\ 
 \hline
 $\mathcal{C}_i$: & $i^{th}$ community & $n_i$: & size of $i^{th}$ community & $\mathcal{E}_{i,j}$: & edges between $\mathcal{C}_i$ and $\mathcal{C}_j$  \\ \hline
 $G_{\sigma}$: & adjacency matrix &
$U_{\sigma}$: & upper-triangle & $G_{\sigma,i,j}$: & adj. matrix between $\mathcal{C}_i$ and $\mathcal{C}_j$ 
\\ \hline
\end{tabular}
\vspace{0.1in}
\caption{Notation Table: Graphs with Community Structure}
\end{table}

{We consider random graphs with community structure (RCS) generated stochastically based on the SBM model. In this model, the probability of an edge between a pair of vertices is determined by their community memberships.
More precisely, for a given vertex set $\mathcal{V}$ and set of communities $\mathcal{C}$, let
$P_{X|C_{j_1},C_{j_2}},j_1,j_2\in [c]$ be a set of conditional distributions defined on  $\mathcal{X}\times \mathcal{C}\times\mathcal{C}$, where $\mathcal{X}=[0,l-1]$.
It is assumed that the edge set $\mathcal{E}$ is generated randomly, where the attribute $X$ of the edge between vertices ${v}_{i_1}\in \mathcal{C}_{j_1}$ and $v_{i_2}\in \mathcal{C}_{j_2}$ is generated based on the conditional distribution $P_{X|\mathcal{C}_{j_1},\mathcal{C}_{j_2}}$. So,
\begin{align*}
 P((x,v_{i_1},v_{i_2})\in \mathcal{E})= P_{X|C_{j_1},C_{j_2}}(x|\mathcal{C}_{j_1},\mathcal{C}_{j_2}), \forall x\in [0,l-1],
\end{align*}
where $v_{i_1},v_{i_2}\in \mathcal{C}_{j_1}\times \mathcal{C}_{j_2}$, and edges between different vertices are mutually independent. }
{It can be noted that for undirected graphs considered in this work, we must have $P_{X|C_i,C_o}(x|\mathcal{C}_{j_1}, \mathcal{C}_{j_2})=P_{X|C_i,C_o}(x| \mathcal{C}_{j_2}, \mathcal{C}_{j_1})$.
A labeled graph with community structure is a graph with community structure $g$ equipped with a labeling $\sigma$, and is denoted by $\tilde{g}=(g,\sigma)$. 
For the labeled graph $\tilde{g}$ the adjacency matrix is defined as ${G}_{\sigma}=[{G}_{\sigma,i,j}]_{i,j\in [1,n]}$ where $G_{\sigma,i,j}$ is the unique value such that $(G_{\sigma,i,j},v_k,v_l)\in \mathcal{E}_n$, where $(v_k,v_l)=(\sigma^{-1}(i),\sigma^{-1}(j))$. The submatrix $G_{\sigma,\mathcal{C}_i,\mathcal{C}_j}=[G_{\sigma,k,l}]_{k,l: v_k,v_l\in \mathcal{C}_i\times \mathcal{C}_j}$
is the adjacency matrix corresponding to the pair $\mathcal{C}_i$ and $\mathcal{C}_j$. The upper triangle (UT) corresponding to $\tilde{g}$ is the structure $U_{\sigma}=[G_{\sigma,i,j}]_{i<j}$. The upper triangle corresponding to communities $\mathcal{C}_i$ and $\mathcal{C}_j$ in $\tilde{g}$ is denoted by $U_{\sigma,\mathcal{C}_i,\mathcal{C}_j}=[G_{\sigma,k,l}]_{k<l: v_k,v_l\in \mathcal{C}_i\times \mathcal{C}_j}$. The subscript `$\sigma$' is dropped when there is no ambiguity. The notation is summarized in Table II. }

We consider pairs of correlated RCSs. It is assumed that edges between pairs of vertices in the two graphs with the same labeling are correlated and are generated based on a joint probability distribution, whereas edges between pairs of vertices with different labeling are generated independently. A pair of correlated RCSs is formally defined below.

\begin{Definition}[\bf{Correlated Pair of RCSs}]
Let $P_{X,X'|C_{j_1},C_{j_2},C'_{j'_1},C'_{j'_2}}, j_1,j_2,j'_1,j'_2\in [1,c]$ be a set of conditional distributions defined on $\mathcal{X}\times \mathcal{X}'\times\mathcal{C}\times\mathcal{C}\times\mathcal{C}'\times\mathcal{C}'$,
where $\mathcal{X}=\mathcal{X}'=[0,l-1]$ and $(\mathcal{C},\mathcal{C}')$ are a pair of community sets of size $c\in \mathbb{N}$. A correlated pair of random graphs with community structure (CPCS) generated according to $P_{X,X'|C_{j_1},C_{j_2},C'_{j'_1},C'_{j'_2}}$ is a pair $\tilde{\underline{g}}=(\tilde{g},\tilde{g}')$ characterized by: 
i) the pair of RCSs $(g,g')$ generated according to ${P_{X|C_{j_1},C_{j_2}}}$ and ${P_{X'|C'_{j'_1},C'_{j'_2}}}$, respectively, ii) the pair of labelings $(\sigma,\sigma')$, and iii) the probability distribution $P_{X,X'|C_{j_1},C_{j_2},C'_{j'_1},C'_{j'_2}}$, such that: 
\\1) The graphs have the same set of vertices $\mathcal{V}=\mathcal{V}'$.
\\2) For any two edges $e=(x,v_{j_1},v_{j_2}), e'=(x',v'_{j'_1},v'_{j'_2}),  x,x'\in [0,l-1]$, we have
\vspace{-0.05in}
 \begin{align*}
 &Pr\left(e\in \mathcal{E}, e'\in \mathcal{E}'\right)=
\begin{cases}
P_{X,X'}(x,x'),& \text{if } \sigma(v_{j_k})=\sigma'(v'_{j'_k}), k=1,2
\\
Q_{X,X'}(x,x'), & \text{Otherwise}
\end{cases},
\end{align*}
where $l\in \{1,2\}$, $v_{j_1},v_{j_2}\in \mathcal{C}_{j_1}\times \mathcal{C}_{j_2}$, $v'_{j'_1},v'_{j'_2}\in \mathcal{C'}_{j'_1}\times \mathcal{C'}_{j'_2}$, the distribution $P_{X,X'}$ is the joint edge distribution when the edges connect vertices with similar labels and is given by $P_{X,X'|C_{j_1},C_{j_2},C'_{j'_1},C'_{j'_2}}$, the distribution $Q_{X,X'}$ is the conditional edge  distribution when the edges connect labels with different labels and is given  by $P_{X|C_{j_1},C_{j_2}}\times P_{X'|C'_{j'_1},C'_{j'_2}}$. 
\label{Def:CPCS}
\end{Definition}
{In this paper, in order to simlify the notation, we assume that the community memberships in both graphs are the same. In other words, we assume that $v_j\in \mathcal{C}_i \Rightarrow v'_{j'}\in \mathcal{C}'_i$ given that $\sigma(v_j)=\sigma'(v'_{j'})$ for any $j,j'\in [n]$ and $i\in [c]$. }
Furthermore, we assume that the size of the communities in the graph sequence grows linearly in the number of vertices. More precisely, let $\Lambda^{(n)}(i)\triangleq|\mathcal{C}^{(n)}_i|$ be the size of the $i^{th}$ community, we assume that\footnote{We write $f(x)=\Theta(g(x))$ if $\lim_{x\to\infty}{\frac{f(x)}{g(x)}}$ is a non-zero constant.} $\Lambda^{(n)}(i)= \Theta(n)$ for all $i\in [c]$. We also assume that the number of communities $c$ is constant in $n$.

{We consider matching strategies under two scenarios: 
\begin{itemize}[leftmargin=*]
\item{\textbf{With Complete Side-information:} In this scenario,  the matching strategy uses prior knowledge of vertices' community memberships. A matching strategy operating with complete side-information is a sequence of functions $f^{CSI}_n: (\underline{g}^{(n)},\mathcal{C}^{(n)},\mathcal{C}^{' (n)})\mapsto  \hat{\sigma}^{' (n)}, n\in \mathbb{N}$, where $\underline{g}^{(n)}=(\tilde{g}_1^{(n)},g_2^{(n)})$ consists of  a pair of graphs with {community structure} with $n$ vertices. 
}
\item{\textbf{With Partial Side-information:} A matching strategy operating with partial  side-information does not use prior knowledge of the vertices' community memberships, rather, it uses the statistics $P_{X,X'|\mathcal{C}_i,\mathcal{C}_o,\mathcal{C}_{i'},\mathcal{C}_{o'}}$ and the community sizes $(n_i)_{i\in [c]}$. The matching strategy is a sequence of functions $f^{WSI}_n: \underline{g}^{(n)}\mapsto  {\hat{\sigma}}^{' (n)}, n\in \mathbb{N}$.}
\end{itemize}
The matching strategy is said to be successful if the fraction of correctly matched vertices approaches 1 as $n\to \infty$ as formalized in Definition \ref{def:strat}.}

\subsection{Matching in Presence of Side-information}
First, we describe the matching strategy under the complete side-information scenario.
In this scenario, the community membership of the nodes at both graphs are known prior to matching. Given a CPCS $\tilde{\underline{g}}$ generated according to ${P_{X,X'|C_{j_1},C_{j_2},C'_{j'_1},C'_{j'_2}}}, j_1,j_2,j'_1,j'_2\in [1,c]$, the scheme operates as follows. It finds a labeling ${\hat{\sigma}}'$, for which i) the set of pairs $(G_{\sigma,\mathcal{C}_{j_1},\mathcal{C}_{j_2}},{G'}_{\hat{\sigma}',\mathcal{C}'_{j_1},\mathcal{C}'_{j_2}}), j_1,j_2 \in [c]$  are jointly typical each with respect to $P_{X,X'|C_{j_1},C_{j_2},C'_{j_1},C'_{j_2}}(\cdot,\cdot|C_{j_1},C_{j_2},C'_{j_1},C'_{j_2})$ when viewed as vectors of length $n_i n_j, i\neq j$, and ii) the set of pairs $(U_{\sigma,\mathcal{C}_j,\mathcal{C}_j},U'_{\hat{\sigma}',\mathcal{C}'_j,\mathcal{C}'_j}), j \in [c]$ are jointly typical with respect to $P_{X,X'|C_{j_1},C_{j_2},C'_{j_1},C'_{j_2}}(\cdot,\cdot|\mathcal{C}_j,\mathcal{C}_j,\mathcal{C}'_j,\mathcal{C}'_j)$ when viewed as vectors of length $\frac{n_i(n_i-1)}{2}, j\in [c]$. Specifically, it returns a randomly picked element $\hat{\sigma}'$ from the set:
\begin{align*}
 &\widehat{\Sigma}_{\mathcal{C}.\mathcal{C}'}=\{\hat{\sigma}'|
 (U_{\sigma,\mathcal{C}_j,\mathcal{C}_j}, U'_{\hat{\sigma}',\mathcal{C}'_j,\mathcal{C}'_j})\in \mathcal{A}_{\epsilon}^{\frac{n_j(n_j-1)}{2}}(P_{X,X'|\mathcal{C}_j, \mathcal{C}_j,\mathcal{C}'_j, \mathcal{C}'_j}),\forall j\in [c],
 \\&
 (G_{\sigma,\mathcal{C}_i,\mathcal{C}_j}, G'_{\hat{\sigma}',\mathcal{C}'_i,\mathcal{C}'_j})\in \mathcal{A}_{\epsilon}^{n_i  n_j}(P_{X,X'|\mathcal{C}_i, \mathcal{C}_j,\mathcal{C}'_i, \mathcal{C}'_j}),\forall i,j\in [c], i\neq j
 \},
\end{align*}
where $\epsilon=\omega(\frac{1}{n})$, and declares $\hat{\sigma}'$ as the correct labeling. We show that under this scheme,  the probability of incorrect labeling for any given vertex is arbitrarily small for large $n$.
\begin{Theorem}
\label{th:ach1}
For the TM strategy described above, a given family of sets of distributions $\widetilde{P}=(\mathcal{P}^{(n)})_{n\in \mathbb{N}}$ is achievable, if for any constants $\delta>0$, $\alpha\in [0,1-\delta]$ and every sequence of distributions $P^{(n)}_{X,X'|C_{j_1},C_{j_2},C'_{j'_1},C'_{j'_2}}\in \mathcal{P}_n, j_1,j_2,j'_1,j'_2\in [1,c]$, and community sizes $(n_1,n_2,\cdots, n_c)$ such that $\sum_{j=1}^cn_i=n$, the following holds:
\begin{align}
 \nonumber&{3(1-\alpha)\frac{\log{n}}{n}\leq 
\min_{[\alpha_i]_{i\in [c]}\in \mathcal{A}_{\alpha}}
 \sum_{i,j\in [c], i< j}
\frac{n_in_j}{n^2}\cdot
D(P^{(n)}_{X,X'|\mathcal{C}_i,\mathcal{C}_j}
 ||(1-\beta_{i,j})P^{(n)}_{X|\mathcal{C}_i,\mathcal{C}_j}P^{(n)}_{X'|\mathcal{C}_i,\mathcal{C}_j}+ \beta_{i,j} P^{(n)}_{X,X'|\mathcal{C}_i,\mathcal{C}_j})}
 \\&{
+\sum_{i\in [c]}
 \frac{n_i(n_i-1)}{2n^2}\cdot
 D(P^{(n)}_{X,X'|\mathcal{C}_i,\mathcal{C}_i}
 ||(1-\beta_i)P^{(n)}_{X|\mathcal{C}_i,\mathcal{C}_i}P^{(n)}_{X'|\mathcal{C}_i,\mathcal{C}_i}+ \beta_i P^{(n)}_{X,X'|\mathcal{C}_i,\mathcal{C}_i}),}
\label{eq:th31}
\end{align}
{and $\max_{(x_1,x_2): P^{(n)}_{X_1,X_2|\mathcal{C}_i,\mathcal{C}_j}(x_1,x_2)\neq 0}|\log{\frac{P_{X_1|\mathcal{C}_i}(x_1)P_{X_2|\mathcal{C}_j}(x_2)}{P_{X_1,X_2|\mathcal{C}_i,\mathcal{C}_j}(x_1,x_2)}}|^+= o(\log{n}), i,j\in [c]$}, as $n\to \infty$, where $\mathcal{A}_{\alpha}=  \{([\alpha_i]_{i\in [c]}): \alpha_i\leq \frac{n_i}{n}, \sum_{i\in [c]}\alpha_i=\alpha\}$, and $\beta_{i,j}= \frac{n^2}{n_in_j}\alpha_i\alpha_j, i,j\in [c]$ and $\beta_i=\frac{n\alpha_i(n\alpha_i-1)}{n_i(n_i-1)}, i\in [c]$. The maximal family of sets of distributions which are achievable using the typicality matching strategy with complete side-information is denoted by $\mathcal{P}_{full}$. 
\label{th:3}
\end{Theorem}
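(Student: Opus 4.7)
The plan is to extend the proof technique of Theorem \ref{th:21} to accommodate community structure. Since the TM strategy operates with complete community side-information, every candidate labeling $\hat{\sigma}'\in\widehat{\Sigma}_{\mathcal{C},\mathcal{C}'}$ must map vertices in $\mathcal{C}_i$ to labels in $\mathcal{C}'_i$. Equivalently, the induced permutation $\pi=\hat{\sigma}'\circ(\sigma')^{-1}$ decomposes into independent permutations $\pi_1,\ldots,\pi_c$, where $\pi_i$ acts within $\mathcal{C}_i$. For such a labeling, let $m_i$ be the number of fixed points of $\pi_i$ and set $\alpha_i=m_i/n$, so that $\alpha=\sum_{i\in[c]}\alpha_i$ is the overall fraction of correctly matched vertices. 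By Definition \ref{def:strat}, the strategy succeeds once we show that, with probability tending to one, every element of $\widehat{\Sigma}_{\mathcal{C},\mathcal{C}'}$ has $\alpha\to 1$.

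First I would apply a union bound over the error event $\{\alpha\leq 1-\delta\}$, stratified by the vector $(\alpha_1,\ldots,\alpha_c)\in\mathcal{A}_\alpha$. Applying Lemma \ref{lem:dercount} community-by-community, the number of within-community labelings with $m_i$ fixed points in $\mathcal{C}_i$ is bounded by $\prod_{i\in[c]} n_i^{n_i-m_i}\leq n^{n(1-\alpha)}$, so the log-count is at most $(1-\alpha)n\log n$. Next, I would bound the probability that a fixed such labeling lies in $\widehat{\Sigma}_{\mathcal{C},\mathcal{C}'}$. Because the edges in distinct community pairs are generated independently under the SBM, the joint typicality events across the pairs $(\mathcal{C}_i,\mathcal{C}_j)$ decouple, and their probabilities multiply.

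For each pair $i<j$, the permutation acting on the length-$n_in_j$ between-community adjacency vector inherited from $(\pi_i,\pi_j)$ has exactly $m_im_j$ fixed points, so its fixed-point fraction equals $\beta_{i,j}=\frac{n^2\alpha_i\alpha_j}{n_in_j}$; likewise, the permutation on the length-$\binom{n_i}{2}$ within-community upper-triangle of $\mathcal{C}_i$ has $\binom{m_i}{2}$ fixed points, giving fraction $\beta_i=\frac{n\alpha_i(n\alpha_i-1)}{n_i(n_i-1)}$. Applying Corollary \ref{th:cor:1} to each sub-vector separately yields a joint typicality bound of
\begin{align*}
\exp_2\!\Big({-}\tfrac{n_in_j}{3}\,D\big(P^{(n)}_{X,X'|\mathcal{C}_i,\mathcal{C}_j}\,\big\|\,(1-\beta_{i,j})P^{(n)}_{X|\mathcal{C}_i,\mathcal{C}_j}P^{(n)}_{X'|\mathcal{C}_i,\mathcal{C}_j}+\beta_{i,j}P^{(n)}_{X,X'|\mathcal{C}_i,\mathcal{C}_j}\big)+o(n^2)\Big)
\end{align*}
for the cross-community factor, and analogously for the intra-community factor. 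Multiplying across all pairs produces a total exponent equal to $n^2/3$ times the right-hand side of \eqref{eq:th31}. The $\delta_\epsilon$ and $\zeta'_n$ correction terms from Corollary \ref{th:cor:1} are absorbed into $o(n^2)$ thanks to the assumed slack $\max|\log(P_X P_{X'}/P_{X,X'})|^+=o(\log n)$ and the choice $\epsilon=\omega(1/n)$.

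Combining the two estimates, the total error probability for a fixed partition $(\alpha_1,\ldots,\alpha_c)$ is bounded by $\exp_2\!\big((1-\alpha)n\log n - \tfrac{n^2}{3}\,\text{RHS}(\alpha_1,\ldots,\alpha_c)+o(n^2)\big)$, which vanishes under the hypothesis \eqref{eq:th31} after taking the minimum over $[\alpha_i]\in\mathcal{A}_\alpha$ to handle the adversarial split of fixed points across communities, and a further union bound over the $O(n^c)$ integer partitions of $\alpha n$ into $c$ parts (which contributes only $O(\log n)$ to the exponent and is absorbed). The main obstacle is the bookkeeping: keeping track of per-community fixed-point counts, correctly translating them into the pairwise fixed-point fractions $\beta_{i,j}$ and $\beta_i$ that enter Corollary \ref{th:cor:1}, and verifying that the minimization over $\mathcal{A}_\alpha$ is precisely the worst case the union bound must dominate. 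Once this is set up carefully, the counting and typicality bounds combine exactly as in Theorem \ref{th:21}.
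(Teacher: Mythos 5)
Your proposal matches the paper's proof in Appendix I essentially step for step: a union bound over labelings with at least a $\delta$ fraction of mismatches, stratified by the per-community fixed-point vector $(\alpha_1,\ldots,\alpha_c)$; a count of such labelings bounded by $n^{n(1-\alpha)}$; and an application of Corollary~\ref{th:cor:1} separately to each between-community block and within-community upper triangle, with the induced fixed-point fractions $\beta_{i,j}$ and $\beta_i$ and the events decoupling by independence across community pairs. The only cosmetic difference is that you count labelings per community while the paper uses the coarser overall bound ${n\choose k}(!k)$, but both give the same $(1-\alpha)n\log n$ term in the exponent, so the argument is the same.
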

\begin{proof}
Appendix \ref{app:th3}.
\end{proof}

\begin{Remark}
Note that the community sizes $(n_1,n_2,\cdots, n_c), n\in \mathbb{N}$ are assumed to grow in $n$ such that $\lim_{n\to \infty}\frac{n_i}{n}>0$. 
\end{Remark}
{It can be noted that Theorem \ref{th:3} includes the achievable region for matching of pairs of Erd\H{o}s-R\`enyi graphs (i.e. single community) derived in Theorem \ref{th:2}.}
\subsection{Matching in Absence of Side-information}
The scheme described in the previous section can be extended to matching graphs without community memberships side-information. In this scenario, it is assumed that the distribution $P_{X,X'|\mathcal{C}_{j_1},\mathcal{C}_{j_2},\mathcal{C}'_{j'_1},\mathcal{C}'_{j'_2}}, j_1,j_2,j'_1,j'_2\in [1,c]$ is known, but the community memberships of the vertices in the graphs are not known. In this case, the scheme sweeps over all possible possible community membership assignments of the vertices in the two graphs. For each community membership assignment, the scheme attempts to match the two graphs using the method proposed in the complete side-information scenario. If it finds a labeling which satisfies the joint typicality conditions, it declares the labeling as the correct labeling. Otherwise, the scheme proceeds to the next community membership assignment. More precisely, for a given community assignment $(\hat{\mathcal{C}},\hat{\mathcal{C}'})$, the scheme forms the following ambiguity set
\begin{align*}
 \widehat{\Sigma}_{\hat{\mathcal{C}},\hat{\mathcal{C}}'}&=\{\hat{\sigma}'|
 (U_{\sigma,
 \hat{\mathcal{C}}_i,\hat{\mathcal{C}}_i}, U'_{\hat{\sigma}',\hat{\mathcal{C}}'_i,\hat{\mathcal{C}}'_i})\in \mathcal{A}_{\epsilon}^{\frac{n_i(n_i-1)}{2}}(P_{X,X'|\hat{\mathcal{C}}_i, \hat{\mathcal{C}}_i,\hat{\mathcal{C}}'_i, \hat{\mathcal{C}}'_i}),\forall i\in [c],
 \\&
 (G_{\sigma,\hat{\mathcal{C}}_i,\hat{\mathcal{C}}_j}, \widetilde{G'}_{\hat{\sigma}',\hat{\mathcal{C}}'_i,\hat{\mathcal{C}}'_j})\in \mathcal{A}_{\epsilon}^{n_i  n_j}(P_{X,X'|\hat{\mathcal{C}}_i, \hat{\mathcal{C}}_j,\hat{\mathcal{C}}'_i, \hat{\mathcal{C}}'_j}),\forall i,j\in [c], i\neq j
 \}.
\end{align*}
Define $
  \widehat{\Sigma}_{0}\triangleq \cup_{(\hat{\mathcal{C}},
  \hat{\mathcal{C}}')\in \mathsf{C}  } \widehat{\Sigma}_{\hat{\mathcal{C}},\hat{\mathcal{C}'}}$,
where $\mathsf{C}$ is the set of all possible community membership assignments. The scheme outputs a randomly chosen element of  $\widehat{\Sigma}_{0}$ as the correct labeling. The following theorem shows that the achievable region is the same as the one described in Theorem \ref{th:ach1}.

\begin{Theorem}
Let $\mathcal{P}_0$ be the maximal family of sets of achievable distributions for the typicality matching strategy without side-information. Then, $\mathcal{P}_0= \mathcal{P}_{full}$.
\label{th:ach2}
\end{Theorem}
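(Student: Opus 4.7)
I split the claim into the two containments $\mathcal{P}_0 \subseteq \mathcal{P}_{full}$ and $\mathcal{P}_{full} \subseteq \mathcal{P}_0$.

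\textbf{Easy direction $\mathcal{P}_0 \subseteq \mathcal{P}_{full}$.} The true community assignment $(\mathcal{C}, \mathcal{C}')$ is one of the candidates swept by the side-information-free scheme, so by construction $\widehat{\Sigma}_{\mathcal{C}, \mathcal{C}'} \subseteq \widehat{\Sigma}_0$. A successful SI-free strategy means, as in the proof of Theorem \ref{th:3}, that with high probability every element of $\widehat{\Sigma}_0$ agrees with $\sigma'$ on a $1-o(1)$ fraction of vertices; this property is inherited by the subset $\widehat{\Sigma}_{\mathcal{C}, \mathcal{C}'}$, so the SI scheme succeeds on the same family of distributions, giving $\mathcal{P}_0 \subseteq \mathcal{P}_{full}$.

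\textbf{Main direction $\mathcal{P}_{full} \subseteq \mathcal{P}_0$.} I would mirror the error analysis in the proof of Theorem \ref{th:3} and insert a union bound over the candidate assignments $(\hat{\mathcal{C}}, \hat{\mathcal{C}}')$. Fix $\alpha \in [0,1-\delta]$ and a profile $[\alpha_i]_{i\in[c]} \in \mathcal{A}_{\alpha}$ describing how the $(1-\alpha)n$ potential mismatches distribute across communities. For every fixed candidate $(\hat{\mathcal{C}}, \hat{\mathcal{C}}')$ and every labeling $\hat{\sigma}'$ with this profile, the probability that $\hat{\sigma}' \in \widehat{\Sigma}_{\hat{\mathcal{C}}, \hat{\mathcal{C}}'}$ is bounded, exactly as in Theorem \ref{th:3}, by applying Corollary \ref{th:cor:1} blockwise to each inter- and intra-community adjacency sub-matrix and multiplying the resulting factors. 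Counting labelings with a given profile via Lemma \ref{lem:dercount} and summing reproduces the divergence-based exponent that appears on the right-hand side of \eqref{eq:th31}.

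The only new ingredient is the outer union bound over candidate assignments. The number of assignments is at most $c^{2n}$, contributing an additive $2 n \log c$ in the exponent; since $c$ is fixed in $n$, this term is $o(n \log n)$ and is absorbed into the slack already built into \eqref{eq:th31}. Consequently, under the same sufficient condition \eqref{eq:th31} that governs the SI scheme, the probability that any $\hat{\sigma}'$ disagreeing with $\sigma'$ on more than an $(1-\alpha_n)$-fraction of vertices lies in $\widehat{\Sigma}_0$ tends to zero, and the SI-free matching strategy succeeds. Combined with the easy direction this yields $\mathcal{P}_0 = \mathcal{P}_{full}$.

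\textbf{Main obstacle.} The delicate point is ensuring that the per-assignment error bound from Theorem \ref{th:3} remains meaningful for candidate partitions $\hat{\mathcal{C}}$ of $\tilde{g}_1$ that are far from the true $\mathcal{C}$. The clean way to handle this is to observe that for such $\hat{\mathcal{C}}$ the empirical edge marginals inside the blocks of $\hat{\mathcal{C}}$ fail to concentrate near any $P_{X|\hat{\mathcal{C}}_i,\hat{\mathcal{C}}_j}$, so $\widehat{\Sigma}_{\hat{\mathcal{C}},\hat{\mathcal{C}}'}$ is empty with high probability and drops out of the union; the only candidates contributing nontrivially are relabelings of $(\mathcal{C}, \mathcal{C}')$ that swap statistically identical communities, for which the Theorem \ref{th:3} analysis applies verbatim. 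However, the crude $c^{2n}$ union bound above already suffices, so this refinement is not strictly necessary for the statement.
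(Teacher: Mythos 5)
Your proof takes essentially the same approach as the paper: the key point in both is that the union over candidate community assignments contributes at most $2^{\Theta(n)}$ to the size of $\widehat{\Sigma}_0$, which is negligible against the $\Theta(n\log n)$ exponent governing the error analysis in Theorem \ref{th:ach1}, so the same sufficient condition applies. Your explicit treatment of the easy containment $\mathcal{P}_0 \subseteq \mathcal{P}_{full}$ and the remark about pruning far-from-true partitions are fine but not needed beyond what the paper's sketch already gives.
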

The proof follows similar arguments as Theorem \ref{th:ach1}. We provide an outline. It is enough to show that $|\widehat{\Sigma}_{0}|$ has the same exponent as $|\widehat{\Sigma}_{\mathcal{C}.\mathcal{C}'}|$. Note that the size of the set of all community membership assignments $\mathsf{C}$ has an exponent which is $\Theta(n)$ since $
    |\mathsf{C}|\leq 2^{cn}$.
On the other hand,
\begin{align*}
    |\widehat{\Sigma}_{0}|\leq |\mathsf{C}| |\widehat{\Sigma}_{\mathcal{C}.\mathcal{C}'}|\leq 2^{nc} 2^{\Theta(n\log{n})}=2^{\Theta(n\log{n})}.
\end{align*}
The rest of the proof follows by the same arguments as in Theorem \ref{th:ach1}.

\section{Matching Collections of Graphs}
\label{sec:coll}
In the previous sections, we considered matching of pairs of correlated graphs. The results can be further extended to problems involving matching of collections of more than two graphs. In this section, we consider matching collections of more than two correlated graphs, where the first graph is deanonymized and the other graphs are anonymized. For brevity we consider collections of correlated Erd\"{o}s-R\'enyi graphs, i.e. single-community random graphs in Section \ref{sec:CER}. The results can be further exteneded to correlated graphs with community structure in a straightforward manner. 
We formally describe collections of correlated Erd\"{o}s-R\'enyi graphs below.
\begin{Definition}[\bf{Correlated Collection of ER Graphs}]
Let $P_{X^m}$ be a conditional distribution defined on $\prod_{k\in [m]}\mathcal{X}_i$,
where $\mathcal{X}_i=[0,l-1], i\in [m]$ and $m>2$. A correlated collection of ER graphs $\tilde{\underline{g}}=(\tilde{g}^i)_{i\in [m]}$ generated according to ${P_{X^m}}$ is characterized by: 
i) the collection of ER graphs $(g^i)_{i\in [m]}$ each generated according to ${P_{X_i}}$, ii) the collection of labelings $(\sigma_i)_{i\in [m]}$ for the unlabeled graphs $(g^i)_{i\in [m]}$, and iii) the joint probability distribution $P_{X^m}$, such that: 
\\1)The graphs have the same set of vertices $\mathcal{V}=\mathcal{V}_i, i\in [m]$. 
\\2) For any collection of edges $e^i=(x^i,v_{j^i_1},v_{j^i_2}),  x^i\in [0,l-1], i\in [m]$, we have
 \begin{align*}
 &Pr\left(e^i\in \mathcal{E}^i, i\in [m]\right)=
\begin{cases}
P_{X^m}(x^m)
,& \text{if } \sigma^i(v_{j^i_l})=\sigma^k(v_{j^k_l}), \forall i,k\in [m]\\
\prod_{i\in [m]}P_{X_i}(x_i), & \text{Otherwise}
\end{cases},
\end{align*}
where $l\in \{1,2\}$, and $v_{j^i_1},v_{j^i_2}\in \mathcal{V}_1\times \mathcal{V}_2, i\in [m]$.
\label{Def:CCER}
\end{Definition}

Similar to the TM strategy for pairs of correlated graphs described in Section \ref{sec:CER}, we propose a matching strategy based on typicality for collections of correlated graphs. Given a correlated collection of graphs $(g^i)_{i\in [m]}$, where the labeling for $\tilde{g}^1$ is given and the rest of the graphs are anonymized, the TM strategy operates as follows. The scheme finds a collection $\widehat{\Sigma}$ of labelings $\hat{\sigma}^j, j\in [2,m]$, for which the UT's $U^{j}_{{\sigma}^j}, j\in [m]$ are jointly typical with respect to $P_{n,X^m}$  when viewed as vectors of length $\frac{n(n-1)}{2}$. The strategy succeeds if at least one such labeling exists and fails otherwise.
\begin{Theorem}
For the TM strategy, a given family of sets of distributions $\widetilde{P}=(\mathcal{P}_n)_{n\in \mathbb{N}}$ is achievable, if for every sequence of distributions $P_{n,X^m}\in \mathcal{P}_n, n\in \mathbb{N}$ we have
\begin{align}
{\frac{\log{n}}{n}(\sum_{k\in [b_m]}|\mathcal{P}_k|\alpha_k-1)\leq \frac{1}{2(b_m-1)(m(m-1)+1)}D(P_{X^m}
 ||\sum_{k\in [b_m]}\alpha'_k
 P_{X_{\mathcal{P}_k}})+O(\frac{\log{n}}{n}),}
\label{eq:thcoll}
\end{align}
for all $\alpha_1,\alpha_2,\cdots, \alpha_{b_m}:\sum_{k\in [b_m]}\alpha_k=n, \alpha_{b_m} \in [1,1-\alpha_n]$, {and $\max_{x^m:P_{X^m}(x^m)\neq0} |\log{\frac{\prod_{i\in [m]}P_{X_i}(x_i)}{P_{X^m}(x^m)}}|^+=o(\log{n})$},
where $\alpha'_k= \frac{\alpha^2_k}{2}+ \sum_{k',k'':\mathcal{P}_{k',k''}=\mathcal{P}_l}\alpha_{k'}\alpha_{k''}$, $\mathcal{P}_{k',k''}= \{\mathcal{A}'\cap \mathcal{A}'': \mathcal{A}'\in \mathcal{P}_{k'}, \mathcal{A}''\in \mathcal{P}_{k''}\}, k',k''\in [b_m]$, and $\mathcal{P}_{b_m}=[1,n]$. 
\label{th:7}
\end{Theorem}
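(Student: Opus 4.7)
The plan is to mirror the argument used for pairs of graphs in Theorem \ref{th:21}, replacing the two-sequence tools (Theorem \ref{th:1:improved}/\ref{th:1} and Lemma \ref{lem:dercount}) by their collection-valued counterparts (Theorem \ref{th:cperm} and Lemma \ref{lem:bell}). By Proposition \ref{prop:1}(i) applied coordinatewise, one may assume without loss of generality that each true labeling $\sigma^i$ is the identity, so that an error on a particular vertex occurs only if the ambiguity set $\widehat{\Sigma}$ contains some $(\hat{\sigma}^2,\dots,\hat{\sigma}^m)$ that displaces that vertex in at least one coordinate. It therefore suffices, by Markov's inequality applied to the number of displaced vertices, to bound the expected number of displaced vertices appearing in elements of $\widehat{\Sigma}$ by $o(n)$.

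Next, I would stratify the non-identity tuples $(\mathrm{id},\hat{\sigma}^2,\dots,\hat{\sigma}^m)$ according to their Bell permutation type on the vertex set $[n]$. Concretely, for each admissible vector $(i_1,\dots,i_{b_m})$ with $i_k=\alpha_k n$ and $\alpha_{b_m}\in[1-\alpha_n,1]$ corresponding to the fraction of fully fixed vertices, Lemma \ref{lem:bell} gives
\begin{align*}
N_{i_1,\dots,i_{b_m}}\;\stackrel{\cdot}{=}\;n^{\,n(\sum_{k\in[b_m]}|\mathcal{P}_k|\alpha_k-1)}.
\end{align*}
The Bell structure on vertices induces a Bell structure on the $\binom{n}{2}$ entries of the upper triangle: an unordered pair $\{r,s\}$ falls in the edge-partition class corresponding to $\{\mathcal{A}'\cap\mathcal{A}'':\mathcal{A}'\in\mathcal{P}_{k'},\mathcal{A}''\in\mathcal{P}_{k''}\}$ exactly when $r$ belongs to the vertex class indexed by $\mathcal{P}_{k'}$ and $s$ to that indexed by $\mathcal{P}_{k''}$. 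A short counting argument yields that the induced edge-type fractions are the $\alpha'_k$ defined in the theorem statement, which explains the appearance of $\alpha'_k=\alpha_k^2/2+\sum_{\mathcal{P}_{k',k''}=\mathcal{P}_k}\alpha_{k'}\alpha_{k''}$.

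Applying Theorem \ref{th:cperm} to the UT-length-$\binom{n}{2}$ sequences with the induced edge Bell type produces the bound
\begin{align*}
P\bigl((\pi_i(U^i))_{i\in[m]}\in\mathcal{A}^{\binom{n}{2}}_{\epsilon}(X^m)\bigr)\;\leq\;\exp_2\Bigl(-\tfrac{n(n-1)}{2}E'_{i_1,\dots,i_{b_m}}+o(n^2)\Bigr),
\end{align*}
with the exponent exactly the KL-divergence term on the right side of \eqref{eq:thcoll} (up to the combinatorial prefactor $\tfrac{1}{(m(m-1)+1)(b_m-1)}$). Taking a union bound over Bell types and within each type using the count $N_{i_1,\dots,i_{b_m}}$ and the joint-typicality probability, the dominant term in the logarithm of the expected number of displaced vertices becomes
\begin{align*}
n\Bigl(\!\sum_{k\in[b_m]}|\mathcal{P}_k|\alpha_k-1\!\Bigr)\log n\;-\;\tfrac{n^2}{2(m(m-1)+1)(b_m-1)}\,D\!\bigl(P_{X^m}\|\!\sum_{k}\alpha'_k P_{X_{\mathcal{P}_k}}\bigr)\;+\;o(n^2).
\end{align*}
Dividing through by $n^2$ and requiring this to be negative for every admissible $(\alpha_1,\dots,\alpha_{b_m})$ reproduces \eqref{eq:thcoll}. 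A final Markov argument then converts the vanishing expectation into a vanishing fraction of mismatched vertices, establishing Definition \ref{def:strat}.

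The technical hurdle I expect to dominate the write-up is the bookkeeping between the vertex-level Bell type $(\alpha_k)$ and the induced edge-level Bell type $(\alpha'_k)$, since the partitions $\mathcal{P}_{k',k''}$ may coincide with distinct $\mathcal{P}_k$ for different pairs $(k',k'')$, and one must verify that Theorem \ref{th:cperm}'s hypotheses (i.i.d.\ coordinates and alphabet-finiteness) apply to the UT obtained after induced permutation. The secondary obstacle is controlling the condition $\max_{x^m:P_{X^m}(x^m)\neq 0}\bigl|\log\tfrac{\prod_i P_{X_i}(x_i)}{P_{X^m}(x^m)}\bigr|^+=o(\log n)$, which is the analog of the condition in Theorem \ref{th:21} and is needed to absorb the $\delta_\epsilon$ slack from strong typicality into the $o(\log n/n)$ error term.
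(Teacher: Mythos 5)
Your proposal is correct and follows essentially the same route as the paper's Appendix I proof: stratify the candidate labelings by Bell permutation type, count each stratum via Lemma \ref{lem:bell}, bound each typicality probability via Theorem \ref{th:cperm} applied to the induced edge-level Bell type (with fractions $\alpha'_k$ computed exactly as you describe), and then take a union bound to recover condition \eqref{eq:thcoll}. The only cosmetic difference is that the paper directly shows $P(\mathcal{E}\cap\widehat{\Sigma}\neq\phi)\to 0$ for the set $\mathcal{E}$ of labelings with at least $n(1-\alpha_n)$ mismatches rather than invoking Markov on the displaced-vertex count, but the underlying computation is identical.
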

\begin{proof}
Appendix \ref{app:th7}.
\end{proof}
\begin{Remark}
Note that Equation \eqref{eq:thcoll} recovers the result given in Equation \eqref{eq:th21} for matching of pairs of correlated ER graphs, i.e. $m=2$. 
\end{Remark}

\section{Converse Results}
\label{sec:converse}
In this section, we provide conditions on the graph parameters under which graph matching is not possible. Without loss of generality, we assume that $(\bf{\sigma},\bf{\sigma}')$ are a pair of random labelings chosen uniformly among the set of all possible labeling for the two graphs. Roughly speaking, the information revealed by identifying the realization of $\bf{\sigma}'$ is equal to $H(\sigma')=log{(n!)}\approx log{(n^n)}= n\log{n}$. Consequently, using Fano's inequality, we show that the information contained in $(\sigma, g,g')$ regarding $\bf{\sigma}'$, which is quantified as the mutual information $I(\sigma'; \sigma, g,g')$, must be at least $n\log{n}$ bits for successful matching. The mutual information $I(\sigma'; \sigma,g,g')$ is a function of multi-letter probability distributions. We use standard information theoretic techniques to bound $I(\sigma'; \sigma, g,g')$ using information quantities which are functionals of single-letter distributions. The following states the resulting necessary conditions for successful matching. 

\begin{Theorem}
\label{th:converse}
For the graph matching problem under the community structure model with complete side-information, the following provides necessary conditions  for successful matching:
\begin{align*}
{\frac{\log{n}}{n}}&{\leq \sum_{i,j \in [c], i<j}\frac{n_in_j}{n^2} I(X,X'|\mathcal{C}_i,\mathcal{C}_j, \mathcal{C}'_i\mathcal{C}'_j)
+  \sum_{i \in [c]}\frac{n_i(n_i-1)}{2n^2} I(X,X'|\mathcal{C}_i,\mathcal{C}_i, \mathcal{C}'_i,\mathcal{C}'_i)+o(\frac{\log{n}}{n}), }
\end{align*}
where $I(X,X'|\mathcal{C}_i,\mathcal{C}_j, \mathcal{C}'_i\mathcal{C}'_j)$ is defined with respect to $P_{X,X'|\mathcal{C}_i,\mathcal{C}_j, \mathcal{C}'_i\mathcal{C}'_j}$.
\end{Theorem}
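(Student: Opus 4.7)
The plan is to combine a Fano-type lower bound on $I(\sigma^2;\sigma^1,g^1,g^2)$ with a data-processing argument that bounds this mutual information above by the edge-wise mutual information between the two labeled adjacency matrices. Let $X^1\triangleq G_{\sigma^1}$ and $X^2\triangleq G'_{\sigma^2}$ denote the label-indexed adjacency matrices, and let $c(i)\in[c]$ be the community index of label $i\in[n]$, which is known under complete side-information. By Definition~\ref{Def:CPCS}, the pairs $(X^1_{i,j},X^2_{i,j})_{i<j}$ are mutually independent with joint distribution $P^{(n)}_{X,X'\mid\mathcal{C}_{c(i)},\mathcal{C}_{c(j)},\mathcal{C}'_{c(i)},\mathcal{C}'_{c(j)}}$. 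Because $\sigma^1,\sigma^2$ are i.i.d.\ uniform on $S_n$ and independent of the edge variables, the observation $g^2$ is informationally equivalent to a random simultaneous row/column permutation $A^2$ of $X^2$, where the permutation is determined by $\sigma^2$.

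For the lower bound, I would adapt Fano's inequality to the partial-recovery criterion of Definition~\ref{def:strat}. Since $P(\sigma^2(v^2_{I_n})=\hat\sigma^2(v^2_{I_n}))\to 1$, Markov's inequality gives $P(|\{i:\hat\sigma^2(v^2_i)\neq \sigma^2(v^2_i)\}|>\epsilon n)\to 0$ for every fixed $\epsilon>0$. Conditional on at most $\epsilon n$ labels differing, the number of candidate permutations is at most $\binom{n}{\epsilon n}(\epsilon n)!\leq n^{\epsilon n}$, giving conditional entropy at most $\epsilon n\log n$. Choosing $\epsilon=\epsilon_n\to 0$ slowly (e.g., $\epsilon_n=\sqrt{E[\text{fraction mismatched}]}$), the Fano decomposition then yields $H(\sigma^2\mid\sigma^1,g^1,g^2)=o(n\log n)$. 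Combined with Stirling's $H(\sigma^2)=\log(n!)=n\log n(1-o(1))$, this produces $I(\sigma^2;\sigma^1,g^1,g^2)\geq n\log n(1-o(1))$.

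For the upper bound, observe that $\sigma^2\indep(\sigma^1,X^1)$ and $g^2$ is a function of $A^2$, so
\[
I(\sigma^2;\sigma^1,g^1,g^2)\leq I(\sigma^2;A^2\mid X^1)=H(A^2\mid X^1)-H(A^2\mid X^1,\sigma^2).
\]
The first term is bounded by $H(A^2)=H(X^2)$ because $A^2$ is a permutation of $X^2$, and the second term equals $H(X^2\mid X^1)$ since conditional on $\sigma^2$ the map $X^2\mapsto A^2$ is a bijection. Subtracting gives $I(\sigma^2;\sigma^1,g^1,g^2)\leq I(X^1;X^2)$. Edge-wise independence in the CPCS model then yields
\[
I(X^1;X^2)=\sum_{i<j}I(X,X'\mid\mathcal{C}_{c(i)},\mathcal{C}_{c(j)})=\sum_{k\in[c]}\binom{n_k}{2}I(X,X'\mid\mathcal{C}_k,\mathcal{C}_k)+\sum_{k<l}n_k n_l\,I(X,X'\mid\mathcal{C}_k,\mathcal{C}_l),
\]
and dividing by $n^2$ recovers the stated inequality, with the $o(\log n/n)$ slack absorbed from the Fano step.

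The main obstacle is the Fano step for partial recovery: since the success criterion only requires vanishing expected mismatch rather than exact recovery of $\sigma^2$, standard Fano does not apply. The counting argument must be tight enough to drive $H(\sigma^2\mid\hat\sigma^2)$ below $\epsilon n\log n$ rather than the trivial $\log(n!)$; this in turn forces a careful tuning of $\epsilon_n\to 0$ so that both the failure-event probability and the Hamming-ball entropy vanish at rates that are $o(n\log n)$ simultaneously.
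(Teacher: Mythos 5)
Your proposal follows the same two-phase strategy the paper uses: a Fano-type lower bound showing $H(\sigma^2\mid\sigma^1,g^1,g^2)=o(n\log n)$ (hence $I(\sigma^2;\sigma^1,g^1,g^2)\geq n\log n(1-o(1))$), and a single-letter upper bound on the same mutual information. Your handling of the partial-recovery Fano step — conditioning on a Hamming ball of radius $\epsilon_n n$ around $\hat\sigma^2$ so that the residual entropy is at most $\epsilon_n n\log n$, weighted by vanishing success/failure probabilities — is the same idea the paper realizes through the indicator variable $\mathbbm{1}_C$, and the obstacle you flag is precisely what that trick resolves, so this part is sound.

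Where you diverge is the upper bound. The paper goes through the chain
\[
I(\sigma';\sigma,g,g')=I(\sigma';g\mid\sigma,g')\leq I(\sigma',g';g\mid\sigma)=I(g';g\mid\sigma,\sigma'),
\]
using $\sigma'\indep(\sigma,g)$ and $\sigma'\indep(\sigma,g,g')$, and then single-letterizes $I(g';g\mid\sigma,\sigma')$ by edge-wise independence given both labelings. You instead write $I(\sigma^2;\sigma^1,g^1,g^2)\leq I(\sigma^2;A^2\mid X^1)=H(A^2\mid X^1)-H(X^2\mid X^1)$ and then assert $H(A^2\mid X^1)\leq H(A^2)=H(X^2)$. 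The equality $H(A^2)=H(X^2)$ is the problematic step: $A^2=\Pi_{\sigma^2}(X^2)$ is a \emph{random} permutation of $X^2$, not a deterministic one, so all you get for free is $H(A^2)\geq H(X^2)$. In the CER case you are saved by exchangeability — the i.i.d.\ entries make $\Pi_{\sigma^2}(X^2)$ equal in distribution to $X^2$ so $H(A^2)=H(X^2)$ — but the theorem is stated for the SBM with community structure, where a uniform row/column permutation scrambles the block structure and $A^2$ does \emph{not} have the same law as $X^2$. To rescue your route you would have to fold the complete side information (the known community memberships of the anonymized vertices) into the observation, restrict the effective permutation to community-preserving ones, and re-invoke exchangeability within each community block; as written that step is missing, whereas the paper's chain-rule route never needs the marginal identity and therefore sidesteps this entirely. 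Everything else — including your final single-letterization $I(X^1;X^2)=\sum_{k<l}n_kn_l I(X,X'\mid\mathcal{C}_k,\mathcal{C}_l)+\sum_k\binom{n_k}{2}I(X,X'\mid\mathcal{C}_k,\mathcal{C}_k)$ — matches the paper's step (e).
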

\begin{proof}
Appendix \ref{app:conv}.
\end{proof}

\begin{Corollary}
For the graph matching problem under the  Erd\H{o}s-R\`enyi model, the following provides necessary conditions  for successful matching:
\begin{align*}
{\frac{2\log{n}}{n}}&{\leq   I(X,X')+o(\frac{\log{n}}{n})}.
\end{align*}
\end{Corollary}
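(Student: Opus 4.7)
The plan is to obtain this corollary as a direct specialization of Theorem \ref{th:converse} to the single-community setting, since the Erd\"os-R\'enyi model is exactly the stochastic block model with $c=1$, a single community $\mathcal{C}_1$ containing all $n$ vertices, so that $n_1=n$ and the conditional distribution $P_{X,X'|\mathcal{C}_1,\mathcal{C}_1,\mathcal{C}'_1,\mathcal{C}'_1}$ collapses to the unconditional joint edge distribution $P_{X,X'}$.

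First I would apply Theorem \ref{th:converse} with $c=1$. The outer double sum, which ranges over indices $i<j$ in $[c]$, is empty and contributes nothing. The remaining sum contains the single term corresponding to $i=1$, which gives
\begin{align*}
\frac{\log n}{n} \;\leq\; \frac{n_1(n_1-1)}{2n^2}\, I(X,X'\mid \mathcal{C}_1,\mathcal{C}_1,\mathcal{C}'_1,\mathcal{C}'_1) + o\!\left(\tfrac{\log n}{n}\right).
\end{align*}
Substituting $n_1=n$ and recognizing that $I(X,X'\mid \mathcal{C}_1,\mathcal{C}_1,\mathcal{C}'_1,\mathcal{C}'_1)=I(X,X')$ under the Erd\"os-R\'enyi specialization yields
\begin{align*}
\frac{\log n}{n} \;\leq\; \frac{n(n-1)}{2n^2}\, I(X,X') + o\!\left(\tfrac{\log n}{n}\right) \;=\; \frac{n-1}{2n}\, I(X,X') + o\!\left(\tfrac{\log n}{n}\right).
\end{align*}

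Next I would absorb the $\frac{n-1}{2n}=\frac{1}{2}-\frac{1}{2n}$ factor into the error term. Writing $\frac{n-1}{2n}I(X,X')=\frac{1}{2}I(X,X')-\frac{1}{2n}I(X,X')$ and noting that $I(X,X')$ is bounded (so $\frac{1}{2n}I(X,X')=O(1/n)=o(\log n / n)$), we obtain
\begin{align*}
\frac{\log n}{n} \;\leq\; \frac{1}{2}\, I(X,X') + o\!\left(\tfrac{\log n}{n}\right).
\end{align*}
Multiplying both sides by $2$ gives the stated bound $\frac{2\log n}{n}\leq I(X,X')+o(\log n/n)$.

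The step requiring a moment of care is the justification that the $c=1$ reduction of the stochastic block model truly corresponds to the Erd\"os-R\'enyi model as defined in Remark \ref{Rem:CPER}, so that the conditional mutual information indexed by community labels coincides with the unconditional $I(X,X')$ defined with respect to $P_{X,X'}$; this is immediate from Definitions \ref{Def:CPCS} and \ref{def:cor_rand}. There is no real obstacle here since the computation is a one-line specialization, with the only subtlety being the bookkeeping of the $O(1/n)$ correction from $\frac{n-1}{2n}$ versus $\frac{1}{2}$, which is dominated by the existing $o(\log n / n)$ term.
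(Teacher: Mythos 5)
Your proposal is correct and follows exactly the route the paper intends: the corollary is presented as an immediate specialization of Theorem \ref{th:converse} to $c=1$, $n_1=n$, with the vanishing correction from $\frac{n-1}{2n}$ versus $\frac{1}{2}$ absorbed into the $o(\log n/n)$ term. Nothing further is needed.
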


\section{Seeded Graph Matching}
\label{sec:SGM}
\begin{figure}
\centering 
\includegraphics[width=3.7in, draft=false]{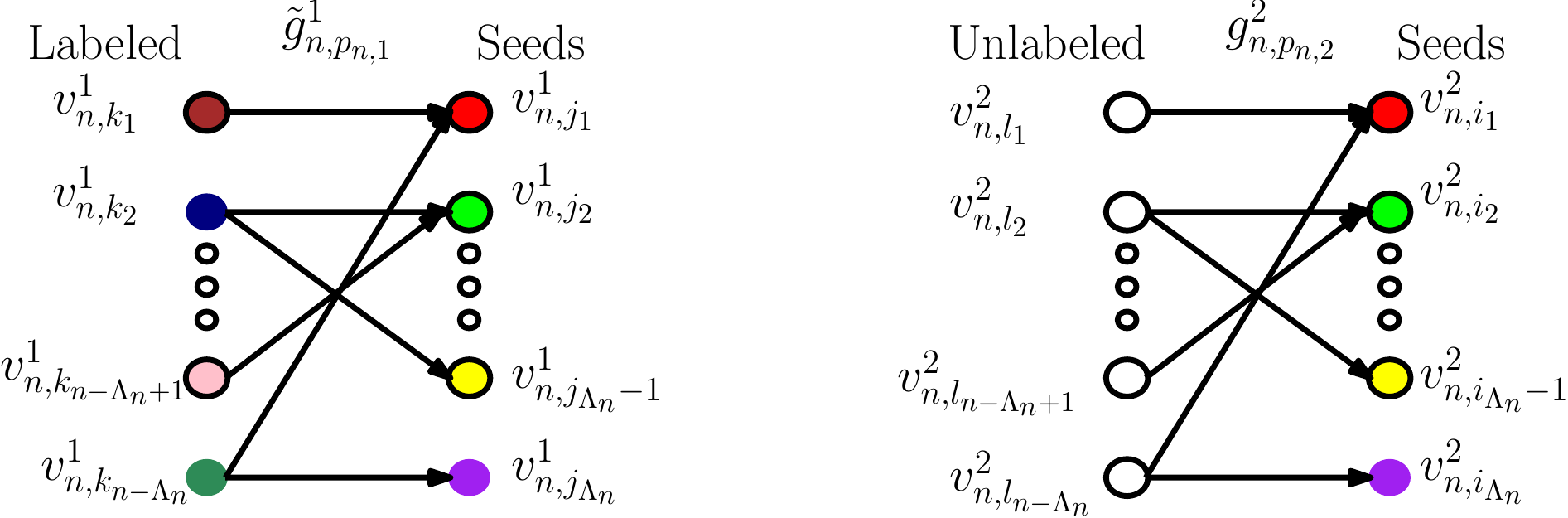}
\caption{The matching algorithm constructs the bipartite graph which captures the connections between the unmatched vertices with the seed vertices.}
\label{fig:bipart}
\end{figure}
So far, we have investigated the fundamental limits of graph matching assuming the availability of unlimited computational resources. In this section, we consider seeded graph matching, and propose a matching algorithm whose complexity grows polynomially in the number of vertices of the graph and leads to successful matching in a wide range of graph matching scenarios. The algorithm leverages ideas from prior work a related problem called \textit{online fingerprinting} which involves matching of correlated bipartite graphs \cite{Allerton}. 

In seeded graph matching, it is assumed that we are given the correct labeling for a subset of the vertices in the anonymized graph prior to the start of the matching process. The subset of pre-matched vertices are called `\textit{seeds}'. The motivation behind the problem formulation is that in many applications of graph matching, the correct labeling of a subset of vertices is known through side-inforamtion. For instance, in social network deanonymization, many users link their social media accounts across social networks publicly. As shown in this section, the seed side-information can be used to significantly reduce the complexity of the matching algorithm. 

The proposed graph matching algorithm operates as follows. First, the algorithm constructs the bipartite graph shown in Figure~\ref{fig:bipart} whose edges consist of the connections between the unmatched vertices with the seeded vertices in each graph. The algorithm proceeds in two steps. First, it constructs the `\textit{fingerprint}' vectors for each of the unmatched vertices in the two bipartite graphs based on their connections to the seed vertices. The fingerprint vector of a vertex is the row in the adjacency matrix of the bipartite graph corresponding to the edges between that  vertex and the seed vertices.  
In the second step, the algorithm finds a jointly typical pair of fingerprint vectors in the deanonymized and deanonymized graph adjacency matrices and matches the corresponding vertices, where typicality is defined based on the joint distribution between the edges of the two graphs. 
Note that the bipartite graphs encompass only a subset of the edges in the original graphs. Hence by restricting the matching process to the bipartite graphs, some of the information which could potentially help in matching is ignored. This leads to more restrictive conditions on successful matching compared to the ones derived in the previous sections. However, the computational complexity of the resulting  matching algorithm is considerably improved. In the following, we focus on matching of seeded CPERs. The results can be easily extended to seeded CPCSs similar to the unseeded graph matching in prior sections. A seeded CPER (SCPER) is formally defined below. 

\begin{Definition}[\bf{Correlated Pair of Seeded ER Graphs}]
An SCPER is a triple $(\tilde{g},\tilde{g}', \mathcal{S})$, where $\tilde{\underline{g}}=(\tilde{g},\tilde{g}')$ is a CPER generated according to $P_{X,X'}$, and $\mathcal{S}\subseteq \mathcal{V}$ is the seed set. 
\label{Def:SCPER}
\end{Definition}

  Let $\mathcal{S}=\{v_{i_1},v_{i_2},\cdots, v_{i_{\Lambda}}\}$ and define the reverse seed set  $\mathcal{S}^{-1}=\{v_{j_1},v_{j_2},\cdots, v_{j_{\Lambda}}\}$, where $\sigma(v_{j_k})=\sigma'(v_{i_k}), k\in [1,\Lambda]$.
 The algorithm is given the correct labeling of all the vertices in the first graph $\sigma:\mathcal{V}\to [1,n]$ and the seed vertices in the second graph $\sigma'|_{\mathcal{S}}: \mathcal{S}\to [1,n]$. The objective is to find the correct labeling of the rest of the vertices in the second graph $\hat{\sigma}_n: \mathcal{V};\to [1,n]$ so that the fraction of mislabeled vertices is negligible as the number of vertices grows asymptotically large, i.e.  $P(\hat{\sigma}'=\sigma')\to 1$ as $n\to \infty$. To this end, the algorithm first constructs a fingerprint for each vertex in each of the graphs. For an arbitrary vertex $v_{i}$ in $g_{P_X}$, its fingerprint is defined as $\underline{F}_{i}=(F_{i}(1), F_{i}(2),\cdots,F_{i}(\Lambda))$.
which indicates its connections to the reverse seed elements:
\begin{align*}
{F}_{i}(l)=
\begin{cases}
 1 \qquad & \text{if} \qquad (v_{i}, v_{j_l})\in \mathcal{E}\\
 0& \text{Otherwise}
\end{cases},
\qquad l\in [1,\Lambda].
\end{align*}

The fingerprint of a vertex $v_{i}$ in the second graph is defined in a similar fashion based on connections to the elements of the seed set $\mathcal{S}$. Take an unmatched vertex $v_{i}\notin \mathcal{S}$. The algorithm matches $v_{i}$ in $g$ to a vertex $v_{j}$ in $g'$ if it is the unique vertex such that the fingerprint pair $(\underline{F}_{i}, \underline{F}'_{j})$ are jointly $\epsilon$-typical with respect to the distribution $P_{X,X'}$, where\footnote{Alternatively, $\lim_{n\to \infty} \frac{\epsilon}{\sqrt{|\mathcal{S}|}}=\infty$.} $\epsilon= \omega(\frac{1}{\sqrt{\Lambda}})$: 
\begin{align*}
 \exists ! i: (\underline{F}_{i}, \underline{F}'_{j})\in \mathcal{A}_{\epsilon}^n(X,X') \Rightarrow \hat{\sigma}(v_{i})=\sigma'(v_{j}),
\end{align*}
where $\mathcal{A}_{\epsilon}^n(X,X')$ is the set of jointly $\epsilon$-typical set sequences of length $n$ with respect to $P_{X,X'}$. If a unique match is not found, then vertex $v_{i}$ is added to the ambiguity set $\mathcal{L}$. Hence, $\mathcal{V}\backslash \mathcal{L}$ is the set of all matched vertices. In the next step, these vertices are added to the seed set and the expanded seed set is used to match the vertices in the ambiguity set.
The algorithm succeeds if all vertices are matched at this step and fails otherwise. We call this strategy the Seeded Typicality Matching Strategy (STM).

\begin{Theorem}
Define the family of sets of pairs of distribution and seed sizes $\widetilde{\mathcal{P}}$ as follows:
\begin{align*}
 \widetilde{\mathcal{P}}=& \Big\{(\mathcal{P}_n,\Lambda_n)_{n\in \mathbb{N}}\Big|
 \forall P_{n,X,X'}\in \mathcal{P}_n: \frac{2\log{n}}{I(X,X')}\leq \Lambda_n ,  I(X;X')=\omega\left(\sqrt{\frac{1}{\Lambda_n}}\right)\Big\}.
\end{align*}
Any family of SCPERs with parameters chosen from $\widetilde{\mathcal{P}}$ is matchable using the STM strategy.
\label{th:seeded}
\end{Theorem}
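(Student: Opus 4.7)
The plan is to reduce the analysis to a per-vertex error probability computation for the first pass of STM, and then use a linearity/Markov argument to deduce that the fraction of mismatched vertices vanishes. Fix a non-seed vertex $v_i$ in $\tilde{g}$ and let $v_{i^\ast}$ be its true image in $\tilde{g}'$, i.e.\ $\sigma(v_i)=\sigma'(v_{i^\ast})$. Because seed labels are distinct and the CPER model makes edges incident to disjoint unordered label pairs mutually independent, $(\underline{F}_i,\underline{F}'_{i^\ast})$ is a length-$\Lambda_n$ i.i.d.\ sequence drawn from $P_{X,X'}$, whereas for any other $v_j$ the pair $(\underline{F}_i,\underline{F}'_j)$ is i.i.d.\ from the product $P_X P_{X'}$. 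Vertex $v_i$ is correctly and uniquely matched in the first pass iff $(\underline{F}_i,\underline{F}'_{i^\ast})\in\mathcal{A}^{\Lambda_n}_{\epsilon}(X,X')$ and no wrong pair lies in the same typical set.

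I would first calibrate the typicality slack $\epsilon=\epsilon_n$ so that both $\epsilon_n\sqrt{\Lambda_n}\to\infty$ and $\epsilon_n=o(I(X;X'))$. The hypothesis $I(X;X')=\omega(\sqrt{1/\Lambda_n})$ is exactly what allows such a choice; concretely one can set $\epsilon_n=\sqrt{\omega_n}/\sqrt{\Lambda_n}$ for an auxiliary $\omega_n\to\infty$ growing slowly enough. With this $\epsilon_n$, a Chebyshev bound on the empirical joint type of the i.i.d.\ fingerprint pair $(\underline{F}_i,\underline{F}'_{i^\ast})$ gives $P((\underline{F}_i,\underline{F}'_{i^\ast})\notin\mathcal{A}^{\Lambda_n}_{\epsilon_n}(X,X'))\to 0$, so the correct match is typical with high probability.

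Next, for each of the at most $n$ wrong candidates $v_j\neq v_{i^\ast}$, the standard upper bound on joint typicality of independent sequences yields
\begin{equation*}
P\!\left((\underline{F}_i,\underline{F}'_j)\in\mathcal{A}^{\Lambda_n}_{\epsilon_n}(X,X')\right)\le 2^{-\Lambda_n(I(X;X')-\delta_{\epsilon_n})},
\end{equation*}
where $\delta_{\epsilon_n}=O(\epsilon_n)=o(I(X;X'))$. Union-bounding over wrong candidates and invoking $\Lambda_n\ge 2\log n/I(X;X')$,
\begin{equation*}
P\!\left(\exists j\neq i^\ast:(\underline{F}_i,\underline{F}'_j)\in\mathcal{A}^{\Lambda_n}_{\epsilon_n}\right)\le n\cdot 2^{-\Lambda_n I(X;X')(1-o(1))}\le n^{-1+o(1)}.
\end{equation*}
Combining the two bounds, the probability that a fixed non-seed vertex is incorrectly or ambiguously labeled in the first pass decays like $n^{-1+o(1)}$.

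Letting $N_e$ denote the number of incorrectly labeled non-seed vertices, linearity of expectation gives $\mathbb{E}[N_e]\le n\cdot n^{-1+o(1)}=n^{o(1)}$, and Markov's inequality forces $N_e/n\to 0$ in probability, which is precisely the success criterion of Definition \ref{def:strat}. The second pass of STM only promotes first-pass matches to seeds and repeats the same procedure on the (now $o(n)$-sized) ambiguity set, so it can only improve the matched fraction. The main technical obstacle is the joint calibration of $\epsilon_n$: it must be small enough for the wrong-match exponent to stay essentially $\Lambda_n I(X;X')$ yet large enough for the correct-match empirical type to concentrate inside $\mathcal{A}^{\Lambda_n}_{\epsilon_n}$. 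This is exactly why both $\Lambda_n\ge 2\log n/I(X;X')$ and $I(X;X')=\omega(\sqrt{1/\Lambda_n})$ are imposed in the hypothesis.
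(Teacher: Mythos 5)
Your core computation---the per-vertex bound of $O\!\left(\frac{1}{\Lambda_n\epsilon_n^2}\right)+n\cdot 2^{-\Lambda_n(I(X;X')-\delta_{\epsilon_n})}$, the calibration of $\epsilon_n$ via $I(X;X')=\omega(\sqrt{1/\Lambda_n})$, and the threshold $\Lambda_n\geq 2\log n/I(X;X')$ killing the union-bound term---coincides with the paper's, but you assemble it differently. The paper routes everything through Lemma \ref{lem:card}, which uses a second-moment/Chebyshev argument on $|\mathcal{L}|$ to show $P(|\mathcal{L}|>2n/(\Lambda\epsilon^2))\to 0$, and then runs a \emph{second-pass} analysis claiming $P(|\mathcal{L}'|=0)\to 1$, i.e.\ it aims for the stronger event that every vertex is matched after one seed-expansion round. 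You instead stop after the first pass: linearity of expectation gives $\mathbb{E}[N_e]$ sublinear in $n$ and Markov converts this to $N_e/n\to 0$. What your route buys is simplicity---you avoid Chebyshev and the second-pass conditional argument entirely, and you obtain exactly the relaxed success criterion of Definition \ref{def:strat}. What it does not give is the stronger ``$\mathcal{L}'=\phi$'' conclusion that the paper's proof shoots for; your statement that the second pass ``can only improve the matched fraction'' is true but leaves the vertices in $\mathcal{L}$ unaccounted for unless you also charge them to $N_e$, which you implicitly do since $|\mathcal{L}|$ is also $o(n)$ in expectation.

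One quantitative slip: you assert the per-vertex first-pass error probability ``decays like $n^{-1+o(1)}$,'' but the Chebyshev piece $P\!\left((\underline{F}_i,\underline{F}'_{i^\ast})\notin\mathcal{A}^{\Lambda_n}_{\epsilon_n}(X,X')\right)=O(1/(\Lambda_n\epsilon_n^2))$ is only $o(1)$ under the hypothesis, not polynomially small---with $\epsilon_n=\sqrt{\omega_n/\Lambda_n}$ it is $O(1/\omega_n)$ where $\omega_n\to\infty$ arbitrarily slowly. The wrong-candidate union-bound term is $n^{-1+o(1)}$, but the Chebyshev term can dominate. Consequently $\mathbb{E}[N_e]$ is $o(n)$ rather than $n^{o(1)}$. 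This does not break your conclusion, since $\mathbb{E}[N_e/n]\to 0$ is exactly Definition \ref{def:strat}, but the intermediate claim should be stated as $o(1)$ per vertex.
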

The proof which is provided in Appendix \ref{app:thseeded} uses the following lemma.
 \begin{Lemma}
 \label{lem:card}
 The following holds:
 \begin{align*}
  P(|\mathcal{L}|> \frac{2n}{\Lambda\epsilon^2})\to 0, \text{ as } n\to \infty,
\end{align*}
\end{Lemma}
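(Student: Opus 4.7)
The plan is to bound $\mathbb{E}[|\mathcal{L}|]$ by a sum of single-vertex error probabilities and then apply Markov's inequality. Write $\pi$ for the true correspondence between $\tilde{g}$ and $\tilde{g}'$. For each unmatched vertex $v_i \notin \mathcal{S}$ one has $\{v_i \in \mathcal{L}\} \subseteq A_i \cup B_i$, where $A_i$ is the event that the correct pair $(\underline{F}_i, \underline{F}'_{\pi(i)})$ fails to be jointly $\epsilon$-typical with respect to $P_{X,X'}$, and $B_i$ is the event that at least one incorrect pair $(\underline{F}_i, \underline{F}'_j)$ with $j \neq \pi(i)$ is jointly $\epsilon$-typical. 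Bounding $P(A_i)$ and $P(B_i)$ separately and summing over the at most $n$ unmatched vertices will yield $\mathbb{E}[|\mathcal{L}|] = O\!\big(n/(\Lambda \epsilon^2)\big)$, from which the lemma follows by Markov's inequality.

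For the first bound, note that conditional on the seed set, the $\Lambda$ coordinates of the correct fingerprint pair are i.i.d.\ draws from $P_{X,X'}$. Each entry of the empirical joint type is an average of $\Lambda$ Bernoullis with variance at most $\tfrac{1}{4\Lambda}$. A componentwise Chebyshev inequality followed by a union bound over the finite alphabet gives $P(A_i) \le c_1/(\Lambda \epsilon^2)$ for a constant $c_1$ depending only on $|\mathcal{X}|$. For the second bound, for each $j \neq \pi(i)$ the vectors $\underline{F}_i$ and $\underline{F}'_j$ are independent with marginals $P_X$ and $P_{X'}$, so the classical typicality bound gives $P\!\big((\underline{F}_i,\underline{F}'_j)\in \mathcal{A}_\epsilon^\Lambda(X,X')\big) \le 2^{-\Lambda(I(X;X')-\delta(\epsilon))}$ with $\delta(\epsilon)=O(\epsilon)$. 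The hypotheses $\Lambda \ge 2\log n/I(X;X')$ and $I(X;X')=\omega(1/\sqrt{\Lambda})$ let us choose $\epsilon=\omega(1/\sqrt{\Lambda})$ with $\delta(\epsilon)=o(I(X;X'))$, so a union bound over at most $n-1$ incorrect choices of $j$ yields $P(B_i) \le n\cdot 2^{-2\log n\,(1-o(1))} = o(1/n)$.

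Combining the two pieces gives $\mathbb{E}[|\mathcal{L}|] \le n\,P(A_i)+n\,P(B_i) \le c_1 n/(\Lambda\epsilon^2) + o(1)$, and Markov's inequality then produces
\[
P\!\left(|\mathcal{L}|>\tfrac{2n}{\Lambda\epsilon^2}\right) \le \frac{\Lambda\epsilon^2}{2n}\,\mathbb{E}[|\mathcal{L}|] \le \tfrac{c_1}{2}+o(1).
\]
The main obstacle is that a naive Markov step with the Chebyshev bound on $P(A_i)$ delivers only a constant upper bound rather than a vanishing one. I would close this gap by strengthening Step~2: since each coordinate of the joint type is a sum of independent Bernoullis, Hoeffding's inequality improves $P(A_i)$ to $O(\exp(-2\Lambda\epsilon^2))$, which under $\Lambda\epsilon^2\to\infty$ is $o(1/(\Lambda\epsilon^2))$. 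This makes $\mathbb{E}[|\mathcal{L}|]=o(n/(\Lambda\epsilon^2))$ and lets Markov yield the desired convergence to zero. Alternatively, one can keep the Chebyshev bound and instead apply a second-moment argument to $|\mathcal{L}|$ itself, using the fact that the events $\{v_i\in\mathcal{L}\}$ share only the common seed fingerprint structure, so their pairwise correlations can be controlled by conditioning on the seeds.
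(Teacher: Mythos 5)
Your proposal is correct but takes a genuinely different main route from the paper. You first set up the natural first-moment bound $\mathbb{E}[|\mathcal{L}|] \le n\,P(A_i) + n\,P(B_i)$ and correctly observe that with the Chebyshev bound $P(A_i) = O(1/(\Lambda\epsilon^2))$, Markov's inequality only delivers a constant, not a vanishing, upper bound on $P(|\mathcal{L}| > 2n/(\Lambda\epsilon^2))$. Your primary fix --- replacing Chebyshev by Hoeffding on each coordinate of the joint type, so that $P(A_i) = O(\exp(-2\Lambda\epsilon^2)) = o(1/(\Lambda\epsilon^2))$ when $\Lambda\epsilon^2 \to \infty$, and then finishing with Markov --- is sound and not what the paper does. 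The paper keeps the weaker Chebyshev bound $P(\mathcal{B}_j) \le \tfrac{1}{\Lambda\epsilon^2} + n 2^{-\Lambda(I(X;X')-\epsilon)}$, shows $\mathbb{E}(|\mathcal{L}|) \lesssim n/(\Lambda\epsilon^2)$, and then applies a second-moment argument to $|\mathcal{L}|$ itself: it writes $\mathrm{Var}(|\mathcal{L}|) = \sum_j P(\mathcal{B}_j) - \sum_j P^2(\mathcal{B}_j) \le \mathbb{E}(|\mathcal{L}|)$ and invokes Chebyshev on $|\mathcal{L}|$ to conclude $P(|\mathcal{L}| > 2\mathbb{E}(|\mathcal{L}|)) \le 1/\mathbb{E}(|\mathcal{L}|) \to 0$. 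You mention this second-moment route as an alternative and rightly flag that its validity hinges on controlling the pairwise correlations of the events $\{v_j \in \mathcal{L}\}$ --- indeed, the paper's variance identity tacitly treats these events as pairwise independent even though they all depend on the full collection of fingerprints on the other side, so that step is not fully justified as written. In that sense your Hoeffding-plus-Markov route is actually cleaner: it delivers the lemma from the mean alone and sidesteps the pairwise-independence subtlety entirely. Each approach buys something: the paper's second-moment argument shows that $|\mathcal{L}|$ concentrates near its mean (useful as a structural fact), while your Hoeffding route gives the stated tail bound with less machinery and no independence assumption across vertices.
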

\begin{proof}
Appendix \ref{app:card}.
\end{proof}

\section{Conclusion}
\label{sec:conc}
We have considered matching of collections of correlated graphs. We have studied the problem under the Erd\"os-R\`enyi model as well as the more general community structure model. The derivations apply to graphs whose edges may take non-binary attributes. We have introduced a graph matching scheme called the Typicality Matching scheme which relies on tools such as concentration of measure and typicality of sequences of random variables to perform graph matching. We have further provided converse results which lead to necessary conditions on graph parameters for successful matching. 
We have investigated seeded graph matching, where the correct labeling of a subset of graph vertices is known prior to the matching process. We have introduced a matching algorithm for seeded graph matching which successfully matches the graphs in wide range of matching problems with large enough seeds and has a computational complexity which grows  polynomially in the number of graph vertices. 
\bibliographystyle{unsrt}
  \bibliography{Privacy_bibliography_Dec_2017}
 \newpage
\begin{appendices}
\section{{Proof of Theorem \ref{th:1:improved}}}
\label{Ap:th1:improved}
{ Let $\pi$ be an $(m,c,i_1,i_2,\cdots,i_c)$-permutation for parameters $m,c, i_1,i_2,\cdots, i_c$ as defined in Definition \ref{def:cycle}. Note that $\pi$ and $\pi^{-1}$ have the same number and lengths of cycles \cite{isaacs}. Using Proposition \ref{prop:1}, we assume without loss of generality that $\pi^{-1}$
is a standard permutation such that
\begin{align*}
&\pi^{-1}=(1,2,\cdots,i_1)(i_1+1,i_1+2,\cdots,i_1+i_2)\cdots
\\&\qquad(\sum_{j=1}^{c-1}i_j+1,\sum_{j=1}^{c-1}i_j+2,\cdots,\sum_{j=1}^{c}i_j)(n-m+1)(n-m+2)\cdots (n).
\end{align*}
Define the following partition for the set of indices $[1,n]$:}
\begin{align*}
&{\mathcal{A}_1= \{k| \exists \ell\in [c]: \sum_{j=1}^{\ell-1} i_j +1\leq k< \sum_{j=1}^{\ell} i_j, \text{ and }
k-\sum_{j=1}^{\ell-1} i_j \text{ is odd}\}}
\\&
{\mathcal{A}_2= \{k| \exists \ell\in [c]:  k= \sum_{j=1}^{\ell} i_j, \text{ and }
i_{\ell} \text{ is odd}\}}
\\&
{\mathcal{A}_3= \{k| \exists \ell\in [c]: \sum_{j=1}^{\ell-1} i_j +1\leq k< \sum_{j=1}^{\ell} i_j-1, \text{ and }
k-\sum_{j=1}^{\ell-1} i_j \text{ is even}\}}
\\&
{\mathcal{A}_4= \{k| \exists \ell\in [c]:  k= \sum_{j=1}^{\ell} i_j, \text{ and }
i_{\ell} \text{ is even}\}}
\\&
{\mathcal{A}_5= \{k| \exists \ell\in [c]:  k= \sum_{j=1}^{\ell} i_j-1, \text{ and }
i_{\ell} \text{ is odd}\}}
\\&{\mathcal{A}_6= \{k| k>\sum_{j=1}^c i_j\}.}
\end{align*}
{
The set $\mathcal{A}_1$ is the set of 
odd elements of each cycle which are not its endpoints, i.e the first, third, fifth $\dots$ elements of each cycle if they are not at the end of that cycle.
The set $\mathcal{A}_2$ contains the indices which are endpoints of cycles with odd lengths. The set 
 $\mathcal{A}_3$
is the set of 
even elements of each cycle which are not one of the last two elements of the cycle, i.e the second, fourth, sixth, $\dots$ elements of each cycle if they are not one of the last two elements of that cycle. The set $\mathcal{A}_4$
is the set of 
endpoints of cycles with even lengths. The set
 $\mathcal{A}_5$ is the set of the last even index of the cycles with odd length, i.e. the indices before the last element of cycles with odd lengths. Lastly, $\mathcal{A}_6$ is the set of fixed indices. }
    
  {  Let $Z^n= \pi(Y^n)$. Let $\underline{T}_j, j\in [6]$ be the type of the sequence $(X_i,Z_i), i \in \mathcal{A}_j$, so that $T_{j}(x,y)= \frac{\sum_{i\in \mathcal{A}_j}\mathbbm{1}(X_i=x, Z_i=y)}{|\mathcal{A}_j|}, j\in [6]$ and $x,y\in \mathcal{X}\times  \mathcal{Y}$. Furthermore, let $\underline{T}$ be the type of $(X_i,Z_i), i\in [n]$. Note that $\underline{T}= \alpha_1\underline{T}_1+\alpha_2\underline{T}_2+\alpha_3\underline{T}_3+\alpha_4\underline{T}_4+\alpha_5\underline{T}_5+\alpha_6\underline{T}_6$, where  $\alpha_j\triangleq \frac{|\mathcal{A}_j|}{n}, i\in [6]$.}
    
{     We are interested in the probability of the event $(X^n,Z^n)\in \mathcal{A}_{\epsilon}^n(X,Y)$. From Definition \ref{Def:typ} this event can be rewritten as follows:
\begin{align*}
 &P\left(\left(X^n,Z^n\right)\in \mathcal{A}_{\epsilon}^n(X,Y)\right)
  =P\left(\underline{T}\stackrel{.}{=} P_{X,Y}(\cdot,\cdot)\pm \epsilon\right)
 \\&= P(\alpha_1\underline{T}_1+\alpha_2\underline{T}_2+\alpha_3\underline{T}_3+\alpha_4\underline{T}_4+\alpha_5\underline{T}_5+\alpha_6\underline{T}_6\stackrel{.}{=}P_{X,Y}(\cdot,\cdot)\pm \epsilon),
    \end{align*}}
    {
 where addition of $\underline{T}_j, j\in [6]$ is defined element-wise. We have: 
\begin{align}
\label{eq:Ap:0.25}
 &P((X^n,Z^n)\in \mathcal{A}_{\epsilon}^n(X,Y))
 =\sum_{(\underline{t}_1,\underline{t}_2,\underline{t}_3,\underline{t}_4, \underline{t}_5,\underline{t}_6)\in \mathcal{T}} P(\underline{T}_j=\underline{t}_j, j\in [6]),
\end{align}}
{where $\mathcal{T}= \{(\underline{t}_1,\underline{t}_2,\underline{t}_3,\underline{t}_4,\underline{t}_5, \underline{t}_6):\alpha_1\underline{t}_1+\alpha_2\underline{t}_2+\alpha_3\underline{t}_3+\alpha_4\underline{t}_4+\alpha_5\underline{t}_5+\alpha_6\underline{t}_6\stackrel{.}{=}P_{X,Y}(\cdot,\cdot)\pm \epsilon\}$.
Consequently, we need to investigate $P(\underline{T}_j=\underline{t}_j, j\in [6])$ for tuples $(\underline{t}_1, \underline{t}_2, \underline{t}_3, \underline{t}_4, \underline{t}_5,\underline{t}_6)\in \mathcal{T}$. Note that the vector $(X_i,Z_i), i\in \mathcal{A}_j, j\in [5]$ is independent of the vector $(X_i,Z_i),i \in \mathcal{A}_6$ since the following facts hold: i)
the original sequence $(X^n,Y^n)$ is i.i.d, ii) 
$(X_i,Z_i), i\in \mathcal{A}_j, j\in [5]$ is a permutation of $(X_i,Y_i), i\notin \mathcal{A}_6$, and iii) $(X_i,Z_i),i \in \mathcal{A}_6$ is equal to $(X_i,Y_i), i\in \mathcal{A}_6$.}

{As a result,
\begin{align}
   P_{\underline{T}_1,\underline{T}_2,\underline{T}_3,\underline{T}_4,\underline{T}_5,\underline{T}_6}(\underline{t}_1,\underline{t}_2,\underline{t}_3,\underline{t}_4,\underline{t}_5,\underline{t}_6)=
   P_{\underline{T}_1,\underline{T}_2,\underline{T}_3,\underline{T}_4,\underline{T}_5}(\underline{t}_1,\underline{t}_2,\underline{t}_3,\underline{t}_4,\underline{t}_5)P_{\underline{T}_6}(\underline{t}_6).
   \label{eq:Ap:0.5}
\end{align}}

{For the rest of the proof, we will remove the subscript $\underline{T}_1,\underline{T}_2,\underline{T}_3,\underline{T}_4,\underline{T}_5,\underline{T}_6$ in $P_{\underline{T}_1,\underline{T}_2,\underline{T}_3,\underline{T}_4,\underline{T}_5,\underline{T}_6}$ when there is no ambiguity.  
Let $\underline{T}_{X,j}(x)= \sum_{y\in \mathcal{Y}} \underline{T}_j(x,y), x\in \mathcal{X}, j\in [6]$, and $\underline{t}_{X,j}(x)= \sum_{y\in \mathcal{Y}} \underline{t}_j(x,y), x\in \mathcal{X}, j\in [6]$. The type $\underline{T}_{X,j}$ is the marginal type of $X$ under $\underline{T}_j$, i.e. the type of the vector $X_i, i\in \mathcal{A}_j$, and similarly $\underline{t}_{X,j}$ is the marginal type of $X$ under $\underline{t}_j$. Furthermore, define the following conditional types:
\begin{align*}
    &\underline{V}_{j}(y|x)= \frac{\underline{T}_{j}(x,y)}{\underline{T}_{X,j}(x)}, x,y \in \mathcal{X}\times\mathcal{Y}, j\in [6]\\
    &
    \underline{v}_{j}(y|x)= 
    \frac{\underline{t}_{j}(x,y)}{\underline{t}_{X,j}(x)}, x,y \in \mathcal{X}\times\mathcal{Y}, j\in [6]
\end{align*}}
{Note that $\underline{T}_j, j\in [6]$ is completely determined by $(\underline{T}_{X,j}, \underline{V}_j)$.
We have:
\begin{align}
   \nonumber P(\underline{t}_1,\underline{t}_2,\underline{t}_3,
    \underline{t}_4, \underline{t}_5)
    &= P(\underline{v}_1,\underline{v}_2,\underline{v}_3,\underline{v}_4,\underline{v}_5, \underline{t}_{X,1},\underline{t}_{X,2},\underline{t}_{X,3},\underline{t}_{X,4},\underline{t}_{X,5})
    \\&
    =
    P(\underline{t}_{X,1},\underline{t}_{X,2},\underline{t}_{X,3},\underline{t}_{X,4},\underline{t}_{X,5})
    \times 
    \label{eq:Ap:1}
    P(\underline{v}_1,\underline{v}_2,\underline{v}_3,\underline{v}_4,\underline{v}_5|\underline{t}_{X,1},\underline{t}_{X,2},\underline{t}_{X,3},\underline{t}_{X,4},\underline{t}_{X,5})
    \end{align}
    Note that
    \begin{align}
    &\nonumber
 P(\underline{v}_1,\underline{v}_2,\underline{v}_3,\underline{v}_4,\underline{v}_5|\underline{t}_{X,1},\underline{t}_{X,2},\underline{t}_{X,3},\underline{t}_{X,4},\underline{t}_{X,5})
=
    \sqrt{ P^2(\underline{v}_1,\underline{v}_2,\underline{v}_3,\underline{v}_4,\underline{v}_5|\underline{t}_{X,1},\underline{t}_{X,2},\underline{t}_{X,3},\underline{t}_{X,4},\underline{t}_{X,5})}
    \\&\nonumber
    \leq
    \sqrt{ P(\underline{v}_1|\underline{t}_{X,1},\underline{t}_{X,2},\underline{t}_{X,3},\underline{t}_{X,4},\underline{t}_{X,5})
    }\times
    \sqrt{P(\underline{v}_2,\underline{v}_3,\underline{v}_4,\underline{v}_5|\underline{t}_{X,1},\underline{t}_{X,2},\underline{t}_{X,3},\underline{t}_{X,4},\underline{t}_{X,5})}
    \\&\label{eq:Ap:2}
    = 
       \sqrt{ P(\underline{v}_1|\underline{t}_{X,1},\underline{t}_{X,2},\underline{t}_{X,3},\underline{t}_{X,4})}
       \times 
       \sqrt{P(\underline{v}_2|\underline{t}_{X,2},\underline{t}_{X,5})}
       \times
       \sqrt{P(\underline{v}_3,\underline{v}_4,\underline{v}_5|\underline{t}_{X,1},\underline{t}_{X,3},\underline{t}_{X,4},\underline{t}_{X,5})},
\end{align}}
{where in the last equality, we have used the Markov chains $\underline{V}_1\leftrightarrow \underline{T}_{X,1}, \underline{T}_{X,2}, \underline{T}_{X,3}, \underline{T}_{X,4}\leftrightarrow \underline{T}_{X,5}$, 
$\underline{V}_2 \leftrightarrow \underline{T}_{X,2},\underline{T}_{X,5} \leftrightarrow \underline{T}_{X,1}, \underline{T}_{X,3}, \underline{T}_{X,4}$
and
$\underline{V}_3, \underline{V}_4 , \underline{V}_5 \leftrightarrow \underline{T}_{X,1},\underline{T}_{X,3},\underline{T}_{X,4},\underline{T}_{X,5} \leftrightarrow \underline{T}_{X,2}, \underline{V}_{2}$, which hold by construction of $\mathcal{A}_j, j\in [6]$. From Equations \eqref{eq:Ap:1} and \eqref{eq:Ap:2}, we have:
\begin{align*}
    &P(\underline{t}_1,\underline{t}_2,\underline{t}_3,
    \underline{t}_4, \underline{t}_5)\leq 
     P( \underline{t}_{X,1},\underline{t}_{X,2},\underline{t}_{X,3},\underline{t}_{X,4},\underline{t}_{X,5})
     \times
       \sqrt{ P(\underline{v}_1|\underline{t}_{X,1},\underline{t}_{X,2},\underline{t}_{X,3},\underline{t}_{X,4})}
     \times 
       \sqrt{P(\underline{v}_2|\underline{t}_{X,2},\underline{t}_{X,5})}
              \\&\times
       \sqrt{P(\underline{v}_3,\underline{v}_4,\underline{v}_5|\underline{t}_{X,1},\underline{t}_{X,3},\underline{t}_{X,4},\underline{t}_{X,5})}
       \\&=
    \sqrt{P( \underline{t}_{X,1},\underline{t}_{X,2},\underline{t}_{X,3},\underline{t}_{X,4},\underline{t}_{X,5})}
     \times
       \sqrt{ P(\underline{v}_1,\underline{t}_{X,1}|\underline{t}_{X,2},\underline{t}_{X,3},\underline{t}_{X,4})}
      \times 
       \sqrt{(\underline{v}_2,\underline{t}_{X,2}|\underline{t}_{X,5})}
       \\&\times
       \sqrt{P(\underline{v}_3,\underline{v}_4,\underline{v}_5,\underline{t}_{X,3},\underline{t}_{X,4},\underline{t}_{X,5}|\underline{t}_{X,1})},
\end{align*}}
{where in the last equality, we have used the joint independence of $\underline{T}_{1,X},\underline{T}_{2,X},\underline{T}_{3,X},\underline{T}_{4,X},\underline{T}_{5,X}$ due to the fact that $X^n$ is an i.i.d sequence. So,
\begin{align}
    & P(\underline{t}_1,\underline{t}_2,\underline{t}_3,
    \underline{t}_4, \underline{t}_5)\leq 
    \sqrt{P( \underline{t}_{X,1},\underline{t}_{X,2},\underline{t}_{X,3},\underline{t}_{X,4},\underline{t}_{X,5})}
     \times
       \sqrt{ P(\underline{t}_1|\underline{t}_{X,2},\underline{t}_{X,3},\underline{t}_{X,4})}\times 
       \sqrt{P(\underline{t}_2|\underline{t}_{X,5})}
       \times
       \sqrt{P(\underline{t}_3,\underline{t}_4,\underline{t}_5|\underline{t}_{X,1})},
       \label{eq:Ap:3.5}
\end{align}
In order to evaluate the above, we use the following propositions. 
\begin{Proposition}[\hspace{-.003in}\cite{csiszarbook}]
Assume that the sequence $U^m, m\in\mathbb{N}$ is a sequence of i.i.d variables defined on the probability space $(\mathcal{U}, P_U)$. Let $\underline{T}_U$ be the type of $U^m$, and $\underline{t}_U$ be an arbitrary type vector for m-length sequences defined on $\mathcal{U}$. Then,
\begin{align*}
    P(\underline{T}_U=\underline{t}_U)
    \leq \textcolor{black}{\exp_2\Big(-mD(\underline{t}_U||P_U)\Big)}.
\end{align*}
\label{Prop:Ap:1}
\end{Proposition}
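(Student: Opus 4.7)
The plan is to prove this using the standard method-of-types argument. First I would observe that $P(\underline{T}_U = \underline{t}_U)$ equals the number of $m$-length sequences $u^m\in \mathcal{U}^m$ having type exactly $\underline{t}_U$, multiplied by the probability of any single such sequence under the i.i.d.\ product measure $P_U^m$. Since the probability of $u^m$ under $P_U^m$ depends only on its type, this common probability can be written in closed form as $\prod_{u\in \mathcal{U}} P_U(u)^{m\underline{t}_U(u)}$.

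Next, I would rewrite the per-sequence probability in entropy form. Taking $\log_2$ gives $\sum_{u} m\underline{t}_U(u)\log_2 P_U(u)$, and adding and subtracting $\sum_u m\underline{t}_U(u)\log_2 \underline{t}_U(u)$ yields the identity $\prod_u P_U(u)^{m\underline{t}_U(u)} = \exp_2\!\bigl(-m H(\underline{t}_U) - m D(\underline{t}_U \| P_U)\bigr)$, where $H(\underline{t}_U)$ is the Shannon entropy of the type and $D(\cdot\|\cdot)$ is the KL divergence. This is the standard ``$P_U^m(u^m) = 2^{-m(H(\underline{t}_U)+D(\underline{t}_U\|P_U))}$ whenever $u^m$ has type $\underline{t}_U$'' identity.

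The remaining ingredient is the multinomial counting bound: the number of sequences of type $\underline{t}_U$ equals $\binom{m}{m\underline{t}_U(u_1),\dots,m\underline{t}_U(u_{|\mathcal{U}|})}$, and I would invoke the classical inequality $\binom{m}{m\underline{t}_U(u_1),\dots} \leq \exp_2(mH(\underline{t}_U))$. This can be justified quickly by noting that the same multinomial distribution gives total mass at most one, and the distribution with parameters $\underline{t}_U$ assigns mass $\exp_2(-mH(\underline{t}_U))$ to each type-$\underline{t}_U$ sequence, so the number of such sequences is at most $\exp_2(mH(\underline{t}_U))$.

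Combining the two bounds gives
\begin{align*}
P(\underline{T}_U = \underline{t}_U) \leq \exp_2(mH(\underline{t}_U))\cdot \exp_2\!\bigl(-m H(\underline{t}_U) - m D(\underline{t}_U\|P_U)\bigr) = \exp_2\!\bigl(-m D(\underline{t}_U\|P_U)\bigr),
\end{align*}
as claimed. There is no real obstacle here: the result is a textbook consequence of the method of types, and both the entropy identity for per-sequence probability and the multinomial bound are standard. The only care needed is to note that if $\underline{t}_U(u)>0$ for some $u$ with $P_U(u)=0$, then $D(\underline{t}_U\|P_U)=\infty$ and the left-hand side is $0$, so the inequality holds trivially and the non-trivial case reduces to types absolutely continuous with respect to $P_U$.
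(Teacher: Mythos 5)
Your proof is correct and is precisely the standard method-of-types argument from Csiszár--Körner, which is what the paper itself cites for this proposition without reproducing a proof. Nothing further is needed.
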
}
\begin{Proposition}
\label{Prop:Ap:2}
Let the triple $(U^m,V^m,W^m), m\in \mathbb{N}$ be a triple of random vectors defined on $\mathcal{U}^m\times \mathcal{V}^m\times \mathcal{W}^m$ such that i) $U^m$ has type $\underline{t}_U$, ii) $V^m$ is a vector of independent random variables where each $V_i, i\in [m]$ is  generated based on $P_{V|U}(\cdot|u_i)$, and iii) $W^m$ is a sequence of i.i.d random variables generated according to $P_W$. Let $\underline{T}_{V,W}$ be the joint type of $(V^m,W^m)$, and $\underline{t}_{V,W}$ be an arbitrary type for m-length sequences defined on $\mathcal{V}\times \mathcal{W}$. Then,
\begin{align*}
P(\underline{T}_{V,W}=\underline{t}_{V,W})
\leq \textcolor{black}{\exp_2\Big(-m(D(\underline{t}_{V,W}|| P_WP'_V)- |\mathcal{U}||\mathcal{V}||\mathcal{W}|\frac{\log{(m+1)}}{m})\Big)},
\end{align*}
where $P'_V(\cdot)\triangleq \sum_{u\in \mathcal{U}}\underline{t}_U(u)P_{V|U}(\cdot|u)$.
\end{Proposition}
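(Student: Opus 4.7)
My plan is to prove Proposition \ref{Prop:Ap:2} by applying method-of-types arguments conditionally on $U^m$, and then using joint convexity of the Kullback--Leibler divergence to pass from a conditional-type bound to the desired marginal-type bound. The key observation is that although $V^m$ is not i.i.d.\ marginally, once we condition on $U^m=u^m$ the pairs $(V_i,W_i)$ become independent across $i$ with joint distribution $P_{V|U}(\cdot|u_i)P_W(\cdot)$, so the natural object to analyze is the full joint type $\underline{T}_{U,V,W}$.

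First I would fix an arbitrary sequence $u^m$ of type $\underline{t}_U$ and decompose the event of interest as
\[
\{\underline{T}_{V,W}=\underline{t}_{V,W}\}=\bigcup_{\underline{t}_{U,V,W}\in\mathcal{M}}\{\underline{T}_{U,V,W}=\underline{t}_{U,V,W}\},
\]
where $\mathcal{M}$ is the set of joint types on $\mathcal{U}\times\mathcal{V}\times\mathcal{W}$ whose $U$-marginal equals $\underline{t}_U$ and whose $(V,W)$-marginal equals $\underline{t}_{V,W}$. Standard type-counting gives $|\mathcal{M}|\leq (m+1)^{|\mathcal{U}||\mathcal{V}||\mathcal{W}|}$.

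For each $\underline{t}_{U,V,W}\in\mathcal{M}$, I would evaluate the probability of a specific triple $(u^m,v^m,w^m)$ with this joint type by direct expansion of $\prod_i P_{V|U}(v_i|u_i)P_W(w_i)$. Writing $H$ for the conditional entropy of $(V,W)$ given $U$ and $D^{\mathrm{cond}}=\sum_u\underline{t}_U(u)D(\underline{t}(\cdot,\cdot|u)\|P_{V|U}(\cdot|u)P_W(\cdot))$ computed under $\underline{t}_{U,V,W}$, this single-sequence probability equals $\exp_2(-mH-mD^{\mathrm{cond}})$, while the conditional type class itself has size at most $\exp_2(mH)$ by the usual type-class bound. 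Multiplying cancels the entropy term and yields $P(\underline{T}_{U,V,W}=\underline{t}_{U,V,W}\mid U^m=u^m)\leq \exp_2(-mD^{\mathrm{cond}})$.

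The main step that requires care is the passage to the marginal bound, for which I would invoke joint convexity of $D(\cdot\|\cdot)$, together with the identities $\sum_u \underline{t}_U(u)\underline{t}(v,w|u)=\underline{t}_{V,W}(v,w)$ and $\sum_u \underline{t}_U(u)P_{V|U}(v|u)P_W(w)=P'_V(v)P_W(w)$, to obtain
\[
D^{\mathrm{cond}}\;\geq\; D(\underline{t}_{V,W}\,\|\,P'_V P_W).
\]
Union bounding over $\underline{t}_{U,V,W}\in\mathcal{M}$, folding the $(m+1)^{|\mathcal{U}||\mathcal{V}||\mathcal{W}|}$ factor into the exponent as the claimed $|\mathcal{U}||\mathcal{V}||\mathcal{W}|\tfrac{\log(m+1)}{m}$ correction, and observing that the resulting bound does not depend on the specific $u^m$ of type $\underline{t}_U$, the conditioning on $U^m$ can be removed to give the proposition.
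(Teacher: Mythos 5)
Your proposal is correct and follows essentially the same route as the paper: both decompose the event $\{\underline{T}_{V,W}=\underline{t}_{V,W}\}$ by conditioning on the full joint type with $U^m$ (you parametrize by $\underline{t}_{U,V,W}\in\mathcal{M}$, the paper by the equivalent family of conditional types $(\underline{t}^u_{V,W})_{u\in\mathcal{U}}$), bound each piece by a conditional method-of-types estimate yielding $\exp_2(-m\sum_u\underline{t}_U(u)D(\underline{t}(\cdot,\cdot|u)\|P_{V|U}(\cdot|u)P_W))$, apply joint convexity of the KL divergence to mix down to $D(\underline{t}_{V,W}\|P'_VP_W)$, and absorb the $(m+1)^{|\mathcal{U}||\mathcal{V}||\mathcal{W}|}$ type-counting factor into the exponent. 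The only cosmetic difference is that you obtain the per-type bound by the single-sequence-probability times type-class-size sandwich, while the paper invokes its Proposition \ref{Prop:Ap:1} separately for each $u$ and multiplies using independence; these are the same calculation.
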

{Proof of Proposition \ref{Prop:Ap:2}: Let $\underline{T}_{U,V,W}$ be the joint type of $(U^m,V^m,W^m)$, and define  $\underline{T}^u_{V,W}(v,w)\triangleq \frac{1}{\underline{t}_U(u)}\underline{T}_{U,V,W}(u,v,w), u,v,w\in \mathcal{U}
\times \mathcal{V}\times \mathcal{W}$. Note that $\underline{T}^u_{V,W}, u\in \mathcal{U}$ is a collection of types for sequences defined on $\mathcal{V}^m\times \mathcal{W}^m$. Furthermore, note that $\underline{T}_{V,W}= \sum_{u\in \mathcal{U}} \underline{t}_U(u) \underline{T}^u_{V,W}$. 
Then, 
\begin{align}
    \nonumber P(\underline{T}_{V,W}=\underline{t}_{V,W})
    &=\sum_{(\underline{t}^u_{V,W})_{u\in \mathcal{U}}: \underline{t}_{V,W}= \sum_{u\in \mathcal{U}} \underline{t}_U(u) \underline{t}^u_{V,W} }
    P(\underline{T}^u_{V,W}=\underline{t}^u_{V,W}, u\in \mathcal{U})
    \\&\label{Eq:Ap:3}
    =
    \sum_{(\underline{t}^u_{V,W})_{u\in \mathcal{U}}: \underline{t}_{V,W}= \sum_{u\in \mathcal{U}} \underline{t}_U(u) \underline{t}^u_{V,W} }
    \prod_{u\in \mathcal{U}}
    P(\underline{T}^u_{V,W}=\underline{t}^u_{V,W}),
\end{align}}
{where in the last equality, we have used the fact that $(U^m,V^m,W^m)$ is a sequence of independent triples of variables, i.e. $(U_i,V_i,W_i)$ is independent of the vector $(U_j,V_j,W_j)_{j\in [m]: j \neq i}$ for any $i\in [m]$. Note that for a given $u\in \mathcal{U}$, for the subset of indices for which $U_i=u$, the variables $V_i$ are i.i.d with distribution $P_{V|U}(\cdot|u)$. Hence,
from Proposition \ref{Prop:Ap:1} we have:
\begin{align*}
  P(\underline{T}^u_{V,W}=\underline{t}^u_{V,W})\leq \textcolor{black}{\exp_2\Big({-m\underline{t}_U(u)D(\underline{t}^u_{V,W}|| P_{V|U}(\cdot|u)P_W)}\Big)  }
\end{align*}
So, from Equation \eqref{Eq:Ap:3}, we have:}
{
\begin{align*}
    P(\underline{T}_{V,W}=\underline{t}_{V,W})&\leq
    \sum_{(\underline{t}^u_{V,W})_{u\in \mathcal{U}}: \underline{t}_{V,W}= \sum_{u\in \mathcal{U}} \underline{t}_U(u) \underline{t}^u_{V,W} }
    \prod_{u\in \mathcal{U}}
    \textcolor{black}{\exp_2\Big(-m\underline{t}_U(u)D(\underline{t}^u_{V,W}|| P_{V|U}(\cdot|u)P_W)\Big)}
    \\&
    =
     \sum_{(\underline{t}^u_{V,W})_{u\in \mathcal{U}}: \underline{t}_{V,W}= \sum_{u\in \mathcal{U}} \underline{t}_U(u) \underline{t}^u_{V,W} }
  \textcolor{black}{\exp_2\Big(-m \sum_{u\in \mathcal{U}}\underline{t}_U(u)D(\underline{t}^u_{V,W}|| P_{V|U}(\cdot|u)P_W)\Big)}
  \\&  \leq 
      \sum_{(\underline{t}^u_{V,W})_{u\in \mathcal{U}}: \underline{t}_{V,W}= \sum_{u\in \mathcal{U}} \underline{t}_U(u) \underline{t}^u_{V,W} }
    \textcolor{black}{\exp_2\left(-m D\left(\sum_{u\in \mathcal{U}}\underline{t}_U(u)\underline{t}^u_{V,W}|| \sum_{u\in \mathcal{U}}\underline{t}_U(u)P_{V|U}(\cdot|u)P_W\right)\right)},
\end{align*}
where in the last inequality we have used the convexity of KL divergence. Consequently, 
\begin{align*}
     P(\underline{T}_{V,W}=\underline{t}_{V,W})
     &\leq
          \sum_{(\underline{t}^u_{V,W})_{u\in \mathcal{U}}: \underline{t}_{V,W}= \sum_{u\in \mathcal{U}} \underline{t}_U(u) \underline{t}^u_{V,W} }
    \textcolor{black}{\exp_2\Big(-m D(\underline{t}_{V,W}|| P'_VP_W)\Big)}
    \\&
    =\big|\{(\underline{t}^u_{V,W})_{u\in \mathcal{U}}: \underline{t}_{V,W}= \sum_{u\in \mathcal{U}} \underline{t}_U(u) \underline{t}^u_{V,W} \}\big|\textcolor{black}{\exp_2\Big(-m D(\underline{t}_{V,W}|| P'_VP_W)\Big)}
    \\&\leq \textcolor{black}{\exp_2\left(-m\left(D\left(\underline{t}_{V,W}|| P'_VP_W\right)- |\mathcal{U}||\mathcal{V}||\mathcal{W}|\frac{\log{(m+1)}}{m}\right)\right)},
\end{align*}
where in the last inequality, we have used the fact (e.g. \cite{csiszarbook}) that the number of types on m-length sequences defined on the alphabet $\mathcal{U}\times \mathcal{V}\times \mathcal{W}$ is upper-bounded by $(m+1)^{|\mathcal{U}||\mathcal{V}||\mathcal{W}|}$. This completes the proof of the proposition. \qedsymbol}

{From Propositions \ref{Prop:Ap:1} and \ref{Prop:Ap:2} we conclude the following:
\begin{align}
  \label{eq:Ap:4}  &P(\underline{t}_1|\underline{t}_{X,2},\underline{t}_{X,3},\underline{t}_{X,4})\leq
     \textcolor{black}{\exp_2\left(-n\alpha_1\left( D\left(\underline{t}_1|| P_XP_{Y_1}\right)- |\mathcal{X}|^2|\mathcal{Y}|\frac{\log{(n+1)}}{\alpha_1n} \right)\right)}
       \\&P(\underline{t}_2|\underline{t}_{X,5})
       \leq
       \textcolor{black}{\exp_2\left(-n\alpha_2\left( D\left(\underline{t}_2|| P_XP_{Y_2}\right)- |\mathcal{X}|^2|\mathcal{Y}|\frac{\log{(n+1)}}{\alpha_2n} \right)\right)}
       \label{eq:Ap:5}
       \\&
       P(\underline{t}_3,\underline{t}_4,\underline{t}_5|\underline{t}_{X,1})
       \leq \textcolor{black}{\exp_2\left(-n\left(\alpha_3+\alpha_4+\alpha_5\right)\left( D\left(\underline{t}'|| P_XP_{Y'}\right)- |\mathcal{X}|^2|\mathcal{Y}|\frac{\log{(n+1)}}{\left(\alpha_3+\alpha_4+\alpha_5\right)n} \right)\right)},
              \label{eq:Ap:6}
              \\&
P(\underline{t}_{X,j})\leq \textcolor{black}{\exp_2\Big(-n\alpha_j D(\underline{t}_{X,j}||P_X)\Big)}, j\in [5]
         \label{eq:Ap:7}
\\&
P(\underline{t}_6)\leq 
\textcolor{black}{\exp_2\Big(-n\alpha_6 D(\underline{t}_6||P_{X,Y})\Big)},
         \label{eq:Ap:8}
\end{align}}
\noindent {where we have defined  $P_{Y_1}(\cdot)\triangleq \sum_{j=2}^4 \alpha_j\sum_{x\in \mathcal{X}} \underline{t}_{X,j}(x) P_{Y|X}(\cdot|x)
$, $P_{Y_2}(\cdot)\triangleq \sum_{x\in \mathcal{X}} \underline{t}_{X,5}(x) P_{Y|X}(\cdot|x)$, $\underline{t}'\triangleq \alpha_3\underline{t}_3+\alpha_4\underline{t}_4+\alpha_5\underline{t}_5$, and 
$P_{Y'}(\cdot)\triangleq \sum_{x\in \mathcal{X}} \underline{t}_{X,1}(x) P_{Y|X}(\cdot|x)$. In order to explain the above derivations, let us first derive Equation \eqref{eq:Ap:4} using Proposition \ref{Prop:Ap:2}. Take $m= n\alpha_1$,  $U^m= (X_i, i\in \mathcal{A}_2\bigcup \mathcal{A}_3\bigcup \mathcal{A}_4)$,  $V^m=(Z_i, i\in \mathcal{A}_1)$, and $W^m=(X_i, i\in \mathcal{A}_1)$. Note that $|\mathcal{A}_2\bigcup \mathcal{A}_3\bigcup \mathcal{A}_4|= |\mathcal{A}_1|=n\alpha_1=m$ by construction of $\mathcal{A}_j, j\in \{1,2,3,4\}$, so the above vectors are well-defined. Also, note that the conditions in Proposition \ref{Prop:Ap:2} are satisfied as explained in the following: i) $U^m$ has type $\sum_{j=2}^4\alpha_j \underline{t}_{X,j}$, ii) for any index $i\in \mathcal{A}_1$,  if $i$ is not at the beginning of its cycle then $Z_i= \pi(Y_i)=Y_{i-1}$ so $Z_i$ is generated according to $P_{Y|X}(\cdot|x_{i-1})$ where $i-1\in \mathcal{A}_3$ by construction, and if $i$ is at the beginning of its cycle then $Z_i= Y_{i'}$, so $Z_i$ is generated according to $P_{Y|X}(\cdot|x_{i'})$
where $i'\in \mathcal{A}_2\cup \mathcal{A}_4$ is the index of the element at the end of the cycle. So, each element of $V^m$ is generated conditioned on the corresponding element in $U^m$ independent of elements of $W^m$, and iii) $W^m$ is an i.i.d sequence of variables generated according to $P_X$. Equations \eqref{eq:Ap:5}-\eqref{eq:Ap:8} are derived by similar arguments. From Equations \eqref{eq:Ap:3.5}, and \eqref{eq:Ap:4}-\eqref{eq:Ap:8} we conclude:}
{
\begin{align*}
    &\nonumber P(\underline{t}_1,\underline{t}_2,\underline{t}_3,
    \underline{t}_4, \underline{t}_5)\leq
       \sqrt{ \textcolor{black}{\exp_2\Big(-\sum_{j=1}^5n\alpha_jD(\underline{t}_{X,j}||P_X)}\Big)}
       \times
       \sqrt{\textcolor{black}{\exp_2\Big(-n\alpha_1( D(\underline{t}_1|| P_XP_{Y_1})- |\mathcal{X}|^2|\mathcal{Y}|\frac{\log{(n+1)}}{\alpha_1n} )}\Big)}
       \\&
    \times 
       \sqrt{ \textcolor{black}{\exp_2\Big(-n\alpha_2( D(\underline{t}_2|| P_XP_{Y_2})- |\mathcal{X}|^2|\mathcal{Y}|\frac{\log{(n+1)}}{\alpha_2n} )}\Big)}
       \\&\times
       \sqrt{\textcolor{black}{\exp_2\Big(-n(\alpha_3+\alpha_4+\alpha_5)( D(\underline{t}'|| P_XP_{Y'})- |\mathcal{X}|^2|\mathcal{Y}|\frac{\log{(n+1)}}{(\alpha_3+\alpha_4+\alpha_5)n} )\Big)}}
       \\&
       = 
       \textcolor{black}{\exp_2\Big(\frac{-n}{2}(\sum_{j=1}^5 \alpha_jD(\underline{t}_{X,j}||P_X)
       +\alpha_1 D(\underline{t}_1|| P_XP_{Y_1})
       +
       \alpha_2 D(\underline{t}_2|| P_XP_{Y_2})+}
       \\&
       \textcolor{black}{(\alpha_3+\alpha_4+\alpha_5) D(\underline{t}'|| P_XP_{Y'})- 3|\mathcal{X}|^2|\mathcal{Y}|\frac{\log{(n+1)}}{n} )\Big)},
\end{align*}
From Equation \eqref{eq:Ap:0.5}, we have:
\begin{align*}
    &P(\underline{t}_1,\underline{t}_2,\underline{t}_3,\underline{t}_4,\underline{t}_5,\underline{t}_6)\leq
    \\&
   \textcolor{black}{\exp_2\Big(\frac{-n}{2}(\sum_{j=1}^5 \alpha_jD(\underline{t}_{X,j}||P_X)
       +\alpha_1 D(\underline{t}_1|| P_XP_{Y_1})
       +
       \alpha_2 D(\underline{t}_2|| P_XP_{Y_2})
       +}
       \\&\textcolor{black}{(\alpha_3+\alpha_4+\alpha_5) D(\underline{t}'|| P_XP_{Y'})- 3|\mathcal{X}|^2|\mathcal{Y}|\frac{\log{(n+1)}}{n} )\Big)}
       \textcolor{black}{\exp_2\Big(-n\alpha_6D(\underline{t}_6||P_{X,Y})\Big)}
      \\&
      \stackrel{(a)}{\leq}
          \textcolor{black}{\exp_2\Big(\frac{-n}{2}(\sum_{j=1}^5 \alpha_jD(\underline{t}_{X,j}||P_X)
       +\alpha_1 D(\underline{t}_1|| P_XP_{Y_1})
       +
       \alpha_2 D(\underline{t}_2|| P_XP_{Y_2})+}
       \\&\textcolor{black}{
       (\alpha_3+\alpha_4+\alpha_5) D(\underline{t}'|| P_XP_{Y'})- 3|\mathcal{X}|^2|\mathcal{Y}|\frac{\log{(n+1)}}{n} )
       \Big)}
       \textcolor{black}{\exp_2\Big(-\frac{n}{2}\alpha_6(D(\underline{t}_6||P_{X,Y})+ D(\underline{t}_{X,6}||P_{X}))\Big)}
    \\&\stackrel{(b)}{\leq}
    \textcolor{black}{\exp_2\Big(\frac{-{n}}{2}(\overline{\alpha}D(\underline{t}'_X||P_X)+\alpha_6D(\underline{t}_{6,X}||P_X)+ D(\sum_{j=1}^6{\alpha}_j \underline{t}_j|| \overline{\alpha} P_XP_{Y''}+\alpha_6 P_{X,Y})- 3|\mathcal{X}|^2|\mathcal{Y}|\frac{\log{(n+1)}}{n} )\Big)},
    \end{align*}
where we have defined $\underline{t}'_X\triangleq \sum_{j=1}^5 \overline{\alpha}_j\underline{t}_{X,j}$, $\overline{\alpha}_j= \frac{\alpha_j}{\overline{\alpha}}, j\in [5]$,  $\overline{\alpha}\triangleq \alpha_1+\alpha_2+\alpha_3+\alpha_4+\alpha_5= 1-\alpha_6$,
$P_{Y''}\triangleq \overline{\alpha}_1P_{Y_1}+\overline{\alpha}_2P_{Y_2}+(\overline{\alpha}_3+\overline{\alpha}_4+\overline{\alpha}_5)P_{Y'}$. In (a) we have used the fact that $D(\underline{t}_6||P_{X,Y})\geq \frac{1}{2}(D(\underline{t}_{X,6}||P_X)+D(\underline{t}_6||P_{X,Y}))$ which holds by the chain rule of KL divergence. In (b) we have used the convexity of the KL divergence. We use the following proposition to further simplify the result. }
{
\begin{Proposition}
\label{Prop:div}
Let $\epsilon>0$, and let $P,P_{\epsilon}$ and $Q$ be distributions defined on the same \textcolor{black}{finite} alphabet $\mathcal{U}$ such that i) $max_{u\in \mathcal{U}} |P(u)-P_{\epsilon}(u)|\leq \epsilon$, ii)  $\frac{1}{2}min_{u\in \mathcal{U}:P(u)\neq 0} P(u)> \epsilon$, and iii) $P_{\epsilon}(u)=0$ if $P(u)=0$.
Then 
\begin{align*}
    |D(P_{\epsilon}||Q)-D(P||Q)|\leq \epsilon|\mathcal{U}| \max_{u\in \mathcal{U}:P(u)\neq 0}|\log{\frac{P(u)}{Q(u)}}|+O(\epsilon).
\end{align*}
\label{Prop:Ap:3}
\end{Proposition}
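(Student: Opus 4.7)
The plan is to decompose the divergence difference into a dominant linear term and a higher-order correction, then bound each piece separately. First I would write the algebraic identity
\begin{align*}
D(P_\epsilon\|Q)-D(P\|Q)
&= \sum_{u\in\mathcal{U}} \bigl[P_\epsilon(u)-P(u)\bigr]\log\frac{P(u)}{Q(u)}
+ \sum_{u\in\mathcal{U}} P_\epsilon(u) \log\frac{P_\epsilon(u)}{P(u)},
\end{align*}
obtained by adding and subtracting $\sum_u P_\epsilon(u)\log\frac{P(u)}{Q(u)}$. Condition (iii) together with the convention $0\log 0 = 0$ ensures every summand is well-defined, and indices with $P(u)=0$ contribute nothing to either sum (since $P_\epsilon(u)=0$ there as well).

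For the first sum, I would invoke condition (i) directly: since $|P_\epsilon(u)-P(u)|\le \epsilon$ for every $u$, the triangle inequality gives an absolute-value bound of $\epsilon|\mathcal{U}|\max_{u: P(u)\neq 0}\bigl|\log\frac{P(u)}{Q(u)}\bigr|$, which is exactly the leading term in the proposition's conclusion.

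The second sum, which equals $D(P_\epsilon\|P)$, is where the $O(\epsilon)$ residual will come from. Setting $\delta(u)\triangleq P_\epsilon(u)-P(u)$, condition (ii) guarantees $|\delta(u)/P(u)|<1/2$ whenever $P(u)\neq 0$, so the Taylor expansion $\log\bigl(1+\delta(u)/P(u)\bigr) = \delta(u)/P(u) + O\bigl(\delta(u)^2/P(u)^2\bigr)$ holds uniformly in $u$. Multiplying by $P_\epsilon(u)=P(u)+\delta(u)$ yields $P_\epsilon(u)\log\frac{P_\epsilon(u)}{P(u)} = \delta(u) + O\bigl(\epsilon^2/P(u)\bigr)$. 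Summing over $u$, the leading term vanishes because $\sum_u \delta(u) = \sum_u P_\epsilon(u) - \sum_u P(u) = 0$, and the remainder is $O\bigl(\epsilon^2/\min_{u:P(u)\neq 0}P(u)\bigr) = O(\epsilon)$ by condition (ii). Combining the two bounds gives the stated inequality.

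The only delicate point is making sure the Taylor remainder is uniform across the alphabet; this is precisely the role of condition (ii), which keeps $\delta(u)/P(u)$ bounded away from $-1$ so the logarithm remains analytic with a controlled quadratic remainder. Everything else is a routine application of the triangle inequality and the identity $\sum_u \delta(u) = 0$.
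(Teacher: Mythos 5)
Your proof is correct and follows the same high-level strategy as the paper's — decompose the divergence difference into a ``linear'' cross term plus a KL-type remainder, bound the linear term with the triangle inequality, and Taylor-expand the remainder — but the decompositions are dual to one another in a way that makes yours slightly cleaner. You add and subtract $\sum_u P_\epsilon(u)\log\frac{P(u)}{Q(u)}$, yielding
\begin{align*}
D(P_\epsilon\|Q)-D(P\|Q)=\sum_u\bigl(P_\epsilon(u)-P(u)\bigr)\log\frac{P(u)}{Q(u)}+D(P_\epsilon\|P),
\end{align*}
so the linear term already carries $\log\frac{P(u)}{Q(u)}$ and gives exactly the target coefficient $\epsilon|\mathcal{U}|\max_{u:P(u)\neq 0}|\log\frac{P(u)}{Q(u)}|$ in one stroke. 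The paper instead arrives at $-D(P\|P_\epsilon)+\sum_u(P_\epsilon(u)-P(u))\log\frac{P_\epsilon(u)}{Q(u)}$, so the linear term has $P_\epsilon(u)$ in the logarithm and an extra step is needed to replace $\log\frac{P_\epsilon(u)}{Q(u)}$ by $\log\frac{P(u)}{Q(u)}$ (using condition (ii) so the replacement error contributes only $O(\epsilon)$). Your control of the remainder $D(P_\epsilon\|P)=O(\epsilon)$ via $\sum_u\delta(u)=0$ and the Taylor expansion $\log(1+\delta/P)=\delta/P+O(\delta^2/P^2)$ with $P(u)>2\epsilon$ is sound; the paper reaches the same $O(\epsilon)$ bound for $D(P\|P_\epsilon)$ by a slightly coarser argument that just absorbs each Taylor term into $O(\epsilon)$. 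Both approaches are valid; yours avoids one approximation step.
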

Proof of Proposition \ref{Prop:Ap:3}: By assumption if $P(u)=0$, then $P_{\epsilon}(u)=0$, so that $u$ does not contribute to the value of $D(P_{\epsilon}||Q)$ and $D(P||Q)$. So, we assume $P(u)\neq 0, u\in \mathcal{U}$.
We have
\begin{align*}
  &|D(P_{\epsilon}||Q)-D(P||Q)|=
  \\& 
  \big|-\sum_{u\in \mathcal{U}}P_{\epsilon}(u)\log{\frac{1}{P_{\epsilon}(u)}}+\sum_{u\in \mathcal{U}}P_{\epsilon}(u)\log \frac{1}{Q(U)}+ \sum_{u\in \mathcal{U}}P(u)\log\frac{1}{P(u)} -\sum_{u\in \mathcal{U}}P(u)\log \frac{1}{Q(u)}\big|
   \\&=
   \big|-\sum_{u\in \mathcal{U}} P(u)\log{\frac{P(u)}{P_{\epsilon}(u)}}-
   \sum_{u\in \mathcal{U}} (P_{\epsilon}(u)-P(u))\log{\frac{1}{P_{\epsilon}(u)}}+\sum_{u\in \mathcal{U}} (P_{\epsilon}(u)-P(u))\log{\frac{1}{Q(u)}}\big|
   \\&
   = 
   \big|-\sum_{u\in \mathcal{U}} P(u)\log{\frac{P(u)}{P_{\epsilon}(u)}}+\sum_{u\in \mathcal{U}} (P_{\epsilon}(u)-P(u))\log{\frac{P_{\epsilon}(u)}{Q(u)}}\big|
   \\&
   =    \big|-\sum_{u\in \mathcal{U}} P(u)\log{\frac{1}{1+\frac{P_{\epsilon}(u)-P(u)}{P(u)}}}+\sum_{u\in \mathcal{U}} (P_{\epsilon}(u)-P(u))\log{\frac{P_{\epsilon}(u)}{Q(u)}}\big|
   \\&
=
    \big|-\sum_{u\in \mathcal{U}} P(u)O\left(\frac{P_{\epsilon}(u)-P(u)}{P(u)}\right)+\sum_{u\in \mathcal{U}} (P_{\epsilon}(u)-P(u))\log{\frac{P_{\epsilon}(u)}{Q(u)}}\big|
  \\& =
    \big|-\sum_{u\in \mathcal{U}} O({P_{\epsilon}(u)-P(u)})+\sum_{u\in \mathcal{U}} (P_{\epsilon}(u)-P(u))\log{\frac{P_{\epsilon}(u)}{Q(u)}}\big|
    \\&
    =
    \big|O(\epsilon)+\sum_{u\in \mathcal{U}} (P_{\epsilon}(u)-P(u))\log{\frac{P_{\epsilon}(u)}{Q(u)}}\big|
    \\&=
      O(\epsilon)+\big|\sum_{u\in \mathcal{U}} (P_{\epsilon}(u)-P(u))\log{\frac{P_{\epsilon}(u)-P(u)+P(u)}{Q(u)}}\big|
      \\&
      \leq      O(\epsilon)+\epsilon|\mathcal{U}| \max_{u\in \mathcal{U}}\big\{\big|\log{\frac{P(u)+\epsilon}{Q(u)}},\log{\frac{P(u)-\epsilon}{Q(u)}}\big|\big\}
      \\&
      \leq 
      \epsilon|\mathcal{U}| \max_{u\in \mathcal{U}}\big|\log{\frac{P(u)}{Q(u)}}\big|+O(\epsilon),
\end{align*}
where the last inequality follows from the assumption that $\frac{1}{2}{P(u)}\leq P(u)\pm\epsilon\leq  \frac{3}{2}P(u), u\in \mathcal{U}$ so that $\epsilon|\mathcal{U}| \max_{u\in \mathcal{U}}\{\big|\log{\frac{P(u)\pm\epsilon}{Q(u)}}\big|\}=\epsilon|\mathcal{U}| \max_{u\in \mathcal{U}}\{\big| \log{\frac{P(u)}{Q(u)}}\big|\}+O(\epsilon)$.
This completes the proof of the proposition.\qedsymbol}

{
Note that
$\sum_{j=1}^6{\alpha}_j \underline{t}_j(x,y)\stackrel{\cdot}{=} P_{X,Y}(x,y)\pm \epsilon, x,y\in \mathcal{X}\times \mathcal{Y}$, and if $P_{X,Y}(x,y)=0$, then $\sum_{j=1}^6{\alpha}_j \underline{t}_j(x,y)=0$ by Definition \ref{Def:typ}. In Proposition \ref{Prop:div}, take $\mathcal{U}=\mathcal{X}\times \mathcal{Y}$, $P=P_{X,Y}$, $P_{\epsilon}= \sum_{j=1}^6{\alpha}_j \underline{t}_j(x,y)$ and $Q= \alpha P_{X,Y}+(1-\alpha)P_XP_Y$. Then,
\begin{align*}
  &|D( \sum_{j=1}^6{\alpha}_j \underline{t}_j(x,y)|| \alpha P_{X,Y}+(1-\alpha)P_XP_Y)- D(P_{X,Y}|| \alpha P_{X,Y}+(1-\alpha)P_XP_Y) | \leq 
  \\&
  \epsilon|\mathcal{X}||\mathcal{Y}|
  \big|\max_{x,y \in \mathcal{X}\times \mathcal{Y}:P_{X,Y}(x,y)\neq 0} \log{\frac{P_{X,Y}(x,y)}{
   \alpha P_{X,Y}(x,y)+(1-\alpha)P_X(x)P_Y(y)
  }}\big|+O(\epsilon)
\end{align*}
}

{
Let $\delta_{\epsilon}\triangleq  \epsilon|\mathcal{X}||\mathcal{Y}|
  \big|\max_{x,y \in \mathcal{X}\times \mathcal{Y}:P_{X,Y}(x,y)\neq 0} \log{\frac{P_{X,Y}(x,y)}{
   \alpha P_{X,Y}(x,y)+(1-\alpha)P_X(x)P_Y(y)
  }}\big|+O(\epsilon)$. We have:
\begin{align*}
    &P(\underline{t}_1,\underline{t}_2,\underline{t}_3,\underline{t}_4,\underline{t}_5,\underline{t}_6)\leq
    \textcolor{black}{\exp_2\Big(\frac{-{n}}{2}((1-\alpha)D(\underline{t}'_X||P_X)+\alpha D(\underline{t}_{6,X}||P_X)+}
    \\&
    \textcolor{black}{D(P_{X,Y}|| (1-\alpha) P_XP_{Y''}+\alpha P_{X,Y})- 3|\mathcal{X}|^2|\mathcal{Y}|\frac{\log{(n+1)}}{n} -\delta_{\epsilon})\Big)},
    \end{align*}
    where we have used the fact that by theorem statement, $\alpha_6=\alpha$ and by definition $\overline{\alpha}=1-\alpha$.
From Equation \eqref{eq:Ap:0.25}, we have:
\begin{align*}
 &P((X^n,Z^n)\in \mathcal{A}_{\epsilon}^n(X,Y))
 \leq \sum_{(\underline{t}_1,\underline{t}_2,\underline{t}_3,\underline{t}_4, \underline{t}_5,\underline{t}_6)\in \mathcal{T}}  \textcolor{black}{\exp_2\Big(\frac{-{n}}{2}((1-\alpha)D(\underline{t}'_X||P_X)+\alpha D(\underline{t}_{6,X}||P_X)+}
 \\& \textcolor{black}{D(P_{X,Y}|| (1-\alpha) P_XP_{Y''}+\alpha P_{X,Y})- 3|\mathcal{X}|^2|\mathcal{Y}|\frac{\log{(n+1)}}{n} -\delta_{\epsilon})\Big)}
 \\&
 =|(\underline{t}_1,\underline{t}_2,\underline{t}_3,\underline{t}_4, \underline{t}_5,\underline{t}_6)\in \mathcal{T}|\textcolor{black}{\exp_2\Big(\frac{-{n}}{2}\min_{\underline{t}'_X\in
\mathcal{P}}((1-\alpha)D(\underline{t}'_X||P_X)+\alpha D(\underline{t}''_{X}||P_X)+}
\\&
\textcolor{black}{D(P_{X,Y}|| (1-\alpha) P_XP_{Y''}+\alpha P_{X,Y})- 3|\mathcal{X}|^2|\mathcal{Y}|\frac{\log{(n+1)}}{n} -2\delta_{\epsilon})\Big)}
 \\&
 \leq 
\textcolor{black}{\exp_2\Big(\frac{-{n}}{2}\min_{\underline{t}'_X\in
\mathcal{P}}((1-\alpha)D(\underline{t}'_X||P_X)+\alpha D(\underline{t}''_{X}||P_X)+} \\&
\textcolor{black}{D(P_{X,Y}|| (1-\alpha) P_XP_{Y''}+\alpha P_{X,Y})- 3|\mathcal{X}|^2|\mathcal{Y}|\frac{\log{(n+1)}}{n}- 12|\mathcal{X}||\mathcal{Y}|\frac{\log{(n+1)}}{n} -2\delta_{\epsilon})\Big)},
\end{align*}
where $\underline{t}_{X,6}(x)= \frac{1}{\alpha}(P_X(x)-(1-\alpha)\underline{t}'_X(x)), x\in \mathcal{X}$ by construction, and the additional $\delta_{\epsilon}$ term is the penalty for replacing $D(\underline{t}_{6,X}||P_X)$ by $D(\underline{t}''_{X}||P_X)$ as derived in Proposition \ref{Prop:Ap:3} (note that $\delta_{\epsilon}$ is an upper bound to the penalty term). 
This completes the proof.}

\section{Proof of Theorem \ref{th:1}}
\label{app:th1}
{Define the following partition for the set of indices $[1,n]$:}

\begin{align*}
&{\mathcal{A}_1= \{k| \exists \ell\in [c]: \sum_{j=1}^{\ell-1} i_j +1\leq k< \sum_{j=1}^{\ell} i_j, \text{ and }
k-\sum_{j=1}^{\ell-1} i_j \text{ is odd}\}}
\\&
{\mathcal{A}_2= \{k| \exists \ell\in [c]:  k= \sum_{j=1}^{\ell} i_j, \text{ and }
i_{\ell} \text{ is odd}\}}
\\&
{\mathcal{A}_3= \{k| \exists \ell\in [c]: \sum_{j=1}^{\ell-1} i_j +1\leq k\leq  \sum_{j=1}^{\ell} i_j, \text{ and }
k-\sum_{j=1}^{\ell-1} i_j \text{ is even}\}}
\\&{\mathcal{A}_4= \{k| k>\sum_{i=1}^c i_j\}.}
\end{align*}
{The set $\mathcal{A}_1$ is the set of 
odd elements of each cycle which are not endpoints, i.e the first, third, fifth $\dots$ elements of each cycle if their are not at the end of the cycle.
The set $\mathcal{A}_2$ contains the indices which are endpoints of cycles with odd lengths. The set 
 $\mathcal{A}_3$
is the set of 
even elements of each cycle, i.e the second, fourth, sixth, $\dots$ elements of each cycle. The set $\mathcal{A}_4$ is the set of fixed indices. The vectors $(X_i,Z_i), i\in \mathcal{A}_j, j\in [4]$ are four collections of i.i.d sequences, where $(X_i,Z_i), i\in \mathcal{A}_j, j\in [3]$ are generated with $P_XP_Y$ and $(X_i,Z_i),i \in  \mathcal{A}_4$ are generated with $P_XP_Y$.}

{Let $\underline{T}_j, j\in [4]$ be the type of the sequence $(X_i,Z_i), i\in \mathcal{A}_j, j\in [4] $, so that $\underline{T}_{j,x,y}=$ $ \frac{\sum_{i\in \mathcal{A}_j}\mathbbm{1}(X_i=x, Z_i=y)}{|\mathcal{A}_j|},$ $ j,x,y\in [4]\times\mathcal{X}\times  \mathcal{Y}$. We are interested in the probability of the event $(X^n,Z^n)\in \mathcal{A}_{\epsilon}^n(X,Y)$. From Definition \ref{Def:typ} this event can be rewritten as follows:
\begin{align*}
 &P\left(\left(X^n,Z^n\right)\in \mathcal{A}_{\epsilon}^n(X,Y)\right)
  =P\left(\underline{T}(X^n,Y^n)\stackrel{.}{=} P_{X,Y}(\cdot,\cdot)\pm \epsilon\right)
 \\&= P(\alpha_1\underline{T}_1+\alpha_2\underline{T}_2+\alpha_3\underline{T}_3+\alpha_4\underline{T}_4\stackrel{.}{=}P_{X,Y}(\cdot,\cdot)\pm \epsilon),
\end{align*}
where $\alpha_j= \frac{|\mathcal{A}_j|}{n}, j\in [4]$. We have: 
\begin{align*}
 P((X^n,Z^n)\in \mathcal{A}_{\epsilon}^n(X,Y))
 &=\sum_{(\underline{t}_1,\underline{t}_2,\underline{t}_3,\underline{t}_4)\in \mathcal{T}} P(\underline{T}_j=\underline{t}_j, j\in [4])
\\&=
\sum_{(\underline{t}_1,\underline{t}_2,\underline{t}_3,\underline{t}_4)\in \mathcal{T}} P(\underline{T}_j=\underline{t}_j, j\in [3])P(\underline{T}_4=\underline{t}_4)
\\&\leq  \sum_{(\underline{t}_1,\underline{t}_2,\underline{t}_3,\underline{t}_4)}
\sqrt[3]{P(\underline{T}_1=\underline{t}_1)P(\underline{T}_2=\underline{t}_2)P(\underline{T}_3=\underline{t}_3)}P(\underline{T}_4=\underline{t}_4),
\end{align*}
where $\mathcal{T}= \{(\underline{t}_1,\underline{t}_2,\underline{t}_3,\underline{t}_4):\alpha_1\underline{t}_1+\alpha_2\underline{t}_2+\alpha_3\underline{t}_3+\alpha_4\underline{t}_4\stackrel{.}{=}P_{X,Y}(\cdot,\cdot)\pm \epsilon\}$, and to get the last inequality, we have used the relation $P(\underline{T}_j=\underline{t}_j, j\in [3])= \sqrt[3]{P^3(\underline{T}_j=\underline{t}_j, j\in [3])}\leq \sqrt[3]{\prod_{j=1}P(\underline{T}_j=\underline{t}_j)}$. Using Proposition \ref{Prop:Ap:1} in the proof of Theorem \ref{th:1:improved},
we have, 
\begin{align*}
    &P(\underline{T}_j=\underline{t}_j)\leq \textcolor{black}{\exp_2\Big(-n\alpha_j D(\underline{t}_j||P_XP_Y)\Big)}, j\in [3]\\
    &
    P(\underline{T}_4=\underline{t}_4)\leq \textcolor{black}{\exp_2\Big(-n\alpha_4 D(\underline{t}_4||P_{X,Y})\Big)}
\end{align*}
So,
\begin{align}
 &\nonumber{ P((X^n,Z^n)\in \mathcal{A}_{\epsilon}^n(X,Y))}
 \\&{ \leq \!\!\!\!\!\!\!\! \sum_{(\underline{t}_1,\underline{t}_2,\underline{t}_3,\underline{t}_4)\in \mathcal{T}}\!\!\!\!\!\!\!\! \sqrt[3]{\textcolor{black}{\exp_2\Big(-n(\alpha_1D(\underline{t}_1||P_XP_Y)+\alpha_2D(\underline{t}_2||P_XP_Y)+\alpha_3D(\underline{t}_3||P_XP_Y)\Big)}}\textcolor{black}{\exp_2\Big(-n\alpha_4D(\underline{t}_4||P_{X,Y})\Big)} }
 \label{eq:cor:proof}
 \\&\nonumber{ \leq \sum_{(\underline{t}_1,\underline{t}_2,\underline{t}_3,\underline{t}_4)\in \mathcal{T}} \sqrt[3]{\textcolor{black}{\exp_2\Big(-n(1-\alpha_4)D(\frac{\alpha_1\underline{t}_1+\alpha_2\underline{t}+\alpha_3\underline{t}_3}{1-\alpha_4}||P_XP_Y)\Big)}}\textcolor{black}{\exp_2\Big(-n\alpha_4D(\underline{t}_4||P_{X,Y})\Big)}},
 \end{align}
 where we have used the convexity of KL divergence and the fact that $\alpha_1+\alpha_2+\alpha_3=1-\alpha_4$.
 Define $\underline{t}'= \frac{\alpha_1\underline{t}_1+\alpha_2\underline{t}_2+\alpha_3\underline{t}_3}{1-\alpha_4}$, and let $\underline{t}''(x,y)= \frac{1}{\alpha}(P_{X,Y}(x,y)- (1-\alpha)\underline{t}'_{X,Y}(x,y)), x,y \in \mathcal{X}\times \mathcal{Y}$. Then, using Proposition \ref{Prop:Ap:3}, we have:
 \begin{align*}
 &{ P((X^n,Z^n)\in \mathcal{A}_{\epsilon}^n(X,Y))}
 \\& { \leq  \sum_{(\underline{t}_1,\underline{t}_2,\underline{t}_3,\underline{t}_4)\in \mathcal{T}} \sqrt[3]{\textcolor{black}{\exp_2\Big(-n(1-\alpha_4)D(\underline{t}'||P_XP_Y)\Big)}}\textcolor{black}{\exp_2\Big(-n\alpha_4(D(\underline{t}''||P_{X,Y})-\delta_{\epsilon})\Big)} }
\\&
{\leq
 \sqrt[3]{\textcolor{black}{\exp_2\Big(-n(1-\alpha_4)D(\underline{t}'||P_XP_Y)\Big)}}
 \textcolor{black}{\exp_2\Big(-n\alpha_4(D(\underline{t}''||P_{X,Y})-\delta_{\epsilon}-3|\mathcal{X}||\mathcal{Y}|\frac{\log{n+1}}{n})\Big)}
 },
 \end{align*}
 where the last inequality follows from the fact that the number of joint types of $n$-length pairs of sequences on alphabet $\mathcal{X}\times \mathcal{Y}$ is bounded from above by $(n+1)^{|\mathcal{X}||\mathcal{Y}|}$, so that each of $\underline{t}_j, j \in [3]$ can take at most $(n+1)^{|\mathcal{X}||\mathcal{Y}|}$ values. The term $\delta_{\epsilon}$ is an upper-bound on the variation in the value of KL divergence due to replacing $D(\underline{t}_4||P_{X,Y})$, where $(1-\alpha_4)\underline{t}'+\alpha_4\underline{t}_4\stackrel{.}{=} P_{X,Y}\pm \epsilon$ by  the term $D(\underline{t}''||P_{X,Y})$, where $(1-\alpha_4)\underline{t}'+\alpha_4\underline{t}''= P_{X,Y}$. The upper-bound can be derived similar to the same upper-bound in the proof of Theorem \ref{th:1:improved}.
 This completes the proof.}
 
 \section{Proof of Corollary \ref{th:cor:1}}
 \label{app:cor:th1}
{From Equation \eqref{eq:cor:proof} in the proof of Theorem \ref{th:1}, we have:
 \begin{align*}
 &\nonumber{ P((X^n,Z^n)\in \mathcal{A}_{\epsilon}^n(X,Y))}
 \\&{ \leq \!\!\!\!\!\!\!\! \sum_{(\underline{t}_1,\underline{t}_2,\underline{t}_3,\underline{t}_4)\in \mathcal{T}}\!\!\!\!\!\!\!\! \sqrt[3]{\textcolor{black}{\exp_2\Big(-n(\alpha_1D(\underline{t}_1||P_XP_Y)+\alpha_2D(\underline{t}_2||P_XP_Y)+\alpha_3D(\underline{t}_3||P_XP_Y)+\alpha_4D(\underline{t}_4||P_{X,Y})\Big)}}}
 \\
 &{ 
 \stackrel{(a)}{\leq}
 \sum_{(\underline{t}_1,\underline{t}_2,\underline{t}_3,\underline{t}_4)\in \mathcal{T}} \sqrt[3]{\textcolor{black}{\exp_2\Big(-n(D(\alpha_1\underline{t}_1+\alpha_2\underline{t}_2+\alpha_3\underline{t}_3+\alpha_4\underline{t}_4
 ||(\alpha_1+\alpha_2+\alpha_3)P_XP_Y+ \alpha_4P_{X,Y})\Big)}}}
 \\&
{  \leq
 |\mathcal{T}| \sqrt[3]{\textcolor{black}{\exp_2\Big(-n(D(P_{X,Y}
 ||(1-\alpha)P_XP_Y+ \alpha P_{X,Y})-\delta_{\epsilon}\Big)}}}
 \\&
 { \stackrel{(b)}{\leq} \textcolor{black}{\exp_2\Big(-\frac{n}{3}(D(P_{X,Y}
 ||(1-\alpha)P_XP_Y+ \alpha P_{X,Y})-\delta_{\epsilon}+12|\mathcal{X}||\mathcal{Y}|\log{\frac{n+1}{n}})\Big)},}
\end{align*}
where (a) follows from the convexity of the divergence function and (b) follows by the fact that the number of joint types of $n$-length pairs of sequences on alphabet $\mathcal{X}\times \mathcal{Y}$ is bounded from above by $(n+1)^{|\mathcal{X}||\mathcal{Y}|}$, so that each of $\underline{t}_j, j \in [4]$ can take at most $(n+1)^{|\mathcal{X}||\mathcal{Y}|}$ values.
\qedsymbol}

\section{Proof of Lemma \ref{lem:dercount}}
\label{app:lem2}
First, we prove Equation \eqref{eq:der1}. Note that 
\begin{align*}
    N_m= {n \choose m} !(n-m)\leq  {n \choose m} (n-m)!= \frac{n!}{m!}\leq n^{n-m}.
\end{align*}
This proves the right hand side of the equation. To prove the left hand side, we first argue that the iterative inequality $!n\geq !(n-1)(n-1)$ holds. In other words, the number of derangements of numbers in the interval $[n]$ is at least $n-1$ times the number of derangements of the numbers in the interval $[n-1]$. We prove the statement by constructing $!(n-1)(n-1)$ distinct derangements of the numbers $[n]$. Note that a derangement $\pi(\cdot)$ of $[n]$ is characterized by the vector $(\pi(1),\pi(2),\cdots(n))$. There are a total of $n-1$ choices for $\pi(1)$ (every integer in $[n]$ except for $1$). Once $\pi(1)$ is fixed, the rest of the vector $(\pi(2),\pi(3),\cdots, \pi(n))$ can be constructed using any derangement of the set of numbers $[n]- \{\pi(1)\}$. There are a total of $!(n-1)$ such derangements. So, we have constructed $!(n-1)(n-1)$ distinct derangements of $[n]$. Consequently. $!n\geq !(n-1)(n-1)$. By induction, we have $!n\geq (n-1)!$. So,
\begin{align*}
    N_m= {n \choose m} !(n-m)\geq {n \choose m} (n-m-1)! = \frac{n!}{m!(n-m)}. 
\end{align*}
Next, we prove that Equation \eqref{eq:der2} holds. Note that from the right hand side of Equaation \eqref{eq:der1} we have:
\begin{align*}
    \lim_{n\to \infty}\frac{\log{N_m}}{n\log{n}}\leq \lim_{n\to \infty} \frac{\log{n^{n-m}}}{n\log{n}}= 
    \lim_{n\to \infty}\frac{n-m}{n}=1-\alpha.
\end{align*}
Also, from the left hand side of Equation \eqref{eq:der2}, we have: 
\begin{align*}
     &\lim_{n\to \infty} \frac{\log{N_m}}{n\log{n}}
     \geq \lim_{n\to \infty}\frac{\log{ 
     \frac{n!}{m!(n-m)}}}{n\log{n}}
      = 
      \lim_{n\to \infty}\frac{\log{ {\frac{n!}{m!}}}}{n\log{n}}-
     \frac{{{\log{ (n-m)}}}}{n\log{n}}.
\end{align*}
The second term in the last inequality converges to 0 as $n\to \infty$. Hence, 
\begin{align*}
    &\lim_{n\to \infty} \frac{\log{N_m}}{n\log{n}}
    \geq 
     \lim_{n\to \infty}\frac{\log{ {\frac{n!}{m!}}}}{n\log{n}}
     \\& \stackrel{(a)}{\geq} 
      \lim_{n\to \infty}\frac{\log{ {\frac{n!}{m^m}}}}{n\log{n}}
      {\geq} \lim_{n\to\infty}
     \frac{\log{{n!}}}{n\log{n}}
     -\frac{\log{{m^m}}}{n\log{n}}
     \stackrel{(b)}{\geq} \lim_{n\to\infty}
     \frac{n\log{n}-n+O(\log{n})}{n\log{n}}
     -\frac{\log{{m^m}}}{n\log{n}}
     \\ &
     =\lim_{n\to\infty}
     \frac{n\log{n}}{n\log{n}}
     -\frac{\alpha n\log{{\alpha n}}}{n\log{n}}
     = 1-\alpha,
\end{align*}
where in (a) we have used the fact that $m!\leq m^m$, and (b) follows from Stirling's approximation. This completes the proof.
\qedsymbol

\section{Proof of Theorem \ref{th:cperm}}
\label{app:thcperm}
The proof builds upon the arguments provided in the proof of Theorem \ref{th:1}. Let $Y^{n}=\pi_l(X_{(l)} ^n)_{l\in [k]}$.
First, we construct a partition $\mathsf{D}=\{\mathcal{C}_{j,t}:j\in [b_k], t\in [k(k-1)]\}$ such that each sequence of vectors $(Y_{(l), \mathcal{C}_{j,t}})_{l\in [k]}$ is an collection of independent vectors of i.i.d variables, where $Y_{(l), \mathcal{C}_{j,t}}=(Y_{(l),c})_{c\in  \mathcal{C}_{j,t}}$. Loosely speaking, this partitioning of the indices `breaks' the multi-letter correlation among the sequences induced due to the permutation and allows the application of standard information theoretic tools to bound the probability of joint typicality. The partition is constructed in two steps.  
We first construct a \textit{coarse} partition $\mathsf{C}=\{\mathcal{C}_1,\mathcal{C}_2,\cdots, \mathcal{C}_{b_k}\}$ of the indices $[1,n]$ for which the sequence of vectors $(Y_{(l),\mathcal{C}_j}), l\in [k]$ is identically distributed but not necessarily independent. The set $\mathcal{C}_j, j\in [b_k]$ is defined as the set of indices corresponding to partition $\mathcal{P}_j$, where correspondence is defined in Definition \ref{Def:corr}. {We assume that $\mathcal{P}_{b_k}$ is the single-set partition. So that $\mathcal{C}_{b_k}$ is the set of joint fixed points of the permutations.  It is straightforward to see that $(Y_{(l),\mathcal{C}_{b_k}}), l\in [k]$ are i.i.d with joint distribution $P_{X}^k$. Next, we focus on $(Y_{(l),\mathcal{C}_{j}}), l\in [k], j\in [b_k-1]$.}
Clearly, $\mathsf{C}=\{\mathcal{C}_1,\mathcal{C}_2,\cdots, \mathcal{C}_{b_k}\}$ partitions $[1,n]$ since each index corresponds to exactly one partition $\mathcal{P}_j$. To verify that the elements of the sequence $(Y_{(l),\mathcal{C}_j}), l\in [k]$ are identically distributed let us consider a fixed $j\in [b_k-1]$ and an arbitrary index $c\in \mathcal{C}_j$. Then the vector $(Y_{(1),c},Y_{(2),c},\cdots,Y_{(k),c})$ is distributed according to $P_{X_{\mathcal{P}_j}}$. To see this, note that:
\begin{align*}
    P_{Y_{(1),c},Y_{(2),c},\cdots,Y_{(k),c}}
    &= P_{X_{(1),(\pi_1^{-1}(c))},X_{(2),(\pi_2^{-1}(c))},\cdots,X_{(k),(\pi_k^{-1}(c))}}
\end{align*}
From the assumption that the index $c$ corresponds to the partition $\mathcal{P}_j$, we have that $\pi_l^{-1}(c)= \pi_{l'}^{-1}(c)$ if and only if $l,l'\in \mathcal{A}_{j,r}$ for some integer $r\in [|\mathcal{P}_j|]$. Since by the theorem statement $(X^n_{(l)})_{l\in [k]}$ is an i.i.d sequence of vectors, the variables $X_{(l),\pi^{-1}_l(c)}$ and $X_{(l'),\pi^{-1}_{l'}(c)}$ are independent of each other if $\pi_l^{-1}(c)\neq \pi_{l'}^{-1}(c)$.  Consequently, 
\begin{align*}
    P_{Y_{(1),c},Y_{(2),c},\cdots,Y_{(k),c}}
    &= \prod_{r\in [|\mathcal{P}_j|]}P_{X_{t_1},X_{t_2},\cdots,X_{t_{|\mathcal{A}_{j,r}|}}}=P_{X_{\mathcal{P}_j}}.
\end{align*}
This proves that the sequences $(Y_{(l),\mathcal{C}_j}), l\in [k]$ are identically distributed with distribution $P_{X_{\mathcal{P}_j}}$.
In the next step, we decompose the partition $\mathsf{C}$ to arrive at a finer partition $\mathsf{D}=\{\mathcal{C}_{j,t}:j\in [b_k], t\in [k(k-1)]\}$ of $[1,n]$ such that $(Y_{(l),\mathcal{C}_{j,t}})_{l\in [k]}$ is an i.i.d sequence of vectors. Let $\mathcal{C}_j=\{c_1,c_2,\cdots,c_{|\mathcal{C}_j|}\}, j\in [b_k-1]$.
The previous step shows that the sequence consists of identically distributed vectors. In order to guarantee independence, we need to ensure that for any $c,c'\in \mathcal{C}_{j,t}$, we have $\pi^{-1}_{l}(c)
\neq \pi^{-1}_{l'}(c'), \forall l,l' \in [k]$. Then, independence of $(Y_{(l),c})_{l\in [k]}$ and $(Y_{(l),c'})_{l\in [k]}$ is guaranteed due to the independence of the sequence of vectors $(X^n_{(l)})_{l\in [k]}$. To this end we assign the indices in $\mathcal{C}_j$ to the sets $\mathcal{C}_{j,t}, t\in [k(k-1)+1]$ as follows:
\begin{align}
    &c_1\in \mathcal{C}_{j,1},\\
    &c_i\in {\mathcal{C}_{j,t}}: t= 
    \min \{t'| \nexists c'\in \mathcal{C}_{j,t'}, l,l' \in [k]: \pi^{-1}_l(c_i)=\pi^{-1}_{l'}(c')\}, i>1.
    \label{eq:assign}
\end{align}
Note that the set $\mathcal{C}_{j,t}$ defined in Equation \eqref{eq:assign} always exists based on the following argument. {Let $\mathcal{E}=\{c'\in \mathcal{C}_j-\{c_i\}|\exists l,l'\in [k]: \pi^{-1}_{l}(c_i)= \pi_{l'}^{-1}(c')\}$, and let $\mathcal{E}'=\{c'\in [n]- \{c_i\}|\exists l,l'\in [k]: \pi^{-1}_{l}(c_i)= \pi_{l'}^{-1}(c')\}$. Note that $\mathcal{E}\subset \mathcal{E}'$. We claim that $|\mathcal{E}|\leq |\mathcal{E}'|\leq k(k-1)$.  To see this,  note that there are at most $k$ distinct values $\pi_{l}^{-1}(c_i), l\in k$, and for each of these values, there are at most $(k-1)$ distinct values $l'\neq l$ for which  $\pi_{l}^{-1}(c_i)=\pi_{l'}^{-1}(c)$ since each permutation is a one to one map.  
 Since there are a total of $k(k-1)+1$ sets $\mathcal{C}_{j,t}$, by the Pigeonhole Principle, there exists at least one set for which there is no element $c'$ such that $\pi_l(c)=\pi_{l'}(c')$ for any value of $l,l'$.} Consequently, $(Y_{(l),\mathcal{C}_{j,t}})_{l\in [k]}$ is an i.i.d. sequence with distribution $P_{X_{\mathcal{P}_{j}}}$.

Let $\underline{T}_{j,t}, j\in [b_k-1], t\in [k(k-1)+1]$ be the type of the sequence of vectors $(Y_{(l),\mathcal{C}_{j,t}})_{l\in [k]}$, so that $\underline{T}_{j,t}(x^k)= \frac{\sum_{c\in \mathcal{C}_{j,t}}\mathbbm{1}((Y_{(1),c},Y_{(2),c},\cdots, Y_{(k),c})=x^k)}{|\mathcal{C}_{j,t}|}, x^k\in \mathcal{X}^k$. {Furthermore, let $\underline{T}_{b_k}$ be the type of the sequence of vectors $(Y_{(l),\mathcal{C}_{b_k}})_{l\in [k]}$.}
We are interested in the probability of the event $(Y^n_{(l)})_{l\in [k]}\in \mathcal{A}_{\epsilon}^n(X^k)$. From Definition \ref{Def:ctyp} this event can be rewritten as follows:
\begin{align*}
 &P\left(\left(Y^n_{(l)})_{l\in [k]}\right)\in \mathcal{A}_{\epsilon}^n(X^k)\right)
  =P\left({T}((Y^n_{(l)})_{l\in [k]},x^m)\stackrel{.}{=} P_{X^k}(x^k)\pm \epsilon, \forall x^k\right)
 \\&= {P(\sum_{j,t}\alpha_{j,t}\underline{T}_{j,t}(x^k)+{\alpha_{b_k}}\underline{T}_{b_k}(x^k)\stackrel{.}{=}P_{X^k}(x^k)\pm \epsilon, \forall x^k)},
\end{align*}
where $\alpha_{j,t}= \frac{|\mathcal{C}_{j,t}|}{n}, j\in [b_k-1], t\in [k(k-1)+1]$, and $\alpha_{b_k}= \frac{|\mathcal{C}_{b_k}|}{n}$. We have: 
\begin{align*}
 &P\left(\left(Y^n_{(l)})_{l\in [k]}\right)\in \mathcal{A}_{\epsilon}^n(X^k)\right)
 =\sum_{(\underline{s}^{b_k-1,k(k-1)+1},\underline{s}_{b_k})\in \mathcal{T}} P(\underline{T}_{j,t}=\underline{s}_{j,t},j\in [b_k-1], t\in [k(k-1)+1], \underline{T}_{b_k}= \underline{s}_{b_k}),
\end{align*}
where $\mathcal{T}= \{(\underline{s}^{b_k,k(k-1)+1}, \underline{s}_{b_k}):P(\sum_{j,t}\alpha_{j,t}{s}_{j,t}(x^k)+{\alpha_{b_k}}\underline{s}_{b_k}(x^k)\stackrel{.}{=}P_{X^k}(x^k)\pm \epsilon, \forall x^k), \forall x^k\}$. Using the property that for any set of events, the probability of the intersection is less than or equal to the geometric average of the individual probabilities, we have:
\begin{align*}
 &P((Y^n_{(l)})_{l\in [k]}\in \mathcal{A}_{\epsilon}^n(X^k))
\leq P(\underline{T}_{b_k}=\underline{s}_{b_k})
\sum_{(\underline{s}^{b_k-1,k(k-1)+1},\underline{s}_{b_k})\in \mathcal{T}} \sqrt[(k(k-1)+1)(b_k-1)]{\Pi_{i\in [j,t]}P(\underline{T}_{j,t}=\underline{s}_{j,t})}.
\end{align*}
Since the elements  $(Y_{(l),\mathcal{C}_{j,t}}), j\in [b_k], t\in [k(k-1)]$ are i.i.d by construction, it follows from standard information theoretic arguments \cite{csiszarbook} that:

\begin{align*}
    & P(\underline{T}_{j,t}=\underline{s}_{j,t}) \leq \textcolor{black}{\exp_2\Big(-|\mathcal{C}_{j,t}|(D(\underline{s}_{j,t}||P_{X_{\mathcal{P}_j}})-\prod_{l\in [k]}|\mathcal{X}_l|\epsilon)\Big)}, j\in [b_k-1], t\in [k(k-1)+1]\\
    &P(\underline{T}_{b_k}=\underline{s}_{b_k}) \leq \textcolor{black}{\exp_2\Big(-|\mathcal{C}_{b_k}|(D(\underline{s}_{b_k}||P_{X_{\mathcal{P}_j}})-\prod_{l\in [k]}|\mathcal{X}_l|\epsilon)\Big)}.
\end{align*}
We have, 
\begin{align*}
 &P((Y^n_{(l)})_{l\in [k]}\in \mathcal{A}_{\epsilon}^n(X^k))
 \leq  \textcolor{black}{\exp_2\Big(-|\mathcal{C}_{b_k}|(D(\underline{s}_{b_k}||P_{X_{\mathcal{P}_j}})-\prod_{l\in [k]}|\mathcal{X}_l|\epsilon)\Big)}\times \\&\!\!\!\!\!\!\!\!
 \sum_{(\underline{s}^{b_k-1,k(k-1)+1},\underline{s}_{b_k})\in \mathcal{T}}\!\!\!\!\!\!\!\! \sqrt[(k(k-1)+1)(b_k-1)]{\Pi_{i\in [j,t]} \textcolor{black}{\exp_2\Big(-|\mathcal{C}_{j,t}|(D(\underline{s}_i||P_{X_{\mathcal{P}_j}})-\prod_{l\in [k]}|\mathcal{X}_l|\epsilon)\Big)}}
 \\
 &\stackrel{(a)}{\leq}
\sum_{(\underline{s}^{b_k-1,k(k-1)+1},\underline{s}_{b_k})\in \mathcal{T}}
 \sqrt[(k(k-1)+1)(b_k-1)]
 {\textcolor{black}{\exp_2\Big(-n(D(\sum_{j,t}\alpha_{j,t}\underline{s}_{j,t}+ \alpha_{b_k}\underline{s}_{b_k}
 ||\sum_{k}P_{X_{\mathcal{P}_j}})-\prod_{l\in [k]}|\mathcal{X}_l|\epsilon)\Big)}}
 \\&\stackrel{(b)}{\leq}  {\textcolor{black}{\exp_2\Big(-\frac{n}{(k(k-1)-1)(b_k+1)}(D(P_{X^k}
 ||\sum_{j\in [b_k]}\frac{|\mathcal{C}_j|}{n}
 P_{X_{\mathcal{P}_j}})+O(\epsilon)+O(\frac{\log{n}}{n}))\Big)}.}
\end{align*}
where the (a) follows from the convexity of the divergence function and (b) follows by the fact that the number of joint types grows polynomially in $n$.

\section{Proof of Lemma \ref{lem:foldcount}}
\label{Ap:lem:foldcount}
The upper-bound follows by the fact that for $k$-fold derangement $(\pi_1(\cdot),\pi_2(\cdot),\cdots,\pi_k(\cdot))$, the first permutation is $\pi_1(\cdot)$ is the identity permutation, and the rest of derangements with respect to $\pi_1(\cdot)$, so by the counting principle there are at most $(!n)^{k-1}$ choices for $(\pi_1(\cdot),\pi_2(\cdot),\cdots,\pi_k(\cdot))$. Next we prove the lower bound. Note that $\pi_1(\cdot)$ is the identity permutation. By the same arguments as in the proof of Lemma \ref{lem:dercount}, there are at least $(n-1)!$ choices of distinct $\pi_2(\cdot)$, and for any fixed $\pi_2(\cdot)$ there are at least $(n-2)!$ distinct $\pi_3(\cdot)$. Generally, for fixed $\pi_2(\cdot),\pi_3(\cdot),\cdots, \pi_j(\cdot)$, there are at least $(n-j+1)!$ choices of distinct $\pi_{j+1}(\cdot)$. By the counting principle, there are at least $\prod_{j\in [k]}(n-j+1)!\geq ((n-k+1)!)^k$ distinct $(\pi_1(\cdot),\pi_2{\cdot},\cdots,\pi_{k}(\cdot))$. This completes the proof.
\qedsymbol

\section{Proof of Lemma \ref{lem:bell}}
\label{app:lem6}
First, we prove the upper-bound in Equation \eqref{eq:bell1}. As an initial step, we count the number of distinct allocations of partition correspondence to indices $i\in [1,n]$. Since we are considering $(i_1,i_2,\cdots, i_{b_k})$-Bell permutation vectors, there are a total of $i_j$ indices corresponding to $\mathcal{P}_j$ for $j\in [b_k]$. So, there are ${n \choose i_1,i_2,\cdots, i_{b_k}}$ allocations of partition correspondence to different indices. Now assume that the $i^{th}$ index corresponds to the $j$th partition. Then, we argue that there are at most $n^{|\mathcal{P}_j|}$ possible values for the vector $(\pi_{l}(i): l\in [k])$. The reason is that by definition, for any two $\pi_l(i)$ and $\pi_{l'}(i)$, their value are equal if and only if $l,l' \in \mathcal{D}_{j,k}$ for some integer $r\in [|\mathcal{P}_j|]$. So, the elements of $(\pi_{l}(i): l\in [m])$ take $|\mathcal{P}_j|$ distinct values among the set $[1,n]$. Consequently $(\pi_{l}(i): l\in [k])$ takes at most $n^{|\mathcal{P}_j|}$ distinct values. By the counting principle, the sequence of vectors $(\pi_{l}(i): l\in [k]), i\in [n]$ takes at most $n^{\sum_{j\in [b_k]} |\mathcal{P}_j|i_j-n}$  distinct values given a specific partition correspondence, since $\pi_1(\cdot)$ is assumed to be the identity permutation. Since there are a total of ${n \choose i_1,i_2,\cdots, i_{b_k}}$ partition correspondences, we have:   
\begin{align*}
    N_{i_1,i_2,\cdots,i_{b_k}} \leq {n \choose i_1,i_2,\cdots, i_{b_k}} n^{\sum_{j\in [b_k]} |\mathcal{P}_j|i_j-n}.
\end{align*}
Next, we prove the lower-bound in Equation \eqref{eq:bell1}. The proof follows by constructing enough distinct $(i_1,i_2,\cdots, i_{b_k})$-Bell permutation vectors. First, we choose a partition correspondence for the indices $i\in [n]$ similar to the proof for the lower-bound. There are ${n \choose i_1,i_2,\cdots, i_{b_k}}$ distinct ways of allocating the partition correspondence. We argue that for every fixed partition correspondence, there are at least $\prod_{j\in [b_k]]}d_{|\mathcal{P}_{j}
|}(i_j)$ permutations which are $(i_1,i_2,\cdots, i_{b_k})$-Bell permutation vectors. To see this,
without loss of generality, assume that the first $i_1$ indices $[1,i_1]$ correspond to $\mathcal{P}_1$, the next $i_2$ indices $[i_1+1,i_1+i_2]$ correspond to $\mathcal{P}_2$, and in general the indices $[\sum_{t=1}^{l-1}i_t+1, \sum_{t=1}^{l}i_t]$ correspond to $\mathcal{P}_j$. Let $(\pi'_{1,j},\pi'_{2,j},\cdots,\pi'_{|\mathcal{P}_j|,j})$ be vectors of $|\mathcal{P}_j|$-fold derangements of $[\sum_{t=1}^{j-1}i_t+1, \sum_{t=1}^{j}i_t]$, where $j\in [b_k]$. Then, the following is an $(i_1,i_2,\cdots,i_{b_k})$-Bell permutation vector. 
\begin{align*}
    \pi_l([\sum_{t=1}^{j-1}i_t+1, \sum_{t=1}^{j}i_t])= \pi'_{l,j}([\sum_{t=1}^{j-1}i_t+1, \sum_{t=1}^{j}i_t]), \quad \text{if } l\in \mathcal{D}_{s,j}, s\in [|\mathcal{P}_j|], j\in [b_k].
\end{align*}
There are a total of $d_{|\mathcal{P}_j|(i_j)}$ choices of $(\pi'_{1,j},\pi'_{2,j},\cdots,\pi'_{|\mathcal{P}_j|,j})$. So, by the counting principle, there are a total of $\prod_{j\in [b_k]}d_{|\mathcal{P}_j|}(i_j)$ choices of $(\pi_{1}(\cdot),\pi_2(\cdot),\cdots, \pi_{k}(\cdot))$ for a fixed partition correspondence. As argued previously, there are a total of  ${n \choose i_1,i_2,\cdots, i_{b_k}}$ distinct choices for partition correspondence. Consequently we have shown that, 
\begin{align*}
     {n \choose i_1,i_2,\cdots, i_{b_k}}\prod_{j\in [b_k]}d_{|\mathcal{P}_{j}|}(i_j)\leq N_{i_1,i_2,\cdots,i_{b_k}}.
\end{align*}
This completes the proof of Equation \eqref{eq:bell1}. We proceed with to prove Equation \eqref{eq:bell2}. Note that from the right hand side of Equation \eqref{eq:bell1}, we have:
\begin{align*}
 &\lim_{n\to\infty} \frac{\log_e{N_{i_1,i_2,\cdots,i_{b_k}}}}{n\log_e{n}}
 \leq
 \lim_{n\to\infty} \frac{\log_e{{n \choose i_1,i_2,\cdots, i_{b_k}} n^{(\sum_{j\in [b_k]} |\mathcal{P}_j|i_j-n)}}}{n\log_e{n}}
 = 
 \lim_{n\to\infty} \frac{\log_e{ n^{(\sum_{j\in [b_k]} |\mathcal{P}_j|i_j-n)}}}{n\log_e{n}}
 +\lim_{n\to\infty} \frac{\log_e{{n \choose i_1,i_2,\cdots, i_{b_k}}}}{n\log_e{n}}
 \\&
= \lim_{n\to\infty} \frac{{ ({\sum_{j\in [b_k]} |\mathcal{P}_j|i_j-n)}}}{n}
 +\lim_{n\to\infty} \frac{\log_e{2^n}}{n\log_e{n}}
 = {\sum_{j\in [b_k]} |\mathcal{P}_j|\alpha_j}-1.
\end{align*}
On the other hand, from the left hand side of Equation \eqref{eq:bell1}, we have:
\begin{align*}
    &\lim_{n\to\infty} \frac{\log{N_{i_1,i_2,\cdots,i_{b_k}}}}{n\log{n}}
    \geq \lim_{n\to\infty} \frac{\log{{n \choose i_1,i_2,\cdots, i_{b_k}}\prod_{j\in [b_k]}d_{|\mathcal{P}_{j}|}(i_j)}}{n\log{n}}
    \\&{\geq}
    {\lim_{n\to\infty} \frac{\log{\prod_{j\in [b_k]}d_{|\mathcal{P}_{j}|}(i_j)}}{n\log{n}}
    \stackrel{(a)}{\geq}
    \lim_{n\to\infty} \frac{\log{\prod_{j\in [b_k]}((i_j-|\mathcal{P}_j|+1)!^{|\mathcal{P}_j|-1})}}{n\log{n}}}
    \\& = 
    \lim_{n\to\infty} \frac{\sum_{j\in [b_k]}{(|\mathcal{P}_j|-1)}\log{(i_j-|\mathcal{P}_j|+1)!}}{n\log{n}}
    \\&\stackrel{(b)}{=}
    \lim_{n\to\infty} \frac{\sum_{j\in [b_k]}{(|\mathcal{P}_j|-1)}({(i_j-|\mathcal{P}_j|+1)\log{(i_j-|\mathcal{P}_j|+1)}-(i_j-|\mathcal{P}_j|+1)+O(\log{(i_j-|\mathcal{P}_j|+1)}))}}{n\log{n}}
    \\&= \sum_{j\in [b_k]}|\mathcal{P}_j|\alpha_j-1,
\end{align*}
where (a) follows from Lemma \ref{lem:foldcount}, and in (b) we have used Stirling's approximation.
\qedsymbol

\section{Proof of Theorem \ref{th:21}}
\label{app:th21}
 \begin{table}[]
  \centering
  \setlength\extrarowheight{5pt}
\begin{tabular}{|ll|ll|}
\hline
 $U^1_{\sigma^1}$: & upper-triangle of $g^1$ & $\widehat{\Sigma}$:  & Possible Output Labelings of TM \\ 
 \hline
  $U^2_{\sigma^2}$: & upper-triangle of $g^2$   &$\mathcal{E}$: & Labelings with at least $\alpha_n$ fraction incorrect    \\ \hline
 $\sigma^1,\sigma^2,\sigma^;$: & labelings & &\\ \hline
\end{tabular}
\vspace{0.1in}
\caption{Notation Table: Appendix \ref{app:th21}}
\label{tab:H}
\vspace{-0.2in}
\end{table}

First, note that for the correct labeling the two UTs are jointly typical with probability approaching one as $n\to \infty$:
\begin{align*}
P((U^{1}_{{\sigma}^1},U^{2}_{{\sigma}^2})\in \mathcal{A}_{\epsilon}^{\frac{n(n-1)}{2}}(X_1,X_2))\to 1 \quad \text{as}\quad n\to \infty.
\end{align*}
So, $P(\widehat{\Sigma}=\phi)\to 0$ as $n\to \infty$ since the correct labeling is a member of the set $\widehat{\Sigma}$.
We will show that the probability that a labeling in $\widehat{\Sigma}$ labels $n(1-\alpha_n)$ vertices incorrectly goes to $0$ as $n\to \infty$. 
Define the following:
\begin{align*}
 \mathcal{E}=\{{\sigma'}^2\Big| ||\sigma^2-{\sigma'}^2||_0\geq n(1-\alpha_n)\},
\end{align*}
where $||\cdot||_0$ is the $L_0$-norm. The set $\mathcal{E}$ is the set of all labelings which match more than $n\alpha_n$ vertices incorrectly. \textcolor{black}{We have summarized the notation in Table \ref{tab:H}.}
We show the following:
\begin{align*}
 P(\mathcal{E}\cap \widehat{\Sigma}\neq \phi)\to 0, \qquad \text{as} \qquad n\to \infty.
 \end{align*}
Note that:
\begin{align*}
  &
  {P(\mathcal{E}\cap \widehat{\Sigma}\neq \phi)
  = P\left(\bigcup_{{\sigma'}^2: ||\sigma^2-{\sigma'}^2||_0\geq n(1-\alpha_n)}\{{\sigma'}^2\in  \widehat{\Sigma}\}\right)}
  \stackrel{(a)}{\leq} {\sum_{i=0}^{n\alpha_n}\sum_{{\sigma'}^2: ||\sigma^2-{\sigma'}^2||_0=n-i} P({\sigma'}^2\in  \widehat{\Sigma})}
  \\&\stackrel{(b)}{=} {\sum_{i=0}^{n\alpha_n}\sum_{{\sigma'}^2: ||\sigma^2-{\sigma'}^2||_i=n-i}
   P((U^{1}_{{\sigma}^1},\Pi_{\sigma^2,{\sigma'}^2}(U^{2}_{{\sigma}^2}))\in \mathcal{A}_{\epsilon}^{\frac{n(n-1)}{2}})}
\\&\stackrel{(c)}{\leq} {\sum_{i=0}^{n\alpha_n}\sum_{{\sigma'}^2: ||\sigma^2-{\sigma'}^2||_i=n-i}
  \textcolor{black}{\exp_2\Big(-\frac{n(n-1)}{2}\left(\max\big\{E_{\frac{i(i-1)}{n(n-1)}}, E'_{\frac{i(i-1)}{n(n-1)}}\big\}-\zeta''_{\frac{n(n-1)}{2}}- \delta_{\epsilon}\right)
  \Big)}}
   \\&\stackrel{(d)}{=}  {\sum_{i=0}^{n\alpha_n} {n \choose i}(!(n-i))
  \textcolor{black}{\exp_2\Big(-\frac{n(n-1)}{2}\left(\max\big\{E_{\frac{i(i-1)}{n(n-1)}}, E'_{\frac{i(i-1)}{n(n-1)}}\big\}-\zeta''_{\frac{n(n-1)}{2}}- \delta_{\epsilon}\right)\Big)}}
  \\&\leq  {\sum_{i=0}^{n\alpha_n} n^{n-i}
 \textcolor{black}{\exp_2\Big(-\frac{n(n-1)}{2}\left(\max\big\{E_{\frac{i(i-1)}{n(n-1)}}, E'_{\frac{i(i-1)}{n(n-1)}}\big\}-\zeta''_{\frac{n(n-1)}{2}}- \delta_{\epsilon}\right)\Big)}}
  \\&\leq
  {\sum_{i=0}^{n\alpha_n} 
 \textcolor{black}{\exp_2\Big((n-i)\log{n}-\frac{n(n-1)}{2}\left(\max\big\{E_{\frac{i(i-1)}{n(n-1)}}, E'_{\frac{i(i-1)}{n(n-1)}}\big\}-\zeta''_{\frac{n(n-1)}{2}}- \delta_{\epsilon}\right)\Big)}.}
\end{align*}
{where we have defined $\zeta''_n= max(\zeta_n,\zeta'_n)$,} and (a) follows from the union bound, (b) follows from the definition of $ \widehat{\Sigma}$, in (c) we have used Theorems \ref{th:1:improved} and  \ref{th:1} and the fact that $||\sigma^2-{\sigma'}^2||_0=n-i$ so that $\Pi_{\sigma^2,{\sigma'}^2}$ has $\frac{i(i-1)}{2}$ fixed points, and in (d) we have denoted the number of derangement of sequences of length $i$ by $!i$. Note that the right hand side in the last inequality approaches 0 as $n\to \infty$ as long as:
\begin{align*}
   &{ (n-i+3)\log{n}\leq \frac{n(n-1)}{2}\left(\max\big\{E_{\frac{i(i-1)}{n(n-1)}}, E'_{\frac{i(i-1)}{n(n-1)}}\big\}-\zeta''_{\frac{n(n-1)}{2}}- \delta_{\epsilon}\right), i\in [0,n\alpha_n]}
 \\&{  \iff
 (1-\alpha)\log{n}\leq \frac{n-1}{2}\left(\max\big\{E_{\alpha^2}, E'_{\alpha^2}\big\}-\zeta''_{\frac{n(n-1)}{2}}- \delta_{\epsilon}\right),}
\end{align*}
where we have defined $\alpha=\frac{i}{n}$. The last equation is satisfied by the theorem assumption for small enough $\epsilon$ and large enough $n$ {by noting that $\zeta''_{\frac{n(n-1)}{2}}= O(\frac{\log{n}}{n^2})$ and $\delta_{\epsilon}= \epsilon o(\log{n})= o(\frac{\log{n}}{n})$ since $\max_{(x_1,x_2): P^{(n)}_{X_1,X_2}(x_1,x_2)\neq 0}|\log{\frac{P_{X_1}(x_1)P_{X_2}(x_2)}{P_{X_1,X_2}(x_1,x_2)}}|^+= o(\log{n})$ by assumption and $n\epsilon$ can be taken to go to infinity arbitrarily slowly for the probability of the typical set to approach one asymptotically. }
\qedsymbol

\section{Proof of Theorem \ref{th:ER}}
\label{app:th:ER}
{
Note that $\max_{(x_1,x_2): P^{(n)}_{X_1,X_2}(x_1,x_2)\neq 0}|\log{\frac{P_{X_1}(x_1)P_{X_2}(x_2)}{P_{X_1,X_2}(x_1,x_2)}}|^+= o(\log{n})$. So, From Theorem \ref{th:21}, it suffices to show that for a sequence $p_n,n \in \mathbb{N}$ such that
$
    \lim_{n\to \infty}\frac{\frac{\ln{n}}{n}}{p_n} > \frac{s}{2}$, we have
$2(1-\alpha)\frac{\ln{n}}{n}< 
\max\{E_{\alpha^2},E'_{\alpha^2}\},  0\leq \alpha\leq \alpha_n, \text{ as } n\to\infty$
for a given $\alpha_n$ such that $\alpha_n\to 1$ as $n\to \infty$, where the KL divergence terms in $E_{\alpha^2}$ and $E'_{\alpha^2}$ are evaluated in nats instead of bits.   Equivalently, we need to show that:
\begin{align}
\frac{\frac{\ln{n}}{n}}{p_n}< \frac{
\max\{E_{\alpha^2},E'_{\alpha^2}\}}{2(1-\alpha)p_n},  0\leq \alpha\leq \alpha_n, \text{ as } n\to\infty
\label{eq:success:1}
\end{align}}
{Let $p_n= \theta \frac{\log{n}}{n}$. We need to show that there exists $\theta<\frac{2}{s}$ such that  \eqref{eq:success:1} holds. It suffices to have:
\begin{align*}
    \frac{s}{2}<
    \lim_{n\to\infty}min_{\alpha\in [0,\alpha_n]}\frac{
\max\{E_{\alpha^2},E'_{\alpha^2}\}}{2(1-\alpha)p_n},
\end{align*}}
{since if the above holds we can take $\frac{1}{\theta}= \frac{s}{4}+ \frac{1}{2}\lim_{n\to\infty}min_{\alpha\in [0,\alpha_n]}\frac{
\max\{E_{\alpha^2},E'_{\alpha^2}\}}{2(1-\alpha)p_n}$ in which case both $\eqref{eq:success:1}$ and $\lim_{n\to\infty} \frac{\frac{\ln{n}}{n}}{p_n}>\frac{s}{2}$ hold. }

{We will prove that for $\alpha_0=0.8$ and $s<\frac{1}{2}$:
\begin{align*}
    \frac{s}{2}< \lim_{n\to\infty}min_{\alpha\in [0,\alpha_0]}\frac{
E_{\alpha^2}}{2(1-\alpha)p_n},
\text{ and }\quad
    \frac{s}{2}< \lim_{n\to\infty}min_{\alpha\in [\alpha_0,\alpha_n]}\frac{
E'_{\alpha^2}}{2(1-\alpha)p_n}.
\end{align*}}

{We first investigate $\frac{s}{2}< \lim_{n\to\infty}min_{\alpha\in [0,\alpha_0]}\frac{
E_{\alpha^2}}{2(1-\alpha)p_n}$.
Recall that 
\[E_{\alpha^2}= \min_{\underline{t}'_X\in
\mathcal{P}}\frac{1}{2}\left((1-\alpha^2)D(\underline{t}'_X||P_X)+\alpha^2 D(\underline{t}''_{X}||P_X)+ D(P_{X,Y}|| (1-\alpha^2) P_XP_{Y''}+\alpha^2 P_{X,Y})\right),\]
where $\mathcal{P}\triangleq \{ \underline{t}_X\in \mathcal{P}_X|\forall x\in \mathcal{X}: \underline{t}_X(x)\in \frac{1}{1-\alpha^2}[P_X(x)-\alpha^2, P_X(x)]\}$, $\mathcal{P}_X$ is the probability simplex on the alphabet $\mathcal{X}$,  $\underline{t}''_X\triangleq \frac{1}{\alpha^2}(P_X-(1-\alpha^2) \underline{t}'_X)$, and $P_{Y''}(\cdot)\triangleq \sum_{x\in \mathcal{X}} \underline{t}'_X(x) P_{Y|X}(\cdot|x)$. Let $\underline{t}'_X(1)= \gamma p_n$, where $\gamma p_n\in \frac{1}{1-\alpha^2}[max(0,p_n-\alpha^2), p_n]$. Then, $\underline{t}''_X(1)= p_n(\frac{1-(1-\alpha^2)\gamma}{\alpha^2})\triangleq \overline{\gamma}p_n$ and $P_{Y''}(1)= \gamma p_n s$. Let $\Gamma=\{\gamma|\gamma p_n\in  \frac{1}{1-\alpha^2}[max(0,p_n-\alpha^2), p_n]\}$. Then,}
{\begin{align*}
    &E_{\alpha^2}= \min_{\gamma\in \Gamma} \frac{1}{2}\Bigg(
    (1-\alpha^2)\left((1-\gamma p_n)\log{\frac{1-\gamma p_n}{1-p_n}}+\gamma p_n\log{\gamma}\right)
    \\&+\alpha^2\left((1-\overline{\gamma} p_n)\log{\frac{1-\overline{\gamma} p_n}{1-p_n}}+\overline{\gamma} p_n\log{\overline{\gamma}}
    \right)
    \\&
    +
    (1-p_n)\log{\frac{1-p_n}{(1-\alpha^2)(1-\gamma p_ns)(1-p_n)+\alpha^2 (1-p_n)}}
    \\&
    +
    p_n(1-s)\log{\frac{p_n(1-s)}{(1-\alpha^2)(1-\gamma p_ns)p_n+\alpha^2 p_n(1-s)}}
    \\&+ 
    p_ns\log{\frac{p_ns}{(1-\alpha^2)\gamma p^2_ns+\alpha^2 p_ns}}
    \Bigg)
    \\&
    = \min_{\gamma\in \Gamma} \frac{1}{2}\Bigg(
    (1-\alpha^2)\left((1-\gamma p_n)\log{\frac{1-\gamma p_n}{1-p_n}}+\gamma p_n\log{\gamma}\right)
    \\&+\alpha^2\left((1-\overline{\gamma} p_n)\log{\frac{1-\overline{\gamma} p_n}{1-p_n}}+\overline{\gamma} p_n\log{\overline{\gamma}}
    \right)
    \\&
    +
    (1-p_n)\log{\frac{1}{(1-\alpha^2)(1-\gamma p_ns)+\alpha^2 }}
    \\&
    +
    p_n(1-s)\log{\frac{1-s}{(1-\alpha^2)(1-\gamma p_ns)+\alpha^2 (1-s)}}
    \\&+ 
    p_ns\log{\frac{1}{(1-\alpha^2)\gamma p_n+\alpha^2 }}
    \Bigg)
\end{align*}}
{
So, 
\begin{align*}
  &  \lim_{n\to\infty}\min_{\alpha\in [0,\alpha_0]}\frac{
E_{\alpha^2}}{2(1-\alpha)p_n}
= 
\\
&\min_{\alpha\in [0,\alpha_0]}\lim_{n\to\infty}\frac{1}{4(1-\alpha)}\min_{\gamma\in \Gamma} \Bigg(
    (1-\alpha^2)\left(\frac{(1-\gamma p_n)}{p_n}\log{\frac{1-\gamma p_n}{1-p_n}}+\gamma \log{\gamma}\right)
    \\&+\alpha^2\left(\frac{(1-\overline{\gamma} p_n)}{p_n}\log{\frac{1-\overline{\gamma} p_n}{1-p_n}}+\overline{\gamma} \log{\overline{\gamma}}
    \right)
    \\&
    +
    \frac{(1-p_n)}{p_n}\log{\frac{1}{(1-\alpha^2)(1-\gamma p_ns)+\alpha^2 }}
    \\&
    +
    (1-s)\log{\frac{1-s}{(1-\alpha^2)(1-\gamma p_ns)+\alpha^2 (1-s)}}
    + 
    s\log{\frac{1}{(1-\alpha^2)\gamma p_n+\alpha^2 }}
    \Bigg)
\end{align*}}
Note that $\frac{(1-\gamma p_n)}{p_n}\log{\frac{1-\gamma p_n}{1-p_n}}= 1-\gamma+o(p_n)$. Similarly, $\frac{(1-\overline{\gamma} p_n)}{p_n}\log{\frac{1-\overline{\gamma} p_n}{1-p_n}}= 1-\overline{\gamma}+o(p_n)$, and $ \frac{(1-p_n)}{p_n}\log{\frac{1}{(1-\alpha^2)(1-\gamma p_ns)+\alpha^2 }}= (1-\alpha^2)\gamma s+o(p_n)$. So,
\begin{align*}
  &  \min_{\alpha\in [0,\alpha_0]}\lim_{n\to\infty}\frac{
E_{\alpha^2}}{2(1-\alpha)p_n}
= 
\\
&\min_{\alpha\in [0,\alpha_0]}\lim_{n\to\infty}\frac{1}{4(1-\alpha)}\min_{\gamma\in \Gamma} \Bigg(
    (1-\alpha^2)\left(1-\gamma+o(p_n)+\gamma \log{\gamma}\right)
    \\&+\alpha^2\left(1-\overline{\gamma}+o(p_n)+\overline{\gamma} \log{\overline{\gamma}}
    \right)
    +
    (1-\alpha^2)\gamma s+o(p_n)
    \\&
    +
    (1-s)\log{\frac{1-s}{(1-\alpha^2)+\alpha^2 (1-s)}}
    + 
    s\log{\frac{1}{\alpha^2 }}
    \Bigg)
    \\&
    =\min_{\alpha\in [0,\alpha_0]}
   \frac{1}{4(1-\alpha)}\min_{\gamma\in \Gamma} \Bigg(
    (1-\alpha^2)\left(1-\gamma+\gamma \log{\gamma}\right)
    +\alpha^2\left(1-\overline{\gamma}+\overline{\gamma} \log{\overline{\gamma}}
    \right)
    +
    (1-\alpha^2)\gamma s
    \\&
    +
    (1-s)\log{\frac{1-s}{(1-\alpha^2)+\alpha^2 (1-s)}}
    + 
    s\log{\frac{1}{\alpha^2 }}
    \Bigg)
     \\&
    \textcolor{black}{
   \geq
    \min_{\alpha\in [0,\alpha_0]}
   \frac{1}{4(1-\alpha)}\min_{\gamma\in [0,\frac{1}{1-\alpha^2}]} \Bigg(
    (1-\alpha^2)\left(1-\gamma+\gamma \log{\gamma}\right)
    +\alpha^2\left(1-\overline{\gamma}+\overline{\gamma} \log{\overline{\gamma}}
    \right)
    +
    (1-\alpha^2)\gamma s}
    \\&
        \textcolor{black}{
    +
    (1-s)\log{\frac{1-s}{(1-\alpha^2)+\alpha^2 (1-s)}}
    + 
    s\log{\frac{1}{\alpha^2 }\Bigg)
    }},
\end{align*}
\textcolor{black}{where the last inequality follows from $\Gamma\subseteq [0,\frac{1}{1-\alpha^2}]$.
As shown in Figure \ref{Fig:2}, the result of this optimization is greater than $\frac{s}{2}$ for all $\alpha_0\leq 0.8$.  This is proved analytically in the following. }
\begin{figure}[!h]
 \begin{center}
\includegraphics[draft=false, width=0.8\textwidth]{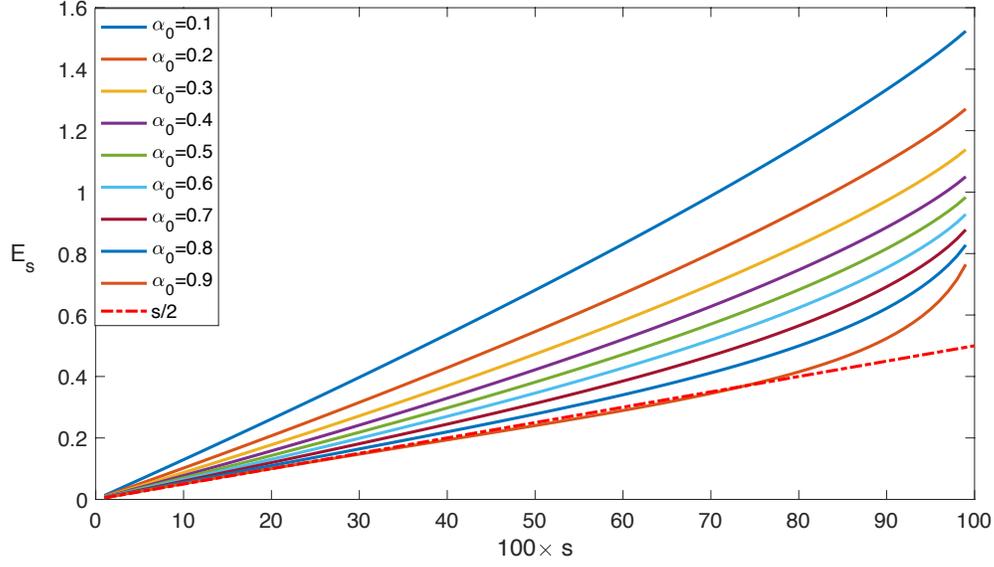}
\caption{The figure shows the value of $E_s=\lim_{n\to \infty}\min_{\alpha\in [0,\alpha_0]}\frac{E_{\alpha}}{2(1-\alpha)p_n}$ for different values of $\alpha_0$. The red-dashed line shows the $\frac{s}{2}$ values. It can be seen that if $\alpha_0\leq 0.8$, then $\frac{s}{2}<\min_{\alpha\in [0,\alpha_0]}E_{\alpha}$.}
\label{Fig:2}
\end{center}
\end{figure}

\textcolor{black}{
Note that $\overline{\gamma}=(\frac{1-(1-\alpha^2)\gamma}{\alpha^2})$, we have: 
\begin{align*}
  &  \min_{\alpha\in [0,\alpha_0]}\lim_{n\to\infty}\frac{
E_{\alpha^2}}{2(1-\alpha)p_n}
     \\&
    \textcolor{black}{
   \geq
    \min_{\alpha\in [0,\alpha_0]}
   \frac{1}{4(1-\alpha)}\min_{\gamma\in [0,\frac{1}{1-\alpha^2}]} \Bigg(
    (1-\alpha^2)\gamma \log{\gamma}
    +\alpha^2\overline{\gamma} \log{\overline{\gamma}}
    +
    (1-\alpha^2)\gamma s}
    \\&
        \textcolor{black}{
    +
    (1-s)\log{\frac{1-s}{(1-\alpha^2)+\alpha^2 (1-s)}}
    + 
    s\log{\frac{1}{\alpha^2 }\Bigg)
    }}
    \end{align*}}
\textcolor{black}{Taking the derivative of the terms which are functions of $\gamma$ and setting equal to 0, we get:
\begin{align*}
    & f(\gamma)\triangleq 
    (1-\alpha^2)\gamma \log{\gamma}
    +\alpha^2\overline{\gamma} \log{\overline{\gamma}}
    +
    (1-\alpha^2)\gamma s
    \\&
    \frac{\partial}{\partial \gamma}f(\gamma)|_{\gamma=\gamma^*}=0
    \\&
    \Rightarrow (1-\alpha^2)\log{\gamma^*}+(1-\alpha^2)-(1-\alpha^2)\log{(\frac{1-(1-\alpha^2)\gamma^*}{\alpha^2})}-(1-\alpha^2)+(1-\alpha^2)s=0
    \\&
    \Rightarrow 
    \log{\gamma^*}-\log{(\frac{1-(1-\alpha^2)\gamma^*}{\alpha^2})}+s=0
    \\&
        \Rightarrow 
    \log{(\frac{\alpha^2\gamma^*}{1-(1-\alpha^2)\gamma^*})}=-s
    \\&
    \Rightarrow 
    \gamma^*= \frac{1}{\alpha^2 e^s+1-\alpha^2}.
\end{align*}
Note that:
\begin{align*}
   \frac{\partial^2}{\partial \gamma^2}f(\gamma)=
   \frac{1-\alpha^2}{\gamma}+\frac{(1-\alpha^2)^2}{{1-(1-\alpha^2)\gamma}}>0.
\end{align*}}
\textcolor{black}{ So, $\gamma^*$ achieves the minimimum of $f(\gamma)$.
Note that $\overline{\gamma}^*= \frac{e^s}{\alpha^2 e^s+1-\alpha^2}$.
 Consequently,
\begin{align*}
  &  \min_{\alpha\in [0,\alpha_0]}\lim_{n\to\infty}\frac{
E_{\alpha^2}}{2(1-\alpha)p_n}
     \\&
  \geq
    \min_{\alpha\in [0,\alpha_0]}
  \frac{1}{4(1-\alpha)}\Bigg(
    \frac{1-\alpha^2}{\alpha^2 e^s+1-\alpha^2} \log{\frac{1}{\alpha^2 e^s+1-\alpha^2}}
    +{\frac{e^s\alpha^2}{\alpha^2 e^s+1-\alpha^2}} \log{{\frac{e^s}{\alpha^2 e^s+1-\alpha^2}}}
    +
    \\&
        \frac{1-\alpha^2}{\alpha^2 e^s+1-\alpha^2} s
    +
    (1-s)\log{\frac{1-s}{(1-\alpha^2)+\alpha^2 (1-s)}}
    + 
    s\log{\frac{1}{\alpha^2 }
    }\Bigg)
    \\&
    =    
    \min_{\alpha\in [0,\alpha_0]}
  \frac{1}{4(1-\alpha)}\Bigg(
     \log{\frac{1}{\alpha^2 e^s+1-\alpha^2}}
    +\frac{se^s\alpha^2}{\alpha^2 e^s+1-\alpha^2}
    +
    \\&
        \frac{(1-\alpha^2)s}{\alpha^2 e^s+1-\alpha^2} 
    +
    (1-s)\log{\frac{1-s}{(1-\alpha^2)+\alpha^2 (1-s)}}
    + 
    s\log{\frac{1}{\alpha^2 }
    }\Bigg)
    \\&
      =    
    \min_{\alpha\in [0,\alpha_0]}
  \frac{1}{4(1-\alpha)}\Bigg(
     \log{\frac{1}{\alpha^2 e^s+1-\alpha^2}}
    +s
    +
    (1-s)\log{\frac{1-s}{(1-\alpha^2)+\alpha^2 (1-s)}}
    + 
    s\log{\frac{1}{\alpha^2 }
    }\Bigg)
    \\&
          =    
    \min_{\alpha\in [0,\alpha_0]}
  \frac{1}{4(1-\alpha)}\Bigg(
     \log{\frac{1}{\alpha^2 e^s+1-\alpha^2}}
    +
    (1-s)\log{\frac{1-s}{(1-\alpha^2)+\alpha^2 (1-s)}}
    + 
    s\log{\frac{e}{\alpha^2 }
    }\Bigg)
    \end{align*}}
\textcolor{black}{Define the following:
\begin{align*}
    &h(\alpha,s)\triangleq
  \frac{1}{4(1-\alpha)}\Bigg(
     \log{\frac{1}{\alpha^2 e^s+1-\alpha^2}}
    +
    (1-s)\log{\frac{1-s}{(1-\alpha^2)+\alpha^2 (1-s)}}
    + 
    s\log{\frac{e}{\alpha^2 }
    }\Bigg),\\
    &f_s(\alpha)\triangleq h(\alpha,s).
\end{align*}}
\textcolor{black}{Note that for any $s\in [\frac{1}{4},\frac{1}{2}]$, the function $f_s(\alpha)$ is continuous. So, $min_{\alpha\in [0,\alpha_0]} f_s(\alpha), \frac{1}{4}<s<\frac{1}{2}$ is achieved either at $\alpha=\alpha_0$, $\alpha=0$, or at a point where $\frac{\partial f_s(\alpha)}{\partial \alpha} =0$. Also, $f_s(\alpha)\to \infty$ as $\alpha\to 0$ for $ \frac{1}{4}<s<\frac{1}{2}$.
We first show that  $\frac{\partial f_\alpha(s)}{\partial \alpha} \neq 0$ for all $\alpha\in (0,1]$ and $\frac{1}{4}<s<\frac{1}{2}$. Assume otherwise:
\begin{align*}
    &\frac{\partial f_\alpha(s)}{\partial \alpha} = \frac{1}{4(1-\alpha)}\bigg(
    \frac{-2\alpha(e^s-1)}{1+\alpha^2(e^s-1)}+\frac{2(1-s)s\alpha}{1-\alpha^2s}-\frac{2s}{\alpha}\bigg)+
    \\&
    \frac{1}{4(1-\alpha)^2}\bigg( \log{\frac{1}{\alpha^2 e^s+1-\alpha^2}}
    +
    (1-s)\log{\frac{1-s}{(1-\alpha^2)+\alpha^2 (1-s)}}
    + 
    s\log{\frac{e}{\alpha^2 }
    }\Bigg)=0
    \\&
    \iff (1-\alpha)\bigg(
    \frac{-2\alpha(e^s-1)}{1+\alpha^2(e^s-1)}+\frac{2(1-s)s\alpha}{1-\alpha^2s}-\frac{2s}{\alpha}\bigg)+
    \\&\bigg( \log{\frac{1}{\alpha^2 e^s+1-\alpha^2}}
    +
    (1-s)\log{\frac{1-s}{(1-\alpha^2)+\alpha^2 (1-s)}}
    + 
    s\log{\frac{e}{\alpha^2 }
    }\Bigg)=0,
    \end{align*}
    We argue that the last equality cannot hold:
    \begin{align*}
        &\kappa_s(\alpha)\triangleq (1-\alpha)\bigg(
    \frac{-2\alpha(e^s-1)}{1+\alpha^2(e^s-1)}+\frac{2(1-s)s\alpha}{1-\alpha^2s}-\frac{2s}{\alpha}\bigg)+
    \\&\bigg( \log{\frac{1}{\alpha^2 e^s+1-\alpha^2}}
    +
    (1-s)\log{\frac{1-s}{(1-\alpha^2)+\alpha^2 (1-s)}}
    + 
    s\log{\frac{e}{\alpha^2 }
    }\Bigg)
    \end{align*}
    Note that $\lim_{\alpha\to 0} \kappa_s(\alpha)=-\infty$. Also, $\kappa_s(\alpha)$ is an increasing function in $\alpha$ as shown below:
    \begin{align*}
      &  \frac{\partial \kappa_s(\alpha)}{\partial \alpha}= (1-\alpha) \frac{\partial}{\partial \alpha}\bigg(
    \frac{-2\alpha(e^s-1)}{1+\alpha^2(e^s-1)}+\frac{2(1-s)s\alpha}{1-\alpha^2s}-\frac{2s}{\alpha}\bigg)\geq 0
    \\&
    \iff\frac{\partial}{\partial \alpha}\bigg(
    \frac{-2\alpha(e^s-1)}{1+\alpha^2(e^s-1)}+\frac{2(1-s)s\alpha}{1-\alpha^2s}-\frac{2s}{\alpha}\bigg)\geq 0
    \\&
    \iff 
    \frac{-2(e^s-1)(1+\alpha^2(e^s-1))+4\alpha^2(e^s-1)}{(1+\alpha^2(e^s-1))^2}+ 
    \frac{2(1-s)s(1-\alpha^2s)+4\alpha^2s^2(1-s)}{(1-\alpha^2s)^2}+\frac{2s}{\alpha^2}\geq 0
    \\&
    \iff 
  \frac{2(e^s-1)(-1+\alpha^2(e^s-1))}{(1+\alpha^2(e^s-1))^2}
  +
  \frac{2(1-s)s(1+\alpha^2s)}{(1-\alpha^2s)^2}
  +\frac{2s}{\alpha^2}\geq 0
    \end{align*}
      Note that the above holds if $\frac{2(e^s-1)(-1+\alpha^2(e^s-1))}{(1+\alpha^2(e^s-1))^2}\geq 0$. So, assume that
      $\frac{2(e^s-1)(-1+\alpha^2(e^s-1))}{(1+\alpha^2(e^s-1))^2}<0$. Then,
  \begin{align*}
&
 \frac{\partial \kappa_s(\alpha)}{\partial \alpha}\geq 0 
 \Leftarrow
 {2(e^s-1)(-1+\alpha^2(e^s-1))}
  +
  \frac{2(1-s)s(1+\alpha^2s)}{(1-\alpha^2s)^2}
  +\frac{2s}{\alpha^2}\geq 0,
  \\&
   \Leftarrow
   {2(e^s-1)(-1+\alpha^2(e^s-1))}
  +
  2s(1-s)
  +\frac{2s}{\alpha^2}\geq 0.
    \end{align*}
Note that $e^s-1>0$ for $\frac{1}{4}<s<\frac{1}{2}$. So, 
\begin{align*}
  & \frac{\partial \kappa_s(\alpha)}{\partial \alpha}\geq 0 
 \Leftarrow  
 -2(e^s-1)
  +
  2s(1-s)
  +2s\geq 0,
\end{align*}}
\textcolor{black}{The last inequality holds for $\frac{1}{4}<s<\frac{1}{2}$. To see this, let $t(s) \triangleq -2(e^s-1)
  +
  2s(1-s)
  +2s$. Note that $t(\frac{1}{4}), t(\frac{1}{2})>0$ and $\frac{\partial t(s)}{\partial s}=0 \iff -2e^s=4s-4$ which holds for $s= 0.3149...$ in which case $t(s)>0$. So, i) $t(s)$ is continuous, ii) $t(s)>0$ at $s=\frac{1}{4}, s=\frac{1}{2}$, and iii) $t(s)>0$ at all of the points where its derivative with respect to $s$ is 0, so we must have $t(s)>0, \frac{1}{4}<s<\frac{1}{2}$. Consequently, $\kappa_s(\alpha)$ is increasing in $\alpha$ and $\kappa_s(\alpha)<\kappa_s(1)= \log\frac{1}{1+e^s}+s<-\log{e^s}+s=0$. So, $\frac{\partial f_s{\alpha}}{\partial \alpha }\neq 0, \alpha\in [0,\alpha_0]$. So, $\min_{[0,\alpha_0]}f_s(\alpha)$ is achieved at $\alpha=\alpha_0$:
  \begin{align*}
       &  \min_{\alpha\in [0,\alpha_0]}\lim_{n\to\infty}\frac{
E_{\alpha^2}}{2(1-\alpha)p_n}
     \\&
  \geq
  \frac{1}{4(1-\alpha_0)}\Bigg(
     \log{\frac{1}{\alpha_0^2 e^s+1-\alpha_0^2}}
    +
    (1-s)\log{\frac{1-s}{(1-\alpha_0^2)+\alpha_0^2 (1-s)}}
    + 
    s\log{\frac{e}{\alpha_0^2 }
    }\Bigg)
  \end{align*}
  }
\textcolor{black}{Next we show that $h(\alpha,s)$ is increasing in $s$ for any fixed $\alpha$: 
\begin{align*}
    &g(s)\triangleq  \log{\frac{1}{\alpha^2 e^s+1-\alpha^2}}
    +
    (1-s)\log{\frac{1-s}{(1-\alpha^2)+\alpha^2 (1-s)}}
    + 
    s\log{\frac{e}{\alpha^2 }
    }\\
    &
    \Rightarrow
    \frac{\partial}{\partial s}g(s)= 
    \frac{-\alpha^2e^s}{\alpha^2 e^s+1-\alpha^2}-
    \log{\frac{1-s}{(1-\alpha^2)+\alpha^2 (1-s)}}
    -\frac{1-\alpha^2}{(1-\alpha^2)+\alpha^2 (1-s)}+\log{\frac{e}{\alpha^2 }}
    \\&
    = 
    \frac{-\alpha^2e^s}{\alpha^2 e^s+1-\alpha^2}-
    \log{\frac{1-s}{1-\alpha^2s}}
    -\frac{1-\alpha^2}{1-\alpha^2s}+\log{\frac{e}{\alpha^2 }}
    \\&
    =
    -1+\frac{1-\alpha^2}{\alpha^2 e^s+1-\alpha^2}-1+\frac{\alpha^2(1-s)}{1-\alpha^2s}-
    \log{\frac{1-s}{1-\alpha^2s}}
    +\log{\frac{e}{\alpha^2 }},
\end{align*}
}
\textcolor{black}{
Note that $\frac{\alpha^2(1-s)}{1-\alpha^2s}-
    \log{\frac{1-s}{1-\alpha^2s}}>s>\frac{1}{4}$ since $\log{x}>1-\frac{1}{x}$ and $s<\frac{1}{2}$. Also, $\frac{1-\alpha^2}{\alpha^2 e^s+1-\alpha^2}>\frac{1-0.64}{1-0.64(1-e^{\frac{1}{2}})}$ and $\log{\frac{e}{\alpha^2}}>\log{\frac{e}{0.64}}$ since $\alpha<0.8$ and $s<\frac{1}{2}$. So,
    \begin{align*}
        \frac{\partial}{\partial s}g(s)>
        -2+ \frac{1-0.64}{1-0.64(1-e^{\frac{1}{2}})}+\frac{3}{10}+\log{\frac{e}{0.64}}>0.
    \end{align*}
}
\textcolor{black}{Let $s_i= \frac{1}{4}(1+\frac{i}{1000}), i\in [1,1001]$. Then, it can be verified numerically that $h(\alpha_0, s_i)>\frac{s_{i+1}}{2}, i\in [1,1000]$. Since $h(\alpha_0,s)$ is increasing in $s$ as shown above, we conclude that for any $s\in [\frac{1}{4},\frac{1}{2}]$, we have $h(\alpha_0,s)>h(\alpha_0, s_i)> \frac{s_{i+1}}{2}>\frac{s}{2}$ for $i$ such that $s_i<s<s_{i+1}$. } 

{Next, we investigate $  \frac{s}{2}< \lim_{n\to\infty}min_{\alpha\in [\alpha_0,\alpha_n]}\frac{
E'_{\alpha^2}}{2(1-\alpha)p_n}$. Note that $\alpha_n$ is a sequence which converges to one and can be chosen arbitrarily. We consider $\alpha_n$ such that $1-\alpha^2_n\to 0$ slower than $p_n\to 0$. More precisely, we assume that $\frac{p_n}{1-\alpha^2_n}\to 0$ as $n\to \infty$.}

{Recall that \begin{align*}
   E'_{\alpha^2}=\min_{\underline{t}'_{X,Y}\in \mathcal{P}'} \left(\frac{1-\alpha^2}{3}\right)
   D(\underline{t}'_{X,Y}||P_{X}P_{Y})+
   \alpha^2 D(\underline{t}''_{X,Y}||P_{X,Y}),
\end{align*}}
{where $\mathcal{P}'\triangleq \{ \underline{t}_{X,Y}\in \mathcal{P}_{X,Y}|\forall (x,y)\in \mathcal{X}\times \mathcal{Y}: \underline{t}_{X,Y}(x,y)\in \frac{1}{1-\alpha^2}[P_{X,Y}(x,y)-\alpha^2, P_{X,Y}(x,y)]$, $\mathcal{P}_{X,Y}$ is the probability simplex on the alphabet $\mathcal{X}\times \mathcal{Y}$, and $\underline{t}''_{X,Y}\triangleq \frac{1}{\alpha^2}(P_{X,Y}-(1-\alpha^2) \underline{t}'_{X,Y})$. Note that $\underline{t}'_{X,Y}(0,1)=0$ for all $\underline{t}'_{X,Y}\in \mathcal{P}'$ since $P_{X,Y}(0,1)=0$. So, minimizing over $\underline{t}'_{X,Y}$ requires minimizing over two free parameters $\gamma_1= \underline{t}'_{X,Y}(0,0)$ and $\gamma_2=\underline{t}'_{X,Y}(1,0)$, satisfying the following constraints:}
{\begin{align*}
    &    \max\{0,\frac{P_{X,Y}(0,0)-\alpha^2}{1-\alpha^2}\}
    \leq \gamma_1<\min\{1,\frac{P_{X,Y}(0,0)}{1-\alpha^2}\}\\
    &\max\{0,\frac{P_{X,Y}(1,0)-\alpha^2}{1-\alpha^2}\}
    \leq \gamma_2<\min\{1,\frac{P_{X,Y}(1,0)}{1-\alpha^2}\},\\
    &
   \max\{0,\frac{P_{X,Y}(1,1)-\alpha^2}{1-\alpha^2}\} \leq 1- \gamma_1-\gamma_2\leq 
   \min\{1,\frac{P_{X,Y}(1,1)}{1-\alpha^2}\}
\end{align*}}
{where $P_{X,Y}(0,0)=1-p_n$, $P_{X,Y}(1,0)= (1-s)p_n$ and $P_{X,Y}(1,1)=sp_n$.
Using the assumption that $\frac{p_n}{1-\alpha^2_n}\to 0$ as $n\to \infty$ and noting that $\alpha\leq \alpha_n$, we have $\max\{0 ,\frac{P_{X,Y}(0,0)-\alpha^2}{1-\alpha^2}\}= \max\{0, \frac{1-p_n-\alpha^2}{1-\alpha^2}\}=1-\frac{p_n}{1-\alpha^2}$, $\min\{1,\frac{P_{X,Y}(0,0)}{1-\alpha^2}\}=\min\{1,\frac{1-p_n}{1-\alpha^2}\}=1$, $\max\{0,\frac{P_{X,Y}(1,0)-\alpha^2}{1-\alpha^2}\}= \max\{0, \frac{p_n(1-s)-\alpha^2}{1-\alpha^2}\}=0$, and $\min\{1,\frac{P_{X,Y}(1,0)}{1-\alpha^2}\}= \min\{1,\frac{p_n(1-s)}{1-\alpha^2}\}=\frac{p_n(1-s)}{1-\alpha^2}$, $\max\{0,\frac{P_{X,Y}(1,1)-\alpha^2}{1-\alpha^2}\}= \max\{0,\frac{sp_n-\alpha^2}{1-\alpha^2}\}=0$, and $\min\{1,\frac{P_{X,Y}(1,1)}{1-\alpha^2}\}$ $= \min\{1,\frac{sp_n}{1-\alpha^2}\}= \frac{sp_n}{1-\alpha^2}$. To summarize:}
{\begin{align}
    &   \label{eq:con:1} 1-\frac{p_n}{1-\alpha^2}
    \leq \gamma_1<1\\
    &\label{eq:con:2}
    0
    \leq \gamma_2<\frac{p_n(1-s)}{1-\alpha^2},\\
    &\label{eq:con:3}
     1 - \frac{sp_n}{1-\alpha^2}\leq \gamma_1+\gamma_2\leq 1 
  .
\end{align}
Note that $\left(\frac{1-\alpha^2}{3}\right)
   D(\underline{t}'_{X,Y}||P_{X}P_{Y})+
   \alpha^2 D(\underline{t}''_{X,Y}||P_{X,Y})$ is convex in $\underline{t}'_{X,Y}$ as shown in the following. Let $\underline{t}'_{X,Y}= \lambda \underline{t}^1+(1-\lambda)\underline{t}^2, \lambda\in [0,1]$. Then, $\underline{t}''_{X,Y}= \frac{1}{\alpha^2}(P_{XY}-(1-\alpha^2)( \lambda \underline{t}^1+(1-\lambda)\underline{t}^2))$. So,
   \begin{align*}
     &\left(\frac{1-\alpha^2}{3}\right)
   D(\underline{t}'_{X,Y}||P_{X}P_{Y})+
   \alpha^2 D(\underline{t}''_{X,Y}||P_{X,Y})  \\&
   =\left(\frac{1-\alpha^2}{3}\right)
   D(\lambda \underline{t}^1+(1-\lambda)\underline{t}^2||P_{X}P_{Y})+
   \alpha^2 D(\frac{1}{\alpha^2}(P_{XY}-(1-\alpha^2)( \lambda \underline{t}^1+(1-\lambda)\underline{t}^2))||P_{X,Y}) 
   \\& \leq 
   \left(\frac{1-\alpha^2}{3}\right)
   (\lambda D(\underline{t}^1||P_XP_Y) +(1-\lambda)D(\underline{t}^2||P_{X}P_{Y}))
   \\&+
   \alpha^2 \left(\lambda D\left(\frac{1}{\alpha^2}(P_{XY}-(1-\alpha^2)  \underline{t}^1)||P_{XY}\right)+(1-\lambda)D\left(\frac{1}{\alpha^2}(P_{XY}-(1-\alpha^2)\underline{t}^2)||P_{XY}\right)\right),
   \end{align*}
   where we have used the convexity of  KL divergence.
   So, the minimum over $\underline{t}'_{X,Y}$ is achieved when the gradient is zero or on the endpoints of the optimization region. Next, we show that the gradient is not zero for any $\underline{t}'_{X,Y}$ for which $\gamma_1$ and $\gamma_2$ satisfy  \eqref{eq:con:1}-\eqref{eq:con:3}}. {Let us assume otherwise. That is, let $Q^*_{X,Y}$ be such that the gradient is $0$ when $\underline{t}'_{X,Y} =Q^*_{X,Y}$. Let $Q_{t,X,Y}(x,y)\triangleq Q^*_{X,Y}(x,y)+t\epsilon_{x,y}, t\in [0,1]$ for $\epsilon_{x,y}\in [-\frac{Q^*_{X,Y}(x,y)}{2},\frac{Q^*_{X,Y}(x,y)}{2}]$ such that $\sum_{x,y}\epsilon_{x,y}=0$. Note that $Q_{t,X,Y}(x,y)$ is a valid probability distribution on $\mathcal{X}\times \mathcal{Y}$. Let \[f(t)\triangleq \left(\frac{1-\alpha^2}{3}\right)
   D(Q_{t,X,Y}(x,y)||P_{X}P_{Y})+
   \alpha^2 D\left(\frac{1}{\alpha^2}\left(P_{XY}- (1-\alpha^2)Q_{t,X,Y}(x,y)\right)||P_{X,Y}\right).\]
   By construction, we must have $\frac{\partial}{\partial t}f(t)|_{t=0}=0$ for any choice of $\epsilon_{x,y}$. So,}
{   \begin{align*}
     &\left(\frac{1-\alpha^2}{3}\right)\left(
     \sum_{x,y}\epsilon_{x,y}\log{\frac{Q^*_{X,Y}(x,y)}{P_{X}(x)P_Y(y)}}+\sum_{x,y}\epsilon_{x,y}
     \right)+
    \\& \alpha^2\left(\sum_{x,y}
    \frac{-(1-\alpha^2)}{\alpha^2}\epsilon_{x,y}\log{\frac{P_{X,Y}(x,y)-(1-\alpha^2)Q^*_{X,Y}(x,y)}{\alpha^2 P_{X,Y}(x,y)}}+\frac{-(1-\alpha^2)}{\alpha^2}\sum_{x,y}\epsilon_{x,y}
    \right)=0
    \\&\Rightarrow
  ( 1-\alpha^2)\left(\frac{1}{3}
     \sum_{x,y}\epsilon_{x,y}\log{\frac{Q^*_{X,Y}(x,y)}{P_{X}(x)P_Y(y)}}
    -\sum_{x,y}\epsilon_{x,y}\log{\frac{P_{X,Y}(x,y)-(1-\alpha^2)Q^*_{X,Y}(x,y)}{\alpha^2 P_{X,Y}(x,y)}}\right)=0
    \\&\Rightarrow
    \sum_{x,y}\epsilon_{x,y}
    \log{\frac{\alpha^2P_{X,Y}(x,y){Q^*}^{\frac{1}{3}}_{X,Y}(x,y)}{(P_{X,Y}(x,y)-(1-\alpha^2)Q^*_{X,Y}(x,y))P_X^{\frac{1}{3}}(x)P_Y^{\frac{1}{3}}(y)}}=0.
   \end{align*}}
   {The above should hold for all choices of $\epsilon_{x,y}$. Consequently, 
   \begin{align*}
       \alpha^2P_{X,Y}(x,y){Q^*}^{\frac{1}{3}}_{X,Y}(x,y)
       =(P_{X,Y}(x,y)-(1-\alpha^2)Q^*_{X,Y}(x,y))P_X^{\frac{1}{3}}(x)P_Y^{\frac{1}{3}}(y), \forall x,y\in \mathcal{X}\times \mathcal{Y}
   \end{align*}}
 { \noindent From Equations \eqref{eq:con:1}-\eqref{eq:con:3} we can take $Q^*_{X,Y}(0,0)= 1-\tau_1p_n, \tau_1\in [0, \frac{1}{1-\alpha^2}]$, $Q^*_{X,Y}(1,0)= \tau_2p_n, $ $\tau_2\in [0,\frac{1-s}{1-\alpha^2}]$, and $Q^*_{X,Y}(1,1)= \tau_3p_n, \tau_3\in [0, \frac{s}{1-\alpha^2}]$ such that $\tau_1=\tau_2+\tau_3$. So,
  \begin{align*}
  &(x,y)=(0,0):   \alpha^2(1-p_n)(1-\tau_1p_n)^{\frac{1}{3}}= (1-p_n-(1-\alpha^2)(1-\tau_1p_n))(1-p_n)^{\frac{1}{3}}(1-sp_n)^{\frac{1}{3}}
      \\&
    (x,y)=(1,0):   \alpha^2(1-s)p_n(\tau_2p_n)^{\frac{1}{3}}= ((1-s)p_n-(1-\alpha^2)\tau_2p_n)p_n^{\frac{1}{3}}(1-sp_n)^{\frac{1}{3}}
      \\&
    (x,y)=(1,1):   \alpha^2sp_n(\tau_3p_n)^{\frac{1}{3}}= (sp_n-(1-\alpha^2)\tau_3p_n)p_n^{\frac{1}{3}}(sp_n)^{\frac{1}{3}}
  \end{align*}
  So,
    \begin{align*}
  &\alpha^2(1-p_n)(1-\tau_1p_n)^{\frac{1}{3}}= (1-p_n-(1-\alpha^2)(1-\tau_1p_n))(1-p_n)^{\frac{1}{3}}(1-sp_n)^{\frac{1}{3}}
      \\&
  \alpha^2(1-s)p_n\tau_2^{\frac{1}{3}}= ((1-s)p_n-(1-\alpha^2)\tau_2p_n)(1-sp_n)^{\frac{1}{3}}
      \\&
  \alpha^2sp_n\tau_3^{\frac{1}{3}}= (sp_n-(1-\alpha^2)\tau_3p_n)(sp_n)^{\frac{1}{3}}
  \end{align*}}
 { Note that $(1-\tau_1p_n)^{\frac{1}{3}}= 1-\frac{1}{3}\tau_1 p_n+ o(p_n)$, $(1-sp_n)^{\frac{1}{3}}= 1-\frac{1}{3}s p_n+ o(p_n)$, $(1-p_n)^{\frac{1}{3}}=1-\frac{1}{3}p_n+o(p_n)$. Consequently,
  \begin{align*}
& \alpha^2(1-p_n)(1-\frac{1}{3}\tau_1p_n)= (1-p_n-(1-\alpha^2)(1-\tau_1p_n))(1-\frac{1}{3}p_n)(1-\frac{1}{3}sp_n)+o(p_n)
      \\&
 \alpha^2(1-s)p_n\tau_2^{\frac{1}{3}}= ((1-s)p_n-(1-\alpha^2)\tau_2p_n)(1-\frac{1}{3}sp_n)+o(p_n)
      \\&
    \alpha^2sp_n\tau_3^{\frac{1}{3}}= (sp_n-(1-\alpha^2)\tau_3p_n)(sp_n)^{\frac{1}{3}}
  \end{align*}
  So,
  \begin{align*}
      & \alpha^2\frac{(1-p_n)(1-\frac{1}{3}\tau_1p_n)}{(1-\frac{1}{3}p_n)(1-\frac{1}{3}sp_n)}= 1-p_n-(1-\alpha^2)(1-\tau_1p_n)+o(p_n)
      \\&
 \alpha^2\frac{(1-s)p_n\tau_2^{\frac{1}{3}}}{(1-\frac{1}{3}sp_n)}= (1-s)p_n-(1-\alpha^2)\tau_2p_n+o(p_n)
      \\&
    \alpha^2{s^{\frac{2}{3}}p_n^{\frac{2}{3}}\tau_3^{\frac{1}{3}}}= sp_n-(1-\alpha^2)\tau_3p_n
  \end{align*}}
{  Note that if $\tau_3\neq 0$, then from the last inequality, we get $p_n^{-\frac{1}{3}}=
    \frac{s-(1-\alpha^2)\tau_3}{\alpha^2s^{\frac{2}{3}}\tau_3^{\frac{1}{3}}}$ which is a contradiction since   the left hand side goes to infinity as $n\to \infty$. Since we must have $\tau_1=\tau_2+\tau_3$ and $\tau_3=0$, we get $\tau_1=\tau_2$. Also, note that $\frac{(1-p_n)(1-\frac{1}{3}\tau_1p_n)}{(1-\frac{1}{3}p_n)(1-\frac{1}{3}sp_n)}= 1-p_n-\frac{1}{3}\tau_1p_n+\frac{1}{3}p_n+\frac{1}{3}sp_n+o(p_n)= 1-\frac{1}{3}(2+\tau_1-s)p_n+o(p_n)$. Similarly, $\frac{(1-s)p_n\tau_1^{\frac{1}{3}}}{(1-\frac{1}{3}sp_n)}= (1-s)p_n\tau_1^{\frac{1}{3}}(1+\frac{1}{3}sp_n)+o(p_n)=(1-s)p_n\tau_1^{\frac{1}{3}}+o(p_n)$. So, }
{      \begin{align*}
      & \alpha^2(1-\frac{1}{3}(2+\tau_1-s)p_n)= 1-p_n-(1-\alpha^2)(1-\tau_1p_n)+o(p_n)
      \\&
 \alpha^2((1-s)p_n\tau_1^{\frac{1}{3}})= (1-s)p_n-(1-\alpha^2)\tau_1p_n+o(p_n)
  \end{align*}
We have:
 \begin{align}
 &\nonumber \alpha^2(1-\frac{1}{3}(2+\tau_1-s)p_n)= 1-p_n-(1-\alpha^2)(1-\tau_1p_n)+o(p_n)
 \Rightarrow
      \\&\nonumber
      -\frac{\alpha^2}{3}(2+\tau_1-s)p_n= -p_n+(1-\alpha^2)\tau_1p_n+o(p_n)
      \\&\nonumber
      \Rightarrow
     -\frac{\alpha^2}{3}(2+\tau_1-s)= -1+(1-\alpha^2)\tau_1+o(1)
     \\&\label{eq:tau1}
     \Rightarrow 
     \tau_1= \frac{1-\frac{\alpha^{2}}{3}(2-s)}{1-\frac{2\alpha^2}{3}}+o(1),
     \end{align}}
{     and
     \begin{align}
 &\nonumber \alpha^2((1-s)\tau_1^{\frac{1}{3}})= (1-s)-(1-\alpha^2)\tau_1+o(1)
  \\& 
  \nonumber\Rightarrow(\alpha^2\tau_1^{\frac{1}{3}}-1)(1-s)= -(1-\alpha^2)\tau_1+o(1)
  \\&\label{eq:tau2}
  \Rightarrow 
  (1-\alpha^2\tau_1^{\frac{1}{3}})(1-s)= (1-\alpha^2)\tau_1+o(1)
 \end{align}
 Note that for $0<s$, we have $2-s<2$. So, $1-\frac{\alpha^{2}}{3}(2-s)>1-\frac{2\alpha^2}{3}$, so from Equation \eqref{eq:tau1} we have $\tau_1>1$. So, $(1-\alpha^2\tau_1^{\frac{1}{3}})<(\alpha^2-1)$ and $(1-s)<1<\tau_1$ which contradicts Equation \eqref{eq:tau2}.}
  
   {So, since the gradient is always non-zero, the minimum should be achieved at the corner points of the optimization region characterized by Equations \eqref{eq:con:1}-\eqref{eq:con:3}. That is, the minimum is achieved at one of the following four points:}
   {
   \begin{align*}
       &\gamma^{(1)}_1=1, \gamma^{(1)}_2=0\\
       &\gamma^{(2)}_1=1-\frac{p_n(1-s)}{1-\alpha^2}, \gamma^{(2)}_2=\frac{p_n(1-s)}{1-\alpha^2}\\
        &\gamma^{(3)}_1=1-\frac{p_n}{1-\alpha^2}, \gamma^{(3)}_2=\frac{p_n(1-s)}{1-\alpha^2}\\
       &\gamma^{(4)}_1=1-\frac{sp_n}{1-\alpha^2}, \gamma^{(4)}_2=0\\
   \end{align*}
   We investigate each point separately:}
 {  \begin{align*}
       (\gamma^{(1)}_1,\gamma^{(1)}_2):&
       E'_{\alpha^2}=\left(\frac{1-\alpha^2}{3}\right)
   D(\underline{t}'_{X,Y}||P_{X}P_{Y})+
   \alpha D(\underline{t}''_{X,Y}||P_{X,Y})
   \geq
   \left(\frac{1-\alpha^2}{3}\right)
   D(\underline{t}'_{X,Y}||P_{X}P_{Y})
   \\&= \left(\frac{1-\alpha^2}{3}\right)
   \log\frac{1}{P_X(0)P_Y(0)}
   =\left(\frac{1-\alpha^2}{3}\right)\log{\frac{1}{(1-p_n)(1-sp_n)}} 
\\&   \Rightarrow \lim_{n\to \infty}\min_{\alpha\in [\alpha_0,\alpha_n]} \frac{E_{\alpha^2}}{2(1-\alpha)p_n}\geq \lim_{n\to \infty}\min_{\alpha\in [\alpha_0,\alpha_n]}\frac{1+\alpha}{6}(1+s)+ o(1)
\\&\geq \frac{(1+\alpha_0)(1+s)}{6}>\frac{s}{2}\text{ for } s<\frac{1}{2}, \alpha_0=0.8.
   \end{align*}}
 {  \begin{align*}
       (\gamma^{(2)}_1,\gamma^{(2)}_2):& \underline{t}'_{X,Y}(0,0)= \gamma^{(2)}_1 \Rightarrow \underline{t}''_{X,Y}(0,0)=\frac{1}{\alpha^2}(P_{X,Y}(0,0)-(1-\alpha^2)\gamma^{(2)}_1)
       \\& = \frac{1}{\alpha^2}(1-p_n-(1-\alpha^2)+p_n(1-s))= \frac{1}{\alpha^2}(\alpha^2-p_ns)= 1-\frac{p_ns}{\alpha^2},
       \\&
       \underline{t}''_{X,Y}(1,0)=\frac{1}{\alpha^2}(P_{X,Y}(1,0)-(1-\alpha^2)\gamma^{(2)}_2)
       \\& = \frac{1}{\alpha^2}((1-s)p_n-p_n(1-s))= 0,
       \\& \Rightarrow  \underline{t}''_{X,Y}(1,1)=1- (1-\frac{p_ns}{\alpha^2})= \frac{p_ns}{\alpha^2}.
       \\&E'_{\alpha^2}\geq \alpha^2D(\underline{t}''_{X,Y}||P_{X,Y})
       \\&= \alpha^2\left(\left(1-\frac{p_ns}{\alpha^2}\right)\log{\frac{1-\frac{p_ns}{\alpha^2}}{1-p_n}}+ \frac{p_ns}{\alpha^2}\log{\frac{p_ns}{\alpha^2 p_ns}}\right) 
       \\&\Rightarrow
       \lim_{n\to \infty}\min_{\alpha\in [\alpha_0,\alpha_n]} \frac{E_{\alpha^2}}{2(1-\alpha)p_n}\\&\geq
       \min_{\alpha\in [\alpha_0,\alpha_n]}\frac{\alpha^2-s}{2(1-\alpha)}+ \frac{s}{2(1-\alpha)}\log{\frac{1}{\alpha^2}}+o(1)\geq \frac{s}{2} \text{ for } \alpha_0=0.8, s<\frac{1}{2}.
   \end{align*}}
{\begin{align*}
     (\gamma^{(3)}_1,\gamma^{(3)}_2):& \underline{t}'_{X,Y}(0,0)= \gamma^{(3)}_1 \Rightarrow \underline{t}''_{X,Y}(0,0)=\frac{1}{\alpha^2}(P_{X,Y}(0,0)-(1-\alpha^2)\gamma^{(3)}_1)=1
     \\& 
     E'_{\alpha^2}\geq \alpha^2D(\underline{t}''_{X,Y}||P_{X,Y})= \alpha^2\log{\frac{1}{1-p_n}}
     \\&\Rightarrow 
      \lim_{n\to \infty}\min_{\alpha\in [\alpha_0,\alpha_n]} \frac{E_{\alpha^2}}{2(1-\alpha)p_n}\\&\geq \lim_{n\to \infty}\min_{\alpha\in [\alpha_0,\alpha_n]}\frac{\alpha^2}{2(1-\alpha)}+o(1)>\frac{s}{2} \text{ for } \alpha_0>0.8, s<\frac{1}{2}.
     \end{align*}}
     {
       \begin{align*}
       (\gamma^{(4)}_1,\gamma^{(4)}_2):& \underline{t}'_{X,Y}(0,0)= \gamma^{(4)}_1 \Rightarrow \underline{t}''_{X,Y}(0,0)=\frac{1}{\alpha^2}(P_{X,Y}(0,0)-(1-\alpha^2)\gamma^{(4)}_1)
       \\& = \frac{1}{\alpha^2}(1-p_n-(1-\alpha^2)+p_ns)= \frac{1}{\alpha^2}(\alpha^2-p_n(1-s))= 1-\frac{p_n(1-s)}{\alpha^2},
       \\&
       \underline{t}''_{X,Y}(1,0)=\frac{1}{\alpha^2}(P_{X,Y}(1,0)-(1-\alpha^2)\gamma^{(4)}_2)
       \\& = \frac{p_n(1-s)}{\alpha^2},
       \\&\underline{t}''_{X,Y}(1,0)=0,
       \\&\Rightarrow 
       E'_{\alpha^2}\geq \alpha^2D(\underline{t}''_{X,Y}||P_{X,Y})
       = \alpha^2\left((1-\frac{p_n(1-s)}{\alpha^2})\log{\frac{1-\frac{p_n(1-s)}{\alpha^2}}{1-p_n}}+ \frac{p_n(1-s)}{\alpha^2}\log{\frac{1}{\alpha^2}}\right)
       \\&
       \Rightarrow 
        \lim_{n\to \infty}\min_{\alpha\in [\alpha_0,\alpha_n]} \frac{E_{\alpha^2}}{2(1-\alpha)p_n}\\&\geq
        \frac{\alpha^2-(1-s)}{2(1-\alpha)}+
        \frac{1-s}{2(1-\alpha)}\log{\frac{1}{\alpha^2}}>\frac{s}{2}, \text{ for } 0<s<\frac{1}{2}.
\end{align*}
So, if $0<s<\frac{1}{2}$ and $\alpha_0=0.8$, then
\begin{align*}
    \frac{s}{2}< \lim_{n\to\infty}min_{\alpha\in [0,\alpha_0]}\frac{
E_{\alpha^2}}{2(1-\alpha)p_n},
\text{ and }\quad
    \frac{s}{2}< \lim_{n\to\infty}min_{\alpha\in [\alpha_0,\alpha_n]}\frac{
E'_{\alpha^2}}{2(1-\alpha)p_n}.
\end{align*}
This completes the proof.} 

\section{Proof of Theorem \ref{th:3}}
\label{app:th3}

 \begin{table}[]
  \centering
  \setlength\extrarowheight{5pt}
\begin{tabular}{|ll|ll|}
\hline
 $U_{\sigma,\mathcal{C}_i,\mathcal{C}_j}$: & UT/adj. matrix between $\mathcal{C}_i$ and $\mathcal{C}_j$ in $g^1$ & $\widehat{\Sigma}_{\mathcal{C},\mathcal{C}'}$:  & Possible Output Labelings of TM \\ 
 \hline
  $U'_{\sigma,\mathcal{C}_i,\mathcal{C}_j}$: & UT/adj. matrix between $\mathcal{C}_i$ and $\mathcal{C}_j$ in $g^2$   &$\mathcal{E}$: & Lebelings with incorrect labels $\geq n\lambda_n$   \\ \hline
 $\sigma^1,\sigma^2,\sigma^;$: & labelings &$n_i$: &  size of $\mathcal{C}_i$\\ \hline
\end{tabular}
\vspace{0.1in}
\caption{Notation Table: Appendix \ref{app:th3}}
\label{tab:J}
\vspace{-0.2in}
\end{table}

Let $\epsilon_n= O(\frac{\log{n}}{n})$ be a sequence of positive numbers. Fix $n\in \mathbb{N}$ and let $\epsilon=\epsilon_n$. For a given labeling $\sigma''$, define the event $\mathcal{B}_{\sigma''}$ as the event that the sub-matrices corresponding to each community pair are jointly typical:
\begin{align*}
&\mathcal{B}_{\sigma''}:
 (U_{\sigma,\mathcal{C}_i,\mathcal{C}_i}, U'_{\sigma'',\mathcal{C}'_i,\mathcal{C}'_i})\in \mathcal{A}_{\epsilon}^{\frac{n_i(n_i-1)}{2}}(P_{X,X'|{\mathcal{C}}_i, {\mathcal{C}}_i,{\mathcal{C}}'_i, {\mathcal{C}}'_i}),\forall i\in [c],
 \\&
 (G_{\sigma,\mathcal{C}_i,\mathcal{C}_j}, \widetilde{G'}_{\sigma'',\mathcal{C}'_i,\mathcal{C}'_j})\in \mathcal{A}_{\epsilon}^{n_i\cdot n_j}(P_{X,X'|{\mathcal{C}}_i, {\mathcal{C}}_j,{\mathcal{C}}'_i, {\mathcal{C}}'_j}),\forall i,j\in [c], i\neq j
 \},
\end{align*}
Particularly, $\beta_{\sigma'}$ is the event that the sub-matrices are jointly typical under the canonical labeling for the second graph. From standard typicality arguments it follows that:
\begin{align*}
P(\mathcal{B}_{\sigma'})\to 1 \quad \text{as}\quad n\to \infty.
\end{align*}
So, $P(\widehat{\Sigma}_{\mathcal{C}.\mathcal{C}'}=\phi)\to 0$ as $n\to \infty$ since the correct labeling is a member of the set $\widehat{\Sigma}_{\mathcal{C}.\mathcal{C}'}$.
Let $(\lambda_n)_{n\in \mathbb{N}}$ be an arbitrary sequence of numbers such that $\lambda_n= \Theta(n)$. We will show that the probability that a labeling in $\widehat{\Sigma}_{\mathcal{C}.\mathcal{C}'}$ labels $\lambda_n$ vertices incorrectly goes to $0$ as $n\to \infty$.
Define the following:
\begin{align*}
 \mathcal{E}=\{{\sigma'}^2\Big| ||\sigma^2-{\sigma'}^2||_1\geq \lambda_n\},
\end{align*}
where $||\cdot||_1$ is the $L_1$-norm. The set $\mathcal{E}$ is the set of all labelings which match more than $\lambda_n$ vertices incorrectly. \textcolor{black}{We have summarized the notation in Table \ref{tab:J}.}

We show the following:
\begin{align*}
 P(\mathcal{E}\cap \widehat{\Sigma}_{\mathcal{C}.\mathcal{C}'}\neq \phi)\to 0, \qquad \text{as} \qquad n\to \infty.
 \end{align*}
We use the union bound on the set of all permutations along with Corollary \ref{th:cor:1} as follows:
\begin{align*}
  &P(\mathcal{E}\cap \widehat{\Sigma}_{\mathcal{C}.\mathcal{C}'}\neq \phi)
  = P(\bigcup_{{\sigma''}: ||\sigma'-{\sigma''}||_1\geq \lambda_n}\{{\sigma''}\in  \widehat{\Sigma}_{\mathcal{C}.\mathcal{C}'}\})
  \stackrel{(a)}{\leq} \sum_{k=\lambda_n}^{n}\sum_{{\sigma''}: ||\sigma'-{\sigma''}||_1=k} P(\sigma''\in  \widehat{\Sigma}_{\mathcal{C}.\mathcal{C}'})
  \\&\stackrel{(b)}{=} \sum_{k=\lambda_n}^{n}\sum_{{\sigma''}: ||\sigma'-{\sigma'}''||_1=k}
   P(\beta_{\sigma''})
\stackrel{(c)}{\leq} \sum_{k=\lambda_n}^{n}\sum_{{\sigma'}^2: ||\sigma^2-{\sigma'}^2||_0=k} \textcolor{black}{\exp_2\Big(O(nlog{n}))\Big)} \times
\\&   \prod_{i,j\in [c], i< j}
   \textcolor{black}{\exp_2\Big(-\frac{n_i\cdot n_j}{3}(D(P_{X,X'|\mathcal{C}_i,\mathcal{C}_j,\mathcal{C}'_i,\mathcal{C}'_j}
 ||(1-\beta_{i,j})P_{X|\mathcal{C}_i,\mathcal{C}_j}P_{X'|\mathcal{C}'_i,\mathcal{C}'_j}+ \beta_{i,j} P_{X,X'|\mathcal{C}_i,\mathcal{C}_j,\mathcal{C}'_i,\mathcal{C}'_j}))\Big)}
 \\&\times 
  \prod_{i\in [c]}
 \textcolor{black}{\exp_2\Big(-\frac{n_i(n_i-1)}{6}(D(P_{X,X'|\mathcal{C}_i,\mathcal{C}_i,\mathcal{C}'_i,\mathcal{C}'_i}
 ||(1-\beta_i)P_{X|\mathcal{C}_i,\mathcal{C}_i}P_{X'|\mathcal{C}'_i,\mathcal{C}'_i}+ \beta_i P_{X,X'|\mathcal{C}_i,\mathcal{C}_i,\mathcal{C}'_i,\mathcal{C}'_i})\Big)} 
   \\&\stackrel{(d)}{\leq}  \sum_{k=\lambda_n}^{n} {n \choose k}(!k)
 \max_{[\alpha_i]_{i\in [c]}\in \mathcal{A}}\left( \textcolor{black}{\exp_2\Big(-\frac{n^2}{3}(\Phi([\alpha_i]_{i\in [c]})
 +O(\frac{\log{n}}{n}))
 \Big)}\right)
  \\&\leq 
  \max_{\alpha\in [0,1-\frac{\lambda_n}{n}]} 
  \max_{[\alpha_i]_{i\in [c]}}\left( \textcolor{black}{\exp_2\Big(-\frac{n^2}{3}(-(1-\alpha)\frac{\log{n}}{n}+\Phi([\alpha_i]_{i\in [c]})
 +O(\frac{\log{n}}{n}))
 \Big)}\right),
 \end{align*}
where $\mathcal{A}= \{([\alpha_i]_{i\in [c]}): \alpha_i\leq \frac{n_i}{n}, \sum_{i\in [c]}\alpha_i=\frac{n-\lambda_n}{n}\}$ and
\begin{align*}
 &   \Phi([\alpha_i]_{i\in [c]})=
 \sum_{i,j\in [c], i< j}n_in_j \cdot 
 D(P_{X,X'|\mathcal{C}_i,\mathcal{C}_j,\mathcal{C}'_i,\mathcal{C}'_j}
  || (1-\beta_{i,j}) P_{X|\mathcal{C}_i,\mathcal{C}_j} P_{X'|\mathcal{C}'_i,\mathcal{C}'_j}
 + \beta_{i,j} P_{X,X'|\mathcal{C}_i,\mathcal{C}_j,\mathcal{C}'_i,\mathcal{C}'_j})
  \\&+\sum_{i\in [c]}
\frac{n_i(n_i-1)}{2}
 D(P_{X,X'|\mathcal{C}_i,\mathcal{C}_i,\mathcal{C}'_i,\mathcal{C}'_i}
 ||(1-\beta_i)P_{X|\mathcal{C}_i,\mathcal{C}_i}P_{X'|\mathcal{C}'_i,\mathcal{C}'_i}+ \beta_i P_{X,X'|\mathcal{C}_i,\mathcal{C}_i,\mathcal{C}'_i,\mathcal{C}'_i}),
\end{align*}
 and $\beta_{i,j}= \frac{n^2}{n_in_j}\alpha_i\alpha_j$ and $\beta_i= \frac{n\alpha_i(n\alpha_i-1)}{n_i(n_i-1)}$. Here, $\alpha_i$ is the number of fixed points in the $i^{th}$ community divided by $n$, and $\beta_i$ is the number of fixed points in $U'_{\sigma'',\mathcal{C}'_i,\mathcal{C}'_i}$ divided by $\frac{n_i(n_i-1)}{2}$, and $\beta_{i,j}$ is the number of fixed points in $U'_{\sigma'',\mathcal{C}'_i,\mathcal{C}'_j}$ divided by $n_i n_j$. Inequality (a) follows from the union bound, (b) follows from the definition of $ \widehat{\Sigma}_{\mathcal{C}.\mathcal{C}'}$, in (c) we have used Corollary \ref{th:cor:1}, in (d) we have denoted the number of derangement of sequences of length $i$ by $!i$. Note that the right hand side in the (d) goes to 0 as $n\to \infty$ as long as \eqref{eq:th31} holds.
\qedsymbol

'\section{Proof of Theorem \ref{th:7}}
\label{app:th7}
The proof build upon the arguments used in the proof of Theorem \ref{th:2}.
First, note that for the correct labeling the UTs are jointly typical with probability approaching one as $n\to \infty$. So, $P(\widehat{\Sigma}=\phi)\to 0$ as $n\to \infty$ since the correct labeling is a member of the set $\widehat{\Sigma}$.
On the other hand, the probability that a labeling in $\widehat{\Sigma}$ labels $n(1-\alpha_n)$ vertices incorrectly goes to $0$ as $n\to \infty$. 
Define the following:
\begin{align*}
 \mathcal{E}=\{({\sigma'}_{i})_{i\in [m]}\Big| ||({\sigma'}_{i})_{i\in [m]}-({\sigma}_{i})_{i\in [m]}||_0\geq n(1-\alpha_n)\},
\end{align*}
where $||\cdot||_0$ is the $L_0$-norm. Without loss of generality, we assume that the labeling for the denonymized graph is the trivial labeling, i.e. $\sigma_{1}=\sigma'_{1}=id(\cdot)$, where $id(\cdot)$ is the identity function. 
The set $\mathcal{E}$ is the set of all labelings which match more than $n\alpha_n$ vertices incorrectly. We show the following:
\begin{align*}
 P(\mathcal{E}\cap \widehat{\Sigma}\neq \phi)\to 0, \qquad \text{as} \qquad n\to \infty.
 \end{align*}
We partition the set $\mathcal{E}$ into subsets of Bell permutation vectors with the same parameters. We define the following:
\begin{align*}
\mathcal{E}_{i_1,i_2,\cdots, i_{b_m}}= \{({\sigma'}_{i})_{i\in [m]}\Big| ({\sigma'}_{i})_{i\in [m]} \text{ is a $(i_1,i_2,\cdots, i_{b_m})$-Bell permutation vector}\},
\end{align*}
where $i_1,i_2,\cdots, i_{b_m}\in [0,n]$ and $\sum_{k\in [b_m]}i_{\mathcal{P}_{k}}=n$. Then the family of sets $\{\mathcal{E}_{i_1,i_2,\cdots, i_{b_m}}: \sum_{k\in [b_m]}i_{\mathcal{P}_{k}}=n\}$ partitions the set $\mathcal{E}$. 

Note that similar to the proof of Theorem \ref{th:2}, we have:
\begin{align*}
  &P(\mathcal{E}\cap \widehat{\Sigma}\neq \phi)
  = P\left(\bigcup_{{(\sigma'_i)_{i\in [m]}}: ||\sigma^2-{\sigma'}^2||_0\geq n(1-\alpha_n)}\{{\sigma'}^2\in  \widehat{\Sigma}\}\right)
  \\&P(\mathcal{E}\cap \widehat{\Sigma}\neq \phi)
  = P\left(\bigcup_{\substack{(i_1,i_2,\cdots, i_{b_m}):\\ \sum_{k\in [b_m]}i_k=n}}
  \qquad 
  \bigcup_{\substack{
  {(\sigma'_i)_{i\in [m]}}\in \mathcal{E}_{i_1,i_2,\cdots, i_{b_m}}:\\ ||\sigma^2-{\sigma'}^2||_0\geq n(1-\alpha_n)}}
  \{{\sigma'}^2\in  \widehat{\Sigma}\}\right)
  \\&\leq 
  \sum_{\substack{(i_1,i_2,\cdots, i_{b_m}):\\ \sum_{k\in [b_m]}i_k=n}}
  \qquad 
  \sum_{\substack{
  {(\sigma'_i)_{i\in [m]}}\in \mathcal{E}_{i_1,i_2,\cdots, i_{b_m}}:\\ ||\sigma^2-{\sigma'}^2||_0\geq n(1-\alpha_n)}}
 P\left(\{{\sigma'}^2\in  \widehat{\Sigma}\}\right)
 \\&\leq  \sum_{\substack{(i_1,i_2,\cdots, i_{b_m}):\\ \sum_{k\in [b_m]}i_k=n}}
  \qquad 
  \sum_{\substack{
  {(\sigma'_i)_{i\in [m]}}\in \mathcal{E}_{(i_1,i_2,\cdots, i_{b_m})}:\\ ||\sigma^2-{\sigma'}^2||_0\geq n(1-\alpha_n)}}
  \textcolor{black}{\exp_2\Big(-\frac{\frac{n(n-1)}{2}}{(m(m-1)+1)(b_m-1)}(D(P_{X^m}
 ||\sum_{k\in [b_m]}\frac{i'_k}{\frac{n(n-1)}{2}}
 P_{X_{\mathcal{P}_k}})-}
 \\&
 \textcolor{black}{\epsilon\prod_{j\in [m]}|\mathcal{X}_j|+O(\frac{\log{n}}{n^2}))\Big)}
 \\&= \sum_{\substack{(i_1,i_2,\cdots, i_{b_m}):\\ \sum_{k\in [b_m]}i_k=n, i_{b_m}\geq n(1-\alpha_n)}}
 N_{i_1,i_2,\cdots,i_{b_m}}
 \textcolor{black}{\exp_2\Big(-\frac{\frac{n(n-1)}{2}}{(m(m-1)+1)(b_m-1)}(D(P_{X^m}
 ||\sum_{k\in [b_m]}\frac{i'_k}{\frac{n(n-1)}{2}}
 P_{X_{\mathcal{P}_k}})-}
 \\&
 \textcolor{black}{\epsilon\prod_{j\in [m]}|\mathcal{X}_j|+O(\frac{\log{n}}{n^2}))\Big)}
 \\&\leq
 \sum_{\substack{(i_1,i_2,\cdots, i_{b_m}):\\\sum_{k\in [b_m]}i_k=n, i_{b_m}\geq n(1-\alpha_n)}}
 \textcolor{black}{\exp_2\Big(n\log{n}(\sum_{k\in [b_m]}|\mathcal{P}_k|\frac{i_k}{n}-1)+O(n\log{n})\Big)}\times
 \\&\textcolor{black}{\exp_2\Big(-\frac{\frac{n(n-1)}{2}}{(m(m-1)+1)(b_m-1)}(D(P_{X^m}
 ||\sum_{k\in [b_m]}\frac{i'_k}{\frac{n(n-1)}{2}}
 P_{X_{\mathcal{P}_k}})-\epsilon\prod_{j\in [m]}|\mathcal{X}_j|+O(\frac{\log{n}}{n^2}))\Big)}
  \\& 
  \leq \sum_{\substack{(i_1,i_2,\cdots, i_{b_m}):\\\sum_{k\in [b_m]}i_k=n, i_{b_m}\geq n(1-\alpha_n)}}
 \textcolor{black}{\exp_2\Big(n\log{n}(\sum_{k\in [b_m]}|\mathcal{P}_k|\frac{i_k}{n}-1)+O( n\log{n})
 -}
 \\&
 \textcolor{black}{\frac{\frac{n(n-1)}{2}}{(m(m-1)+1)(b_m-1)}(D(P_{X^m}
 ||\sum_{k\in [b_m]}\frac{i'_k}{\frac{n(n-1)}{2}}
 P_{X_{\mathcal{P}_k}})-\epsilon\prod_{j\in [m]}|\mathcal{X}_j|+O(\frac{\log{n}}{n^2}))\Big)},
\end{align*}
where $i'_k= \frac{i_k(i_k-1)}{2}+ \sum_{k',k'':\mathcal{P}_{k',k''}=\mathcal{P}_l}i_{k'}i_{k''}$, and $\mathcal{P}_{k',k''}= \{\mathcal{A}'\cap \mathcal{A}'': \mathcal{A}'\in \mathcal{P}_{k'}, \mathcal{A}''\in \mathcal{P}_{k''}\}, k',k''\in [b_m]$.
 Note that the right hand side in the last inequality approaches 0 as $n\to \infty$ as long as:
\begin{align*}
    &n\log{n}(\sum_{k\in [b_m]}|\mathcal{P}_k|\alpha_k-1)\leq \frac{n(n-1)}{2(b_m-1)(m(m-1)+1)}(D(P_{X^m}
 ||\sum_{k\in [b_m]}\alpha'_k
 P_{X_{\mathcal{P}_k}})-\epsilon\prod_{j\in [m]}|\mathcal{X}_j|+O(\frac{\log{n}}{n^2}))
 \\& \!\leftrightarrow 
 \log{n}(\sum_{k\in [b_m]}|\mathcal{P}_k|\alpha_k-1)\leq \frac{(n-1)}{2(b_m-1)(m(m-1)+1)}(D(P_{X^m}
 ||\!\sum_{k\in [b_m]}\alpha'_k
 P_{X_{\mathcal{P}_k}})-\!\epsilon\!\prod_{j\in [m]}|\mathcal{X}_j|+O(\frac{\log{n}}{n^2})), 
\end{align*}
for all $\alpha_1,\alpha_2,\cdots, \alpha_{b_m}:\sum_{k\in [b_m]}\alpha_k=n, \alpha_{b_m} \in [1,1-\alpha_n]$,
where we have defined $\alpha'_k=\frac{i'_k}{\frac{n(n-1)}{2}}$. The last equation is satisfied by the theorem assumption for small enough $\epsilon$. 
\qedsymbol

\section{Proof of Theorem \ref{th:converse}}
\label{app:conv}

Let $n\in \mathcal{N}$, and $g$ and $g'$ be the adjacency matrices of the two graphs under a pre-defined labeling. Let $\hat{\sigma}$ be the output of the matching algorithm. Let $\mathbbm{1}_{C}$ be the indicator of the event that the matching algorithm mislabels at most $\epsilon_n$  fraction of the vertices with probability at least $P_e$, where $\epsilon_n, P_e\to 0$ as $n\to \infty$.  Note that $\hat{\sigma}$ is a function of $\sigma',g,g'$. So:

\begin{align*}
    &0=H(\hat{\sigma}|\sigma,g,g')
    \stackrel{(a)}{=} 
    H(\sigma',\hat{\sigma},\mathbbm{1}_C|\sigma,g,g')- H(\sigma', \mathbbm{1}_C| \hat{\sigma},\sigma,g,g')
    =  H(\sigma',\hat{\sigma},\mathbbm{1}_C|\sigma,g,g')-
    \\&
    H(\sigma' |\mathbbm{1}_C, \hat{\sigma},\sigma,g,g')
    - H( \mathbbm{1}_C| \hat{\sigma},\sigma,g,g')
    \stackrel{(b)}{\geq}
      H(\sigma',\hat{\sigma},\mathbbm{1}_C|\sigma,g,g')-
    H(\sigma' |\mathbbm{1}_C, \hat{\sigma},\sigma,g,g')-1=
    \\&
     H(\sigma',\hat{\sigma},\mathbbm{1}_C|\sigma,g,g')-
    P(\mathbbm{1}_C=1) H(\sigma' |\mathbbm{1}_C=1, \hat{\sigma},\sigma,g,g')-
    P(\mathbbm{1}_C=0) H(\sigma' |\mathbbm{1}_C=0, \hat{\sigma},\sigma,g,g')\!-\!1
    \\&\stackrel{(c)}{\geq} 
    H(\sigma',\hat{\sigma},\mathbbm{1}_C|\sigma,g,g')-
    \epsilon_n n\log{n}-
    P_e n\log{n}-1
    \stackrel{(d)}{\geq}  H(\sigma'|\sigma,g,g')- (\epsilon_n+P_e)n\log{n}-1,
\end{align*}
where in (a) we have used the chain rule of entropy, in (b) we have used the fact that $\mathbbm{1}_C$ is binary, in (c) we define the probability of mismatching more than $\epsilon_n$ fraction of the vertices by $P_e$, and (d) follows from the fact that entropy is non-negative.
 As a result,
 \begin{align*}
     H(\sigma'|\sigma,g,g')\leq 
     (\epsilon_n+P_e)n\log{n}+1= o(n\log{n)},
 \end{align*}
{where $(\epsilon_n+P_e)n\log{n}$ is $o(n\log{n})$ since $\epsilon_n, P_e$ go to $0$ as $n\to\infty$.}
 Consequently,
\begin{align*}
    n\log{n}\stackrel{(a)}{=} \log{n!}+n+O(\log{n})= H(\mathbf{\sigma'})+O(n)
    \stackrel{(b)}{=} I(\mathbf{\sigma}';\mathbf{\sigma}, g, g')+o(n\log{n}),
\end{align*}
where in (a) we have used Stirling's approximation, and in (b) we have used the fact that $\epsilon, P_e\to 0$ as $n\to\infty$. We have:
\begin{align*}
&n\log{n} =
I(\mathbf{\sigma}';\mathbf{\sigma}, g, g')+o(n\log{n})\\
& = I(\mathbf{\sigma}'; g')+
I(\mathbf{\sigma}';\mathbf{\sigma}, g |g')+o(n\log{n})
\stackrel{(a)}{=}
I(\mathbf{\sigma}';\mathbf{\sigma}, g |g')+o(n\log{n})
\\&= I(\mathbf{\sigma}'; g|g')
+I(\mathbf{\sigma}'; g |g',\mathbf{\sigma})+o(n\log{n})
\stackrel{(b)}{=}
I(\mathbf{\sigma}'; g |g',\mathbf{\sigma})+o(n\log{n})
\\& \stackrel{(c)}{\leq}
I(\mathbf{\sigma}',g'; g |\mathbf{\sigma})+o(n\log{n})
 \stackrel{(d)}{=}
I(g'; g |\mathbf{\sigma}, \mathbf{\sigma}')+o(n\log{n})
\\&\stackrel{(e)}{=} \sum_{i,j \in [c], i<j}n_in_j I(X,X'|\mathcal{C}_i,\mathcal{C}_j, \mathcal{C}'_i\mathcal{C}'_j)
+  \sum_{i \in [c]}\frac{n_i(n_i-1)}{2} I(X,X'|\mathcal{C}_i,\mathcal{C}_i, \mathcal{C}'_i,\mathcal{C}'_i)+o(n\log{n}),
\end{align*}
where (a) follows from $\sigma'\indep g'$, (b) follows from the fact that $\sigma'\indep g,g'$, (c) is true due to the non-negativity of the mutual inforamtion, (d) follows from $\sigma,\sigma'\indep G$, and (e) follows from the fact that the edges whose vertices have different labels are independent of each other given the labels.

\qedsymbol

\section{Proof of Lemma \ref{lem:card}}
\label{app:card}
The ambiguity set $\mathcal{L}$ is defined as:
\begin{align*}
 \mathcal{L}=\{v_{j}\big| \nexists! i: (\underline{F}_{i}, \underline{F}'_{j})\in \mathcal{A}_{\epsilon}^n(X,X')\}.
\end{align*}
From the Chebychev inequality, we have:
 \begin{align}
 P\left(|\mathcal{L}|>2\mathbb{E}(|\mathcal{L}|)\right)=  P\left(\big||\mathcal{L}|-\mathbb{E}(|\mathcal{L}|)\big|>\mathbb{E}(|\mathcal{L}|)\right)
 \leq \frac{Var(|\mathcal{L}|)}{\mathbb{E}^2(|\mathcal{L}|)}.
 \label{eq:cheb}
\end{align}

 Let $\mathcal{B}_{j}$ be the event that $v_{j}\in \mathcal{L}$, then
\begin{align*}
&\mathbb{E}(|\mathcal{L}|)=
\mathbb{E}\left(\sum_{j=1}^n\mathbbm{1}(v_{j}\in \mathcal{L})\right)=
\sum_{j=1}^nP(v{j}\in \mathcal{L})=\sum_{j=1}^n P(\mathcal{B}_j)\\
 &Var(|\mathcal{L}|)= \sum_{j=1}^n P(\mathcal{B}_j)+ \sum_{i\neq j} P(\mathcal{B}_i,\mathcal{B}_j)- \left(\sum_{j=1}^n P(\mathcal{B}_j)\right)^2\\
 &= \sum_{j=1}^n P(\mathcal{B}_j)- \sum_{j=1}^n P^2(\mathcal{B}_j)\leq \mathbb{E}(|\mathcal{L}|).
\end{align*}
So, from \eqref{eq:cheb}, we have:
\begin{align*}
 P\left(|\mathcal{L}|>2\mathbb{E}(|\mathcal{L}|)\right)\leq \frac{1}{\mathbb{E}(|\mathcal{L}|)},
\end{align*}
which goes to 0 as $n\to \infty$ provided that $\mathbb{E}(|\mathcal{L}|) \to \infty$ (otherwise the claim is proved since $\mathbb{E}(|\mathcal{L}|)$ is finite.).
It remains to find an upper bound on $\mathbb{E}(|\mathcal{L}|)$. Let $\mathcal{C}_j$ be the event that the fingerprint $\underline{F}'_{j}$ is not typical with respect to $P_{X'}$ and let $\mathcal{D}_{i,j}$ be the event that there exists $i\in [1,n]$ such that $\underline{F}_{i}$ and $\underline{F}'_{j}$ are jointly typical with respect to $P_{X,X'}$. Then,
\begin{align*}
 P(B_j)&\leq P\left(\mathcal{C}_j\bigcup \left( \bigcup_{i\neq j} \mathcal{D}_{i,j}\right)\right)\leq P(\mathcal{C}_j)+\sum_{i\neq j} P(\mathcal{D}_{i,j}|\mathcal{C}_j^c)
 \stackrel{(a)}\leq \frac{1}{\Lambda\epsilon^2}+ n2^{-\Lambda(I(X;X')-\epsilon)},
\end{align*}
where (a) follows from the standard information theoretic arguments (e.g proof of Theorem 3 in \cite{Allerton}). So,
\begin{align*}
 \mathbb{E}(|\mathcal{L}|)\leq \frac{n}{\Lambda\epsilon^2}+ n^22^{-\Lambda(I(X;X')-\epsilon)}.
\end{align*}
 From $\Lambda>\frac{2\log{n}}{I(X;X')}$, we conclude that the second term approaches 0 as $n\to \infty$. This completes the proof.
\qedsymbol

\section{Proof of Theorem \ref{th:seeded}}
\label{app:thseeded}
Let $H_1$ be the event the algorithm fails and $H_2$ the event that $|\mathcal{L}|> \frac{2n}{\Lambda\epsilon^2}$. Then:\[
 P(H_1)\leq P(H_2)+P(H_1|H_2^c).\] From Claim 1, we know that $P(H_2)\to 0$ as $n\to \infty$. For the second term, let $\mathcal{L}'$ be the set of vertices which are not matched in the second iteration. The algorithm fails if  $\mathcal{L}'\neq \phi$. However, by a similar argument as in the proof of claim 1, we have:
 \begin{align*}
 P(|\mathcal{L}'|>\frac{1}{2} \Big| |\mathcal{L}|<\frac{2n}{\Lambda\epsilon^2})\to 0 \text{ as } n\to\infty.
\end{align*}
So,  $P(|\mathcal{L}'|=0) \to 1$ as $n\to \infty$. This completes the proof.

\qedsymbol

\end{appendices}


%


 \end{document}